\newcommand{\macrospath}{.}
\tikzset{
node distance=1.3cm, auto,
every node/.style={font=\tiny },
ocenter/.style={baseline={([yshift=-.5ex, xshift=-.5ex]current bounding box)}},  
labelBeginAbove/.style={postaction={decorate,decoration={markings,mark=at position 0 with {\node[inner sep= 0.6pt, above=1pt]{\tiny #1};}} } },
labelBeginBelow/.style={postaction={decorate,decoration={markings,mark=at position 0 with {\node[inner sep= 0.6pt, below=1pt]{\tiny #1};}}}},
labelEndAbove/.style={postaction={decorate,decoration={markings,mark=at position 1 with {\node[inner sep= 0.6pt, above=1pt]{\tiny #1};}}}},
labelEndBelow/.style={postaction={decorate,decoration={markings,mark=at position 1 with {\node[inner sep= 0.6pt, below=1pt]{\tiny #1};}}}},
labelEndRight/.style={postaction={decorate,decoration={markings,mark=at position 1 with {\node[inner sep= 0.6pt, right=1pt]{\tiny #1};}}}},
labelEndLeft/.style={postaction={decorate,decoration={markings,mark=at position 1 with {\node[inner sep= 0.6pt, left=1pt]{\tiny #1};}}}}
}
\newcommand{\nodeHorDist}{2cm}
\newcommand{\nodeVerDist}{1cm}
\newcommand{\commutesredgenEEABCD}[8]{
      \begin{tikzpicture}[ocenter]
       \node (s) {\normalsize\ensuremath{#3}};
       \node at (s.center)  [right=1.7*\nodeHorDist](s1){\normalsize\ensuremath{#4}};
       \node at (s.center)  [below=\nodeVerDist](s2) {\normalsize\ensuremath{#5}};
       \node at (s2.center) [right=1.7*\nodeHorDist](t) {\normalsize\ensuremath{#6}};
       \node at (s.center)  [below=\nodeVerDist/2](eq1){\normalsize\ensuremath{#1}};
       \node at (s1.center) [below=\nodeVerDist/2](eq2){\normalsize\ensuremath{#2}};
       \draw[-o] (s) to node {#7} (s1);
       \draw[-o, dashed] (s2) to node {#8} (t);
      \end{tikzpicture} 
}
\newcommand{\commutesredEEABCD}[6]{
\commutesredgenEEABCD{#1}{#2}{#3}{#4}{#5}{#6}{}{}
}
\newcommand{\commutesdbEEABCD}[6]{
\commutesredgenEEABCD{#1}{#2}{#3}{#4}{#5}{#6}{\db}{\db}
}
\newcommand{\commutesdbvEEABCD}[6]{
\commutesredgenEEABCD{#1}{#2}{#3}{#4}{#5}{#6}{\dbv}{\dbv}
}
\newcommand{\commuteslsvEEABCD}[6]{
\commutesredgenEEABCD{#1}{#2}{#3}{#4}{#5}{#6}{\lsvsym}{\lsvsym}
}
\newcommand{\ignore}[1]{}
\newcommand{\sep}{\hspace*{0.5cm}}
\newcommand{\myinput}[1]{\ifthenelse{\boolean{withimages}}{\input{#1}}{}}
\newcommand{\reflemma}[1]{Lemma~\ref{l:#1}}
\newcommand{\refth}[1]{Theorem~\ref{tm:#1}}
\newcommand{\refprop}[1]{Proposition~\ref{prop:#1}}
\newcommand{\refsect}[1]{Sect.~\ref{sect:#1}}
\newcommand{\reftab}[1]{Tab.~\ref{tab:#1}}
\newcommand{\refeq}[1]{(\ref{eq:#1})}
\newcommand{\reffig}[1]{Fig.~\ref{fig:#1}}
\newcommand{\case}[1]{{\bf Case #1.}}
\newcommand{\casealt}[1]{{\bf #1.}}
\newcommand{\caselight}[1]{\textit{#1}}
\newcommand{\ie}{\textit{i.e.}}
\newcommand{\eg}{\textit{e.g.}}
\newcommand{\ih}{\textit{i.h.}}
\newcommand{\lat}{\mbox{$\lambda$-term}}
\newcommand{\linlogic}{linear logic}
\newcommand{\pns}{proof nets}
\newcommand{\wlhr}{WLHR}
\newcommand{\deff}[1]{\textbf{#1}}
\newcommand{\red}[1]{{\color{red} {#1}}}
\newcommand{\black}[1]{{\color{black} {#1}}}
\newcommand{\ben}[1]{{\red{#1}}}
\renewcommand{\ben}[1]{{#1}}
\newcommand{\cben}[2]{{\red{#2}}}
\renewcommand{\cben}[2]{{#2}}
\newcommand{\defeq}{\mathrel{:=}}
\newcommand{\grameq}{\mathrel{::=}}
\newcommand{\set}[1]{\{#1\}}
\newcommand{\card}[1]{\# #1}
\newcommand{\size}[1]{|#1|}
\newcommand{\LeftRightarrow}{\Lleftarrow\!\!\!\!\Rrightarrow}
\newcommand{\db}{{\tt dB}}
\newcommand{\dbv}{{\tt dBv}}
\newcommand{\lssym}{{\tt ls}}
\newcommand{\ls}{{\tt ls}}
\newcommand{\vartt}{{\tt var}}
\newcommand{\gc}{{\tt gc}}
\newcommand{\Bsym}{{\tt B}}
\newcommand{\lsvsym}{{\tt lsv}}
\newcommand{\vsym}{{\tt v}}
\newcommand{\admsym}{c}
\newcommand{\noadmsym}{p}
\newcommand{\mulsym}{m}
\newcommand{\expsym}{e}
\newcommand{\lsc}{LSC}
\newcommand{\wlsc}{WLSC}
\newcommand{\wlscname}{{\tt Name}}
\newcommand{\wlscvaluelr}{{\tt Value}^{\tt LR}}
\newcommand{\wlscvaluerl}{{\tt Value}^{\tt RL}}
\newcommand{\wlscneed}{{\tt Need}}
\renewcommand{\l}{\lambda}
\newcommand{\isub}[2]{\{#1/#2\}}
\renewcommand{\isub}[2]{\{#1{\shortleftarrow}#2\}}
\newcommand{\esub}[2]{[#1/#2]}
\renewcommand{\esub}[2]{[#1{\shortleftarrow}#2]}
\newcommand{\fv}[1]{{\tt fv}(#1)}
\newcommand{\varsplit}[3]{#1_{[#3]_{#2}}}
\newcommand{\rootRew}[1]{\mapsto_{#1}}
\newcommand{\Rew}[1]{\rightarrow_{#1}}
\newcommand{\rtodb}{\rootRew{\db}}
\newcommand{\rtols}{\rootRew{\lssym}}
\newcommand{\rtolsc}[1]{\stackrel{#1}{\mapsto}_\lssym}
\newcommand{\rtodbv}{\rootRew{\db\vsym}}
\newcommand{\rtolsv}{\rootRew{\lssym\vsym}}
\newcommand{\rtolsvc}[1]{\stackrel{#1}{\mapsto}_{\lssym\vsym}}
\newcommand{\rtowhls}{\rtols} 
\newcommand{\towhl}{\stackrel{\mathtt{wh}}{\multimap}}
\newcommand{\towhldb}{\stackrel{\mathtt{wh}}{\multimap}_\db}
\newcommand{\towhlls}{\stackrel{\mathtt{wh}}{\multimap}_\lssym}
\renewcommand{\towhl}{\togen}
\renewcommand{\towhldb}{\togenm}
\renewcommand{\towhlls}{\togene}
\newcommand{\towhlcek}{\stackrel{\mathtt{ns}}{\multimap}_\vsym}
\newcommand{\towhlcekdb}{\stackrel{\mathtt{ns}}{\multimap}_\dbv}
\newcommand{\towhlcekls}{\stackrel{\mathtt{ns}}{\multimap}_\lsvsym}
\renewcommand{\towhlcek}{\togen}
\renewcommand{\towhlcekdb}{\togenm}
\renewcommand{\towhlcekls}{\togene}
\newcommand{\towhllam}{\stackrel{\mathtt{s}}{\multimap}_\vsym}
\newcommand{\towhllamdb}{\stackrel{\mathtt{s}}{\multimap}_\dbv}
\newcommand{\towhllamls}{\stackrel{\mathtt{s}}{\multimap}_\lsvsym}
\renewcommand{\towhllam}{\togen}
\renewcommand{\towhllamdb}{\togenm}
\renewcommand{\towhllamls}{\togene}
\newcommand{\tostructsym}{\LeftRightarrow}
\newcommand{\esym}{{\mathtt e}}
\newcommand{\msym}{{\mathtt m}}
\newcommand{\rtogenm}{\mapsto_\msym}
\newcommand{\rtogene}{\mapsto_\esym}
\newcommand{\togen}{\multimap}
\newcommand{\togenm}{\multimap_\msym}
\newcommand{\togene}{\multimap_\esym}
\newcommand{\togenx}{\multimap_{\mathtt x}}
\newcommand{\tom}{\Rew{\msym}}
\newcommand{\toe}{\Rew{\esym}}
\newcommand{\alphaequiv}{=_\alpha}
\newcommand{\eqstruct}{\equiv}
\newcommand{\eqstructneed}{\equiv_{\tt Need}}
\newcommand{\tostruct}{\eqstruct}
\newcommand{\tostructneed}{\eqstructneed}
\newcommand{\tostructgc}{\tostruct_{gc}}
\newcommand{\tostructdup}{\tostruct_{dup}}
\newcommand{\tostructap}{\tostruct_{@}}
\newcommand{\tostructapl}{\tostruct_{@l}}
\newcommand{\tostructes}{\tostruct_{[\cdot]}}
\newcommand{\tostructcom}{\tostruct_{com}}
\newcommand{\eqmam}{\equiv_{\tiny \mbox{MAM}}}
\newcommand{\eqmamsym}{\LeftRightarrow}
\newcommand{\tm}{t}
\newcommand{\tmtwo}{u}
\newcommand{\tmthree}{w}
\newcommand{\tmfour}{r}
\newcommand{\tmfive}{q}
\newcommand{\tmsix}{p}
\newcommand{\tmp}{\tm'}
\newcommand{\tmtwop}{\tmtwo'}
\newcommand{\tmthreep}{\tmthree'}
\newcommand{\tmfourp}{\tmfour'}
\newcommand{\tmfivep}{\tmfive'}
\newcommand{\tmpp}{\tm''}
\newcommand{\tmtwopp}{\tmtwo''}
\newcommand{\tmthreepp}{\tmthree''}
\newcommand{\var}{x}
\newcommand{\vartwo}{y}
\newcommand{\varthree}{z}
\newcommand{\val}{v}
\newcommand{\valtwo}{v'}
\newcommand{\ctxholep}[1]{\langle #1\rangle}
\newcommand{\ctxhole}{\ctxholep{\cdot}}
\newcommand{\ctx}{C}
\newcommand{\ctxp}[1]{\ctx\ctxholep{#1}}
\newcommand{\sctx}{L}
\newcommand{\sctxtwo}{\sctx'}
\newcommand{\sctxthree}{\sctx''}
\newcommand{\sctxp}[1]{\sctx\ctxholep{#1}}
\newcommand{\sctxtwop}[1]{\sctxtwo\ctxholep{#1}}
\newcommand{\sctxthreep}[1]{\sctxthree\ctxholep{#1}}
\newcommand{\sctxOne}{\sctx_1}
\newcommand{\sctxTwo}{\sctx_2}
\newcommand{\sctxOnep}[1]{\sctxOne\ctxholep{#1}}
\newcommand{\sctxTwop}[1]{\sctxTwo\ctxholep{#1}}
\newcommand{\wctx}{W}
\newcommand{\wctxtwo}{\wctx'}
\newcommand{\wctxp}[1]{\wctx\ctxholep{#1}}
\newcommand{\arbctxp}[1]{\arbctxp{#1}}
\newcommand{\arbctxtwop}[1]{\arbctxtwop{#1}}
\newcommand{\evctx}{\genevctx}
\newcommand{\evctxtwo}{\genevctxtwo}
\newcommand{\evctxp}[1]{\genevctxp{#1}}
\newcommand{\evctxtwop}[1]{\genevctxtwop{#1}}
\newcommand{\sctxvsTwo}[2]{\varsplit{\sctxTwo\,\!}{#1}{#2}}
\newcommand{\sctxvsTwop}[3]{\sctxvsTwo{#1}{#2}\ctxholep{#3}}
\newcommand{\sctxal}{\widehat{\sctx}}
\newcommand{\sctxpal}[1]{\sctxal\ctxholep{#1}}
\newcommand{\evctxthree}{\whctx''}
\newcommand{\evctxfour}{\whctx'''}
\newcommand{\evctxthreep}[1]{\evctxthree\ctxholep{#1}}
\newcommand{\evctxpthree}[1]{\evctxthree\ctxholep{#1}}
\newcommand{\evctxal}{\widehat{\evctx}}
\newcommand{\evctxaltwo}{\widehat{\evctxtwo}}
\newcommand{\evctxpal}[1]{\evctxal\ctxholep{#1}}
\newcommand{\evctxpaltwo}[1]{\evctxaltwo\ctxholep{#1}}
\newcommand{\whctx}{H}
\newcommand{\whctxtwo}{\whctx'}
\newcommand{\whctxp}[1]{\whctx\ctxholep{#1}}
\newcommand{\whctxtwop}[1]{\whctxtwo\ctxholep{#1}}
\newcommand{\cbvctx}{V}
\newcommand{\cbvctxtwo}{\cbvctx'}
\newcommand{\cbvctxthree}{\cbvctx''}
\newcommand{\cbvctxp}[1]{\cbvctx\ctxholep{#1}}
\newcommand{\cbvctxtwop}[1]{\cbvctxtwo\ctxholep{#1}}
\newcommand{\cbvctxthreep}[1]{\cbvctxthree\ctxholep{#1}}
\newcommand{\scbvctx}{S}
\newcommand{\scbvctxtwo}{\scbvctx'}
\newcommand{\scbvctxp}[1]{\scbvctx\ctxholep{#1}}
\newcommand{\scbvctxtwop}[1]{\scbvctxtwo\ctxholep{#1}}
\newcommand{\cbndctx}{N}
\newcommand{\cbndctxtwo}{\cbndctx'}
\newcommand{\cbndctxthree}{\cbndctx''}
\newcommand{\cbndctxp}[1]{\cbndctx\ctxholep{#1}}
\newcommand{\cbndctxtwop}[1]{\cbndctxtwo\ctxholep{#1}}
\newcommand{\cbndctxthreep}[1]{\cbndctxthree\ctxholep{#1}}
\newcommand{\tomach}{\Rew{}}
\newcommand{\tomachm}{\Rew{\mulsym}}
\newcommand{\tomachmone}{\Rew{\mulsym_1}}
\newcommand{\tomachmtwo}{\Rew{\mulsym_2}}
\newcommand{\tomache}{\Rew{\expsym}}
\newcommand{\tomacha}{\Rew{\admsym}}
\newcommand{\tomachaone}{\Rew{\admsym_1}}
\newcommand{\tomachatwo}{\Rew{\admsym_2}}
\newcommand{\tomachnoa}{\Rew{\noadmsym}}
\newcommand{\tomachx}{\Rew{\mathtt{x}}}
\newcommand{\admnf}[1]{\mathtt{nf}_{\admsym}(#1)}
\newcommand{\code}{\overline{\tm}}
\newcommand{\codetwo}{\overline{\tmtwo}}
\newcommand{\codethree}{\overline{\tmthree}}
\newcommand{\codefour}{\overline{\tmfour}}
\newcommand{\codeval}{\overline{\val}}
\newcommand{\clos}{c}
\newcommand{\decclos}{\decode{\clos}}
\newcommand{\env}{e}
\newcommand{\envtwo}{e'}
\newcommand{\envthree}{e''}
\newcommand{\envfour}{e'''}
\newcommand{\envp}[1]{\env\ctxholep{#1}}
\newcommand{\envtwop}[1]{\envtwo\ctxholep{#1}}
\newcommand{\envthreep}[1]{\envthree\ctxholep{#1}}
\newcommand{\decenv}{\decode{\env}}
\newcommand{\decenvtwo}{\decode{\envtwo}}
\newcommand{\decenvthree}{\decode{\envthree}}
\newcommand{\decenvfour}{\decode{\envfour}}
\newcommand{\decenvp}[1]{\decenv\ctxholep{#1}}
\newcommand{\decenvtwop}[1]{\decenvtwo\ctxholep{#1}}
\newcommand{\decenvthreep}[1]{\decenvthree\ctxholep{#1}}
\newcommand{\decenvfourp}[1]{\decenvfour\ctxholep{#1}}
\newcommand{\genv}{E}
\newcommand{\genvtwo}{E'}
\newcommand{\genvthree}{E''}
\newcommand{\genvthreep}[1]{\genvthree\ctxholep{#1}}
\newcommand{\decgenv}{\decode{\genv}}
\newcommand{\decgenvtwo}{\decode{\genvtwo}}
\newcommand{\decgenvp}[1]{\decgenv\ctxholep{#1}}
\newcommand{\decgenvtwop}[1]{\decgenvtwo\ctxholep{#1}}
\newcommand{\stempty}{\epsilon}
\newcommand{\cons}{::}
\newcommand{\fnst}[1]{\mathbf{f}(#1)}
\newcommand{\argst}[1]{\mathbf{a}(#1)}
\newcommand{\stack}{\pi}
\newcommand{\stacktwo}{\pi'}
\newcommand{\stackthree}{\pi''}
\newcommand{\stackp}[1]{\stack\ctxholep{#1}}
\newcommand{\decstack}{\decode{\pi}}
\newcommand{\decstackp}[1]{\decstack\ctxholep{#1}}
\newcommand{\state}{s}
\newcommand{\statetwo}{s'}
\newcommand{\statethree}{s''}
\newcommand{\kamstate}[3]{#1\mid#2\mid#3}
\newcommand{\cekstate}[3]{#1\mid#2\mid#3}
\newcommand{\lamstate}[3]{#1\mid#2\mid#3}
\newcommand{\mamstate}[3]{#1\mid#2\mid#3}
\newcommand{\scekstate}[4]{#1\mid#2\mid#3\mid#4}
\newcommand{\wamstate}[4]{#1\mid#2\mid#3\mid#4}
\newcommand{\mgwamstate}[3]{#1\mid#2\mid#3}
\newcommand{\pwamstate}[4]{#1\mid#2\mid#4\mid#3}
\newcommand{\envslice}{\!\upharpoonleft}
\newcommand{\envslicevar}[1]{\envslice_{#1}}
\newcommand{\pwamclosed}[1]{#1 \text{ is closed}}
\newcommand{\rename}[1]{#1^\alpha}
\newcommand{\exec}{\rho}
\newcommand{\decode}[1]{\llbracket #1\rrbracket}
\renewcommand{\decode}[1]{\underline{#1}}
\newcommand{\sizeadm}[1]{\size{#1}_{\admsym}}
\newcommand{\sizelog}[1]{\size{#1}_{\noadmsym}}
\newcommand{\fstack}{D}
\newcommand{\fstacktwo}{D'}
\newcommand{\dstackp}[1]{\decode{\stack}\ctxholep{#1}}
\newcommand{\dstacktwop}[1]{\decode{\stacktwo}\ctxholep{#1}}
\newcommand{\dstackthreep}[1]{\decode{\stackthree}\ctxholep{#1}}
\newcommand{\denvp}[1]{\decode{\env}\ctxholep{#1}}
\newcommand{\denvtwop}[1]{\decode{\envtwo}\ctxholep{#1}}
\newcommand{\denvsubonep}[1]{\decode{\env_1}\ctxholep{#1}}
\newcommand{\denvsubtwop}[1]{\decode{\env_2}\ctxholep{#1}}
\newcommand{\dfstackp}[1]{\decode{\fstack}\ctxholep{#1}}
\newcommand{\dgenvp}[1]{\decode{\genv}\ctxholep{#1}}
\newcommand{\dgenvsubonep}[1]{\decode{\genv_1}\ctxholep{#1}}
\newcommand{\dgenvsubtwop}[1]{\decode{\genv_2}\ctxholep{#1}}
\newcommand{\deriv}{d}
\newcommand{\derivtwo}{e}
\newcommand{\sizee}[1]{|#1|_{e}}
\newcommand{\sizem}[1]{|#1|_{m}}
\newcommand{\sizep}[1]{|#1|_p}
\newcommand{\headst}[1]{\mathbf{h}(#1)}
    \newtheorem{theorem}{Theorem}[section]
    \newtheorem{lemma}[theorem]{Lemma}
    \newtheorem{corollary}[theorem]{Corollary}
    \newtheorem{proposition}[theorem]{Proposition}
    \newtheorem{definition}[theorem]{Definition}
\newcommand{\distil}{{\tt D}}
\newcommand{\calculus}{{\tt C}}
\newcommand{\mach}{{\tt M}}
\newcommand{\wnamed}{well-named}
\renewcommand{\dump}{D}
\newcommand{\decdump}{\decode \dump}
\newcommand{\decdumpp}[1]{\decdump\ctxholep{#1}}
\newcommand{\closprop}{every closure in $\state$ is closed}
\newcommand{\globclosprop}[1]{the global closure #1 of $\state$ is closed}
\newcommand{\subprop}{any code in $\state$ is a literal subterm of $\code$}
\newcommand{\gwnameprop}{the global closure of $\state$ is \wnamed}
\newcommand{\lwnameprop}{any closure in $\state$ is \wnamed}
\newcommand{\envprop}{the length of any environment in $\state$ is bound by $\size\code$}
\newcommand{\genvprop}{the length of the global environment in $\state$ is bound by $\sizem\exec$}
\newcommand{\invariantshyp}[1]{be a #1 reachable state whose initial code $\code$ is \wnamed}
\newcommand{\globinvariantshyp}[1]{be a #1 state reached by an execution $\exec$ of initial \wnamed\ code $\code$}
\newcommand{\distillationStatement}{is a reflective distillery. In particular, on a reachable state $\state$ we have}
\newcommand{\pwammark}{\Box}
\newcommand{\dual}{\bot}
\newcommand{\support}[1]{\Delta(#1)}
\newcommand{\todb}{\Rew{\db}}
\newcommand{\tols}{\Rew{\ls}}
\newcommand{\toB}{\Rew{\Bsym}}
\newcommand{\togc}{\Rew{\gc}}
\newcommand{\toneq}{\Rew{\neq}}
\newcommand{\toap}{\Rew{@}}
\newcommand{\toabs}{\Rew{\l}}
\newcommand{\tovar}{\Rew{\vartt}}
\newcommand{\toeval}{\stackrel{\evctx}{\rightarrow}_{\esym}}
\newcommand{\tomachnoap}{\Rew{\neg c_1}}
\newcommand{\sizenoap}[1]{|#1|_{\neg c_1}}
\newcommand{\sizeap}[1]{|#1|_{c_1}}
\newcommand{\sizecomtwo}[1]{|#1|_{c_2}}
\newcommand{\WAM}{WAM}
\newcommand{\wam}{\WAM}
\newcommand{\withproofs}[1]{\ifthenelse{\boolean{withproofs}}{#1}{}}
\newcommand{\withoutproofs}[1]{\ifthenelse{\boolean{withproofs}}{}{#1}}
\begin{document}

\setlength{\pdfpageheight}{\paperheight}
\setlength{\pdfpagewidth}{\paperwidth}



\title{Distilling Abstract Machines (Long Version)}

\author{Beniamino Accattoli
\and 
Pablo Barenbaum
\and
Damiano Mazza
}

%
\maketitle

\begin{abstract}
It is well-known that many environment-based abstract machines can be seen as strategies in lambda calculi with explicit substitutions (ES). Recently, graphical syntaxes and linear logic led to the linear substitution calculus (LSC), a new approach to ES that is halfway between big-step calculi and traditional calculi with ES. This paper studies the relationship between the LSC and environment-based abstract machines. While traditional calculi with ES simulate abstract machines, the LSC rather distills them: some transitions are simulated while others vanish, as they map to a notion of structural congruence. The distillation process unveils that abstract machines in fact implement weak linear head reduction, a notion of evaluation having a central role in the theory of linear logic. We show that such a pattern applies uniformly in call-by-name, call-by-value, and call-by-need, catching many machines in the literature. We start by distilling the KAM, the CEK, and the ZINC, and then provide simplified versions of the SECD, the lazy KAM, and Sestoft's machine. Along the way we also introduce some new machines with global environments. Moreover, we show that distillation preserves the time complexity of the executions, i.e. the LSC is a complexity-preserving abstraction of abstract machines.

\end{abstract}


%


\section{Introduction}
In the theory of higher-order programming languages, abstract machines and explicit substitutions are two tools used to model the execution of programs on real machines while omitting many details of the actual implementation. Abstract machines can usually be seen as evaluation strategies in calculi of explicit substitutions (see at least \cite{DBLP:journals/jfp/HardinM98,DBLP:journals/tocl/BiernackaD07,DBLP:journals/lisp/Lang07,DBLP:journals/lisp/Cregut07}), that can in turn be interpreted as small-step cut-elimination strategies in sequent calculi \cite{DBLP:journals/toplas/AriolaBS09}. 

Another tool providing a fine analysis of higher-order evaluation is \linlogic, especially via the new perspectives on cut-elimination provided by \emph{\pns}, its graphical syntax. Explicit substitutions (ES) have been connected to \linlogic\ by Kesner and co-authors in a sequence of works \cite{DBLP:journals/mscs/CosmoKP03,DBLP:journals/iandc/KesnerL07,DBLP:conf/mfcs/KesnerR09}, culminating in the \emph{linear substitution calculus} (\lsc), a new formalism with ES behaviorally isomorphic to \pns\ (introduced in~\cite{DBLP:conf/csl/AccattoliK10}, developed in~\cite{DBLP:conf/flops/AccattoliP12,DBLP:conf/rta/Accattoli12,DBLP:conf/rta/AccattoliL12,DBLP:journals/corr/abs-1302-6337,DBLP:conf/popl/AccattoliBKL14}, and bearing similarities with calculi by De Bruijn~\cite{deBruijn87}, Nederpelt~\cite{Ned92}, and Milner~\cite{DBLP:journals/entcs/Milner07}). Since linear logic can model all evaluation schemes (call-by-name/value/need) \cite{DBLP:journals/tcs/MaraistOTW99}, the LSC can express them modularly, by minor variations on rewriting rules and evaluation contexts. In this paper we revisit the relationship between environment-based abstract machines and ES. Traditionally, calculi with ES simulate machines. The \lsc, instead, distills them. 

\ben{\paragraph{A Bird's Eye View.} In a simulation, every machine transition is simulated by some steps in the calculus with ES. In a distillation---that will be a formal concept, not just an analogy---only some of the machine transitions are simulated, while the others are mapped to the structural equivalence of the calculus, a specific trait of the \lsc. Now, structural equivalence commutes with evaluation, \ie\ it can be postponed. Thus, the transitions sent on the structural congruence fade away, without compromising the result of evaluation. Additionally, we show that machine executions and their distilled representation in the \lsc\  have the same asymptotic length, \ie\ the distillation process preserves the complexity of evaluation. The main point is that the \lsc\ is an arguably simpler than abstract machines, and also---as we will show---it can uniformly represent many different machines in the literature.}

\paragraph{Traditional vs Contextual ES.} Traditional calculi with ES (see \cite{DBLP:journals/corr/abs-0905-2539} for a survey) implement $\beta$-reduction $(\l \var. \tm)\tmtwo  \Rew{\beta}  \tm\isub\var\tmtwo$ 
introducing an annotation (the explicit substitution $\esub\var\tmtwo$),
\[\begin{array}{rcl}
		(\l \var. \tm)\tmtwo & \toB & \tm\red{\esub\var\tmtwo} 
\end{array}\]
and percolating it through the term structure,
\begin{equation}
	\begin{array}{rcl}
		(\tm\tmthree)\red{\esub\var\tmtwo} & \toap & \tm\red{\esub\var\tmtwo}\tmthree\red{\esub\var\tmtwo}\\
	(\l\var.\tm)\red{\esub\vartwo\tmtwo}	& \toabs & \l\var.\tm\red{\esub\vartwo\tmtwo}\\
	\end{array}
	\label{eq:percolations}
\end{equation}
until
they reach variable occurrences on which they finally substitute or
get garbage collected,
\[\begin{array}{rcl}
		\var\red{\esub\var\tmtwo}			& \tovar & \red{\tmtwo}\\
		\vartwo\red{\esub\var\tmtwo}	& \toneq & \vartwo
	\end{array}\]

 The LSC, instead, is based on a \emph{contextual} view of evaluation and substitution, also known as \emph{at a distance}. The idea is that one can get rid of the rules percolating through the term structure---\ie\ $@$ and $ \l$---by introducing contexts $\ctx$ (i.e. terms with a hole $\ctxhole$) and generalizing the base cases, obtaining just two rules, \emph{linear substitution} (ls) and \emph{garbage collection} (gc):
\[\begin{array}{rcl@{\sep}ll}
		\ctxp\var\red{\esub\var\tmtwo}			& \tols & \ctxp{\red{\tmtwo}}\red{\esub\var\tmtwo}\\
		\tm\red{\esub\var\tmtwo}	& \togc & \tm&\mbox{ if $\var\notin\fv\tm$}
	\end{array}\]
Dually, the rule creating substitutions ($\Bsym$) is generalized to act up to a context of substitutions $\red{\esub{\black{\ldots}}{\black{\ldots}}}\defeq \red{\esub{\var_1}{\tmthree_1}}\ldots\red{\esub{\var_k}{\tmthree_k}}$ obtaining rule $\db$ ($\Bsym$ at a {\tt d}istance):
\[\begin{array}{rll}
		(\l \var. \tm)\red{\esub{\black{\ldots}}{\black{\ldots}}} \tmtwo & \todb & \tm\red{\esub\var\tmtwo} \red{\esub{\black{\ldots}}{\black{\ldots}}}
\end{array}\]

\paragraph{Logical Perspective on the \lsc.} \sloppy{From a sequent calculus point of view, rules $ @$ and $ \l$, corresponding to \emph{commutative} cut-elimination cases, are removed and integrated---via the use of contexts---directly in the definition of the \emph{principal} cases $\Bsym$, $\vartt$ and $\neq$, obtaining the contextual rules $\db$, $\ls$, and $\gc$.} This is the \cben{term analogous}{analogous for terms} of the removal of commutative cases provided by \pns. From a linear logic point of view, $\todb$ can be identified with the multiplicative cut-elimination case $\tom$, while $\tols$ and $\togc$ correspond to exponential cut-elimination. Actually, garbage collection has a special status, as it can always be postponed. We will then identify exponential cut-elimination $\toe$ with linear substitution $\tols$ alone. 

The LSC has a simple meta-theory, and is halfway between traditional calculi with ES---with whom it shares the small-step dynamics---and $\l$-calculus---of which it retains most of the simplicity.

\paragraph{Distilling Abstract Machines.}	Abstract machines implement the traditional approach to ES, by 
	\begin{enumerate}
		\item \emph{Weak Evaluation}: forbidding reduction under abstraction (no rule $\Rew{\l}$ in \refeq{percolations}), 
		\item \emph{Evaluation Strategy}: looking for redexes according to some notion of weak evaluation context $\evctx$,
		\item \emph{Context Representation}: using environments $\env$ (aka lists of substitutions) and stacks $\stack$ (lists of terms) to keep track of the current evaluation context. 
	\end{enumerate}

The \lsc\ distills---\ie\ factorizes---abstract machines. The idea is that one can represent the strategy of an abstract machine by directly plugging the evaluation context in the contextual substitution/exponential rule, obtaining:
\[\begin{array}{rcl@{\sep}ll}
		\evctxp\var\red{\esub\var\tmtwo}			& \toeval & \evctxp{\red{\tmtwo}}\red{\esub\var\tmtwo}\\
\end{array}\]
and factoring out the parts of the machine that just look for the next redex to reduce. By defining $\togen$ as the closure of $\toeval$ and $\tom$ by evaluation contexts $\evctx$, one gets a clean representation of the machine strategy.

The mismatch between the two approaches is in rule $\toap$, that contextually---by nature---cannot be captured.  In order to get out of this
\emph{cul-de-sac}, the very idea of \emph{simulation} of an abstract machine must be refined to that of \emph{distillation}. 

The crucial observation is that the equivalence $\tostruct$ induced by $\toap\cup\togc$ has the same special status of $\togc$, \ie\ it can be postponed without affecting reduction lengths. More abstractly, $\tostruct$ is a \emph{strong bisimulation} with respect to $\togen$, \ie\ it verifies (note \emph{one} step to \emph{one} step, and \emph{viceversa}) 
\ben{\begin{center}
\begin{tabular}{c@{\sep}c@{\sep}c}
 \begin{tikzpicture}[ocenter]
  \node (s) {\normalsize$\tm$};
  \node at (s.center)  [below =0.8*\nodeVerDist](s2) {\normalsize$\tmtwo$};
  \node at (s.center) [right= 0.8*\nodeHorDist](t) {\normalsize$\tmfour$};

  \node at (s.center)[anchor = center, below=0.3*\nodeVerDist](eq1){\normalsize$\tostruct$};
    \draw[-o] (s) to  (t);
\end{tikzpicture} 

&  $\Rightarrow \exists \tmfive$ s.t. & 
 \begin{tikzpicture}[ocenter]
  \node (s) {\normalsize$\tm$};
  \node at (s.center)  [below =0.8*\nodeVerDist](s2) {\normalsize$\tmtwo$};
  \node at (s.center) [right= 0.8*\nodeHorDist](t) {\normalsize$\tmfour$};
  \node at (s2-|t) [](s1){\normalsize$\tmfive$};

  \node at (s.center)[anchor = center, below=0.3*\nodeVerDist](eq1){\normalsize$\tostruct$};
  \node at (t.center)[anchor = center, below=0.3*\nodeVerDist](eq2){\normalsize$\tostruct$};
    \draw[-o] (s) to  (t);
	\draw[-o] (s2) to  (s1);
\end{tikzpicture} 
\end{tabular}
\end{center}
and
\begin{center}
\begin{tabular}{c@{\sep}c@{\sep}c}
 \begin{tikzpicture}[ocenter]
  \node (s) {\normalsize$\tm$};
  \node at (s.center)  [below =0.8*\nodeVerDist](s2) {\normalsize$\tmtwo$};
  \node at (s.center) [right= 0.8*\nodeHorDist](t) {};
  \node at (s2-|t) [](s1){\normalsize$\tmfive$};

  \node at (s.center)[anchor = center, below=0.3*\nodeVerDist](eq1){\normalsize$\tostruct$};
	\draw[-o] (s2) to  (s1);
\end{tikzpicture} 

&  $\Rightarrow \exists \tmfour$ s.t. & 
 \begin{tikzpicture}[ocenter]
  \node (s) {\normalsize$\tm$};
  \node at (s.center)  [below =0.8*\nodeVerDist](s2) {\normalsize$\tmtwo$};
  \node at (s.center) [right= 0.8*\nodeHorDist](t) {\normalsize$\tmfour$};
  \node at (s2-|t) [](s1){\normalsize$\tmfive$};

  \node at (s.center)[anchor = center, below=0.3*\nodeVerDist](eq1){\normalsize$\tostruct$};
  \node at (t.center)[anchor = center, below=0.3*\nodeVerDist](eq2){\normalsize$\tostruct$};
    \draw[-o] (s) to  (t);
	\draw[-o] (s2) to  (s1);
\end{tikzpicture} 
\end{tabular}
\end{center}}
Now,  $\tostruct$ can be considered as a \emph{structural equivalence} on the language. Indeed, the strong bisimulation property states that the transformation expressed by $\tostruct$ is irrelevant with respect to $\togen$, in particular $\tostruct$-equivalent terms have $\togen$-evaluations of the same length ending in $\tostruct$-equivalent terms (and this holds even locally). 

Abstract machines then are \emph{distilled}: the logically relevant part of the substitution process is retained by $\togen$ while both the search of the redex $\toap$ and garbage collection $\toneq$ are isolated into the equivalence $\tostruct$. Essentially, $\togen$ captures principal cases of cut-elimination while $\tostruct$ encapsulate the commutative ones (plus garbage collection, corresponding to principal cut-elimination involving weakenings).

\paragraph{Case Studies.} We will analyze along these lines many abstract machines. Some are standard (KAM \cite{DBLP:journals/lisp/Krivine07}, CEK \cite{Felleisen:1986:CEK}, ZINC \cite{Leroy-ZINC}), some are new (MAM, WAM), and of others we provide simpler versions (SECD \cite{LandinSECD}, Lazy KAM \cite{DBLP:journals/lisp/Cregut07,DBLP:conf/ppdp/DanvyZ13}, Sestoft's \cite{Sestoft}). The previous explanation is a sketch of the distillation of the KAM, but the approach applies \emph{mutatis mutandis} to all the other machines, encompassing most realizations of call-by-name, call-by-value, and call-by need evaluation. The main contribution of the paper is indeed a modular \emph{contextual} theory of abstract machines. We start by distilling some standard cases, and then rationally reconstruct and simplify non-trivial machines as the SECD, the lazy KAM, and Sestoft's abstract machine for call-by-need (deemed SAM), by enlightening their mechanisms as different encoding of evaluation contexts, modularly represented in the \lsc. 

\paragraph{Call-by-Need.} Along the way, we show that the contextual (or \emph{at a distance}) approach of the \lsc\ naturally leads to simple machines with just one global environment, as the newly introduced MAM (M for Milner). Such a feature is then showed to be a key ingredient of call-by-need machines, by using it to introduce a new and simple call-by-need machine, the WAM (W for Wadsworth), and then showing how to obtain (simplifications of) the Lazy KAM and the SAM by simple tweaks.

\paragraph{Distillation Preserves Complexity.} It is natural to wonder what is lost in the distillation process. What is the asymptotic impact of distilling machine executions into $\togen$? Does it affect in any way the complexity of evaluation? We will show that \emph{nothing is lost}, as machine executions are only linearly longer than $\togen$. More precisely, they are \emph{bilinear}, \ie\ they are linear in 1) the length of $\togen$, and in 2) the size $\size\tm$ of the starting term $\tm$. In other words, the search of redexes and garbage collection can be safely ignored in quantitative (time) analyses, \ie\ the \lsc\ and $\togen$ provide a complexity-preserving abstraction of abstract machines. While in call-by-name and call-by-value such an analysis follows from an easy local property of machine executions, the call-by-need case is subtler, as such a local property does not hold and bilinearity can be established only via a global analysis.

\paragraph{Linear Logic and Weak Linear Head Reduction.} Beyond the contextual view, our work also unveils a deep connection between abstract machines and \linlogic. The strategies modularly encoding the various machines (generically noted $\togen$ and parametric in a fixed notion of evaluation contexts) are in fact call-by-name/value/need versions of \emph{weak linear head reduction} (\wlhr), a fundamental notion in the theory of \linlogic\ \cite{DBLP:journals/corr/abs-1302-6337,DBLP:journals/tcs/MascariP94,DBLP:conf/lics/DanosHR96,DBLP:conf/cie/EhrhardR06,DBLP:conf/fossacs/Clairambault11}. This insight ---- due to Danos and Regnier for the KAM \cite{Danos04headlinear}---is not ours, but we develop it in a simpler and tighter way, modularly lifting it to many other abstract machines.

\paragraph{Call-by-Name.} The call-by-name case (catching the KAM and the new MAM) is in fact special, as our distillation theorem has three immediate corollaries, following from results about \wlhr\ in the literature:
\begin{enumerate}
	\item \emph{Invariance}: it implies that the length of a KAM/MAM execution is an  invariant time cost model (\ie\ polynomially related to, say, Turing machines, in both directions), given that in \cite{DBLP:conf/rta/AccattoliL12} the same is shown for \wlhr.
	\item \emph{Evaluation as Communication}: we implicitly establish a link between the KAM/MAM and the $\pi$-calculus, given that the evaluation of a term via \wlhr\ is isomorphic to evaluation via Milner's encoding in the $\pi$-calculus \cite{DBLP:journals/corr/abs-1302-6337}.
	\item \emph{Plotkin's Approach}: our study complements the recent \cite{DBLP:conf/popl/AccattoliBKL14}, where it is shown that \wlhr\ is a standard strategy of the \lsc. The two works together provide the lifting to explicit substitutions of Plotkin's approach of relating a machine (the SECD machine in that case, the KAM/MAM in ours) and a calculus (the call-by-value $\l$-calculus and the \lsc, respectively) via a standardization theorem and a standard strategy \cite{DBLP:journals/tcs/Plotkin75}.
\end{enumerate}

\ben{\paragraph{Beyond Abstract Machines.} This paper is just an episode---the one about abstract machines---in the recent \emph{feuilleton} about complexity analysis of functional languages via linear logic and rewriting theory, starring the linear substitution calculus. The story continues in \cite{usefulred} and \cite{valuevariables}. In \cite{usefulred}, the \lsc\ is used to prove that the length of leftmost-outermost $\beta$-reduction is an invariant cost-model for $\l$-calculus (\ie\ it is a measure polynomially related to evaluation in classic computational models like Turing machines or random access machines), solving a long-standing open problem in the theory of $\l$-calculus. Instead, \cite{valuevariables} studies the asymptotic  number of exponential steps (for $\togen$) in terms of the number of multiplicative steps, in the call-by-name/value/need \lsc. Via the results presented here, \cite{valuevariables} establishes a polynomial relationship between the exponential and the multiplicative transitions of abstract machines, complementing our work and implying that distillation can be pushed forward, forgetting exponential steps too.

\paragraph{Related Work.} Beyond the already cited works, Danvy and coauthors have studied abstract machines in a number of works. In some of them, they shows how to extract a functional evaluator from an abstract machine via a sequence of transformations (closure conversion, CPS, and defunctionalization) \cite{DBLP:conf/ppdp/AgerBDM03,DBLP:journals/ipl/AgerDM04,DBLP:conf/ifl/Danvy04}. Such a study is orthogonal in spirit to what we do here. The only point of contact is the \emph{rational deconstruction of the SECD} in \cite{DBLP:conf/ifl/Danvy04}, that is something that we also do, but in a different way. Another sequence of works studies the relationship between abstract machines and calculi with ES \cite{DBLP:journals/tcs/BiernackaD07,DBLP:journals/tocl/BiernackaD07,DBLP:conf/ppdp/DanvyZ13}, and it is clearly closer to our topic, except that 1) \cite{DBLP:journals/tcs/BiernackaD07,DBLP:journals/tocl/BiernackaD07} follow the traditional (rather than the contextual) approach to ES; 2) none of these works deals with complexity analysis nor with linear logic. On the other hand, \cite{DBLP:journals/tocl/BiernackaD07} provides a deeper analysis of Leroy's ZINC machine, as ours does not account for the avoidance of needless closure creations that is a distinct feature of the ZINC. Last, what here we call \emph{commutative transitions} essentially corresponds to what Danvy and Nielsen call \emph{decompose} phase in \cite{Danvy04refocusingin}. 

The call-by-need calculus we use---that is a contextual re-formulation of Maraist, Odersky, and Wadler's calculus \cite{DBLP:journals/jfp/MaraistOW98}---is a novelty of this paper. It is simpler than both Ariola and Felleisen's \cite{DBLP:journals/jfp/AriolaF97} and Maraist, Odersky, and Wadler's calculi because it does not need any re-association axioms. Morally, it is a version with let-bindings (avatars of ES) of Chang and Felleisen's calculus \cite{DBLP:conf/esop/ChangF12}. A similar calculus is used by Danvy and Zerny in \cite{DBLP:conf/ppdp/DanvyZ13}. Another call-by-need machine, with whom we do not deal with, appears in \cite{DBLP:conf/popl/GarciaLS09}.}

\paragraph{Proofs.} Some proofs have been omitted for lack of space. They can be found in the longer version \cite{distillingTR}.
\section{Preliminaries on the Linear Substitution Calculus}
\label{sect:ES-distance}
\emph{Terms and Contexts}. The language of the \emph{weak linear substitution calculus} (\wlsc) is generated by the following grammar:
\begin{center}
$\begin{array}{lll@{\sep\sep\sep}llllllllll}
	\tm,\tmtwo,\tmthree,\tmfour,\tmfive,\tmsix &\grameq& \var\mid \val\mid \tm\tmtwo\mid  \tm\esub\var\tmtwo & \val &\grameq \l\var.\tm
\end{array}
$\end{center}
The constructor $\tm\esub{\var}{\tmtwo}$ is called an \emph{explicit substitution} (of $\tmtwo$ for $\var$ in $\tm$). The usual (implicit) substitution is instead denoted by $\tm\isub{\var}{\tmtwo}$. Both $\l \var. \tm$ and $\tm\esub{\var}{\tmtwo}$ bind $\var$ in $\tm$, with the usual notion of $\alpha$-equivalence. Values, noted $\val$, do not include variables: this is a standard choice in the study of abstract machines.

Contexts are terms with one occurrence of the hole $\ctxhole$, an additional constant. We will use many different contexts. The most general ones will be \emph{weak contexts} $\wctx$ (\ie\ not under abstractions), which are defined by:
\begin{center}
$\begin{array}{lll}
	\wctx,\wctxtwo &\grameq& \ctxhole\mid \wctx\tmtwo \mid \tm\wctx\mid  \wctx\esub\var\tmtwo \mid \tm\esub\var\wctx
\end{array}
$\end{center}
The \emph{plugging} $\wctxp\tm$ (resp. $\wctxp\wctxtwo$) of a term $\tm$ (resp. context $\wctxtwo$) in a context $\wctx$ is defined as $\ctxholep\tm\defeq\tm$ (resp. $\ctxholep\wctxtwo\defeq\wctxtwo$), $(\wctx\tm)\ctxholep\tmtwo\defeq \wctx\ctxholep\tmtwo\tm$ (resp. $(\wctx\tm)\ctxholep\wctxtwo\defeq \wctx\ctxholep\wctxtwo\tm$), and so on. The set of free variables of a term $\tm$ (or context $\wctx$) is denoted by $\fv\tm$ (resp. $\fv\wctx$). Plugging in a context may capture free variables (replacing holes on the left of substitutions). These notions will be silently extended to all the contexts used in the paper.

\emph{Rewriting Rules}. On the above terms, one may define several variants of the \lsc\  by considering two elementary rewriting rules, \emph{distance-$\beta$} (\db) and \emph{linear substitution} (\lssym), each one coming in two variants, call-by-name and call-by-value (the latter variants being abbreviated by \dbv\ and \lsvsym), and pairing them in different ways and with respect to different evaluation contexts.

The rewriting rules rely in multiple ways on contexts. We start by defining \emph{substitution contexts}, generated by
\begin{center}
$\begin{array}{lllllllllllll}
	\sctx &\grameq  \ctxhole \mid \sctx\esub{\var}{\tm}.
\end{array}
$\end{center}
A term of the form $\sctxp\val$ is an \emph{answer}. Given a family of contexts $\ctx$, the two variants of the elementary rewriting rules, also called \emph{root rules}, are defined as follows:
\begin{center}
$\begin{array}{rcl}
	\sctxp{\l\var.\tm}\tmtwo &\rtodb &\sctxp{\tm\esub\var\tmtwo}\\
	\sctxp{\l\var.\tm}\sctxtwop{\val}&\rtodbv &\sctxp{\tm\esub\var{\sctxtwop{\val}}} \\
	\ctxp\var\esub\var\tmtwo &\rtols &\ctxp\tmtwo\esub\var\tmtwo\\
	\ctxp\var\esub\var{\sctxp{\val}} &\rtolsv &\sctxp{\ctxp\val\esub\var\val}
\end{array}
$\end{center}
In the linear substitution rules, we assume that $\var\in\fv{\ctxp\var}$, \ie, the context $\ctx$ does not capture the variable $\var$, and we also silently work modulo $\alpha$-equivalence to avoid variable capture in the rewriting rules. Moreover, we use the notations $\rtolsc{\ctx}$ and $\rtolsvc{\ctx}$ to specify the family of contexts used by the rules, with $\ctx$ being the meta-variable ranging over such contexts.

All of the above rules are \emph{at a distance} (or \emph{contextual}) because their definition involves contexts. Distance-$\beta$ and linear substitution correspond, respectively, to the so-called \emph{multiplicative} and \emph{exponential} rules for cut-elimination in \pns. The presence of contexts is how locality on \pns\ is reflected on terms.

\ben{The rewriting rules decompose the usual small-step semantics for $\l$-calculi, by substituting one occurrence at the time, and only when such an occurrence is in evaluation position. We emphasise this fact saying that we adopt a \emph{micro-step semantics}.}

A linear substitution calculus is defined by a choice of root rules, \ie, one of $\db/\dbv$ and one of $\lssym/\lsvsym$, and a family of \emph{evaluation contexts}. The chosen distance-$\beta$ (resp.\ linear substitution) root rule is generically denoted by $\rtogenm$ (resp.\ $\rtogene$). If $\evctx$ ranges over a fixed notion of evaluation context, the context-closures of the root rules are denoted by $\togenm\defeq\evctxp{\rtogenm}$ and $\togene\defeq\evctxp{\rtogene}$, where $\msym$ (resp. $\esym$) stands for \emph{multiplicative} (\emph{exponential}). The rewriting relation defining the calculus is then $\togen\defeq\togenm\cup\togene$.

\begin{table*}[t]
	\begin{center}
		{\setlength{\tabcolsep}{0.4em}
		\begin{tabular}{|l|l|c|c|c|c|}
			\hline
			Calculus & Evaluation contexts & $\rtogenm$ & $\rtogene$ & $\togenm$ & $\togene$ \\
			\hline
			$\wlscname$ & $\whctx \grameq \ctxhole\mid \whctx \tm \mid\whctx\esub{\var}{\tm}$ & $\rtodb$ & $\rtolsc{\whctx}$ &$\whctxp{\rtodb}$& $\whctxp{\rtolsc{\whctx}}$\\
			$\wlscvaluelr$ & $\cbvctx \grameq \ctxhole\mid \cbvctx\tm\mid \sctxp\val\cbvctx\mid \cbvctx\esub{\var}{\tm}$ & $\rtodbv$ & $\rtolsvc{\cbvctx}$ &$\cbvctxp{\rtodb}$& $\cbvctxp{\rtolsc{\cbvctx}}$\\
			$\wlscvaluerl$ & $\scbvctx \grameq \ctxhole\mid \scbvctx\sctxp\val\mid \tm\scbvctx\mid \scbvctx\esub{\var}{\tm}$ & $\rtodbv$ & $\rtolsvc{\scbvctx}$ &$\scbvctxp{\rtodb}$& $\scbvctxp{\rtolsc{\scbvctx}}$\\
			$\wlscneed$ & $\cbndctx \grameq \ctxhole\mid \cbndctx\tm\mid \cbndctx\esub\var\tm\mid \cbndctxtwop\var\esub\var\cbndctx$ & $\rtodb$ & $\rtolsvc{\cbndctx}$ &$\cbndctxp{\rtodb}$& $\cbndctxp{\rtolsc{\cbndctx}}$\\
			\hline
		\end{tabular}}
	\end{center}
	\caption{The four linear substitution calculi.}
	\label{tab:Calculi}
\end{table*}
\subsection{Calculi}
We consider four calculi, noted $\wlscname$, $\wlscvaluelr$, $\wlscvaluerl$, and $\wlscneed$, and defined in \reftab{Calculi}. They correspond to four standard evaluation strategies for functional languages. We are actually slightly abusing the terminology, because---as we will show---they are \emph{deterministic} calculi and thus should be considered as strategies. Our abuse is motivated by the fact that they are not strategies in the same calculus. The essential property of all these four calculi is that they are deterministic, because they implement a reduction strategy.
\begin{proposition}[Determinism]
	\label{prop:GenDet}
	\label{prop:CbNDet}
	\label{prop:CbvCekDet}
	\label{prop:CbvLamDet}
	\label{prop:CbNeedDet}
	The reduction relations of the four calculi of \reftab{Calculi} are deterministic: in each calculus, if $\evctx_1,\evctx_2$ are evaluation contexts and if $\tmfour_1,\tmfour_2$ are redexes (\ie, terms matching the left hand side of the root rules defining the calculus), $\evctx_1\ctxholep{\tmfour_1}=\evctx_2\ctxholep{\tmfour_2}$ implies $\evctx_1=\evctx_2$ and $\tmfour_1=\tmfour_2$, so that there is at most one way to reduce a term.
\end{proposition}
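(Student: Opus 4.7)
The plan is to prove, for each of the four calculi, a stronger \emph{unique decomposition} property by induction on $\tm$: whenever $\tm=\evctxp{\tmfour}$ with $\evctx$ an evaluation context and $\tmfour$ a redex of the calculus, both $\evctx$ and $\tmfour$ are determined by $\tm$. Determinism is the desired corollary, since if $\evctx_1\ctxholep{\tmfour_1}=\evctx_2\ctxholep{\tmfour_2}$ then this common term admits the two decompositions $(\evctx_1,\tmfour_1)$ and $(\evctx_2,\tmfour_2)$, which the property forces to coincide.

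In each case I would proceed by structural induction on $\tm$, inspecting the outermost constructor and matching it against the grammar of $\evctx$ and of the redexes. For $\wlscname$, where $\whctx \grameq \ctxhole\mid \whctx\tm\mid \whctx\esub{\var}{\tm}$, the decomposition is essentially a linear left-spine: if $\tm=\tm_1\tm_2$ one must have $\evctx=\whctxtwo\tm_2$ with $\whctxtwo\ctxholep{\tmfour}=\tm_1$, unless the whole term is already a $\db$-redex $\sctxp{\l\var.\tmthree}\tm_2$ at the root, which is easily distinguished by inspecting the head of $\tm_1$ through its substitution context; the substitution context $\sctx$ in a $\db$-redex is itself uniquely determined. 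A symmetric analysis handles $\tm=\tm_1\esub\var{\tm_2}$, where either the ls-redex fires if $\tm_1$ is $\whctxtwop\var$, or we recurse into $\tm_1$; exclusivity of these cases is ensured because $\whctx$ does not re-enter substitutions of the form $\esub\var{-}$ on the right side with a free occurrence of the same $\var$.

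For the call-by-value calculi $\wlscvaluelr$ and $\wlscvaluerl$, the new subtlety is the clause $\sctxp\val\cbvctx$ (resp.\ $\scbvctx\sctxp\val$) in the evaluation context grammar and the fact that $\rtodbv,\rtolsvc{-}$ demand an argument/substituend of the shape $\sctxtwop{\val}$. Here I would use the standard fact that \emph{any} term $\tm$ has at most one decomposition as $\sctxp\val$, and is either an answer or not; this dichotomy is exactly what is needed to decide, at an application node $\tm_1\tm_2$, whether one descends into $\tm_1$ (if it is not yet an answer) or into $\tm_2$ (if $\tm_1$ is $\sctxp\val$), and symmetrically for $\scbvctx$. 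For $\wlscneed$, the additional production $\cbndctxtwop\var\esub\var\cbndctx$ means that at a node $\tm_1\esub\var{\tm_2}$ with $\tm_1\neq\cbndctxtwop\var$, the context must step into $\tm_2$, but only if the variable $\var$ is indeed needed by $\tm_1$; the key auxiliary fact is that $\tm_1$ either admits a unique decomposition as $\cbndctxtwop\var$ for this specific $\var$, or it does not, and in the former case we fire ls on the spot, in the latter case we know unambiguously whether we must recurse into $\tm_1$ (looking for a redex there) or into $\tm_2$.

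The main obstacle will be the call-by-need case: unlike the other three, the $\cbndctx$-production $\cbndctxtwop\var\esub\var\cbndctx$ is \emph{non-local}, because it requires locating an occurrence of the bound variable inside another context. I would therefore isolate a preliminary lemma stating that for a fixed $\var$, a given term has at most one decomposition as $\cbndctxtwop\var$ with $\cbndctxtwo$ a need-context, proved by an easy induction parallel to the one for general unique decomposition. Once this lemma is in place, the inductive step at $\tm_1\esub\var{\tm_2}$ becomes a clean case analysis, and the four sub-proofs assemble into the statement of the proposition.
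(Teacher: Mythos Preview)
Your plan is the standard one and is essentially what the paper does: a structural induction on the term establishing unique decomposition, with the call-by-need case requiring an auxiliary lemma about the shape $\cbndctxp\var$. The paper's appendix proof (input from an external file) follows the same route.

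Two small sharpenings are worth making before you write it up in full. First, your call-by-need auxiliary lemma as stated (``for a fixed $\var$, at most one decomposition $\cbndctxtwop\var$'') is not quite what drives the induction. What you actually need is the \emph{trichotomy}: every term is exclusively either an answer $\sctxp\val$, or of the form $\cbndctxp\var$ for a \emph{unique} free $\var$ (and unique $\cbndctx$), or of the form $\cbndctxp{\tmfour}$ for a redex $\tmfour$. In particular you must argue that a term of the form $\cbndctxp\var$ with $\var$ free contains \emph{no} redex in evaluation position; this is the fact that makes the two subcases at a node $\tm_1\esub\var{\tm_2}$ mutually exclusive, and it is not implied by uniqueness of the decomposition for fixed~$\var$ alone. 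Second, your case analysis at $\tm_1\esub\var{\tm_2}$ is slightly off: if $\tm_1=\cbndctxtwop\var$ you do \emph{not} always ``fire $\lssym$ on the spot'' (only when $\tm_2$ is an answer; otherwise you recurse into $\tm_2$), and if $\tm_1\neq\cbndctxtwop\var$ the fourth production is simply unavailable, so you may only recurse into $\tm_1$---there is no further choice to resolve. Once the trichotomy is stated, both of these points become immediate.
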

\begin{proof}
	\withproofs{
	See \refsect{determ-proofs} in the appendix (page \pageref{sect:determ-proofs}).}
	\withoutproofs{See \cite{distillingTR}.}
\end{proof}

\paragraph{Call-by-Name (CBN).} The evaluation contexts for $\wlscname$ are called \emph{weak head contexts} and\ben{---when paired with micro-step evaluation---}implement a strategy known as \emph{weak linear head reduction}. The original presentation of this strategy does not use explicit substitutions \cite{DBLP:journals/tcs/MascariP94,Danos04headlinear}. The presentation in use here has already appeared in \cite{DBLP:journals/corr/abs-1302-6337,DBLP:conf/popl/AccattoliBKL14} (see also \cite{DBLP:conf/rta/Accattoli12,DBLP:conf/rta/AccattoliL12}) as the weak head strategy of the \emph{linear substitution calculus} (which is obtained by considering \emph{all} contexts as evaluation contexts), and it avoids many technicalities of the original one. In particular, its relationship with the KAM is extremely natural, as we will show.

\ben{Let us give some examples of evaluation. Let $\delta \defeq \l\var.(\var\var)$ and consider the usual diverging term $\Omega \defeq \delta\delta$. In $\wlscname$ it evaluates---diverging---as follows:
\[\begin{array}{lclcccccccc}
	\delta\delta	& \togenm	& (\var\var)\esub\var\delta			& \togene	&\\
						&				&	(\delta\var)\esub\var\delta		& \togenm	&\\
						&				&	(\vartwo\vartwo)\esub\vartwo\var\esub\var\delta		& \togene	&\\
						&				&	(\var\vartwo)\esub\vartwo\var\esub\var\delta			& \togene	&\\
						&				&	(\delta\vartwo)\esub\vartwo\var\esub\var\delta			& \togenm	&\\
						&				&	(\varthree\varthree)\esub\varthree\vartwo\esub\vartwo\var\esub\var\delta			& \togene	&\ldots
\end{array}\]

Observe that according to our definitions both $\l\var.\Omega$ and $\var\Omega$ are $\togen$-normal for $\wlscname$, because evaluation does not go under abstractions, nor on the right of a variable (but terms like $\var\Omega$ will be forbidden, as we will limit ourselves to closed terms). Now let show the use of the context $\sctx$ in rule $\togenm$. Let $I \defeq \l \vartwo.\vartwo$ and $\tau \defeq (\l\varthree.\delta) I$, and  consider the following variation over $\Omega$, where rule $\togenm$ is applied with $\sctx \defeq \ctxhole\esub\varthree I$:
\[\begin{array}{lclcccccccc}
	\tau\tau	& \togenm	& \delta\esub\varthree I	\tau		& \togenm	&	(\var\var)\esub\var\tau\esub\varthree I	& \togene & \ldots\\
\end{array}\]
}

\paragraph{Call-by-Value (CBV).} For call-by-value calculi, \emph{left-to-right} ($\wlscvaluelr$) and \emph{right-to-left} ($\wlscvaluerl$) refer to the evaluation order of applications, \ie\ they correspond to \emph{operator first} and \emph{argument first}, respectively. The two calculi we consider here can be seen as strategies of a micro-step variant of the \emph{value substitution calculus}, the (small-step) call-by-value calculus at a distance introduced and studied in \cite{DBLP:conf/flops/AccattoliP12}. 

\ben{As an example, we consider again the evaluation of $\Omega$. In $\wlscvaluelr$ it goes as follows:
\[	\begin{array}{lllllll}
		\delta\delta & \togenm & (\var_1\var_1)\esub{\var_1}\delta & \togene &\\
		&& (\delta\var_1)\esub{\var_1}\delta & \togene\\
		&& (\delta\delta)\esub{\var_1}\delta & \togenm\\
		&& (\var_2\var_2)\esub{\var_2}{\delta}\esub{\var_1}\delta & \togene\\
		&& (\delta\var_2)\esub{\var_2}{\delta}\esub{\var_1}\delta & \togene \ldots
		\end{array}\]
While in $\wlscvaluerl$ it takes the following form:
\[	\begin{array}{lllllll}
		\delta\delta & \togenm & (\var_1\var_1)\esub{\var_1}\delta & \togene &\\
		&& (\var_1\delta)\esub{\var_1}\delta & \togene\\
		&& (\delta\delta)\esub{\var_1}\delta & \togenm\\
		&& (\var_2\var_2)\esub{\var_2}{\delta}\esub{\var_1}\delta & \togene\\
		&& (\var_2\delta)\esub{\var_2}{\delta}\esub{\var_1}\delta & \togene\ldots
		\end{array}\]
Note that the CBV version of $\togenm$ and $\togene$ employ substitution contexts $\sctx$ in a new way. An example of their use is given by the term $\tau\tau$ consider before for CBN. For instance, in $\wlscvaluelr$ we have:
\[\begin{array}{lclcccccccc}
	\tau\tau	& \togenm	& \delta\esub\varthree I	\tau		& \togenm	&	\\
				&				&  \delta\esub\varthree I	(\delta\esub\varthree I)	& \togenm & \\
				&				&  (\var\var)\esub\var{\delta\esub\varthree I}\esub\varthree I & \togene\\
				&				&  (\delta\var)\esub\var{\delta\esub\varthree I}\esub\varthree I &\ldots\\
\end{array}\]}

\paragraph{Call-by-Need (CBNeed).} The call-by-need calculus $\wlscneed$\ is a novelty of this paper, and can be seen either as a version at a distance of the calculi of \cite{DBLP:journals/jfp/MaraistOW98,DBLP:journals/jfp/AriolaF97} or as a version with explicit substitution of the one in \cite{DBLP:conf/esop/ChangF12}. It fully exploits the fact that the two variants of the root rules may be combined: the $\beta$-rule is call-by-name, which reflects the fact that, operationally, the strategy is \emph{by name}, but substitution is call-by-value, which forces arguments to be evaluated before being substituted, reflecting the \emph{by need} content of the strategy. Please note the definition of CBNeed evaluation contexts $\cbndctx$ in \reftab{Calculi}. They extend the weak head contexts for call-by-name with a clause ($\cbndctxtwop\var\esub\var\cbndctx$) turning them into \emph{hereditarily weak head contexts}. This new clause is how sharing is implemented by the reduction strategy. The general (non-deterministic) calculus is obtained by closing the root rules by \emph{all} contexts, but its study is omitted. What we deal with here can be thought as its standard strategy (stopping on a sort of weak head normal form).

\ben{Let us show, once again, the evaluation of $\Omega$. 
\[ \begin{array}{lllllll}
		\delta\delta & \togenm & (\var_1\var_1)\esub{\var_1}\delta & \togene &\\
		&& (\delta\var_1)\esub{\var_1}\delta & \togenm\\
		&& (\var_2\var_2)\esub{\var_2}{\var_1}\esub{\var_1}\delta & \togene\\
		&& (\var_2\var_2)\esub{\var_2}{\delta}\esub{\var_1}\delta & \togene\\
		&& (\delta\var_2)\esub{\var_2}{\delta}\esub{\var_1}\delta & \togenm\\
		&& (\var_3\var_3)\esub{\var_3}{\var_2}\esub{\var_2}{\delta}\esub{\var_1}\delta & \togene\\
		&& (\var_3\var_3)\esub{\var_3}{\delta}\esub{\var_2}{\delta}\esub{\var_1}\delta & \togene\\
		&& (\delta\var_3)\esub{\var_3}{\delta}\esub{\var_2}{\delta}\esub{\var_1}\delta & \togenm &\ldots\\
\end{array}\]

Note the difference with CBN: hereditarily weak evaluation contexts allow the micro-step substitution rule to replace variable occurrences in explicit substitutions.

As shown by the evaluation of $\Omega$, the various calculi considered in this paper not only select different $\beta$-redexes, they are also characterised by different substitution processes. Such processes are the object of a detailed analysis in the companion paper \cite{valuevariables}.}

\begin{figure*}[t]
	\begin{center}
	$\begin{array}{rlll@{\hspace*{0.25cm}}|@{\hspace*{0.25cm}}rlll}
		\tm\esub{\var}{\tmtwo} &\tostructgc&  \tm&\mbox{if $\var\notin\fv{\tm}$}&
		\tm\esub{\var}{\tmtwo} &\tostructdup&  \varsplit{\tm}{\var}{\vartwo}\esub{\var}{\tmtwo}\esub{\vartwo}{\tmtwo}\\
		\tm\esub{\var}{\tmtwo}\esub{\vartwo}{\tmthree} &\tostructcom& \tm\esub{\vartwo}{\tmthree}\esub{\var}{\tmtwo}&\mbox{if $\vartwo\notin\fv{\tmtwo}$ \ben{and $\var\notin\fv{\tmthree}$}}&
		(\tm\tmthree)\esub{\var}{\tmtwo} &\tostructap&  \tm\esub{\var}{\tmtwo}\tmthree\esub{\var}{\tmtwo}\\
		\tm\esub{\var}{\tmtwo}\esub{\vartwo}{\tmthree} &\tostructes& \tm\esub{\var}{\tmtwo\esub{\vartwo}{\tmthree}} & \mbox{if $\vartwo\not\in\fv{\tm}$} &
		(\tm\tmthree)\esub\var\tmtwo &\tostructapl& \tm\esub\var\tmtwo\tmthree & \textrm{if }\var\not\in\fv\tmthree 
		 
	\end{array}$
	\end{center}
	\caption{Axioms for structural equivalences. In $\tostructdup$, $\varsplit{\tm}{\var}{\vartwo}$ denotes a term obtained from $\tm$ by renaming some (possibly none) occurrences of $\var$ as $\vartwo$.}
	\label{fig:StructEq}
\end{figure*}

\paragraph{Structural equivalence.} Another common feature of the four calculi is that they come with a notion of \emph{structural equivalence}, denoted by $\eqstruct$. Consider \reffig{StructEq}. For call-by-name and call-by-value calculi, $\eqstruct$ is defined as the smallest equivalence relation containing the closure by weak contexts of $\alphaequiv\cup\tostructgc\cup\tostructdup\cup\tostructap\cup\tostructcom\cup\tostructes$ where $\alphaequiv$ is $\alpha$-equivalence. Call-by-need evaluates inside some substitutions (\cben{but not any substitution}{those hereditarily substituting on the head}) and thus axioms as $\tostructdup$ and $\tostructap$ are too strong. Therefore, the structural equivalence  for call-by-need\ben{, noted $\eqstructneed$,} is the one generated by $\tostructapl\cup\tostructcom\cup\tostructes$. 

 Structural equivalence represents the fact that certain manipulations on explicit substitutions are computationally irrelevant, in the sense that they yield behaviorally equivalent terms. Technically, it is a \emph{strong bisimulation}:
\begin{proposition}[$\tostruct$ is a Strong Bisimulation]
	\label{prop:GenStrongBisim}
	\label{prop:strong-bis}
	Let $\togenm$, $\togene$ and $\eqstruct$ be the reduction relations and the structural equivalence relation of any of the calculi of \reftab{Calculi}, and let $\mathtt{x}\in\set{\mathtt{m},\mathtt{e}}$. Then, $\tm\eqstruct\tmtwo$ and $\tm\togenx\tmp$ implies that there exists $\tmtwop$ such that $\tmtwo\togenx\tmtwop$ and $\tmp\eqstruct\tmtwop$.
\end{proposition}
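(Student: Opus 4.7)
The plan is to reduce to a finite case analysis on the defining axioms of $\eqstruct$. Since $\eqstruct$ is the reflexive, symmetric, transitive, weak-context closure of $\alphaequiv$ together with the axioms of \reffig{StructEq} (restricted to $\tostructapl$, $\tostructcom$, and $\tostructes$ in the call-by-need case), reflexivity is trivial ($\tmtwop \defeq \tmp$), transitivity closes by chaining two uses of the bisimulation property, and symmetry follows by running a symmetric version of the same diagram chase. The real work reduces to (i) showing that each individual axiom is a strong bisimulation for both $\togenm$ and $\togene$, and (ii) lifting this through the closure under weak contexts $\wctx$.

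For (i), I would perform a case analysis on the position of the $\togenx$-redex of $\tm$ relative to the subterm rewritten by the axiom. The cases fall into two patterns: either the redex lies in a region untouched by the axiom, so the same redex fires on $\tmtwo$ and the reducts are related by the same axiom instance; or the axiom displaces the redex without destroying it, so a corresponding redex fires on $\tmtwo$ and the diagram closes only up to further structural steps. The $\togenm$ case is uniformly simpler, as its redex involves only a head $\lambda$ applied through a substitution context $\sctx$, a shape stable under every axiom. For $\togene$ the delicate interactions are with $\tostructcom$ and $\tostructes$: here the freshness side conditions ($\vartwo \notin \fv\tmtwo$, $\var \notin \fv\tmthree$, $\vartwo \notin \fv\tm$) must be invoked to show that the substitution consumed by the exponential step remains accessible to the surrounding evaluation context after the rearrangement.

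The main obstacle is the axiom $\tostructap$ (and, in call-by-need, its weaker variant $\tostructapl$) against an exponential step: on one side a single copy of the substitution is consumed by $\togene$, while on the other side the substitution has already been duplicated across an application, so only one of the two copies is consumed by the matching step and a spurious duplicate remains. The diagram then closes not on the nose but modulo a combination of $\tostructdup$, $\tostructgc$, and a further $\tostructap$, yielding $\tmp \eqstruct \tmtwop$; this is precisely why the conclusion is stated up to $\eqstruct$ and not up to equality. Step (ii) is then routine: if $\tm = \wctxp{\tmthree}$ and $\tmtwo = \wctxp{\tmfour}$ with $\tmthree$ related to $\tmfour$ by an axiom, and if the $\togenx$-redex of $\tm$ lies strictly inside $\tmthree$, the inductive hypothesis supplies the matching step; otherwise the redex root sits in $\wctx$ itself, and one checks by direct inspection that the evaluation contexts $\evctx \in \{\whctx, \cbvctx, \scbvctx, \cbndctx\}$ and their side conditions---notably the ``answer on the left'' clause $\sctxp{\val}\cbvctx$ of call-by-value and the hereditary clause $\cbndctxtwop{\var}\esub{\var}{\cbndctx}$ of call-by-need---are preserved by every admissible axiom. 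Indeed, this compatibility requirement is exactly what forces the call-by-need equivalence to drop the more invasive $\tostructdup$ and $\tostructap$.
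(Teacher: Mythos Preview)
Your overall strategy---reduce to the generating axioms, check each against $\togenm$ and $\togene$ by case analysis on the position of the redex, then lift through the weak-context and equivalence closures---is exactly what the paper does, and your remark that evaluation contexts must be stable under the axioms is precisely what the paper isolates as a preliminary lemma in the call-by-need case (preservation of the shape $\evctxp{\var}$).

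One substantive detail is wrong, however: the claim that the multiplicative case is ``uniformly simpler'' and that the hard interaction is $\tostructap$ against $\togene$ is backwards. Linear substitution never removes the explicit substitution it reads from, so there is no ``spurious duplicate'' left over; the diagram for $\tostructap$ versus $\togene$ closes with a single $\tostructap$. The delicate case is $\tostructap$ against $\togenm$: from $(\sctxp{\l\vartwo.\tmthree}\,\tmfour)\esub\var\tmtwo \togenm \sctxp{\tmthree\esub\vartwo\tmfour}\esub\var\tmtwo$, whereas after $\tostructap$ the left side becomes $\sctxp{\l\vartwo.\tmthree}\esub\var\tmtwo\,(\tmfour\esub\var\tmtwo)$, which fires to $\sctxp{\tmthree\esub\vartwo{\tmfour\esub\var\tmtwo}}\esub\var\tmtwo$. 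Reconciling $\tmthree\esub\vartwo\tmfour$ with $\tmthree\esub\vartwo{\tmfour\esub\var\tmtwo}$ under the outer $\esub\var\tmtwo$ is what actually forces a detour through $\tostructdup$, $\tostructcom$, $\tostructes$ (or through $\tostructgc$ when $\var\notin\fv\tmfour$). This does not break your plan---the diagram still closes inside $\eqstruct$---but the bookkeeping lives in the multiplicative case, not the exponential one.
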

\begin{proof}
	\withproofs{See \refsect{strong-bisim-proofs} of the appendix (page \pageref{sect:strong-bisim-proofs}).}
		\withoutproofs{See \cite{distillingTR}.}
\end{proof}

The essential property of strong bisimulations is that they can be postponed. In fact, it is immediate to prove the following, which holds for all four calculi:
\begin{lemma}[$\tostruct$ Postponement]
	\label{l:postponement}
	If $\tm\mathrel{(\togenm\cup\togene\cup\eqstruct)^*}\tmtwo$ then $\tm\mathrel{(\togenm\cup\togene)^*\eqstruct}\tmtwo$ and the number of $\togenm$ and $\togene$ steps in the two reduction sequences is exactly the same.
\end{lemma}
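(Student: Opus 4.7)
The plan is to derive the postponement from Proposition~\ref{prop:strong-bis} by iteratively swapping each $\eqstruct$-step past the reduction step immediately on its right. Since $\eqstruct$ is symmetric, Proposition~\ref{prop:strong-bis} immediately yields a dual swap principle: whenever $\tm \eqstruct \tmtwo \togenx \tmtwop$ with $\mathtt{x} \in \{\mathtt{m},\mathtt{e}\}$, there exists $\tmp$ such that $\tm \togenx \tmp \eqstruct \tmtwop$. Crucially, each application of this principle replaces one $\togenx$-step by one $\togenx$-step of the \emph{same} kind, so that the individual counts of $\togenm$- and $\togene$-steps are preserved.

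Given any mixed sequence $\tm = \tm_0 \rho_1 \tm_1 \cdots \rho_n \tm_n = \tmtwo$ with $\rho_i \in \{\togenm,\togene,\eqstruct\}$, I would argue by well-founded induction on the \emph{disorder}, defined as the number of pairs $(i,j)$ with $i < j$, $\rho_i = \eqstruct$, and $\rho_j \in \{\togenm,\togene\}$. When the disorder is zero every $\eqstruct$-step occurs after every reduction step, and gluing the trailing $\eqstruct$-steps by transitivity yields a sequence of the exact desired form $(\togenm \cup \togene)^* \eqstruct$. Otherwise, there must exist an adjacent inversion, i.e.\ a position $i$ with $\rho_i = \eqstruct$ and $\rho_{i+1} \in \{\togenm,\togene\}$; applying the swap principle there produces a new mixed sequence of the same length and with the same multiset of reduction-step kinds.

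The one genuine verification is that the disorder strictly decreases after each swap, and this is a short combinatorial inspection. The pair $(i,i+1)$ stops contributing (its polarity has flipped); every pair $(i,j)$ with $j > i+1$ that used to contribute is replaced one-for-one by the pair $(i+1,j)$ for the same $j$; all pairs disjoint from $\{i,i+1\}$ are untouched. Hence the disorder drops by exactly one, the induction terminates, and the final sequence has the required shape with the same number of $\togenm$- and $\togene$-steps as the original. I do not expect a real obstacle: the essential content of the lemma is Proposition~\ref{prop:strong-bis} itself, and what remains is a bookkeeping argument on the mixed sequence.
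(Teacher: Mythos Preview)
Your proposal is correct and is precisely the standard argument the paper has in mind when it calls the lemma ``immediate'' from Proposition~\ref{prop:strong-bis}; the paper gives no explicit proof, so your sketch is already more detailed than the original. One small bookkeeping point: in your disorder analysis you account for the pair $(i,i+1)$ and the pairs $(i,j)$, $(i+1,j)$ with $j>i+1$, but you do not mention the pairs $(k,i)$ and $(k,i+1)$ with $k<i$; these also change (the former starts contributing, the latter stops), but they cancel one-for-one, so your conclusion that the disorder drops by exactly one is still correct.
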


In the simulation theorems for machines with a global environment (see \refsect{mam} and \refsect{need}) we will also use the following commutation property between substitutions and evaluation contexts via the structural equivalence of every evaluation scheme, proved by an easy induction on the actual definition of evaluation contexts.

\begin{lemma}[ES Commute with Evaluation Contexts via $\tostruct$] 
	\label{l:ev-comm-struct}
	For every evaluation scheme let $\ctx$ denote an evaluation context s.t. $\var\notin\fv\ctx$ and $\tostruct$ be its structural equivalence. Then $\ctxp{\tm}\esub{\var}{\tmtwo} \eqstruct \ctxp{\tm\esub{\var}{\tmtwo}}$.
\end{lemma}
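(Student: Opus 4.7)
The plan is to proceed by structural induction on the evaluation context $\ctx$, with a case analysis that depends on which of the four evaluation schemes of \reftab{Calculi} is under consideration. The base case $\ctx = \ctxhole$ is immediate, since $\ctxholep{\tm}\esub{\var}{\tmtwo} = \tm\esub{\var}{\tmtwo} = \ctxholep{\tm\esub{\var}{\tmtwo}}$. For the inductive step, I would first invoke Barendregt-style $\alpha$-conventions so that $\var$ differs from every bound variable of $\ctx$ and no bound variable of $\ctx$ occurs free in $\tmtwo$; these are the only freshness conditions used throughout the proof.

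For the applicative clauses --- $\whctx\tmthree$ for CBN, $\cbvctx\tmthree$ and $\sctxp{\val}\cbvctx$ for left-to-right CBV, $\scbvctx\sctxp{\val}$ and $\tm\scbvctx$ for right-to-left CBV, and $\cbndctx\tmthree$ for CBNeed --- the hypothesis $\var\notin\fv\ctx$ ensures that $\var$ is not free in the term beside the hole. For CBN/CBV calculi the rewriting is a two-step combination: $\tostructap$ duplicates $\esub{\var}{\tmtwo}$ on both sides of the application, and then $\tostructgc$ removes the spurious copy on the side that does not contain the hole. For CBNeed, whose structural equivalence contains only the weaker one-sided axiom $\tostructapl$, the same side condition makes $\tostructapl$ directly applicable and yields the desired step in one go. In each subcase the inductive hypothesis on the smaller context concludes. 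For the ES clauses of the form $\ctx'\esub{\vartwo}{\tmthree}$, axiom $\tostructcom$ applies, with its side conditions $\vartwo\notin\fv{\tmtwo}$ and $\var\notin\fv{\tmthree}$ granted respectively by the Barendregt convention and by $\var\notin\fv\ctx$; the induction hypothesis again finishes the case.

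The only mildly subtle case is the CBNeed-specific clause $\cbndctxtwop{\vartwo}\esub{\vartwo}{\cbndctx}$, in which the variable bound by the outer ES coincides with the head occurrence inside $\cbndctxtwop{\vartwo}$. Here I would use the axiom $\tostructes$, read in the direction opposite to that displayed in \reffig{StructEq}, to push $\esub{\var}{\tmtwo}$ inside the body of the outer ES; its side condition $\var\notin\fv{\cbndctxtwop{\vartwo}}$ follows from $\var\notin\fv\ctx$ together with $\var\neq\vartwo$ (granted by the Barendregt convention). The inductive hypothesis applied to $\cbndctx$ then concludes. The main obstacle, such as it is, lies in keeping track of all the freshness conditions simultaneously; all of them are ultimately discharged by the single hypothesis $\var\notin\fv\ctx$ once an appropriate $\alpha$-renaming has been performed, so no real difficulty arises.
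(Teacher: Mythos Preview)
Your proposal is correct and follows exactly the route the paper indicates, namely a straightforward induction on the grammar of evaluation contexts; the paper does not spell out the cases but your analysis of each clause (including the use of $\tostructap$ followed by $\tostructgc$ in CBN/CBV, of $\tostructapl$ in CBNeed, of $\tostructcom$ for the ES clauses, and of $\tostructes$ for the hereditary clause $\cbndctxtwop{\vartwo}\esub{\vartwo}{\cbndctx}$) is precisely what the omitted proof amounts to.
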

\section{Preliminaries on Abstract Machines.} 
\label{sect:prel-am}
\emph{Codes.} All the abstract machines we will consider execute pure \lat s. In our syntax, these are nothing but terms \emph{without explicit substitutions}. Moreover, while for calculi we work implicitly modulo $\alpha$, for machines we will \emph{not} consider terms up to $\alpha$, as the handling of $\alpha$-equivalence characterizes different approaches to abstract machines. To stress these facts, we use the metavariables $\code,\codetwo,\codethree,\codefour$ for pure \lat s (not up to $\alpha$) and $\codeval$ for pure values. 

\emph{States}. A machine state $\state$ will have various components, of which the first will always be \emph{the code}, \ie\ a pure $\l$-term $\code$. The others (\emph{environment}, \emph{stack}, \emph{dump}) are all considered as lists, whose constructors are the empty list $\stempty$ and the concatenation operator $\cons$. A state $\state$ of a machine is \emph{initial} if its code $\code$ is closed (\ie, $\fv\code=\emptyset$) and all other components are empty. An \emph{execution} $\exec$ is a sequence of transitions of the machine $\state_0\to^\ast\state$ from an initial state $\state_0$. In that case, we say that $\state$ is a \emph{reachable state}, and if $\code$ is the code of $\state_0$ then $\code$ is the \emph{initial code} of $\state$. 

\emph{Invariants}. For every machine our study will rely on a lemma about some \emph{dynamic invariants}, \ie\  some properties of the reachable states that are stable by executions. The lemma  is always proved by a straightforward induction on the length of the execution and \emph{the proof is omitted}.

\emph{Environments and Closures}. There will be two types of machines, those with many \emph{local environments} and those with just one \emph{global environment}. 
Machines with local environments are based on the mutually recursive definition of \emph{closure} (ranged over by $\clos$) and \emph{environment} ($\env$):
\begin{center}$\begin{array}{rcl@{\sep}rcl@{\sep}rcl}
	\clos &\grameq& (\code,\env)  &
	\env &\grameq& \stempty\mid \esub{\var}{\clos}\cons\env
\end{array}$\end{center}
Global environments are defined by $\genv \grameq \stempty \mid \esub\var\code\cons\genv$, and global environment machines will have just one global closure $(\code,\genv)$.

\emph{Well-Named and Closed Closures}. The explicit treatment of $\alpha$-equivalence, is based on particular representatives of $\alpha$-classes defined via the notion of support. 
The \emph{support} $\Delta$ of codes, environments, and closures is defined by:
\begin{itemize}
	\item $\support\code$ is the \emph{multi}set of its bound names (\eg\ $\support{\l\var.\l \vartwo. \l\var.(\varthree \var)}\\ = [\var,\var,\vartwo]$). 
	\item $\support\env$ is the \emph{multi}set of names captured by $\env$ (for example $\support{\esub\var{\clos_1}\esub\vartwo{\clos_2}\esub\var{\clos_3}}=[\var,\var,\vartwo]$), and similarly for $\support\genv$. 
	\item $\support{\code,\env} \defeq \support\code+ \support\env$ and $\support{\code,\genv} \defeq \support\code+ \support\genv$.
\end{itemize}

A code/environment/closure is \emph{well-named} if its support is a set (\ie\ a multiset with no repetitions). Moreover, a closure $(\code,\env)$ (resp.\ $(\tm,\genv)$) is \emph{closed} if $\fv{\code}\subseteq \support\env$ (resp.\ $\fv{\code}\subseteq \support\genv$).

\section{Distilleries}
\label{sect:theory}
This section presents an abstract, high-level view of the relationship between abstract machines and linear substitution calculi, via the notion of \emph{distillery}. 

\begin{definition}
A \emph{distillery} $\distil = (\mach, \calculus, \tostruct,\decode{{ }\cdot{ }})$ is given by:
\begin{enumerate}
	\item An \emph{abstract machine} $\mach$, given by
	\begin{enumerate}
		\item a deterministic labeled transition system $\tomach$ on states $\state$;
		\item a distinguished class of states called \emph{initials} (in bijection with closed $\l$-terms, and from which applying $\tomach$ one obtains the \emph{reachable} states);
		\item \ben{a partition of the labels of the transition system $\tomach$ as:
		\begin{itemize}
			\item \emph{commutative} transitions, noted $\tomacha$;
			\item \emph{principal} transitions, in turn partitioned into
			\begin{itemize}
				\item \emph{multiplicative} transitions, denoted by $\tomachm$;
				\item  \emph{exponential} transitions, denoted by $\tomache$;
			\end{itemize}			
		\end{itemize}
		}
	\end{enumerate}
	
		\item a \emph{linear substitution calculus} $\calculus$ given by a pair $(\togenm, \togene)$ of rewriting relations on terms with ES;

	 \item a \emph{structural equivalence} $\eqstruct$ on terms s.t. it is a strong bisimulation with respect to $\togenm$ and $\togene$;

	\item a \emph{distillation} $\decode{{ }\cdot{ }}$, \ie\ a decoding function from states to terms, s.t. on reachable states:
	\begin{itemize}
		\item \emph{Commutative}: $\state\tomacha\statetwo$ implies $\decode\state\eqstruct\decode\statetwo$.
		\item \emph{Multiplicative}: $\state\tomachm\statetwo$ implies $\decode\state\togenm\eqstruct\decode\statetwo$;
		\item \emph{Exponential}: $\state\tomache\statetwo$ implies $\decode\state\togene\eqstruct\decode\statetwo$;

	\end{itemize}	
\end{enumerate}
\end{definition}

Given a distillery, the simulation theorem holds abstractly. Let $\size\exec$ (resp. $\size\deriv$), $\sizem\exec$ (resp. $\sizem\deriv$), $\sizee\exec$ (resp. $\sizee\deriv$), and $\sizep\exec$ denote the number of unspecified, multiplicative, exponential, and principal steps in an execution (resp. derivation).

\begin{theorem}[Simulation]
	\label{tm:GenSim}
	Let $\distil$ be a distillery. Then for every execution $\exec:\state\tomach^*\statetwo$ there is a derivation $\deriv:\decode\state\togen^*\tostruct\decode\statetwo$ s.t. $\sizem\exec=\sizem\deriv$, $\sizee\exec=\sizee\deriv$, and $\sizep\exec=\size\deriv$.
\end{theorem}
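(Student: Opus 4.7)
The plan is to proceed by induction on the length $n$ of the execution $\exec$. In the base case $n = 0$, we have $\state = \statetwo$, so the empty derivation $\deriv$ from $\decode\state$ to itself satisfies all the requirements trivially (all three counters are $0$ on both sides). For the inductive step, decompose $\exec$ as $\state \tomach^* \statethree \tomach \statetwo$ and apply the IH to the prefix to obtain a derivation $\deriv': \decode\state \togen^* \tostruct \decode\statethree$ with $\sizem{\exec'} = \sizem{\deriv'}$, $\sizee{\exec'} = \sizee{\deriv'}$, and $\sizep{\exec'} = \size{\deriv'}$, where $\exec'$ is the prefix.

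Then I would do a case analysis on the label of the last transition $\statethree \tomach \statetwo$, using the corresponding clause of the distillation hypothesis. If the step is commutative, the clause gives $\decode\statethree \eqstruct \decode\statetwo$; composing with the trailing $\tostruct$ of $\deriv'$ (and using that $\eqstruct$ is an equivalence, hence transitive) yields $\decode\state \togen^* \tostruct \decode\statetwo$ with all three counters unchanged, matching the fact that a commutative transition contributes to none of $\sizem\exec$, $\sizee\exec$, or $\sizep\exec$. If the step is multiplicative, the clause gives $\decode\statethree \togenm \eqstruct \decode\statetwo$, so concatenating with $\deriv'$ produces a sequence of the shape $\decode\state \togen^* \tostruct \togenm \tostruct \decode\statetwo$.

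The one non-trivial move is now to convert this into the required form $\decode\state \togen^* \tostruct \decode\statetwo$, which means sliding the middle $\tostruct$ past the $\togenm$ step. This is exactly where \refprop{GenStrongBisim} (the strong bisimulation property of $\tostruct$ with respect to $\togenm$ and $\togene$) is used: it lets us replace $\tostruct\togenm$ by $\togenm\tostruct$, and then the two consecutive $\tostruct$'s collapse by transitivity, yielding $\decode\state \togen^* \togenm \tostruct \decode\statetwo$. The multiplicative count of the derivation grows by exactly one, matching the increment of $\sizem\exec$ from the new machine step, and the exponential count is unchanged. The exponential case is entirely symmetric, using the $\togene$ half of the strong bisimulation.

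Finally, the equation $\sizep\exec = \size\deriv$ is a bookkeeping consequence: the derivation $\deriv$ consists solely of $\togenm$ and $\togene$ steps, so $\size\deriv = \sizem\deriv + \sizee\deriv$, while by definition of the partition of machine labels $\sizep\exec = \sizem\exec + \sizee\exec$, and the two pairs of counters are equal by the invariants maintained above. The only subtlety in the whole argument is the appeal to strong bisimulation to commute $\tostruct$ past principal steps; everything else is routine induction and case analysis.
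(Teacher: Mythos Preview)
Your proof is correct. The paper organizes the argument slightly differently: it first observes, by a trivial induction using only the decoding clauses, that there is a derivation $\derivtwo:\decode\state\,(\togen\eqstruct)^*\,\decode\statetwo$ with the right step counts, and then invokes the Postponement Lemma (\reflemma{postponement}) once at the end to reshape $(\togen\eqstruct)^*$ into $\togen^*\eqstruct$. You instead maintain the normal form $\togen^*\eqstruct$ as the inductive invariant, appealing directly to the strong bisimulation property (\refprop{GenStrongBisim}) at each principal step to slide the pending $\eqstruct$ past the new $\togen$-step. The two arguments are equivalent---the Postponement Lemma is itself proved by iterated strong bisimulation---so the only difference is whether the commutation is performed eagerly at each step (your version) or deferred and done all at once (the paper's version); the paper's packaging is marginally more modular, yours is more self-contained.
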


\begin{proof}
By induction on $\size\exec$ and by the properties of the decoding, it follows that there is a derivation $\derivtwo:\decode\state(\togen\tostruct)^*\decode\statetwo$  s.t. the number $\sizep\exec=\size\derivtwo$. The witness $\deriv$ for the statement is obtained by applying the postponement of strong bisimulations (\reflemma{postponement}) to $\derivtwo$.
\end{proof}

\paragraph{Reflection.} Given a distillery, one would also expect that reduction in the calculus is reflected in the machine. This result in fact requires two additional abstract properties.

\begin{definition}[Reflective Distillery]
A distillery is \emph{reflective} when:
\begin{description}
	\item[Termination:] 
	$\tomacha$ \emph{terminates} (on reachable states); hence, by determinism, every state $\state$ has a unique \emph{commutative normal form} $\admnf\state$;
	\item[Progress:] if $\state$ is reachable, $\admnf{\state}=\state$ and $\decode\state\togen_x\tm$ with $\mathtt{x}\in\set{\mathtt{m},\mathtt{e}}$, then there exists $\statetwo$  such that $\state\tomachx\statetwo$, \ie, $\state$ is not final.
\end{description}
\end{definition}

Then, we may prove the following reflection of steps in full generality:
\begin{proposition}[Reflection]
	\label{prop:GenCorrectness}
	Let $\distil$ be a reflective distillery, $\state$ be a reachable state, and $x\in\set{\mulsym,\expsym}$. Then, $\decode\state\togenx\tmtwo$ implies that there exists a state $\statetwo$ s.t. $\admnf\state\tomachx\statetwo$ and $\decode\statetwo\eqstruct\tmtwo$.
\end{proposition}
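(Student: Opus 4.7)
The plan is to chain together the four defining properties of a reflective distillery to produce $\statetwo$, with the strong bisimulation doing the main lifting. First, I would invoke Termination to extract the commutative normal form $\admnf{\state}$: it exists and is unique by determinism of $\tomach$. Since $\admnf{\state}$ is obtained from the reachable state $\state$ by $\tomacha$-steps, it is itself reachable, and iterating the commutative clause of the distillation along $\state \tomacha^\ast \admnf{\state}$ yields $\decode\state \eqstruct \decode{\admnf{\state}}$.

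Next, since $\eqstruct$ is a strong bisimulation with respect to $\togenx$ (\refprop{strong-bis}), from $\decode{\admnf{\state}} \eqstruct \decode\state \togenx \tmtwo$ I obtain a term $\tmthree$ with $\decode{\admnf{\state}} \togenx \tmthree$ and $\tmthree \eqstruct \tmtwo$. The state $\admnf{\state}$ is reachable and satisfies $\admnf{\admnf{\state}} = \admnf{\state}$, and its decoding reduces by $\togenx$, so Progress applies and delivers a state $\statetwo$ with $\admnf{\state} \tomachx \statetwo$. The multiplicative or exponential clause of the distillation then gives a term $\tmfour$ such that $\decode{\admnf{\state}} \togenx \tmfour \eqstruct \decode\statetwo$.

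It remains to verify $\decode\statetwo \eqstruct \tmtwo$. Both $\tmthree$ and $\tmfour$ are $\togenx$-reducts of the same term $\decode{\admnf{\state}}$, so determinism of the calculus (\refprop{GenDet}) forces $\tmthree = \tmfour$, whence $\decode\statetwo \eqstruct \tmfour = \tmthree \eqstruct \tmtwo$, as required. The main subtlety I anticipate is precisely this last step: the determinism statement is phrased for the full reduction relation $\togen = \togenm \cup \togene$, so one must observe that it specialises to determinism of each individual $\togenx$—harmless, but worth making explicit. A secondary point of care is that the strong bisimulation is usually formulated for a single $\eqstruct$-step, so the first application needs to be iterated along the sequence of commutative transitions from $\state$ to $\admnf{\state}$; this is routine since $\eqstruct$ is the reflexive-transitive-symmetric closure of its axioms and bisimulation closes under composition.
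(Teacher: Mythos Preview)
Your argument is correct and is essentially the paper's proof unrolled: the paper proceeds by induction on the length of the commutative path $\state\tomacha^n\admnf\state$, applying the strong bisimulation once per step, whereas you first collapse the whole path into a single $\decode\state\eqstruct\decode{\admnf\state}$ and then apply the bisimulation, progress, and determinism exactly as in the paper's base case. Your two ``subtleties'' are non-issues here---\refprop{strong-bis} is already stated for the full equivalence $\eqstruct$ (not for single axiom steps), and determinism of $\togen$ directly gives $\tmthree=\tmfour$ without any specialisation---but they do no harm.
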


In other words, every rewriting step on the calculus can be also performed on the machine, up to commutative transitions. 
\proof
	The proof is by induction on the number $n$ of transitions leading from $\state$ to $\admnf{\state}$.
	\begin{itemize}
		\item \emph{Base case} $n=0$: by the progress property, we have $\state\Rew{\mathtt x'}\statetwo$ for some state $\statetwo$ and $\mathtt x'\in\set{\mathtt{m},\mathtt{e}}$. By \refth{GenSim}, we have $\decode\state\multimap_{\mathtt x'}\tmtwo'\eqstruct\decode{\state'}$ and we may conclude because $\mathtt x'=\mathtt x$ and $\tmtwo'=\tmtwo$ by determinisim of the calculus (\refprop{GenDet}).
		\item \emph{Inductive case} $n>0$: by hypothesis, we have $\state\tomacha\state_1$. By \refth{GenSim}, $\decode\state\eqstruct\decode{\state_1}$. The hypothesis and the strong bisimulation property (\refprop{GenStrongBisim}) then give us $\decode{\state_1}\togenx\tmtwo_1\eqstruct\tmtwo$. But the induction hypothesis holds for $\state_1$, giving us a state $\state'$ such that $\admnf{\state_1}\tomachx\state'$ and $\decode{\state'}\eqstruct\tmtwo_1\eqstruct\tmtwo$. We may now conclude because $\admnf\state=\admnf{\state_1}$.\qed
	\end{itemize}

The reflection can then be extended to a reverse simulation.

\begin{corollary}[Reverse Simulation]
Let $\distil$ be a reflective distillery and $\state$ an initial state. Given a derivation $\deriv:\decode\state\togen^*\tm$ there is an execution $\exec:\state\tomach^*\statetwo$ s.t. $\tm\tostruct\decode\statetwo$ and $\sizem\exec=\sizem\deriv$, $\sizee\exec=\sizee\deriv$, and $\sizep\exec=\size\deriv$.
\end{corollary}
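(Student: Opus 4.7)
The plan is to proceed by induction on $\size\deriv$, relying essentially on Reflection (Proposition~4.3) to realize each calculus step as a principal machine transition, and on the Strong Bisimulation property of $\tostruct$ to close the diagrams when the previous machine state only decodes up to structural equivalence.

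The base case $\size\deriv=0$ is immediate: take $\exec$ to be the empty execution on $\state$, so $\tm=\decode\state=\decode\state$ and all three counts are $0$. For the inductive step, decompose $\deriv$ as $\decode\state \togen^* \tm' \togenx \tm$ for some $\mathtt{x}\in\{\mathtt{m},\mathtt{e}\}$. The induction hypothesis applied to the prefix yields an execution $\exec':\state\tomach^*\statetwo'$ with $\tm'\tostruct\decode{\statetwo'}$ and matching counts $\sizem{\exec'}=\sizem{\deriv'}$, $\sizee{\exec'}=\sizee{\deriv'}$. Since $\tm'\togenx\tm$ and $\tostruct$ is a strong bisimulation with respect to each $\togenx$ (Proposition~2.2), there is $\tm''$ with $\decode{\statetwo'}\togenx\tm''$ and $\tm\tostruct\tm''$. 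The state $\statetwo'$ is reachable (as $\state$ is initial), so Reflection applies: there exists a state $\statetwo$ such that $\admnf{\statetwo'}\tomachx\statetwo$ and $\decode{\statetwo}\tostruct\tm''$.

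Concatenating gives the desired execution $\exec:\state\tomach^*\statetwo'\tomacha^*\admnf{\statetwo'}\tomachx\statetwo$, whose final state satisfies $\decode\statetwo\tostruct\tm''\tostruct\tm$ by transitivity of $\tostruct$. The new commutative steps from $\statetwo'$ to $\admnf{\statetwo'}$, well-defined thanks to the termination clause of reflectivity, do not contribute to $\sizem$, $\sizee$, nor $\sizep$, while the appended $\tomachx$ adds exactly one principal transition of the correct kind. Hence $\sizem\exec=\sizem{\exec'}+[\mathtt{x}{=}\mathtt{m}]=\sizem{\deriv'}+[\mathtt{x}{=}\mathtt{m}]=\sizem\deriv$, and symmetrically for $\sizee$; summing gives $\sizep\exec=\size\deriv$.

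The only potentially delicate point is checking that the commutative prefix $\statetwo'\tomacha^*\admnf{\statetwo'}$ can be spliced in without breaking the counting invariants; this is immediate once one observes that the partition of labels in a distillery places commutative transitions outside of $\sizem,\sizee,\sizep$, so they are transparent to the accounting. Aside from that, the argument is a routine "diagram chase" stitching together Theorem~4.2 (via its ingredients), Proposition~2.2, and Proposition~4.3, with reflectivity providing both the termination needed to define $\admnf{\cdot}$ and the progress needed by Reflection.
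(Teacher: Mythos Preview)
Your proof is correct and follows the same approach as the paper: induction on $\size\deriv$ using Reflection (\refprop{GenCorrectness}). The paper's proof is a single line (``By induction on the length of $\deriv$, using \refprop{GenCorrectness}''), and your argument is precisely the natural unfolding of that line, including the use of the strong bisimulation property to realign $\decode{\statetwo'}$ with $\tm'$ before invoking Reflection.
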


\begin{proof}
By induction on the length of $\deriv$, using \refprop{GenCorrectness}.
\end{proof}

In the following sections we shall introduce abstract machines and distillations for which we will prove that they form reflective distilleries with respect to the calculi of \refsect{ES-distance}. For each machine we will prove 1) that the decoding is in fact a distillation, and 2) the progress property.  \emph{We will instead assume the termination property}, whose proof is delayed to the quantitative study of the second part of the paper, where we will actually prove stronger results, giving explicit bounds.
\section{Call-by-Name: the KAM}
\label{s:kam}
\renewcommand{\evctx}{\whctx}
\renewcommand{\evctxtwo}{\whctxtwo}
\renewcommand{\evctxp}[1]{\whctxp{#1}}
\renewcommand{\evctxtwop}[1]{\whctxtwop{#1}}

The Krivine Abstract Machine (KAM) is the simplest machine studied in the paper. A KAM \emph{state} ($\state$) is made out of a closure and of a \emph{stack} ($\stack$):
\begin{center}$\begin{array}{rcl@{\sep\sep\sep}rcl@{\sep}rcl}
	\stack &\grameq& \stempty\mid \clos\cons\stack &
	\state  &\grameq& (\clos,\stack)
\end{array}$\end{center}
For readability, we will use the notation $\kamstate{\code}{\env}{\stack}$ for a state $(\clos,\stack)$ where $\clos = (\code,\env)$. The transitions of the KAM then are:
\begin{center}${\setlength{\arraycolsep}{0.9em}
\begin{array}{c|c|ccc|c|c}
	\code\codetwo&\env&\stack
	&\tomacha&
	\code&\env&(\codetwo,\env)\cons\stack
	\\
	\l\var.\code&\env&\clos\cons\stack
	&\tomachm&
	\code&\esub\var\clos\cons\env&\stack
	\\
	\var&\env&\stack
	&\tomache&
	\code&\envtwo&\stack
\end{array}}$\end{center}
where $\tomache$ takes place only if $\env=\envthree\cons\esub{\var}{(\code,\envtwo)}\cons\envfour$.

A key point of our study is that environments and stacks rather immediately become contexts of the \lsc, through the following decoding:
\begin{center}
$\begin{array}{rcl@{\sep\sep}rcl}
	\decode{\stempty} & \defeq & \ctxhole&
	\decode{\esub{\var}{\clos}\cons\env} & \defeq & \decode{\env}\ctxholep{\ctxhole\esub{\var}{\decode{\clos}}}\\
	\decode{(\code,\env)} & \defeq & \decode{\env}\ctxholep{\code} &
	\decode{\clos\cons\stack} & \defeq & \decode{\stack}\ctxholep{\ctxhole\decode{\clos}}\\
	\decode{\kamstate{\code}{\env}{\stack}} &\defeq &\decstackp{\decenvp{\code}}
\end{array}$
\end{center}

The decoding satisfies the following static properties, shown by easy inductions on the definition.

\begin{lemma}[Contextual Decoding]
	\label{l:kam-ctx-dec}
\ben{	Let $\env$ be an environment and $\pi$ be a stack of the KAM. Then }$\decode{\env}$ is a substitution context, and both $\decode{\stack}$ and $\decode{\stack}\ctxholep{\decode{\env}}$ are evaluation contexts.
\end{lemma}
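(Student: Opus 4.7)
The plan is to prove the three assertions in sequence, grounding everything on a pair of composition observations that I would establish first: (i) plugging a substitution context into a substitution context yields a substitution context, and (ii) plugging a weak head context into a weak head context yields a weak head context. Both are proved by induction on the outer context, with one immediate case per grammar clause. Observation (i) will handle the environment claim; observation (ii) will handle the stack claim, and will also cover the final composite case once we note that every substitution context is in particular a weak head context (the grammar of $\sctx$ is a subgrammar of that of $\whctx$).

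For the first assertion, I would induct on the length of $\env$. The base case $\env=\stempty$ is immediate, since $\decode{\stempty}=\ctxhole$. For $\env=\esub{\var}{\clos}\cons\env'$ the decoding equals $\decode{\env'}\ctxholep{\ctxhole\esub{\var}{\decode{\clos}}}$; by the induction hypothesis $\decode{\env'}$ is a substitution context, and $\ctxhole\esub{\var}{\decode{\clos}}$ is itself a substitution context, so observation (i) concludes.

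For the second assertion, I would induct on the length of $\stack$. The base case is immediate. For $\stack=\clos\cons\stack'$ the decoding is $\decode{\stack'}\ctxholep{\ctxhole\decode{\clos}}$; here $\decode{\clos}$ is a plain term, since the closure decoding plugs a code into a substitution context. Consequently $\ctxhole\decode{\clos}$ is a weak head context, and the induction hypothesis gives that $\decode{\stack'}$ is a weak head context, so observation (ii) concludes. The third assertion is then immediate from observations (i), (ii) and the first two: $\decode{\env}$ is a substitution context and hence a weak head context, $\decode{\stack}$ is a weak head context, and observation (ii) delivers $\decode{\stack}\ctxholep{\decode{\env}}$ as a weak head context.

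The main obstacle, such as it is, is purely organizational: the decoding clauses for $\env$ and $\stack$ attach the new constructor at the \emph{innermost} position of the already-decoded tail, so one must factor out the composition observations (i) and (ii) in order to justify the inductive step in a clean way. What makes these observations go through uninterruptedly is the absence of any abstraction clause in the grammars of both $\sctx$ and $\whctx$, which ensures that the hole always remains above the term structure of the decoded context, so that plugging preserves the required shape at every level.
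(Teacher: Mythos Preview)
Your proposal is correct and follows exactly the approach the paper indicates: the paper simply states that these static properties are ``shown by easy inductions on the definition'' without spelling out the details. Your factoring through the two composition observations (substitution contexts compose, weak head contexts compose) is the natural way to make those inductions go through cleanly, and matches what the paper has in mind.
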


Next, we need the dynamic invariants of the machine.

\begin{lemma}[KAM Invariants]
	\label{l:kam-prop}
	Let $\state=\kamstate{\codetwo}\env\stack$ \invariantshyp{KAM}. Then:
	\begin{enumerate}
		\item \label{p:kam-prop-one}\emph{Closure}: \closprop;
		\item \label{p:kam-prop-two}\emph{Subterm}: \subprop.
		\item \label{p:kam-prop-three}\emph{Name}: \lwnameprop.
		\item \label{p:kam-prop-four}\emph{Environment Size}: \envprop.
	\end{enumerate}
\end{lemma}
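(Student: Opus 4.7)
The proof is the routine induction on the length $n$ of an execution $\state_0 \tomach^* \state$ that the authors advertise. The base case $n=0$ is immediate: by definition an initial state has closed code $\code_0$ and empty environment and stack, and after picking a representative of its $\alpha$-class with all binders distinct, the four invariants hold trivially. For the inductive step I case-split on the last transition.

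\emph{Commutative step} $\kamstate{\code\codetwo}{\env}{\stack} \tomacha \kamstate{\code}{\env}{(\codetwo,\env)\cons\stack}$. Closure holds because $\fv\code\cup\fv\codetwo = \fv{\code\codetwo} \subseteq \support\env$ by the i.h., so both the new head closure $(\code,\env)$ and the fresh stack entry $(\codetwo,\env)$ are closed. Subterm follows since $\code$ and $\codetwo$ are subterms of $\code\codetwo$, itself a subterm of $\code_0$. Name is preserved because no binder is added or removed, and the environment size bound is unchanged since no environment grows.

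\emph{Multiplicative step} $\kamstate{\l\var.\code}{\env}{\clos\cons\stack} \tomachm \kamstate{\code}{\esub\var\clos\cons\env}{\stack}$. By i.h.\ $\fv{\l\var.\code}\subseteq\support\env$ and $\clos$ is closed, hence $\fv\code \subseteq \fv{\l\var.\code}\cup\{\var\} \subseteq \support{\esub\var\clos\cons\env}$, giving Closure; Subterm is immediate. The delicate point is Name: one must check that the new binder $\esub\var\cdot$ does not collide with names already present in $\env$, $\clos$, or $\code$. This is exactly what well-namedness of the previous state delivers, since it forces $\var$ to appear only once in the support of $\kamstate{\l\var.\code}{\env}{\clos\cons\stack}$, namely at the outer $\l\var$. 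Finally the environment grows by exactly one entry, which matches whatever linear-in-the-number-of-multiplicative-steps bound $\envprop$ asserts.

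\emph{Exponential step} $\kamstate{\var}{\env}{\stack} \tomache \kamstate{\code}{\envtwo}{\stack}$ with $\env = \envthree\cons\esub\var{(\code,\envtwo)}\cons\envfour$. Here the new head closure $(\code,\envtwo)$ already occurred as a sub-closure of $\env$, so Closure, Subterm, Name and the size bound on $\envtwo$ all transfer verbatim from the i.h.\ applied to the sub-closure.

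The only step requiring genuine care is the Name clause in the multiplicative case; everything else is direct bookkeeping, which is why the paper labels the argument straightforward and omits it.
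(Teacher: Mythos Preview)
Your approach is exactly what the paper advertises: the authors state that all invariant lemmas are proved ``by a straightforward induction on the length of the execution and \emph{the proof is omitted}'', so there is no written proof to compare against, and your case analysis on the last transition is the intended one.

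There is, however, one genuine slip. In the multiplicative case you write that the environment ``grows by exactly one entry, which matches whatever linear-in-the-number-of-multiplicative-steps bound $\envprop$ asserts''. That is not the bound: for local-environment machines the paper's Environment Size invariant is that every environment has length at most $\size\code$, with $\code$ the \emph{initial} code (see the discussion right after the lemma). A bound in terms of the number of $\tomachm$ steps would be useless here---that number is unbounded. The correct argument is the one the paper sketches informally: by the Name invariant the support of the new environment $\esub\var\clos\cons\env$ is a set (no repeated names), and every name in that support was introduced by peeling a binder off a code that, by Subterm, is a literal subterm of the well-named initial code $\code$; hence there are at most as many entries as there are binders in $\code$, which is at most $\size\code$. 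Your arguments for the commutative and exponential cases are fine, and your handling of the Name invariant in the multiplicative case is correct (the multiset $\support\code + [\var] + \support\env$ coincides with the support of the old head closure $(\l\var.\code,\env)$, which is a set by the i.h.).
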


\emph{Abstract Considerations on Concrete Implementations}. The name invariant is the abstract property that allows to avoid $\alpha$-equivalence in KAM executions. In addition, forbidding repetitions in the support of an environment, it allows to bound the length of any environment with the names in $\code$, \ie\ with $\size\code$. This fact is important, as the static bound on the size of environments guarantees that $\tomache$ and $\tomacha$---the transitions looking-up and copying environments---can be implemented (independently of the chosen concrete representation of terms) in at worst linear time in $\size\tm$, so that an execution $\exec$ can be implemented in $O(\size\exec\cdot\size\tm)$. The same will hold for every machine with local environments.

The previous considerations are based on the name and environment size invariants. The closure invariant is used in the progress part of the next theorem, and the subterm invariant is used in the quantitative analysis in \refsect{complexity} (\refth{value-name-glob-bilin}), subsuming the termination condition of reflective distilleries.

\begin{theorem}[KAM Distillation]
	\label{tm:kam-sim}
	$(\mbox{KAM},\wlscname,\tostruct,\decode{{ }\cdot{ }})$ \distillationStatement:
	\begin{enumerate}
		\item \emph{Commutative}: if $\state\tomacha\statetwo$ then $\decode{\state}\eqstruct\decode{\statetwo}$.
		\item \emph{Multiplicative}: if $\state\tomachm\statetwo$ then $\decode{\state}\towhldb\decode{\statetwo}$;
		\item \emph{Exponential}: if $\state\tomache\statetwo$ then $\decode{\state}\towhlls\eqstruct\decode{\statetwo}$;
	\end{enumerate}
\end{theorem}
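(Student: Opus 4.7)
The theorem decomposes into three simulation clauses, plus the progress property implicitly required for the distillery to later become reflective. I would dispatch them by case analysis on the KAM transition, supported throughout by two facts already in place: the Contextual Decoding Lemma (\refl{kam-ctx-dec}), which guarantees that $\decode\env$ is a substitution context and $\decode\stack$ a weak head context (and hence so is $\decode\stack\ctxholep{\decode\env}$), and the Closure and Name invariants of \refl{kam-prop}. Throughout the proof I would write the decoding of a state $\kamstate\code\env\stack$ in the form $\decode\stack\ctxholep{\decode\env\ctxholep\code}$ and repeatedly use that the $\rtodb$ and $\rtols$ root rules already allow an arbitrary substitution context around the redex.

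For the \emph{commutative} clause, I would simply unfold both sides of $\code\codetwo,\env,\stack \tomacha \code,\env,(\codetwo,\env)\cons\stack$ to $\decode\stack\ctxholep{\decode\env\ctxholep{\code\codetwo}}$ and $\decode\stack\ctxholep{\decode\env\ctxholep{\code}\,\decode\env\ctxholep{\codetwo}}$ respectively, reducing the task to showing $(\code\codetwo)\sctx\eqstruct \code\sctx\,\codetwo\sctx$ with $\sctx = \decode\env$, which follows by iterating the $\tostructap$ axiom once per binding in $\env$. The \emph{multiplicative} clause is even more direct: the source of $\l\var.\code,\env,\clos\cons\stack \tomachm \code,\esub\var\clos\cons\env,\stack$ decodes to $\decode\stack\ctxholep{\decode\env\ctxholep{\l\var.\code}\,\decode\clos}$, which is literally the left-hand side of the root rule $\sctxp{\l\var.\code}\,\decode\clos \rtodb \sctxp{\code\esub\var{\decode\clos}}$ plugged into the weak head context $\decode\stack$, so a single $\towhldb$ step yields exactly the decoding of the target.

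The \emph{exponential} clause is the only delicate one. I would first show by induction on the length of $\envthree$ that whenever $\env = \envthree\cons\esub\var{(\code,\envtwo)}\cons\envfour$ the decoding factors as $\decode\env = \sctx_o\ctxholep{\sctx_i\esub\var{\decode\envtwo\ctxholep\code}}$ for substitution contexts $\sctx_o$ (from $\envfour$, outer) and $\sctx_i$ (from $\envthree$, inner). Since $\sctx_i$ is an evaluation context and $\decode\stack\ctxholep{\sctx_o}$ is weak head, a single $\rtols$-step produces $\decode\stack\ctxholep{\sctx_o\ctxholep{\sctx_i\ctxholep{\decode\envtwo\ctxholep\code}\esub\var{\decode\envtwo\ctxholep\code}}}$, leaving an accumulation of surrounding substitutions to be discharged. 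Here the closure invariant is decisive: the decoding $\decode\envtwo\ctxholep\code$ of the closure $(\code,\envtwo)$ is closed, so every surrounding substitution $\esub\vartwo u$ satisfies $\vartwo\notin\fv{\decode\envtwo\ctxholep\code}$ and can be peeled off by $\tostructgc$ from the innermost outwards, collapsing the whole term to $\decode\stack\ctxholep{\decode\envtwo\ctxholep\code} = \decode\statetwo$.

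Progress is then a routine case analysis on the root constructor of the code in a state $\state$ that is already in commutative normal form. Such a state cannot have an application as its code, so $\code$ is either a variable or an abstraction: in the first case the closure invariant forces $\var\in\support\env$ so $\tomache$ fires, in the second case either the stack is nonempty and $\tomachm$ fires, or the stack is empty and $\decode\state = \decode\env\ctxholep{\l\var.\codethree}$ is an answer hence $\wlscname$-normal, making the hypothesis vacuous. The main obstacle I foresee is purely the bookkeeping inside the exponential case: decomposing $\decode\env$ correctly to expose the target binding as $\sctx_i\esub\var{\cdot}$ inside $\sctx_o$, and checking that the iterated $\tostructgc$ indeed applies at each step, both of which rest on the closure and name invariants rather than on any intrinsic subtlety of the rewriting.
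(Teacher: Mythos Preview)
Your proposal is correct and matches the paper's approach: the commutative transition is absorbed by iterated $\tostructap$, the multiplicative one is a literal $\towhldb$-step, and the exponential one is a $\towhlls$-step followed by iterated $\tostructgc$, exploiting that closed closures decode to closed terms. The only point you leave implicit is the tiny induction needed to go from the closure invariant ($\fv\code\subseteq\support\envtwo$) to the conclusion that $\decode\envtwo\ctxholep\code$ is a closed \emph{term}, which relies on all sub-closures inside $\envtwo$ being closed as well; once that is noted, every surrounding substitution is indeed garbage and your $\tostructgc$ chain goes through.
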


\proof
\emph{Properties of the decoding}:
\begin{enumerate}
\item	\emph{Commutative}. We have $\kamstate{\code\codetwo}{\env}{\stack}\ \tomacha\ \kamstate{\code}{\env}{(\codetwo,\env)\cons\stack}$, and:
	\begin{center}
		$\begin{array}{ccccccccccc}
			\decode{\kamstate{\code\codetwo}{\env}{\stack}}
			&=&
			\decstackp{\decenvp{\code\codetwo}}\\
			 &\tostructap^*& \decstackp{\decenvp{\code}\decenvp{\codetwo}}
			 &=&\decode{\kamstate{\code}{\env}{(\codetwo,\env)\cons\stack}}
		\end{array}$
	\end{center}
	
\item \emph{Multiplicative}.  $\kamstate{\l\var.\code}{\env}{\clos\cons\stack}\ \tomachm\
\kamstate{\code}{\esub\var\clos\cons\env}{\stack}$, and
\begin{center}$\begin{array}{rcl}
		\decode{\kamstate{\l\var.\code}{\env}{\clos\cons\stack}}&=&
		\decstackp{\decenvp{\l\var.\code}\decclos} \\
		&\towhldb&\decstackp{\decenvp{\code \esub{\var}{\decclos}}}\\
		&=&\decode{\kamstate{\code}{\esub\var\clos\cons\env}{\stack}}
	\end{array}$\end{center}
The rewriting step can be applied because by contextual decoding (\reflemma{kam-ctx-dec}) it takes place in an evaluation context.
	
\item	\emph{Exponential}. $\kamstate{\var}{\envtwo\cons\esub{\var}{(\code,\env)}\cons\envthree}{\stack}\ \tomache\ \kamstate{\code}{\env}{\stack}$, and
	\begin{center}${\setlength{\arraycolsep}{3pt}\begin{array}{rcl}
		\decode{\kamstate{\var}{\envtwo\cons\esub{\var}{(\code,\env)}\cons\envthree}{\stack}}
		&=&
		\decstackp{\decenvthreep{\decenvtwop{\var}\esub{\var}{\decenvp{\code}}}}\\
		 &\towhlls&\decstackp{\decenvthreep{\decenvtwop{\decenvp{\code}}\esub{\var}{\decenvp{\code}}}}\\
		 &\tostructgc^*& \decstackp{\decenvp{\code}}\\
		&=&\decode{\kamstate{\code}{\env}{\stack}}
	\end{array}}$\end{center}
	Note that $\envthreep{\envtwop{\envp{\code}}\esub{\var}{\envp{\code}}}\tostructgc^*\envp{\code}$ holds because $\envp{\code}$ is closed by point \ref{p:kam-prop-one} of \reflemma{kam-prop}, and so all the substitutions around it can be garbage collected.

	\end{enumerate}
	
\noindent    \emph{Termination}. Given by (forthcoming) \refth{value-name-glob-bilin} (future proofs of distillery theorems will omit termination). 
    
\noindent \emph{Progress}. Let $\state=\kamstate\code\env\stack$ be a commutative normal form s.t. $\decode\state\togen\tmtwo$. If $\code$ is 
	\begin{itemize}
	    \item \emph{an application $\codetwo\codethree$}. Then a $\tomacha$ transition applies and $\state$ is not a commutative normal form, absurd.
		\item \emph{an abstraction} $\l\var.\codetwo$: if $\stack=\stempty$ then $\decode\state=\decenvp{\l\var.\codetwo}$, which is $\togen$-normal, absurd. Hence, a $\tomachm$ transition applies.
				\item \emph{a variable} $\var$: by point 1 of \reflemma{kam-prop}.\ref{p:kam-prop-one}, we must have $\env=\env'\cons\esub\var\clos\cons\env''$, so a $\tomache$ transition applies;
\qed
	\end{itemize}

\section{Call-by-Value: the CEK  and the LAM}
\renewcommand{\evctx}{\cbvctx}
\renewcommand{\evctxtwo}{\cbvctxtwo}
\renewcommand{\evctxp}[1]{\cbvctxp{#1}}
\renewcommand{\evctxtwop}[1]{\cbvctxtwop{#1}}
\ben{Here we deal with two adaptations to call-by-value of the KAM, namely Felleisen and Friedman's CEK machine \cite{Felleisen:1986:CEK} (without control operators), and a variant, deemed \emph{Leroy abstract machine} (LAM). They differ on how they behave with respect to applications: the CEK implements left- to-right call-by-value, \ie\ it first evaluates the function part, the LAM gives instead precedence to arguments, realizing right-to-left call-by-value. The LAM owes its name to Leroy's ZINC machine \cite{Leroy-ZINC}, that implements right-to-left call-by-value evaluation. We introduce a new name because the ZINC is a quite more sophisticated machine than the LAM: it has a separate sets of instructions to which terms are compiled, it handles arithmetic expressions, and it avoids needless closure creations in a way that it is not captured by the LAM.  

The states of the CEK and the LAM} have the same shape of those of the KAM, \ie\ they are given by a closure plus a stack. The difference is that they use \emph{call-by-value stacks}, whose elements are labelled either as \emph{functions} or \emph{arguments}, so that the machine may know whether it is launching the evaluation of an argument or it is at the end of such an evaluation. They are re-defined and decoded by ($\clos$ is a closure):
\begin{center}
$\begin{array}{rcl@{\sep\sep}rclllll}
	\stack & \grameq & \stempty \mid \fnst{\clos}\cons\stack \mid \argst{\clos}\cons\stack&
	\decode{\stempty} & \defeq & \ctxhole&\\
&&&	\decode{\fnst{\clos}\cons\stack} & \defeq & \decode{\stack}\ctxholep{\decode\clos\ctxhole}\\
&&&	\decode{\argst{\clos}\cons\stack} & \defeq & \decode{\stack}\ctxholep{\ctxhole\decode\clos}
\end{array}$
\end{center}
The states of both machines are decoded exactly as for the KAM, \ie\ $\decode{\lamstate{\code}{\env}{\stack}}\defeq\decstackp{\decenvp{\code}}$. 

\subsection{Left-to Right Call-by-Value: the CEK machine.}
The transitions of the CEK are:
\begin{center}${\setlength{\arraycolsep}{0.4em}
\begin{array}{c|c|ccc|c|c}
	\code\codetwo&\env&\stack
	&\tomachaone&
	\code&\env&\argst{\codetwo,\env}\cons\stack
	\\
	\codeval&\env&\argst{\codetwo,\envtwo}\cons\stack
	&\tomachatwo&
	\codetwo&\envtwo&\fnst{\codeval,\env}\cons\stack
	\\
	\codeval&\env&\fnst{\l\var.\code,\envtwo}\cons\stack
	&\tomachm&
	\code&\esub{\var}{(\codeval,\env)}\cons\envtwo&\stack
	\\
	\var&\env&\stack
	&\tomache&
	\code&\envtwo&\stack
\end{array}}$\end{center}
where $\tomache$ takes place only if $\env=\envthree\cons\esub{\var}{(\code,\envtwo)}\cons\envfour$.

While one can still statically prove that environments decode to substitution contexts, to prove that $\decstack$ and $\decstackp\decenv$ are evaluation contexts we need the dynamic invariants of the machine.

\begin{lemma}[CEK Invariants]
	\label{l:cek-prop} 
	Let $\state=\cekstate\codetwo\env\stack$ \invariantshyp{CEK}. Then:
	\begin{enumerate}
		\item \label{p:cek-prop-one}\emph{Closure}: \closprop;
		\item \label{p:cek-prop-two}\emph{Subterm}: \subprop;
		\item \label{p:cek-prop-three}\emph{Value}: any code in $\env$ is a value and, for every element of $\stack$ of the form $\fnst{\codetwo,\envtwo}$, $\codetwo$ is a value;
		\item \label{p:cek-prop-four}\emph{Contextual Decoding}: $\decstack$ and $\decstackp\decenv$ are left-to-right call-by-value evaluation contexts.
		\item \label{p:cek-prop-five}\emph{Name}: \lwnameprop.
		\item \label{p:cek-prop-six}\emph{Environment Size}: \envprop.

	\end{enumerate}
\end{lemma}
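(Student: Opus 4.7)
The plan is to proceed by induction on the length $n$ of the execution $\state_0 \tomach^n \state$ producing the reachable state $\state=\cekstate\codetwo\env\stack$, simultaneously checking all six invariants so that they may support one another.

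For the base case ($n=0$), $\state_0$ is initial, hence $\env=\stempty$ and $\stack=\stempty$ while $\fv{\codetwo}=\emptyset$; all six properties then hold trivially (in particular, $\decstack = \decenv = \ctxhole$ is an evaluation context). For the inductive step, I would analyze each of the four transitions $\tomachaone$, $\tomachatwo$, $\tomachm$, $\tomache$ in turn. The invariants of \emph{Closure}, \emph{Subterm}, \emph{Name}, and \emph{Environment Size} (items 1, 2, 5, 6) are essentially routine: they follow by the same pattern as for the KAM (\reflemma{kam-prop}), since the only way new closures are created is by pairing subterms of the initial code with environments built from previously closed closures, and freshness of bound names is preserved because $\tomachm$ consumes a $\lambda$-binder when it extends the environment.

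The two interesting properties are 3 and 4. For the \emph{Value} invariant, observe: $\tomachaone$ only pushes a new element on $\stack$ with the \textbf{arg} marker, so fn-markers are untouched; $\tomachatwo$ pushes $(\codeval,\env)$ with the \textbf{fn} marker, and $\codeval$ is a value by the very shape of the transition's left-hand side; $\tomachm$ fires only when the stack top is $\fnst{\l\var.\code,\envtwo}$, and $\l\var.\code$ is a value, so the environment is extended with $\esub{\var}{(\codeval,\env)}$, whose code $\codeval$ is a value by the firing condition; $\tomache$ extracts some $(\code,\envtwo)$ from $\env$ and places it as the current closure with empty stack addition, and $\code$ is a value by the i.h.\ Value invariant on $\env$. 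Thus in every case item 3 is preserved.

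The \emph{Contextual Decoding} (item 4) is the real obstacle, because the left-to-right CBV evaluation context grammar $\cbvctx\grameq \ctxhole\mid \cbvctx\tm\mid \sctxp\val\cbvctx\mid \cbvctx\esub{\var}{\tm}$ only allows a non-hole left operand of an application when that operand is of the shape $\sctxp\val$. I would first show (by an easy induction on $\env$, exactly as for the KAM) that $\decenv$ is always a substitution context, so $\decenvp\codetwo$ is of the form $\sctxp{\codetwo}$ and in particular $\decenvp\codeval$ is an answer $\sctxp\val$. Then, by induction on $\stack$, I would prove that $\decstack$ is a CBV evaluation context: the empty stack gives $\ctxhole$; an \textbf{arg}-marked top gives $\decode\stacktwo\ctxholep{\ctxhole\decode\clos}$, matching the clause $\cbvctx\tm$; and an \textbf{fn}-marked top $\fnst{\codetwo,\envtwo}$ gives $\decode\stacktwo\ctxholep{\decenvtwop\codetwo\ctxhole}$, which fits the clause $\sctxp\val\cbvctx$ \emph{exactly because} the Value invariant (item 3, just established) tells us $\codetwo$ is a value and therefore $\decenvtwop\codetwo = \sctxp\val$ for some substitution context $\sctx$ and value $\val$. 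The composition $\decstackp\decenv$ is then an evaluation context by the same argument used for the KAM (\reflemma{kam-ctx-dec}), combining that $\decenv$ is a substitution context and that evaluation contexts are closed under plugging substitution contexts in place of the hole. This finishes the induction.
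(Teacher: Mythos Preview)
Your proposal is correct and follows the same approach as the paper, which merely states that the invariants lemma ``is always proved by a straightforward induction on the length of the execution and the proof is omitted.'' Your case analysis correctly pinpoints the key dependency the paper alludes to just before the lemma: the Value invariant (item~3) is what makes the \textbf{fn}-marked stack entries decode to contexts of shape $\sctxp\val\ctxhole$, which is exactly why the Contextual Decoding (item~4) for the CEK cannot be established statically as for the KAM.
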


We have everything we need:

\begin{theorem}[CEK Distillation]
    \label{tm:cek-simulation}
		$(\mbox{CEK},\wlscvaluelr,\tostruct,\decode{{ }\cdot{ }})$ \distillationStatement:
	\begin{enumerate}
		\item \emph{Commutative 1}: if $\state\tomachaone\statetwo$ then $\decode{\state}\eqstruct\decode{\statetwo}$;
		\item \emph{Commutative 2}: if $\state\tomachatwo\statetwo$ then $\decode{\state}=\decode{\statetwo}$.
		\item \emph{Multiplicative}: if $\state\tomachm\statetwo$ then $\decode{\state}\towhlcekdb\decode{\statetwo}$;
		\item \emph{Exponential}: if $\state\tomache\statetwo$ then $\decode{\state}\towhlcekls\eqstruct\decode{\statetwo}$;
	\end{enumerate}
\end{theorem}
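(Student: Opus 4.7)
The plan is to proceed by case analysis on the four CEK transitions, reading off each step from the decoding while leveraging the invariants of \reflemma{cek-prop}. For each case I would unfold the decodings of the source and target states and identify the resulting LSC term, reduction step, or structural-equivalence rewrite.

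For the commutative cases no LSC reduction is performed. The transition $\tomachaone$ sends $(\code\codetwo,\env,\stack)$ to $(\code,\env,\argst{(\codetwo,\env)}\cons\stack)$, whose decodings unfold to $\decstackp{\decenvp{\code\codetwo}}$ and $\decstackp{\decenvp{\code}\decenvp{\codetwo}}$; since $\decenv$ is a substitution context (an easy induction), iterated $\tostructap$ applied inside the weak context $\decstack$ yields the desired $\tostruct$-equivalence. The $\tomachatwo$ case is even simpler: both sides unfold verbatim to $\decstackp{\decenvp{\codeval}\decode{\envtwo}\ctxholep{\codetwo}}$, so the decodings coincide literally.

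For the multiplicative step the LHS decodes to $\decstackp{\decode{\envtwo}\ctxholep{\l\var.\code}\decenvp{\codeval}}$ and the RHS to $\decstackp{\decode{\envtwo}\ctxholep{\code\esub{\var}{\decenvp{\codeval}}}}$. Since $\decenv$ is a substitution context and $\codeval$ is a value, $\decenvp{\codeval}$ is an answer, so the root rule $\rtodbv$ fires with substitution contexts $\decode{\envtwo}$ and $\decenv$. The surrounding $\decstack$ is a left-to-right CBV evaluation context by the contextual decoding invariant (point~4 of \reflemma{cek-prop}), which makes this a legal single $\towhlcekdb$ step.

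The exponential case is where the structural equivalence really earns its keep. Writing the environment prefix and suffix around $\esub{\var}{(\code,\envtwo)}$ as substitution contexts $E_1,E_2$, the value invariant gives $\code=\codeval$, so $\decode{(\code,\envtwo)}$ is an answer and the root rule $\rtolsvc{\cbvctx}$ fires with $\ctx=\ctxhole\,E_1$ and $\sctx=\decode{\envtwo}$, placed inside the evaluation context $\decstackp{\ctxhole\,E_2}$. The result reads $\decstackp{\decode{\envtwo}\ctxholep{\codeval\,E_1\esub{\var}{\codeval}}E_2}$, while the RHS decodes to $\decstackp{\decode{\envtwo}\ctxholep{\codeval}}$. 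To bridge the two I would iterate $\tostructgc$ to discard $\esub{\var}{\codeval}$ together with every substitution in $E_1$ and $E_2$: the name invariant makes the variables bound by these contexts pairwise distinct and disjoint from $\support{\envtwo}$ and from $\var$, while the closure invariant ensures that each substituted body is a closed term, so none of those bound variables occurs free in the relevant subterm. The main obstacle, and the only place requiring real attention, is this free-variable bookkeeping through the various composed substitution contexts; the invariants then do essentially all the work.
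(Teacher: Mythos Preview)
Your approach is the paper's: unfold decodings transition-by-transition, use the value and contextual-decoding invariants to check that the redex and surrounding context are of the right shape, and in the exponential case apply $\rtolsv$ followed by repeated $\tostructgc$ to discard the leftover prefix $\decode{\envthree}$, the binding $\esub{\var}{\codeval}$, and the suffix $\decode{\envfour}$. The one place where your justification is a bit loose is the claim that the name invariant guarantees the variables bound by $\decode{\envthree}$ and $\decode{\envfour}$ are disjoint from $\support{\envtwo}$; well-namedness of a closure speaks only about its own top-level support, not about the support of a nested environment, so you should instead rely on the fact that the calculus works modulo $\alpha$ (and $\alphaequiv\subseteq\eqstruct$), or observe that the closure invariant makes each $\decode{\clos}$ a \emph{closed} term so the relevant free-variable side conditions for $\tostructgc$ hold after a silent renaming. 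Finally, since $\distillationStatement$ asserts a \emph{reflective} distillery, the paper's proof also contains a short Progress argument (case analysis on the code using the closure and value invariants), which you omit but which is routine.
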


\proof
\emph{Properties of the decoding}:
in the following cases, evaluation will always takes place under a context that
by \reflemma{cek-prop}.\ref{p:cek-prop-four} will be a
left-to-right call-by-value evaluation context, and similarly structural equivalence will alway be used in a weak context, as it should be.

\begin{enumerate}
\item \caselight{Commutative 1}. We have $\cekstate{\code\codetwo}{\env}{\stack}\ \tomachaone\ \cekstate{\code}{\env}{\argst{\codetwo,\env}\cons\stack}$, and:
	\begin{center}
	$\begin{array}{rclclccc}
		\decode{\cekstate{\code\codetwo}{\env}{\stack}}
		&=&	\decstackp{\decenvp{\code\codetwo}} &\tostructap^*&\\
&&\normalsize\decstackp{\decenvp{\code}\decenvp{\codetwo}}
&=&\decode{\cekstate{\code}{\env}{\argst{\codetwo,\env}\cons\stack}}
	\end{array}$
	\end{center}

\item \caselight{Commutative 2}. We have 
$\cekstate{\codeval}{\env}{\argst{\codetwo,\envtwo}\cons\stack}
\ \tomachatwo\ 
\cekstate{\codetwo}{\envtwo}{\fnst{\codeval,\env}\cons\stack}$, and:
	\begin{center}
	$\begin{array}{rclccccc}
		\decode{\cekstate{\codeval}{\env}{\argst{\codetwo,\envtwo}\cons\stack}}
		&=&	\decstackp{\decenvp{\codeval}\decenvtwop{\codetwo}} 
		&=&\\
		&&\decode{\cekstate{\codetwo}{\envtwo}{\fnst{\codeval,\env}\cons\stack}}
	\end{array}$
	\end{center}

\item \caselight{Multiplicative}. We have 
$\cekstate{\codeval}{\env}{\fnst{\l\var.\code,\envtwo}\cons\stack}
\ \tomachm\ 
\cekstate{\codetwo}{\esub{\var}{(\codeval,\env)}\cons\envtwo}{\stack}$, and:
	\begin{center}
	$\begin{array}{rclccccc}
		\decode{\cekstate{\codeval}{\env}{\fnst{\l\var.\code,\envtwo}\cons\stack}}
		&=&	\decstackp{\decenvtwop{\l\var.\code}\decenvp{\codeval}} &\towhlcekdb\\
		&&\decstackp{ \decenvtwop{\code\esub{\var}{\decenvp{\codeval}}} }
		&=&\\
		&&\decode{\cekstate{\code}{\esub{\var}{(\codeval,\env)}\cons\envtwo}{\stack}}
	\end{array}$
	\end{center}

\item \caselight{Exponential}.
Let $\env=\envthree\cons\esub{\var}{(\code,\envtwo)}\cons\envfour$. We have 
$\cekstate{\var}{\env}{\stack}
\ \tomache\
\cekstate{\code}{\envtwo}{\stack}$, and:
	\begin{center}
	$\begin{array}{rclclccccc}
		\decode{\cekstate{\var}{\env}{\stack}}
		&=&	\stackp{\envp{\var}} &=\\
		&&\decstackp{\decenvfourp{\decenvthreep{\var}\esub{\var}{\decenvtwop{\code}}}}&\towhlcekls\\
		&&\decstackp{\decenvfourp{\decenvtwop{\decenvthreep{\code}\esub{\var}{\code}}}}
		&\tostructgc^*&\\
		&&\decstackp{\decenvtwop{\code}}&=&\decode{\cekstate{\code}{\envtwo}{\stack}}
	\end{array}$
	\end{center}
We can apply $\towhlcekls$ since by \reflemma{cek-prop}.\ref{p:cek-prop-three},
$\code$ is a value.
We also use that by \reflemma{cek-prop}.\ref{p:cek-prop-one},
$\decenvtwop{\code}$ is a closed term to ensure that $\decenvthree$ and $\decenvfour$
can be garbage collected.
\end{enumerate}

\noindent \emph{Progress}.	 Let $\state=\kamstate\code\env\stack$ be a commutative normal form s.t. $\decode\state\togen\tmtwo$. If $\code$ is 
	\begin{itemize}
		    \item \emph{an application $\codetwo\codethree$}. Then a $\tomachaone$ transition applies and $\state$ is not a commutative normal form, absurd.
		\item \emph{an abstraction} $\codeval$: by hypothesis, $\stack$ cannot be of the form $\argst{\clos}\cons\stacktwo$. Suppose it is equal to $\stempty$. We would then have $\decode{\state}=\decenvp{\codeval}$, which is a call-by-value normal form, because $\decenv$ is a substitution context. This would contradict our hypothesis, so $\stack$ must be of the form $\fnst{\codetwo,\envtwo}\cons\stacktwo$. By point \ref{p:cek-prop-three} of \reflemma{cek-prop}, $\codetwo$ is an abstraction, hence a $\tomachm$ transition applies.
		\item \emph{a variable} $\var$: by point \ref{p:cek-prop-one} of \reflemma{cek-prop}, $\env$ must be of the form $\envtwo\cons\esub{\var}{\clos}\cons\envthree$, so a $\tomache$ transition applies;
\qed
	\end{itemize}
\subsection{Right-to-Left Call-by-Value: the Leroy Abstract Machine}
\renewcommand{\evctx}{\scbvctx}
\renewcommand{\evctxtwo}{\scbvctxtwo}
\renewcommand{\evctxp}[1]{\scbvctxp{#1}}
\renewcommand{\evctxtwop}[1]{\scbvctxtwop{#1}}

The transitions of the LAM are:
\begin{center}
${\setlength{\arraycolsep}{0.65em}
\begin{array}{c|c|ccc|c|c}
    \code\codetwo&\env&\stack
    &\tomachaone&
    \codetwo&\env&\fnst{\code,\env}\cons\stack
    \\
    \codeval&\env&\fnst{\code,\envtwo}\cons\stack
    &\tomachatwo&
    \code&\envtwo&\argst{\codeval,\env}\cons\stack
    \\
    \l\var.\code&\env&\argst{\clos}\cons\stack
    &\tomachm&
    \code&\esub{\var}{\clos}\cons\env&\stack
    \\
    \var&\env&\stack
    &\tomache&
    \code&\envtwo&\stack
    \\
\end{array}
}$\end{center}
where $\tomache$ takes place only if $\env=\envthree\cons\esub{\var}{(\code,\envtwo)}\cons\envfour$.

We omit all the proofs (that can be found \withproofs{in the appendix, page \pageref{ss:lam-proofs}}\withoutproofs{in \cite{distillingTR}}) because they are minimal variations on those for the CEK. 

\begin{lemma}[LAM Invariants]
    \label{l:lam-prop}
    Let $\state=\lamstate\codetwo\env\stack$ \invariantshyp{LAM}. Then:
    \begin{enumerate}
        \item \label{p:lam-prop-one}\emph{Closure}: \closprop;
        \item \label{p:lam-prop-two}\emph{Subterm}: \subprop;
        \item \label{p:lam-prop-three}\emph{Value}: any code in $\env$ is a value and, for every element of $\stack$ of the form $\argst{\codetwo,\envtwo}$, $\codetwo$ is a value;
        \item \label{p:lam-prop-four}\emph{Contexts Decoding}: $\decstack$ and $\decstackp\decenv$ are right-to-left call-by-value evaluation contexts.
    		\item \label{p:lam-prop-five}\emph{Name}: \lwnameprop.
		\item \label{p:lam-prop-six}\emph{Environment Size}: \envprop.

    \end{enumerate}
\end{lemma}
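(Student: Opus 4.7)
The plan is to proceed by induction on the length $n$ of the execution leading to $\state$, following the template of the proof of \reflemma{cek-prop}. For the base case $n = 0$, the state $\state$ is initial, so $\env = \stempty$, $\stack = \stempty$ and $\code$ is closed; every invariant then holds either vacuously (Value, Environment Size) or directly from the closedness of $\code$ (Closure, Subterm, Name), while Contextual Decoding reduces to $\decode{\stack} = \ctxhole$ and $\decode{\stack}\ctxholep{\decode{\env}} = \ctxhole$, which are trivially right-to-left call-by-value evaluation contexts.

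For the inductive step I would do a case analysis on the four LAM transitions, verifying each invariant separately. Closure, Subterm, Name and Environment Size all propagate by standard arguments used for the KAM and CEK: each new code appearing in a stack entry or environment binding is a subterm of the initial code; each name bound in the extended environment is a fresh abstraction name from the initial code (by the Name invariant), which keeps the environment size bounded by the number of abstractions in the initial code. The Value invariant is the heart of the step: in $\tomachaone$ no new environment entry or $\argst{}$-entry is created; in $\tomachatwo$ the new entry $\argst{\codeval,\env}$ is well-formed precisely because the current code $\codeval$ is by assumption a value; in $\tomachm$ the closure added to the environment has a value as its code, by the Value invariant applied to the $\argst{}$-entry on top of the stack; and in $\tomache$ the code retrieved from $\env$ is a value by the inductive hypothesis.

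The Contextual Decoding invariant then follows from the previous ones. As for the KAM, $\decode{\env}$ is statically a substitution context. Each $\fnst{\code,\envtwo}$-entry contributes an instance of the clause $\tm\scbvctx$ of the right-to-left grammar (with $\tm \defeq \decode{\envtwo}\ctxholep{\code}$), while each $\argst{\codeval,\envtwo}$-entry contributes an instance of the clause $\scbvctx\sctxp{\val}$, whose well-formedness requires exactly that the first component be a value; this is granted by the Value invariant together with the fact that $\decode{\envtwo}$ is a substitution context. Composing these clauses with the substitution-context decoding of $\env$ yields both $\decode{\stack}$ and $\decode{\stack}\ctxholep{\decode{\env}}$ as right-to-left call-by-value evaluation contexts.

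The main obstacle will be the interplay between Value and Contextual Decoding: the right-to-left grammar $\scbvctx$ forces $\argst{}$-entries to carry values rather than arbitrary closures, so one must verify transition by transition that no transition ever produces an $\argst{}$-entry whose code is not a value. This is the distinctive point where the LAM proof diverges from the CEK proof, in which the asymmetric roles of $\fnst{}$ and $\argst{}$ are swapped; once the Value invariant is correctly phrased, the rest of the induction is entirely routine.
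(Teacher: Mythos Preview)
Your proposal is correct and follows exactly the approach of the paper, which proves the invariants by induction on the length of the execution and straightforward inspection of the transition rules. Your detailed case analysis---in particular the observation that the Value invariant is where the LAM diverges from the CEK (with the roles of $\fnst{}$ and $\argst{}$ swapped) and that Contextual Decoding hinges on it---spells out precisely what the paper leaves implicit.
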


\begin{theorem}[LAM Distillation]
    \label{tm:lam-simulation}
	$(\mbox{LAM},\wlscvaluerl,\tostruct,\decode{{ }\cdot{ }})$ \distillationStatement:
    \begin{enumerate}
        \item \emph{Commutative 1}: if $\state\tomachaone\statetwo$ then $\decode{\state}\eqstruct\decode{\statetwo}$;
        \item \emph{Commutative 2}: if $\state\tomachatwo\statetwo$ then $\decode{\state}=\decode{\statetwo}$.
        \item \emph{Multiplicative}: if $\state\tomachm\statetwo$ then $\decode{\state}\towhllamdb\decode{\statetwo}$;
        \item \emph{Exponential}: if $\state\tomache\statetwo$ then $\decode{\state}\towhllamls\eqstruct\decode{\statetwo}$;
    \end{enumerate}
\end{theorem}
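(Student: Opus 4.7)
My plan is to follow the proof pattern established for the CEK (Theorem~\ref{tm:cek-simulation}) verbatim: the LAM differs from the CEK only in that applications are decomposed right-to-left rather than left-to-right, and the invariants in Lemma~\ref{l:lam-prop} mirror those of the CEK, so the four-item verification can be transcribed with the obvious swap of $\fnst{}$ and $\argst{}$ roles and with the right-to-left CBV context grammar $\scbvctx$ in place of $\cbvctx$.

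For the two commutative items I would argue as follows. The transition $\tomachaone$ rearranges the substitutions of $\decode{\env}$ across the application $\code\codetwo$, which is precisely iterated $\tostructap$ inside the weak context $\decode{\stack}$; for $\tomachatwo$ no rewriting is needed at all, and both sides decode literally to the same term (the transition merely moves the pending function out of its $\fnst{}$-frame into the code position while moving the just-computed argument value from the code into a new $\argst{}$-frame, and this rearrangement is transparent through the decoding). For the principal items, $\tomachm$ fires only when the top stack frame is $\argst{(\codeval,\envtwo)}$ with $\codeval$ a value by the value invariant, so the decoded source has the shape $\decode{\stack}\ctxholep{\decode{\env}\ctxholep{\l\var.\code}\,\decode{\envtwo}\ctxholep{\codeval}}$ with $\decode{\envtwo}\ctxholep{\codeval}$ an answer, which is exactly the left-hand side of a $\rtodbv$-redex; $\tomache$ selects an entry $\esub{\var}{(\codeval,\envtwo)}$ of $\env$, and the name invariant together with repeated $\tostructcom$-steps lets one float the corresponding substitution to the rightmost position, exposing a $\rtolsvc{\scbvctx}$-redex under the evaluation context $\decode{\stack}$. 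In both principal cases the decoded target matches $\decode{\statetwo}$ up to $\tostruct$, as required.

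Progress on a reachable $\tomacha$-normal state $\state$ follows from a short case analysis on the code and the stack top: the code cannot be an application (or $\tomachaone$ would apply) and, if a value, cannot sit atop an $\fnst{}$-frame (or $\tomachatwo$ would apply); hence either the code is a variable (bound in $\env$ by the closure invariant, so $\tomache$ fires) or it is a value with stack either empty (final) or headed by an $\argst{}$-frame (so $\tomachm$ fires). The main obstacle I anticipate is the Contextual Decoding invariant (item~\ref{p:lam-prop-four} of Lemma~\ref{l:lam-prop}): an $\argst{}$-frame $\argst{(\codeval,\envtwo)}$ decodes to a context with the hole to the \emph{left} of $\decode{\envtwo}\ctxholep{\codeval}$, and this is a legitimate $\scbvctx$-clause of the form $\scbvctx\sctxp{\val}$ precisely because the value invariant forces the stored code to decode to an answer; this is the one place where the right-to-left orientation is not a cosmetic modification of the CEK argument.
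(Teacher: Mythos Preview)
Your proposal is correct and matches the paper's approach, which explicitly treats the LAM as a ``minimal variation'' on the CEK proof with the obvious left/right swap. One small sharpening: in the exponential case no $\tostructcom$-floating is needed to expose the redex---writing $\env=\envthree\cons\esub{\var}{(\codeval,\envtwo)}\cons\envfour$, the decoded source already has the form $\decode{\stack}\ctxholep{\decode{\envfour}\ctxholep{\decode{\envthree}\ctxholep{\var}\esub{\var}{\decode{\envtwo}\ctxholep{\codeval}}}}$ with $\decode{\stack}\ctxholep{\decode{\envfour}}$ an evaluation context and $\decode{\envthree}$ an admissible inner context for $\rtolsvc{\scbvctx}$, so the step fires in place and the trailing $\eqstruct$ is obtained by $\tostructgc$ (discarding $\decode{\envthree}$, $\decode{\envfour}$, and the residual $\esub{\var}{\codeval}$, all of which are garbage by the closure and name invariants); also note that the multiplicative case lands on $\decode{\statetwo}$ exactly, with no $\eqstruct$ needed, as the theorem statement requires.
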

\section{Towards Call-by-Need: the MAM and the Split CEK }
In this section we study two further machines:
\begin{enumerate}
	\item \emph{The Milner Abstract Machine (MAM)}, that is a variation over the KAM with only one global environment and without the concept of closure. Essentially, it unveils the content of distance rules at the machine level. 
	\item \emph{The Split CEK (SCEK)}, obtained disentangling the two uses of the stack (for arguments and for functions) in the CEK. The split CEK can be seen as a simplification of Landin's SECD machine \cite{LandinSECD}.
\end{enumerate}
The ideas at work in these two case studies will be combined in the next section, obtaining a new simple call-by-need machine.

\begin{figure*}
\begin{center}${\setlength{\arraycolsep}{1em}
\begin{array}{c|c|c|ccc|c|c|c}
	\code\codetwo&\env&\stack&\fstack
	&\tomachaone&
	\code&\env&(\codetwo,\env)\cons\stack&\fstack 
	\\	
	\codeval&\env&(\code,\envtwo)\cons\stack&\fstack
	&\tomachatwo&
	\code&\envtwo&\stempty&((\codeval,\env),\stack)\cons\fstack
	\\
	\codeval&\env&\stempty&((\l\var.\code,\envtwo),\stack)\cons\fstack
	&\tomachm&
	\code&\esub\var{(\codeval,\env)}\cons\envtwo&\stack&\fstack
	\\
	\var&\env\cons\esub{\var}{(\codeval,\envtwo)}\cons\envthree&\stack&\fstack
	&\tomache&
\codeval&\envtwo&\stack &\fstack
\end{array}}$\end{center}
\caption{The Split CEK, aka the revisited SECD\label{fig:secd}.}
\end{figure*}

\subsection{Milner Abstract Machine}
\label{sect:mam}
\renewcommand{\evctx}{\whctx}
\renewcommand{\evctxtwo}{\whctxtwo}
\renewcommand{\evctxp}[1]{\whctxp{#1}}
\renewcommand{\evctxtwop}[1]{\whctxtwop{#1}}

The linear substitution calculus suggests the design of a simpler version of the KAM, the Milner Abstract Machine (MAM), that avoids the concept of closure. At the language level, the idea is that, by repeatedly applying the axioms $\tostructdup$ and $\tostructap$ of the structural equivalence, explicit substitutions can be folded and brought \emph{outside}. At the machine level, the local environments in the closures are replaced by just one global environment that closes the code and the stack, as well as the global environment itself. 

Of course, naively turning to a global environment breaks the well-named invariant of the machine. This point is addressed using an $\alpha$-renaming in the variable transition, \ie\ when substitution takes place. Here we employ the global environments $\genv$ of \refsect{prel-am} and we redefine stacks as $\stack \grameq \stempty\mid \code\cons\stack$. A state of the MAM is given by a code $\code$, a stack $\stack$ and a global environment $\genv$. Note that the code and the stack together now form a code. 

The transitions of the MAM are:
\begin{center}
$\begin{array}{c|c|ccc|c|cccccccccc}
\code\codetwo&\stack&\genv&\tomacha&
\code&\codetwo\cons \stack&\genv
\\
\l\var.\code&\codetwo\cons \stack&\genv&\tomachm&
\code&\stack&\esub{\var}{\codetwo} \cons \genv
\\
\var&\stack&\genv&\tomache&
\rename{\code}&\stack&\genv
\end{array}$
\end{center}
where $\tomache$ takes place only if $\genv=\genvthreep{\genvtwo\esub{\var}{\code}}$ and $\rename{\code}$ is a well-named code $\alpha$-equivalent to $\code$ and s.t. any bound name in $\rename{\code}$ is fresh with respect to those in $\stack$ and $\genv$\footnote{The well-named invariant can be restored also in another way. One can simply substitute $\code$ (instead of $\rename{\code}$) but modify $\tomachm$ as follows (with $\vartwo$ fresh):
$$\begin{array}{c|c|ccc|c|cccccccccc}
\l\var.\code&\codetwo\cons \stack&\genv&\tomachm&
\code\isub\var\vartwo&\stack&\esub{\vartwo}{\codetwo} \cons \genv
\end{array}$$}.

The decoding of a MAM state $\mamstate{\code}{\stack}{\genv}$ is similar to the decoding of a KAM state, but the stack and the environment context are applied in reverse order (this is why stack and environment in MAM states are swapped with respect to KAM states):
\begin{center}$\begin{array}{rcl@{\sep\sep}rcl}
	\decode{\stempty}&\defeq&\ctxhole&
	\decode{\esub{\var}{\code}\cons\genv}  &\defeq&\decode{\genv}\ctxholep{\ctxhole\esub{\var}{\code}}\\
	\decode{\code\cons\stack}  & \defeq & \decode{\stack}\ctxholep{\ctxhole\code}&
	\decode{\mamstate{\code}{\stack}{\genv}} & \defeq & \decode{\genv}\ctxholep{\decode{\stack}\ctxholep{\code}}
\end{array}$\end{center}
To every MAM state $\mamstate{\code}{\stack}{\genv}$ we associate the pair $(\code\decstack,\genv)$ and call it the \emph{global closure} of the state.

As for the KAM, the decoding of contexts can be done statically, \ie\ it does not need dynamic invariants.

\begin{lemma}[Contextual Decoding]
\ben{Let $\genv$ be a global environment and $\pi$ be a stack of the MAM. Then }	$\decode{\genv}$ is a substitution context, and both $\decode{\stack}$ and $\decode{\stack}\ctxholep{\decode{\genv}}$ are evaluation contexts.
\end{lemma}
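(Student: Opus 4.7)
The proof is a purely static matter (no machine invariants are needed), and it decomposes into three independent inductions, one per claim. The only auxiliary fact used throughout is that weak head contexts, and likewise substitution contexts, are closed under plugging one context of the same family inside the other; both closure facts follow from a trivial induction on the outer context using the grammars $\whctx \grameq \ctxhole\mid \whctx\tm\mid \whctx\esub\var\tm$ and $\sctx \grameq \ctxhole\mid \sctx\esub\var\tm$.

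First I would prove, by induction on $\genv$, that $\decode\genv$ is a substitution context. The base case $\genv=\stempty$ is immediate since $\decode{\stempty}=\ctxhole$. For the inductive step $\genv=\esub\var\code\cons\genv'$, unfolding the definition gives $\decode\genv=\decode{\genv'}\ctxholep{\ctxhole\esub\var\code}$; the IH says $\decode{\genv'}$ is a substitution context and $\ctxhole\esub\var\code$ is obviously one too, so the closure-under-plugging fact for $\sctx$ concludes.

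Next I would prove, by induction on $\stack$, that $\decode\stack$ is a weak head context. The base case is again trivial. For $\stack=\code\cons\stack'$, unfolding gives $\decode\stack=\decode{\stack'}\ctxholep{\ctxhole\code}$; by IH $\decode{\stack'}$ is a weak head context and $\ctxhole\code$ is a weak head context (from the clause $\whctx\tm$), so closure of $\whctx$ under plugging finishes the case.

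Finally, for $\decode\stack\ctxholep{\decode\genv}$, observe that every substitution context is trivially a weak head context (the grammar of $\whctx$ subsumes the clauses of $\sctx$); hence by the previous two parts we are plugging one weak head context inside another, and closure of $\whctx$ under plugging yields the result. There is no real obstacle here: the only thing to be careful about is stating and using the closure-under-plugging properties explicitly, since the paper uses them implicitly in several other decoding lemmas.
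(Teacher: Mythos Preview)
Your proof is correct and follows exactly the approach the paper intends: the paper treats this lemma as immediate (``shown by easy inductions on the definition'', as stated for the analogous KAM lemma), and your three inductions together with the closure-under-plugging observation are precisely those easy inductions made explicit. There is nothing to add.
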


For the dynamic invariants we need a different notion of closed closure.

  \begin{definition}
  Given a global environment $\genv$ and a code $\code$, we define by mutual induction
  two predicates \emph{$\pwamclosed{\genv}$} and \emph{$\pwamclosed{(\code,\genv)}$} as follows:
  $$
  \begin{array}{rll}
  && \pwamclosed{\stempty} \\
  \pwamclosed{(\code,\genv)} & \implies & \pwamclosed{\esub{\var}{\tm}\cons\genv}\\
  \text{$\fv{\code}\subseteq\support\genv$}
  \land
  \pwamclosed{\genv}   & \implies & \pwamclosed{(\code,\genv)} \\
  \end{array}
  $$
  \end{definition}

The dynamic invariants are:

\begin{lemma}[MAM invariants]
\label{l:mam-prop}
Let $\state = \mamstate{\codetwo}{\stack}{\genv}$ \globinvariantshyp{MAM}. Then:
\begin{enumerate}
\item \label{p:mam-prop-one} {\em Global Closure:} \globclosprop{$(\code\decstack,\genv)$};
\item \label{p:mam-prop-two} {\em Subterm:} \subprop;
\item \label{p:mam-prop-three} {\em Names:} \gwnameprop;
\item \label{p:mam-prop-env} {\em Environment Size:} \genvprop.
\end{enumerate}
\end{lemma}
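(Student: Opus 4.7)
The plan is to proceed by a straightforward induction on the length of the execution leading to $\state$, treating the four invariants simultaneously since they mutually support one another.

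\emph{Base case.} For an initial state $(\code,\stempty,\stempty)$ with $\code$ closed and, without loss of generality, well-named, we have $\decode\stack = \ctxhole$ so the global closure is $(\code,\stempty)$. Since $\fv\code\subseteq\emptyset=\support\stempty$ and $\pwamclosed{\stempty}$ holds by definition, the closure invariant follows. The subterm, naming, and environment-size invariants are trivial.

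\emph{Inductive step.} I would examine the three transitions in turn. For the commutative transition $\code\codetwo,\stack,\genv \tomacha \code,\codetwo\cons\stack,\genv$, the identity $\decode{\codetwo\cons\stack} = \decode\stack\ctxholep{\ctxhole\codetwo}$ gives $\decode{\codetwo\cons\stack}\ctxholep\code = \decode\stack\ctxholep{\code\codetwo}$, so source and target have literally the same decoded global closure and all four invariants are immediate. For the multiplicative transition $\l\var.\code,\codetwo\cons\stack,\genv \tomachm \code,\stack,\esub\var\codetwo\cons\genv$, source closure says $\fv{\decode\stack\ctxholep{(\l\var.\code)\codetwo}}\subseteq\support\genv$ and $\pwamclosed\genv$; hence $\fv\codetwo\subseteq\support\genv$ (giving the new $\pwamclosed{(\codetwo,\genv)}$) and $\fv\code\cup\fv{\decode\stack}\subseteq\support\genv\cup\{\var\}=\support{\esub\var\codetwo\cons\genv}$, which yields target closure. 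The subterm invariant transfers because $\l\var.\code$ and $\codetwo$ are subterms of the initial code, well-naming is preserved since no new binder is introduced, and the environment grows by exactly one, consistently with its bound.

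For the exponential transition $\var,\stack,\genv \tomache \rename\code,\stack,\genv$ with $\genv$ of the form $\genvthreep{\genvtwo\esub\var\code}$, the closure invariant says that $\code$ is closed under the prefix ending at $\esub\var\code$; since $\rename\code$ is $\alpha$-equivalent to $\code$, target closure holds. By construction $\rename\code$ uses only fresh bound names, which is precisely what restores well-naming—a property that a global environment would otherwise break—and the environment does not grow.

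The main obstacle I anticipate is reconciling the syntactic subterm invariant with the $\alpha$-renaming performed by $\tomache$: strictly speaking $\rename\code$ is not a subterm of the initial code. The standard fix in this setting is either to read the invariant modulo $\alpha$-renaming of bound variables, or equivalently to fix at the outset a well-named initial code whose bound names are kept disjoint from the fresh ones drawn during execution. Either reading is harmless because the subsequent complexity analysis depends only on the size of codes, which is $\alpha$-invariant, and with it the induction closes uniformly across the four invariants and the three transitions.
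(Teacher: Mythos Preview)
Your proposal is correct and follows exactly the approach the paper indicates: the paper explicitly states that all invariant lemmas ``are always proved by a straightforward induction on the length of the execution and the proof is omitted,'' so you are in fact supplying more detail than the paper does. Your treatment of the $\alpha$-renaming subtlety for the subterm invariant under $\tomache$ is a genuine point the paper glosses over, and your resolution (reading subterm modulo $\alpha$, which is harmless since only code size is used downstream) is the right one.
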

\emph{Abstract Considerations on Concrete Implementations}. Note the new environment size invariant, whose bound is laxer than for local environment machines. Let $\exec$ be a execution of initial code $\code$. If one implements $\tomache$ looking for $\var$ in $\genv$ sequentially, then each $\tomache$ transition has cost $\sizem\exec$ (more precisely, linear in the number of preceding $\tomachm$ transitions) and the cost of implementing $\exec$ is easily seen to become quadratic in $\size\exec$. An efficient implementation would then employ a representation of codes such that variables are pointers, so that looking for $\var$ in $\genv$ takes constant time. The name invariant guarantees that variables can indeed taken as pointers, as there is no name clash. Note that the cost of a $\tomache$ transition is not constant, as the renaming operation actually makes $\tomache$ linear in $\size\tm$ (by the subterm invariant). So, assuming a pointer-based representation, $\exec$ can be implemented in time $O(\size\exec\cdot\size\code)$, as for local machines, and the same will hold for every global environment machine.

\begin{theorem}[MAM Distillation]
\label{tm:mam-sim}
	$(\mbox{MAM},\wlscname,\tostruct,\decode{{ }\cdot{ }})$ \distillationStatement:
	\begin{enumerate}
		\item \emph{Commutative}: if $\state\tomacha\statetwo$ then $\decode{\state}=\decode{\statetwo}$;
		\item \emph{Multiplicative}: if $\state\tomachm\statetwo$ then $\decode{\state}\towhldb\tostruct\decode{\statetwo}$;
		\item \emph{Exponential}: if $\state\tomache\statetwo$ then $\decode{\state}\towhlls\alphaequiv\decode{\statetwo}$.
	\end{enumerate}
\end{theorem}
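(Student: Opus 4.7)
The plan is to verify the three clauses by direct computation on decoded states, leaning on the MAM dynamic invariants (\reflemma{mam-prop}) to legitimise the contextual rewrites and on \reflemma{ev-comm-struct} to slide explicit substitutions in and out of contexts. The commutative case is purely syntactic: unfolding the stack decoding gives $\decode{\codetwo\cons\stack}\ctxholep{\code}=\decode\stack\ctxholep{\code\codetwo}$, hence $\decode{\mamstate{\code\codetwo}{\stack}{\genv}} = \decode\genv\ctxholep{\decode\stack\ctxholep{\code\codetwo}} = \decode{\mamstate{\code}{\codetwo\cons\stack}{\genv}}$, with no dynamic reasoning needed.

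For the multiplicative transition $\mamstate{\l\var.\code}{\codetwo\cons\stack}{\genv}\tomachm\mamstate{\code}{\stack}{\esub\var\codetwo\cons\genv}$, the source decodes to $\decode\genv\ctxholep{\decode\stack\ctxholep{(\l\var.\code)\codetwo}}$; by contextual decoding $\decode\genv$ is a substitution context and $\decode\stack$ is a weak head context, so the root rule $\rtodb$ fires with empty substitution context to give $\decode\genv\ctxholep{\decode\stack\ctxholep{\code\esub\var\codetwo}}$, and the name invariant together with \reflemma{ev-comm-struct} lets us slide $\esub\var\codetwo$ past $\decode\stack$, yielding exactly $\decode{\mamstate{\code}{\stack}{\esub\var\codetwo\cons\genv}}$ up to $\eqstruct$. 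For the exponential transition $\mamstate{\var}{\stack}{\genv}\tomache\mamstate{\rename\code}{\stack}{\genv}$ with $\genv=\genvthreep{\genvtwo\esub\var\code}$, the structure of $\genv$ induces a factorisation $\decode\genv = D_3\ctxholep{D_2\ctxholep{\ctxhole\esub\var\code}}$ with $D_2,D_3$ substitution contexts; the source then reads $D_3\ctxholep{D_2\ctxholep{\decode\stack\ctxholep{\var}\esub\var\code}}$, in which one spots an $\ls$-redex whose linear-substitution context is $D_2\ctxholep{\decode\stack}$ (the name invariant ensuring no capture of $\var$). Firing $\rtols$ gives $D_3\ctxholep{D_2\ctxholep{\decode\stack\ctxholep{\code}\esub\var\code}}$; a further application of \reflemma{ev-comm-struct} (using $\var\notin\fv{D_2}$) absorbs $\esub\var\code$ back into $D_3$, and $\alphaequiv$ identifies $\code$ with $\rename\code$, delivering $\decode{\mamstate{\rename\code}{\stack}{\genv}}$.

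For the reflective half, if $\state=\mamstate{\code}{\stack}{\genv}$ is commutative-normal then $\code$ is not an application; global closure then forces $\code$ to be either a variable bound in $\genv$ (licensing $\tomache$) or an abstraction facing a non-empty stack (else $\decode\state$ is already $\togen$-normal), in which case $\tomachm$ fires. The main obstacle throughout will be threading the well-named invariant to discharge \emph{simultaneously} the non-capture side conditions of the root rules and the freshness hypotheses required to invoke \reflemma{ev-comm-struct}; this is precisely where the $\alpha$-renaming built into $\tomache$ pays off, by keeping freshly substituted codes disjoint from the surrounding substitution contexts.
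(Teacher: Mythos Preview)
Your commutative and multiplicative cases are exactly what the paper does: unfold the decoding, fire $\rtodb$ inside the applicative context $\decode\stack$, and use \reflemma{ev-comm-struct} to push $\esub\var\codetwo$ out to the environment position. The progress argument is likewise the same.

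The exponential case, however, contains a genuine slip. Write the environment as $\genv_1\cons\esub\var\code\cons\genv_2$. Unfolding the decoding gives
\[
\decode\genv \;=\; \decode{\genv_2}\ctxholep{\decode{\genv_1}\ctxholep{\ctxhole}\esub\var\code},
\]
so $\esub\var\code$ sits \emph{between} the two substitution contexts, not innermost as in your factorisation $D_3\ctxholep{D_2\ctxholep{\ctxhole\esub\var\code}}$. With the correct shape, the source is $\decode{\genv_2}\ctxholep{\decode{\genv_1}\ctxholep{\decstackp\var}\esub\var\code}$; the $\ls$-context is $\decode{\genv_1}\ctxholep{\decstack}$ and the surrounding evaluation context is $\decode{\genv_2}$. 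Firing $\rtols$ yields $\decode{\genv_2}\ctxholep{\decode{\genv_1}\ctxholep{\decstackp\code}\esub\var\code}$, which is already $\decode\genv\ctxholep{\decstackp\code}$, i.e.\ the target up to the $\alpha$-renaming $\code\alphaequiv\rename\code$. No appeal to \reflemma{ev-comm-struct} is needed, and this matters: the theorem asserts $\towhlls\alphaequiv$, not $\towhlls\eqstruct$. Your extra $\eqstruct$-step both stems from the misplaced $\esub\var\code$ and proves a strictly weaker statement than the one claimed. Drop it, fix the bracketing, and the case goes through exactly as in the paper.
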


\proof
\emph{Properties of the decoding} (\emph{progress} is as for the KAM):
\begin{enumerate}
\item \emph{Commutative}. In contrast to the KAM, $\tomacha$ gives a true identity:
\begin{center}
$\begin{array}{lllll}
\decode{\mamstate{\code\codetwo}{\stack}{\genv}}
&=&
\decgenvp{\decstackp{\code\codetwo}} &=&\decode{\mamstate{\code}{\codetwo\cons\stack}{\genv}}
\end{array}$
\end{center}
\item \emph{Multiplicative}. Since substitutions and evaluation contexts commute via $\tostruct$ (\reflemma{ev-comm-struct}), $\tomachm$ maps to:
\begin{center}
$\begin{array}{rcllll}
\decode{\mamstate{\l\var.\code}{\codetwo\cons\stack}{\genv}}&=&
\decgenvp{\decstackp{(\l\var.\code)\codetwo}} &\towhldb\\
&& \decgenvp{\decstackp{\code \esub{\var}{\codetwo}}}&\tostruct_{Lem. \ref{l:ev-comm-struct}}\\
&&\decgenvp{\decstackp{\code}\esub{\var}{\codetwo}}&=&\\
&&\decode{\mamstate{\code}{\stack}{\esub{\var}{\codetwo}\cons\genv}}
\end{array}$
\end{center}

\item \emph{Exponential}. The erasure of part of the environment of the KAM is replaced by an explicit use of $\alpha$-equivalence:
\begin{center}
$\begin{array}{lllll}
\decode{\mamstate{\var}{\stack}\genv\cons\esub{\var}{\codetwo}\cons\genvtwo}
&=&
\decgenvtwop{\decgenvp{\decstackp{\var}}\esub{\var}{\codetwo}} &\towhlls\\
&&\decgenvtwop{\decgenvp{\decstackp{\codetwo}}\esub{\var}{\codetwo}} &\alphaequiv&\\
&&
\decgenvtwop{\decgenvp{\decstackp{\rename\codetwo}}\esub{\var}{\codetwo}} &=\\
&&\decode{\mamstate{\rename\codetwo}{\stack}{\genv\cons\esub{\var}{\codetwo}\cons\genvtwo}}
\end{array}$\qed
\end{center}
\end{enumerate}

\emph{Digression about $\tostruct$}. Note that in the distillation theorem structural equivalence is used only to commute with stacks. The calculus and the machine in fact form a distillery also with respect to the following simpler notion of structural equivalence. Let $\eqmam$ be the smallest equivalence relation generated by the closure by (call-by-name) evaluation contexts of the axiom $\tostructapl$ in \reffig{StructEq} (page \pageref{fig:StructEq}). The next lemma guarantees that $\eqmam$ is a strong bisimulation (the proof is in \withproofs{the appendix, page \pageref{ss:mam-proofs}}\withoutproofs{\cite{distillingTR}}), and so $\eqmam$ provides another MAM distillery.

\begin{lemma}
$\eqmam$ is a strong bisimulation with respect to $\towhl$.
\end{lemma}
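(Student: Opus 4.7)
The plan is to reduce to a single application of the axiom and then perform a case analysis on the redex position. Because $\eqmam$ is by definition the reflexive-symmetric-transitive closure of the one-step rewrite $\whctxp{(\tmthree\tmfive)\esub\var\tmfour}\mapsto\whctxp{\tmthree\esub\var\tmfour\tmfive}$ (under the side condition $\var\notin\fv\tmfive$), it suffices, by an easy induction on the length of an equivalence derivation, to establish the bisimulation property for a single instance of this axiom closed under an evaluation context. Moreover, since $\eqmam$ is symmetric, proving one side of the diagram automatically yields the other.

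Fix then $\tm=\whctxp{(\tmthree\tmfive)\esub\var\tmfour}$ and $\tmtwo=\whctxp{\tmthree\esub\var\tmfour\tmfive}$ with $\var\notin\fv\tmfive$, and assume $\tm\towhl\tmp$. The argument proceeds by case analysis on the position of the fired redex relative to the ``displaced'' substitution $\esub\var\tmfour$. When the redex lies entirely inside $\whctx$, or inside $\tmthree$ and independent of $\esub\var\tmfour$, the very same redex is available in $\tmtwo$: the weak head path into $\tmthree$ exists in both terms, only the relative order of the constructors $\cdot\tmfive$ and $\cdot\esub\var\tmfour$ in the surrounding context differs. The two reducts are then related by exactly the same instance of $\tostructapl$, the side condition being preserved because $\tmfive$ is untouched.

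Two interactions deserve more care. First, if the redex is a $\rtodb$-step firing at the outer application, then $\tmthree=\sctx'\ctxholep{\l\vartwo.\tmsix}$; on $\tm$ the step uses substitution context $\sctx'$, while on $\tmtwo$ the outer substitution is absorbed into the larger substitution context $\sctx'\esub\var\tmfour$. A short direct computation using the definition of $\sctx$-plugging shows that both reducts equal $\whctxp{\sctx'\ctxholep{\tmsix\esub\vartwo\tmfive}\esub\var\tmfour}$, so the two steps yield literally the same term. Second, if the redex is an $\rtolsc{\whctx}$-step that uses $\esub\var\tmfour$ itself to replace the occurrence of $\var$ at the weak head of $\tmthree$, say $\tmthree=\whctx''\ctxholep{\var}$, then the two reducts are $\whctxp{(\whctx''\ctxholep{\tmfour}\tmfive)\esub\var\tmfour}$ and $\whctxp{\whctx''\ctxholep{\tmfour}\esub\var\tmfour\tmfive}$, once again related by the same instance of $\tostructapl$.

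The main obstacle is essentially bookkeeping: carefully enumerating the possible weak head positions of the redex in $\tm$ and $\tmtwo$, and, in the multiplicative case, verifying that the two substitution contexts (with and without $\esub\var\tmfour$) produce identical reducts. Beyond that, each case closes either by literal equality (at the outer $\rtodb$) or by a single application of the same axiom, which is precisely the one-step-to-one-step shape demanded by the strong bisimulation diagram.
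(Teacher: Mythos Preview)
Your proposal is correct and follows the standard approach that the paper also takes: reduce to a single instance of $\tostructapl$ closed under a weak head context, then perform a case analysis on the position of the $\towhl$-redex relative to the displaced substitution, the only non-trivial interactions being precisely the two you isolate (the $\rtodb$-step at the displayed application, where the two reducts literally coincide, and the $\rtolsc{\whctx}$-step that consumes $\esub\var\tmfour$ itself).

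One small remark: the appeal to symmetry of $\eqmam$ only dispatches the \emph{back} half of the bisimulation diagram for the full equivalence; at the one-step level you still need to check both ``$\tm\towhl\tmp$ implies $\tmtwo\towhl\tmtwop$'' \emph{and} ``$\tmtwo\towhl\tmtwop$ implies $\tm\towhl\tmp$'' for a single $\mapsto$-step, since $\mapsto$ itself is directed. Your case analysis adapts verbatim to the second direction, so nothing is actually missing, but the justification as written slightly overstates what symmetry buys you.
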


\begin{figure*}
\begin{center}${\setlength{\arraycolsep}{1em}
\begin{array}{c|c|c|ccc|c|c|c}
	\code\codetwo&\stack&\fstack&\genv
	&\tomachaone&
	\code&\codetwo\cons\stack&\fstack&\genv
	\\
	\l\var.\code&\codetwo\cons\stack&\fstack&\genv
	&\tomachm&
	\code&\stack&\fstack&\esub\var\codetwo \cons \genv 
	\\
	\var&\stack &\fstack&\genv_1 \cons \esub\var\code \cons \genv_2
	&\tomachatwo&
	\code&\stempty &(\genv_1,\var,\stack)\cons\fstack&\genv_2
	\\
	\codeval&\stempty&(\genv_1,\var,\stack)\cons\fstack&\genv_2
	&\tomache&
	\rename{\codeval}&\stack&\fstack&\genv_1 \cons \esub\var\codeval \cons \genv_2
\end{array}}$\end{center}
\caption{The Wadsworth Abstract Machine (WAM).\label{fig:need}}
\end{figure*}

\subsection{The Split CEK, or Revisiting the SECD Machine}
For the CEK machine we proved that the stack, that collects both arguments and functions, decodes to an evaluation context (\reflemma{cek-prop}.\ref{p:cek-prop-four}). The new CBV machine in \reffig{secd}, deemed \emph{Split CEK}, has two stacks: one for arguments and one for functions. Both will decode to evaluation contexts. The argument stack is identical to the stack of the KAM, and, accordingly, will decode to an applicative context. Roughly, the function stack decodes to contexts of the form $\evctxp{\val\ctxhole}$. More precisely, an entry of the function stack is a pair $(\clos,\stack)$, where $\clos$ is a closure $(\codeval,\env)$, and the three components $\codeval$, $\env$, and $\stack$ together correspond to the evaluation context $\decode{\stack}\ctxholep{\decode{\env}\ctxholep{\codeval\ctxhole}}$. For the acquainted reader, this new stack corresponds to the \emph{dump} of Landin's SECD machine \cite{LandinSECD}.

Let us explain the main idea. Whenever the code is an abstraction $\codeval$ and the argument stack $\stack$ is non-empty (\ie\ $\stack=\clos\cons\stacktwo$), the machine saves the active closure, given by current code $\codeval$ and environment $\env$, and the tail of the stack $\stacktwo$ by pushing a new entry $((\codeval,\env),\stacktwo)$ on the dump, and then starts evaluating the first closure $\clos$ of the stack. The syntax for dumps then is 
$$\fstack\grameq \stempty\mid(\clos,\stack)\cons\fstack$$
Every dump decodes to a context according to:
\begin{center}
	$\begin{array}{rcl@{\sep\sep}rcl}
		\decode{\stempty} & \defeq & \ctxhole&
		\decode{((\codeval,\env),\stack)\cons\fstack} & \defeq &
		\decode{\fstack}\ctxholep{\decode\stack\ctxholep{\decode\env\ctxholep{\codeval\ctxhole}}}                
	\end{array}$
\end{center}

The decoding of terms, environments, closures, and stacks is as for the KAM. The decoding of states is defined as $\decode{\scekstate{\code}{\env}{\stack}{\fstack}} \defeq 
\decode\fstack\ctxholep{
	\decode{\stack}\ctxholep{
		\decode\env\ctxholep\code
		}
	}$. The proofs for the Split CEK are \withproofs{in the appendix (page \pageref{sect:scek-proofs})}\withoutproofs{in \cite{distillingTR}}.

  \begin{lemma}[Split CEK Invariants]
  \label{l:scek-prop}
  Let $\state = \scekstate{\codetwo}{\env}{\stack}{\fstack}$ \invariantshyp{Split CEK}. Then:
    \begin{enumerate}
    \item \label{p:scek-prop-closure}
       {\em Closure:} \closprop;
    \item \label{p:scek-prop-subterm}
       {\em Subterm:} \subprop;
    \item \label{p:scek-prop-value}
       {\em Value:} the code of any closure in the dump or in any environment in $\state$ is a value;
    \item \emph{Contextual Decoding}: $\decdump$, $\decdumpp\decstack$, and $\decdumpp{\decstackp\decenv}$ are left-to-right call-by-value evaluation context.
    		\item \label{p:scek-prop-five}\emph{Name}: \lwnameprop.
		\item \label{p:scek-prop-six}\emph{Environment Size}: \envprop.

    \end{enumerate}
  \end{lemma}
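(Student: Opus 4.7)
\medskip

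The plan is to proceed, as for all invariant lemmas in this paper, by induction on the length $n$ of the execution $\exec:\state_0\tomach^n\state$. The base case ($n=0$) is trivial: $\state_0 = \scekstate{\code}{\stempty}{\stempty}{\stempty}$ with $\code$ closed, and every invariant holds vacuously or by assumption. For the inductive step, assume the invariants hold at $\state$ and perform a case analysis on the transition $\state\tomach\statetwo$, checking each of the six invariants for each of the four transitions $\tomachaone$, $\tomachatwo$, $\tomachm$, $\tomache$.

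The first three invariants (closure, subterm, value) and the last two (name, environment size) are routine. For closure, the key observations are: $\tomachaone$ splits a closed closure $(\code\codetwo,\env)$ into two closed closures (by definition of $\fv{}$); $\tomachatwo$ activates the already-closed closure $(\code,\envtwo)$ from the stack and saves the current closed closure on the dump; $\tomachm$ adds $\esub{\var}{(\codeval,\env)}$ to $\envtwo$, where the captured variable $\var$ is exactly the one freed by removing the outer $\lambda$; and $\tomache$ replaces a variable by the value it was bound to, whose closure was closed. Subterm follows since every code appearing in $\statetwo$ is a subterm of some code in $\state$ (and hence of the initial code). For value, the only transitions introducing new closures into environments or dumps are $\tomachm$ (adding $\esub{\var}{(\codeval,\env)}$ to the environment, with $\codeval$ a value by the firing condition) and $\tomachatwo$ (adding $((\codeval,\env),\stack)$ to the dump, again with $\codeval$ a value). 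Name and environment size are bookkeeping: well-namedness is preserved since all codes come from the initial well-named term and $\tomache$ copies a value from the environment (by the value invariant, $\codeval$ contains no binders that conflict; here one uses that pure values are copied as-is or up to the environment size bound).

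The genuinely interesting invariant is contextual decoding, which relies crucially on the value invariant. By the KAM analysis (\reflemma{kam-ctx-dec}) we already know $\decenv$ is a substitution context and $\decstack$, $\decstackp\decenv$ are applicative (hence CBV evaluation) contexts. It remains to prove, by induction on $\fstack$, that $\decdump$ is a left-to-right CBV evaluation context. The interesting clause is $\decode{((\codeval,\env),\stack)\cons\fstack} = \decdump'\ctxholep{\decstackp{\decenvp{\codeval\ctxhole}}}$ where $\fstack = ((\codeval,\env),\stack)\cons\fstack'$ and $\decdump'$ is a CBV context by IH. Since $\codeval$ is a value (by the value invariant) and $\decenv$ is a substitution context, the subterm $\decenvp{\codeval\ctxhole}$ is exactly of the form $\sctxp{\val}\cbvctx$ with $\cbvctx=\ctxhole$, possibly wrapped by further $\esub{\var}{\tm}$ from $\env$, which are precisely the productions $\sctxp\val\cbvctx$ and $\cbvctx\esub\var\tm$ of $\cbvctx$. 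Wrapping this by $\decstack$ (an applicative CBV context) and then by $\decdump'$ keeps us in $\cbvctx$ by closure under composition.

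The main obstacle is ensuring the value invariant is used \emph{everywhere} it is needed and that the transitions do not silently break it; in particular one must verify that $\tomachatwo$ fires only on a value code, so that the entry pushed on the dump has value head, which is precisely the firing condition listed in \reffig{secd}. Once this is in place the contextual decoding clause for dumps follows uniformly, and Simulation (\refth{GenSim}) composes with the appropriate evaluation context without difficulty.
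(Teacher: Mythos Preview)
Your proposal is correct and follows exactly the approach the paper indicates: as stated in \refsect{prel-am}, invariant lemmas are always proved ``by a straightforward induction on the length of the execution and \emph{the proof is omitted}''. Your case analysis on the four transitions, with the value invariant feeding into the structural argument for the contextual decoding of the dump, is the right expansion of that omitted proof. One minor point: your justification of the name invariant under $\tomache$ is slightly off---the correct reason is not that ``$\codeval$ contains no binders that conflict'' but simply that the new active closure $(\codeval,\envtwo)$ was already a closure in the environment of the source state and hence well-named by the inductive hypothesis; no appeal to the value invariant is needed there.
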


 \begin{theorem}[Split CEK Distillation]
\label{tm:sCEK-sim}
	$(\emph{Split CEK},\wlscvaluelr,\eqstruct,\decode{{ }\cdot{ }})$ \distillationStatement:
	\begin{enumerate}
		\item \emph{Commutative 1}: if $\state\tomachaone\statetwo$ then $\decode{\state}\eqstruct\decode{\statetwo}$;
		\item \emph{Commutative 2}: if $\state\tomachatwo\statetwo$ then $\decode{\state}\eqstruct\decode{\statetwo}$;
		\item \emph{Multiplicative}: if $\state\tomachm\statetwo$ then $\decode{\state}\towhlcekdb\decode{\statetwo}$;
		\item \emph{Exponential}: if $\state\tomache\statetwo$ then $\decode{\state}\towhlcekls\eqstruct\decode{\statetwo}$.		
	\end{enumerate}
\end{theorem}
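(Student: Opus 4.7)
The proof proceeds by case analysis on each of the four transitions, following the template established for the CEK and MAM. A preliminary step is the Contextual Decoding clause of \reflemma{scek-prop}: on reachable states, $\decdump$, $\decdumpp\decstack$, and $\decdumpp{\decstackp\decenv}$ are CBV LR evaluation contexts. This relies on the Value invariant, which forces every dump entry $((\codeval,\env),\stack)$ to have $\codeval$ a value so that its contribution $\decode\env\ctxholep{\codeval\ctxhole}$ matches the $\sctxp\val\cbvctx$ clause of $\cbvctx$; composition of CBV LR contexts (a routine induction on the grammar) then yields the three claims. Progress is a straightforward case analysis on the code and the top of the stack, using the Closure invariant to exclude free variables, exactly as for the CEK.

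For Commutative 1, the two decodings differ only in the inner subterm: the left contains $\decode\env\ctxholep{\code\codetwo}$ while the right contains $(\decode\env\ctxholep\code)(\decode\env\ctxholep\codetwo)$; iterated $\tostructap$ along $\decode\env$ relates them, and closure of $\eqstruct$ under the outer weak context $\decode\fstack\ctxholep{\decode\stack\ctxhole}$ lifts this to the full states. For Commutative 2, the decodings unfold to $\decode\fstack\ctxholep{\decode\stack\ctxholep{(\decode\env\ctxholep\codeval)(\decode\envtwo\ctxholep\code)}}$ and $\decode\fstack\ctxholep{\decode\stack\ctxholep{\decode\env\ctxholep{\codeval(\decode\envtwo\ctxholep\code)}}}$; one applies $\tostructap$ to distribute $\decode\env$ over the application, then $\tostructgc$ to remove the spurious copy landing on the closed subterm $\decode\envtwo\ctxholep\code$ (using the Closure invariant). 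For the Multiplicative transition, the Value invariant makes $\decode\env\ctxholep\codeval$ an answer $\sctxp\val$, so the root rule $\rtodbv$ fires on $(\l\var.\code)(\decode\env\ctxholep\codeval)$ and yields exactly $\decode\envtwo\ctxholep{\code\esub\var{\decode\env\ctxholep\codeval}}$, which unfolds to the after-state decoding. The ambient CBV LR evaluation context is $\decode\fstack\ctxholep{\decode\stack\ctxholep{\decode\envtwo\ctxhole}}$, obtained by composing Contextual Decoding with the fact that substitution contexts are CBV LR.

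The delicate case is the Exponential transition. Writing the decoding of the environment $\env\cdot\esub\var{(\codeval,\envtwo)}\cdot\envthree$ applied to $\var$ as $\var\,\sigma_L\,\esub\var{\decode\envtwo\ctxholep\codeval}\,\sigma_R$, where $\sigma_L,\sigma_R$ are the substitution stacks coming from $\env$ and $\envthree$, one application of $\rtolsv$ yields $\decode\envtwo\ctxholep{\codeval\,\sigma_L\,\esub\var\codeval}\,\sigma_R$, while the after state decodes to $\decode\envtwo\ctxholep\codeval$. The main obstacle is to bridge this gap with only $\eqstruct$: the residual substitutions $\sigma_L$, $\esub\var\codeval$, and $\sigma_R$ must all be eliminated by iterated $\tostructgc$, possibly after $\tostructcom$ commutations to bring them to a GC-able outer position. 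The Name invariant is crucial here: no variable is bound twice in the combined environment, so the variables bound in $\sigma_L,\sigma_R$ are disjoint from $\support\envtwo$ and from $\var$. Combined with the Closure invariant applied to the retrieved closure $(\codeval,\envtwo)$, which gives $\fv\codeval\subseteq\support\envtwo$, this ensures that each substitution to be removed binds a variable that is not free in the surviving term, so $\tostructgc$ applies iteratively and the $\togene\eqstruct$ simulation is completed.
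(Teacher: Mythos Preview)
Your proposal is correct and follows essentially the same approach as the paper: a transition-by-transition unfolding of the decoding, combined with the invariants of \reflemma{scek-prop} (Closure, Value, Name, Contextual Decoding) and the axioms $\tostructap$, $\tostructgc$ of the structural equivalence. Your treatment of the exponential case---identifying that the leftover substitutions from $\env$, $\envthree$, and $\esub\var\codeval$ must be garbage-collected, and that this is justified by the Closure invariant on $(\codeval,\envtwo)$ together with the Name invariant---is exactly the mechanism used in the paper's proofs for the KAM and CEK, transported to the Split CEK setting.
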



\section{Call-by-Need: the \WAM\ and the Merged \WAM}
\label{sect:need}
\ben{In this section we introduce a new abstract machine for call-by-need, deemed \emph{Wadsworth Abstract Machine} (\WAM). The \WAM\ arises very naturally as a reformulation of the $\wlscneed$ calculus of \refsect{ES-distance}. The motivations behind the introduction of a new machine are:
\begin{enumerate}
	\item \emph{Simplicity}: the \WAM\ is arguably simpler than all other CBNeed machines in the literature, in particular its distillation is very natural;
	\item \emph{Factorizing the Distillation of the Lazy KAM and of the SAM}: the study of the \wam\ will be followed by two sections showing how to tweak the \wam\ in order to obtain (simplifications of) two CBNeed machines in the literature, Cregut's Lazy KAM and Sestoft's machine (here called \emph{SAM}). Expressing the Lazy KAM and the SAM as modifications of the \WAM\ helps understanding their design, their distillation (that would otherwise look very technical), and their relationship;
	\item \emph{Modularity of Our Contextual Theory of Abstract Machines}: the \wam\ is obtained by applying to the KAM the following two tweaks:
	\begin{enumerate}
		\item \emph{Dump}: the \wam\ uses the dump-like approach of the Split CEK/SECD to evaluate inside explicit substitutions;
		\item \emph{Global Environments}: the \wam\ uses the global environment approach of the MAM to implement memoization;
	\end{enumerate}	
\end{enumerate}
}
\subsection{The \wam}
The \wam\ is shown in \reffig{need}. \ben{Note that when the code is a variable the transition is now commutative.} The idea is that whenever the code is a variable $\var$ and the environment has the form $\genv_1 \cons \esub\var\code \cons \genv_2$, the machine jumps to evaluate $\code$ saving the prefix of the environment $\genv_1$, the variable $\var$ on which it will substitute the result of evaluating $\code$, and the stack $\stack$. \ben{This is how hereditarily weak head evaluation context are implemented by the \wam.} In \refsect{SAM}, we will present a variant of the WAM that avoids the splitting of the environment saving $\genv_1$ in a dump entry.

The syntax for dumps is 
$$\fstack\grameq \stempty\mid(\genv,\var,\stack)\cons\fstack$$
Every dump stack decodes to a context according to:
\begin{center}
	$\begin{array}{rcl@{\sep}rcl}
		\decode{\stempty} & \defeq & \ctxhole&
		\decode{(\genv,\var,\stack)\cons\fstack} & \defeq &
        \dgenvp{\dfstackp{\dstackp{\var}}}\esub{\var}{\ctxhole}
	\end{array}$
\end{center}

The decoding of terms, environments, and stacks is defined as for the KAM.
The decoding of states is defined by $\decode{\wamstate{\code}{\stack}{\fstack}{\genv}} := \dgenvp{\dfstackp{\dstackp{\code}}}$. The decoding of contexts is static:

\begin{lemma}[Contextual Decoding]
	\ben{Let $\fstack$, $\stack$, and $\genv$ be a dump, a stack, and a global environment of the \wam, respectively. Then} $\decdump$, $\decdumpp\decstack$, $\decgenvp\decdump$, and $\decgenvp{\decdumpp{\decstack}}$ are CBNeed evaluation contexts.
\end{lemma}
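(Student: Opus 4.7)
The plan is a straightforward structural induction, but mediated by an auxiliary composition lemma for CBNeed evaluation contexts. Recall that these are generated by
\[
\cbndctx \grameq \ctxhole\mid \cbndctx\tm\mid \cbndctx\esub\var\tm\mid \cbndctxtwop\var\esub\var\cbndctx.
\]
The key preliminary observation is that \emph{CBNeed contexts are closed under plugging}, i.e.\ if $\cbndctx_1$ and $\cbndctx_2$ are CBNeed contexts then so is $\cbndctx_1\ctxholep{\cbndctx_2}$. This is immediate by induction on $\cbndctx_1$: the base case is trivial, and each inductive clause preserves the shape of a CBNeed context since only the hole is replaced.

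Next I would dispatch the two easy components independently. First, by induction on the stack $\stack$, $\decstack$ is a CBNeed context: the base case is $\ctxhole$, and the inductive step $\decode{\code\cons\stack}=\decstack\ctxholep{\ctxhole\code}$ applies the IH and the plugging lemma, noting that $\ctxhole\code$ falls under the clause $\cbndctx\tm$. Symmetrically, by induction on $\genv$, $\decgenv$ is a CBNeed context, the relevant clause now being $\cbndctx\esub\var\tm$. These two facts cover the ``stack'' and ``environment'' building blocks uniformly.

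The core of the proof is then the induction on the dump $\fstack$, and this is where the specific shape of CBNeed contexts plays its role—this is the only nontrivial part. The base case $\fstack=\stempty$ gives $\ctxhole$. For the inductive step,
\[
\decode{(\genv,\var,\stack)\cons\fstack} \;=\; \dgenvp{\dfstackp{\dstackp{\var}}}\esub{\var}{\ctxhole},
\]
which fits exactly the clause $\cbndctxtwop\var\esub\var\cbndctx$, taking the outer context to be $\dgenvp{\dfstackp{\dstack}}$ and the inner context to be $\ctxhole$. That the outer context is indeed a CBNeed context follows by three applications of the plugging lemma, using the IH on $\fstack$ together with the two preliminary facts about $\decgenv$ and $\decstack$. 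The required occurrence of $\var$ at the hole of $\dstack$, ensuring the term under the substitution is of the form $\cbndctxtwop\var$, is exactly how the dump entry has been defined.

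Finally, the four contexts in the statement are assembled by further applications of the plugging lemma: $\decdumpp\decstack$, $\decgenvp\decdump$, and $\decgenvp{\decdumpp\decstack}$ are each obtained by plugging CBNeed contexts into one another. The main (minor) obstacle is really just recognizing that the dump entry was designed precisely to mirror the recursive clause $\cbndctxtwop\var\esub\var\cbndctx$; once this is noticed the proof is essentially bookkeeping.
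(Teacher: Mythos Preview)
Your proof is correct and matches the paper's intended approach: the lemma is stated as a ``static'' property and left to easy inductions on the definitions, which is precisely what you carry out. Your explicit isolation of the closure-under-plugging lemma is a clean way to organize these inductions, and the observation that the dump entry decoding is tailored to the clause $\cbndctxtwop\var\esub\var\cbndctx$ is exactly the point.
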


Closed closures are defined as for the MAM. Given a state $\state = \scekstate{\code}{\stack}{\fstack}{\genv_0}$ with $\fstack=(\genv_1,\var_1,\stack_1)\cons\ldots\cons(\genv_n,\var_n,\stack_n)$, its closures are $(\decstackp\code,\genv_0)$ and, for $i\in\set{1,\ldots,n}$,
$$(\decode{\stack_i}\ctxholep{\var_i},\genv_{i}\cons\esub{\var_i}{\decode{\stack_{i-1}}\ctxholep{\var_{i-1}}}\cons\ldots\cons\esub{\var_1}{\decstackp\code}\cons\genv_0).$$ 

The dynamic invariants are:

  \begin{lemma}[WAM invariants]
  \label{l:wam-invariants}
  Let $\state = \scekstate{\code}{\stack}{\fstack}{\genv_0}$ \invariantshyp{WAM}, and s.t. $\fstack=(\genv_1,\var_1,\stack_1)\cons\ldots\cons(\genv_n,\var_n,\stack_n)$. Then:
    \begin{enumerate}
    \item \label{p:wam-invariants-globclos}
       {\em Global Closure:} the closures of $\state$ are closed;
    \item \label{p:wam-invariants-subterm}
       {\em Subterm:} \subprop;
    \item \label{p:wam-invariants-value}
       {\em Names:} the closures of $\state$ are well-named.
    \end{enumerate}
  \end{lemma}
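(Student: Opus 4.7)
\proof
The proof proceeds by induction on the length $n$ of the execution $\exec:\state_0\tomach^n\state$ from an initial state $\state_0$. In the base case $n=0$, the state is initial: $\code$ is closed, $\stack=\fstack=\stempty$, and $\genv_0=\stempty$, so the global closure list reduces to $(\code,\stempty)$, which is closed and well-named, and the subterm property holds trivially since no code other than (subterms of) the initial code $\code_0=\code$ appear. For the inductive step, assume the invariants hold for $\state$ and perform a case analysis on the transition $\state\tomach\statetwo$, of which there are four (one per rule in \reffig{need}).

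For the plan, the easy cases are $\tomachaone$ and $\tomachm$. In $\tomachaone$, the decomposition of $\code\codetwo$ into $\code$ and $\codetwo$ preserves all three invariants: both subterms come from the original code, no renaming occurs, and the global environment is unchanged, so closure of $(\decstackp{\code\codetwo},\genv_0)$ is equivalent to closure of $(\decode{\codetwo\cons\stack}\ctxholep{\code},\genv_0)$. In $\tomachm$, the abstraction $\l\var.\code$ and its first argument $\codetwo$ become the new top closure $(\code\decstack,\esub\var\codetwo\cons\genv_0)$; closure follows from $\fv{\code}\subseteq\fv{\l\var.\code}\cup\{\var\}$ and the induction hypothesis, well-namedness from the fact that the binder $\var$ already appears in the support of the previous closure (by the name invariant for $\state$), so adding $\esub\var\codetwo$ does not duplicate names, and the subterm invariant is clear.

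The delicate cases are the two dump-manipulating transitions, $\tomachatwo$ and $\tomache$. For $\tomachatwo$, the current environment $\genv_0=\genv'_1\cons\esub\var\code\cons\genv'_2$ is split, pushing $(\genv'_1,\var,\stack)$ onto $\fstack$ and continuing with code $\code$, empty stack, and environment $\genv'_2$. One must verify that the new list of closures is still closed: the new top closure is $(\code,\genv'_2)$, whose closure follows by observing that the previous top closure $(\decstackp\var,\genv'_2\cons\esub\var\code\cons\genv'_1\cons\cdots)$ was closed and $\code$ was bound in that environment by the entry $\esub\var\code$ whose context is exactly $\genv'_2$ (global closure is established precisely so that each entry $\esub{\var_i}{\code_i}$ depends only on the entries to its right). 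The saved dump entry $(\genv'_1,\var,\stack)$ then reconstructs, together with the rest of the dump and the final environment, the very same closure that was previously at the top, so the induction hypothesis applies. For $\tomache$, the top of the dump $(\genv_1,\var,\stack)$ is popped and $\esub\var\codeval$ is inserted in the environment; the renaming $\rename{\codeval}$ ensures fresh bound names with respect to $\stack,\fstack,\genv_1,\genv_2$, which is exactly what is needed to re-establish well-namedness after substitution, while closure follows from the fact that $\codeval$ was closed in the context $\genv_2$, and placing $\esub\var\codeval$ in $\genv_1\cons\esub\var\codeval\cons\genv_2$ preserves the dependency order.

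The main obstacle is the bookkeeping for $\tomachatwo$ and $\tomache$: one has to be precise about how the list of closures attached to a state is reshuffled when a dump entry is pushed or popped, and check that the invariant formulated for the whole list is preserved by each operation. The subterm invariant is uniform throughout, since $\rename\codeval$ is $\alpha$-equivalent to a subterm of the initial code and renaming does not change the underlying subterm structure (all bound names introduced by $\rename$ are fresh but play no role in the subterm relation modulo $\alpha$). Collecting these cases completes the induction. \qed
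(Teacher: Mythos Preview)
Your approach is correct and matches the paper's own strategy: the paper explicitly states that all invariant lemmas are proved by a straightforward induction on the length of the execution, and in fact \emph{omits} the proof of this lemma entirely. Your case analysis on the four transitions, with the dump-manipulating rules $\tomachatwo$ and $\tomache$ identified as the delicate ones, is exactly what the omitted induction would look like.

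One small slip: in the $\tomachatwo$ case you write the previous top closure as $(\decstackp\var,\genv'_2\cons\esub\var\code\cons\genv'_1\cons\cdots)$, but with $\genv_0=\genv'_1\cons\esub\var\code\cons\genv'_2$ the top closure before the transition is $(\decstackp\var,\genv'_1\cons\esub\var\code\cons\genv'_2)$, i.e.\ the order of $\genv'_1$ and $\genv'_2$ is reversed in your text. The reasoning you give (that $\fv\code\subseteq\support{\genv'_2}$ because closed global environments are defined so that each entry depends only on what lies to its right) is nonetheless the right one and is unaffected by this typo.
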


For the properties of the decoding function please note that, as defined in \refsect{ES-distance}, the structural congruence $\eqstructneed$ for call-by-need is different from before.

\begin{theorem}[WAM Distillation]
\label{tm:wam-sim}
	$(\emph{WAM},\wlscneed,\eqstructneed,\decode{{ }\cdot{ }})$ \distillationStatement:
	\begin{enumerate}
		\item \emph{Commutative 1}: if $\state\tomachaone\statetwo$ then $\decode{\state}=\decode{\statetwo}$;
		\item \emph{Commutative 2}: if $\state\tomachatwo\statetwo$ then $\decode{\state}=\decode{\statetwo}$;
		\item \emph{Multiplicative}: if $\state\tomachm\statetwo$ then $\decode{\state}\towhlcekdb\eqstructneed\decode{\statetwo}$;
		\item \emph{Exponential}: if $\state\tomache\statetwo$ then $\decode{\state}\towhlcekls\alphaequiv\decode{\statetwo}$.
	\end{enumerate}
\end{theorem}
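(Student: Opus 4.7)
The plan is to follow the blueprint already established for the MAM (\refth{mam-sim}) and the Split CEK (\refth{sCEK-sim}), treating each of the four transitions case by case and leveraging the Contextual Decoding lemma together with the name and global closure invariants of \reflemma{wam-invariants}. For each transition I would unfold the decoding on both sides and then either exhibit a literal equality (for the commutative cases), or match a root rule followed by a bounded amount of structural manipulation using $\eqstructneed$ (for the principal cases).

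For \textbf{Commutative 1} I would unfold: moving $\codetwo$ from the code to the stack replaces $\decode{\stack}\ctxholep{\code\codetwo}$ by $\decode{\codetwo\cons\stack}\ctxholep{\code}$, and these are identical by the definition of stack decoding, while the enveloping contexts $\decode{\genv}$ and $\decode{\fstack}$ are untouched. For \textbf{Commutative 2} the key is that splitting $\genv_1\cons\esub{\var}{\code}\cons\genv_2$ into $\genv_2$ (left in the global environment slot) and $(\genv_1,\var,\stack)$ (pushed on the dump) is precisely matched by the recursive dump clause $\decode{(\genv_1,\var,\stack)\cons\fstack}\ctxholep{\tm} = \decode{\genv_1}\ctxholep{\decode{\fstack}\ctxholep{\decode{\stack}\ctxholep{\var}}}\esub{\var}{\tm}$, so unfolding both sides yields again a literal equality.

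For the \textbf{Multiplicative} transition on $\scekstate{\l\var.\code}{\codetwo\cons\stack}{\fstack}{\genv}$, decoding gives $\decode{\genv}\ctxholep{\decode{\fstack}\ctxholep{\decode{\stack}\ctxholep{(\l\var.\code)\codetwo}}}$; by the Contextual Decoding lemma the enclosing context is a call-by-need evaluation context, so one application of the root rule $\rtodb$ produces $\decode{\genv}\ctxholep{\decode{\fstack}\ctxholep{\decode{\stack}\ctxholep{\code\esub{\var}{\codetwo}}}}$. I would then pull the new substitution outward past $\decode{\stack}$ and then past $\decode{\fstack}$ by two applications of \reflemma{ev-comm-struct}, specialised to $\eqstructneed$; the freshness side conditions are supplied by the Names invariant. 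The resulting term is exactly $\decode{\scekstate{\code}{\stack}{\fstack}{\esub{\var}{\codetwo}\cons\genv}}$.

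The \textbf{Exponential} case is the delicate one. A state $\scekstate{\codeval}{\stempty}{(\genv_1,\var,\stack)\cons\fstack}{\genv_2}$ decodes, after unfolding the dump clause, to $\decode{\genv_2}\ctxholep{\decode{\genv_1}\ctxholep{\decode{\fstack}\ctxholep{\decode{\stack}\ctxholep{\var}}}\esub{\var}{\codeval}}$. Here the Contextual Decoding lemma is doubly needed: the outer context $\decode{\genv_2}$ is a call-by-need evaluation context, and the inner context $\decode{\genv_1}\ctxholep{\decode{\fstack}\ctxholep{\decode{\stack}\ctxholep{\ctxhole}}}$ is also call-by-need, which is exactly what is required to fire the root rule $\rtolsvc{\cbndctx}$ with the trivial substitution context around the value $\codeval$. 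Firing the rule produces $\decode{\genv_2}\ctxholep{\decode{\genv_1}\ctxholep{\decode{\fstack}\ctxholep{\decode{\stack}\ctxholep{\codeval}}}\esub{\var}{\codeval}}$, which is $\alphaequiv$-equivalent (via the renaming of $\codeval$ into $\rename{\codeval}$) to $\decode{\scekstate{\rename{\codeval}}{\stack}{\fstack}{\genv_1\cons\esub{\var}{\codeval}\cons\genv_2}}$. The main obstacle I anticipate is precisely verifying that the inner context is call-by-need rather than an arbitrary context, a refinement over the MAM/KAM situation forced by the hereditarily weak head shape of $\cbndctx$; once that static check is in place, the reduction is a single $\towhlcekls$ step as claimed.
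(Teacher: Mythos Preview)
Your proposal is correct and follows the same route as the paper: case analysis on the four transitions, unfolding the decoding in each case, using the Contextual Decoding lemma to guarantee that the surrounding contexts are call-by-need evaluation contexts, and invoking \reflemma{ev-comm-struct} to move the freshly created substitution out to the global environment in the multiplicative case.

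Two minor remarks. First, in the multiplicative case the paper justifies the application of \reflemma{ev-comm-struct} via the \emph{global closure} invariant rather than the Names invariant: the point is that $\codetwo$, coming from the stack, has all its free variables bound by $\genv$, so the binders occurring in $\decode{\fstack}$ cannot capture them when $\esub{\var}{\codetwo}$ is pulled outward. Your freshness argument via the Names invariant handles the side condition $\var\notin\fv{\decode{\fstack}\ctxholep{\decode{\stack}}}$ of \reflemma{ev-comm-struct}, but non-capture of the free variables of $\codetwo$ is a separate concern; in practice both invariants are needed. Second, the theorem asserts that the tuple is a \emph{reflective} distillery, so the proof also requires a Progress argument (if a reachable commutative normal form decodes to a reducible term, then a principal transition applies); the paper handles this by a short case analysis on the shape of the code, which you should add.
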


\proof
  \begin{enumerate}
    \item \emph{Commutative 1}.
    $$
      \decode{ \wamstate{\code\,\codetwo}{\stack}{\fstack}{\genv} } =
      \dgenvp{\dfstackp{\dstackp{\code\,\codetwo}}} =
      \decode{ \wamstate{\code}{\codetwo\cons\stack}{\fstack}{\genv} }
    $$

  \item \emph{Commutative 2}:
    $$
      \begin{array}{llll}
      \decode{ \wamstate{\var}{\stack}{\fstack}{\genv_1\cons\esub{\var}{\code}\cons\genv_2} } & = &
      \dgenvsubtwop{\dgenvsubonep{\dfstackp{\dstackp{\var}}}\esub{\var}{\code}} & = \\
      && \decode{ \wamstate{\code}{\stempty}{(\genv_1,\var,\stack)\cons\fstack}{\genv_2} }
      \end{array}
    $$

  \item \emph{Multiplicative}.
    $$
      \begin{array}{llllll}
      \decode{ \wamstate{\l\var.\code}{\codetwo\cons\stack}{\fstack}{\genv} } & = &
      \dgenvp{\dfstackp{\dstackp{(\l\var.\code)\,\codetwo}}} & \towhlcekdb \\
      && \dgenvp{\dfstackp{\dstackp{\code\esub{\var}{\codetwo}}}} & \tostructneed \mbox{\tiny Lem. \ref{l:ev-comm-struct}}& \\
      && \dgenvp{\dfstackp{\dstackp{\code}}\esub{\var}{\codetwo}} & = \\
      && \decode{ \wamstate{\code}{\stack}{\fstack}{\esub{\var}{\codetwo}\cons\genv} }
      \end{array}
    $$
Note that to apply \reflemma{ev-comm-struct} we use the global closure invariant, as $\codetwo$, being on the stack, is closed by $\genv$ and so $\decode \fstack$ does not capture its free variables.
  \item \emph{Exponential}.
    $$
      \begin{array}{llll}
      \decode{ \wamstate{\codeval}{\stempty}{(\genv_1,\var,\stack)\cons\fstack}{\genv_2} } & = &
      \dgenvsubtwop{\dgenvsubonep{\dfstackp{\dstackp{\var}}}\esub{\var}{\codeval}} & \towhlcekls \\
      && \dgenvsubtwop{\dgenvsubonep{\dfstackp{\dstackp{\codeval}}}\esub{\var}{\codeval}} & \alphaequiv \\
      && \dgenvsubtwop{\dgenvsubonep{\dfstackp{\dstackp{\rename\codeval}}}\esub{\var}{\codeval}} & = \\
      && \decode{ \wamstate{\rename{\codeval}}{\stack}{\fstack}{\genv_1\cons\esub{\var}{\codeval}\cons\genv_2} }
      \end{array}
    $$

  \end{enumerate}
  
\noindent    \emph{Progress}. Let $\state=\wamstate\code\stack\fstack\genv$ be a commutative normal form s.t. $\decode\state\togen\tmtwo$. If $\code$ is 
    \begin{enumerate}
    \item \emph{an application $\codetwo\codethree$}. Then a $\tomachaone$ transition applies and $\state$ is not a commutative normal form, absurd.

    \item \emph{an abstraction $\val$}. The decoding $\decode\state$ is of the form $\dgenvp{\dfstackp{\dstackp{\val}}}$.
    The stack $\stack$ and the dump $\fstack$ cannot both be empty,
    since then $\decode\state = \dgenvp{\val}$ would be normal.
    So either the stack is empty and a $\tomache$ transition
    applies, or the stack is not empty and a $\tomachm$ transition
    applies.
    
        \item \emph{a variable $\var$}. By \reflemma{wam-invariants}.\ref{p:wam-invariants-globclos}
    it must be bound by $\genv$, so a $\tomachatwo$ transition applies, and $\state$ is not a commutative normal form, absurd.
\qed
    \end{enumerate}

\begin{figure*}
\begin{center}${\setlength{\arraycolsep}{1em}
\begin{array}{c|c|ccc|c|c}
	\code\codetwo&\stack&\genv
	&\tomachaone&
	\code&\argst{\codetwo}\cons\stack&\genv
	\\
	\l\var.\code&\argst{\codetwo}\cons\stack&\genv
	&\tomachm&
	\code&\stack&\esub\var\codetwo \cons \genv
	\\
	\var&\stack&\genv_1 \cons \esub\var\code \cons \genv_2 
	&\tomachatwo&
	\code&\headst{\genv_1,\var}\cons\stack&\genv_2
	\\
	\codeval&\headst{\genv_1,\var}\cons\stack&\genv_2
	&\tomache&
	\rename{\codeval}&\stack&\genv_1 \cons \esub\var\codeval \cons \genv_2

\end{array}}$\end{center}
\caption{The Merged WAM.\label{fig:mwam}}
\end{figure*}

  \begin{figure*}[t]
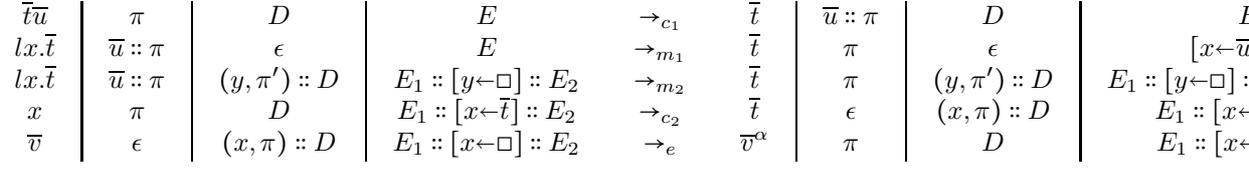

  \begin{center}${\setlength{\arraycolsep}{1em}
  \begin{array}{c|c|c|ccc|c|c|c}
	  \code\codetwo&\stack&\fstack&\genv
	  &\tomachaone&
	  \code&\codetwo\cons\stack&\fstack&\genv
	  \\
	  \l\var.\code&\codetwo\cons\stack&\stempty&\genv
	  &\tomachmone&
	  \code&\stack &\stempty&\esub\var\codetwo \cons \genv
	  \\
	  \l\var.\code&\codetwo\cons\stack&(\vartwo,\stacktwo)\cons\fstack&\genv_1\cons\esub{\vartwo}{\pwammark}\cons\genv_2
	  &\tomachmtwo&
	  \code&\stack &(\vartwo,\stacktwo)\cons\fstack&\genv_1\cons\esub{\vartwo}{\pwammark}\cons\esub{\var}{\codetwo}\cons\genv_2
	  \\
	  \var&\stack &\fstack&\genv_1 \cons \esub\var\code \cons \genv_2
	  &\tomachatwo&
	  \code&\stempty &(\var,\stack)\cons\fstack&\genv_1 \cons \esub\var\pwammark \cons \genv_2
	  \\
	  \codeval&\stempty&(\var,\stack)\cons\fstack&\genv_1 \cons \esub\var\pwammark \cons \genv_2
	  &\tomache&
	  \rename{\codeval}&\stack&\fstack&\genv_1 \cons \esub\var\codeval \cons \genv_2
  \end{array}}$\end{center}
  \caption{The Pointing WAM.\label{fig:PointingWAM}}
  \end{figure*}

\section{The Merged \WAM, or Revisiting the Lazy KAM}
Splitting the stack of the CEK machine in two we obtained a simpler form of the SECD machine. In this section we apply to the \WAM\ the reverse transformation. The result is a machine, deemed \emph{Merged \WAM}, having only one stack and that can be seen as a simpler version of Cregut's lazy KAM \cite{DBLP:journals/lisp/Cregut07} (but we are rather inspired by Danvy and Zerny's presentation in \cite{DBLP:conf/ppdp/DanvyZ13}).

To distinguish the two kinds of objects on the stack we use a marker, as for the CEK and the LAM. Formally, the syntax for stacks is:
$$\stack\grameq \stempty \mid \argst{\code}\cons\stack \mid \headst{\genv,\var}\cons\stack$$
where $\argst{\code}$ denotes a term to be used as an argument (as for the CEK) and $\headst{\genv,\var,\stack}$ is morally an entry of the dump of the WAM, where however there is no need to save the current stack. The transitions of the Merged WAM are in \reffig{mwam}.

The decoding is defined as follows

\begin{center}
	$\begin{array}{rcl@{\sep}rcl}
		\decode{\stempty} & \defeq & \ctxhole\\
		\decode{\esub\var\code \cons \genv} & \defeq & \decode{\genv}\ctxholep{\ctxhole\esub\var\code}\\
		\decode{\headst{\genv,\var}\cons\stack} & \defeq & \decode{\genv}\ctxholep{\decode{\stack}\ctxholep\var}\esub\var\ctxhole\\
		\decode{\argst{\code}\cons\stack} & \defeq & \decode{\stack}\ctxholep{\ctxhole\code}\\
		\decode{\mamstate{\code}{\stack}{\genv}} & \defeq &\decode{\genv}\ctxholep{ \decode{\stack}\ctxholep{\code}}\\
	\end{array}$
\end{center}

\begin{lemma}[Contextual Decoding]
	\ben{Let $\stack$ and $\genv$ be a stack and a global environment of the Merged \wam. Then} $\decstack$ and $\decgenvp\decstack$ are CBNeed evaluation contexts.
\end{lemma}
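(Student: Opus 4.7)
The plan is to prove the two claims by a single structural induction on stacks, after establishing an auxiliary fact about environments and a small composition lemma. First I would observe that for every global environment $\genv$ the context $\decgenv$ is a substitution context $\sctx$: induction on $\genv$ is immediate, since $\decode\stempty=\ctxhole$ and $\decode{\esub\var\code\cons\genv}=\decgenv\ctxholep{\ctxhole\esub\var\code}$ clearly remains of the shape $\ctxhole\esub{\var_1}{\code_1}\cdots\esub{\var_n}{\code_n}$. In particular, every substitution context is a CBNeed evaluation context, obtained by iterating the production $\cbndctx\esub\var\tm$ starting from $\ctxhole$.

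Next I would state the easy closure property I need repeatedly: the plugging $\cbndctxp{\cbndctxtwo}$ of a CBNeed context inside another is again a CBNeed context. This is a short induction on the outer context $\cbndctx$ using the four productions of its grammar, noting that the inductive clauses $\cbndctx\tm$, $\cbndctx\esub\var\tm$ and $\cbndctxtwop\var\esub\var\cbndctx$ all preserve the property when we substitute $\cbndctx$'s hole by another CBNeed context.

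Then I would perform the main induction on $\stack$. The empty and argument cases are immediate: $\decode\stempty=\ctxhole$ and $\decode{\argst\code\cons\stacktwo}=\decstacktwop{\ctxhole\code}$, which is a CBNeed context by the induction hypothesis applied to $\decstacktwo$ together with the composition lemma (since $\ctxhole\code$ is built by the production $\cbndctx\tm$). The interesting case is $\stack=\headst{\genv,\var}\cons\stacktwo$, where by definition
\[
  \decstack \;=\; \decgenvp{\decstacktwop\var}\esub\var\ctxhole.
\]
By the preliminary fact $\decgenv$ is a substitution context (and hence a CBNeed context), and by the induction hypothesis $\decstacktwo$ is a CBNeed context; composing them yields a CBNeed context $\cbndctxtwo$ with $\cbndctxtwop\var=\decgenvp{\decstacktwop\var}$. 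The full expression therefore fits the production $\cbndctxtwop\var\esub\var\cbndctx$ by taking $\cbndctx=\ctxhole$, so $\decstack$ is a CBNeed context.

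For the second statement, $\decgenvp\decstack$ is just the plugging of the CBNeed context $\decstack$ into the substitution (hence CBNeed) context $\decgenv$, so the composition lemma closes the argument. The only point requiring care is matching the $\headst$ decoding against the $\cbndctxtwop\var\esub\var\cbndctx$ production of the CBNeed grammar; once one sees that the trailing $\esub\var\ctxhole$ provides the outer hole and the inner occurrence of $\var$ is inherited from $\decstacktwop\var$, the rest is routine.
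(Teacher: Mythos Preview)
Your proof is correct and follows exactly the natural approach the paper implicitly relies on (the paper treats these contextual decoding lemmas as ``static'' facts shown by easy inductions on the definitions, without spelling them out). Your decomposition into (i) environments decode to substitution contexts, (ii) CBNeed contexts are closed under plugging, and (iii) induction on $\stack$ with the key observation that the $\headst{\genv,\var}$ case matches the production $\cbndctxtwop\var\esub\var\cbndctx$ with $\cbndctx=\ctxhole$, is precisely the intended argument.
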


The dynamic invariants of the Merged WAM are exactly the same of the WAM, with respect to an analogous set of closures associated to a state (whose exact definition is omitted). The proof of the following theorem---almost identical to that of the WAM---is \withproofs{in the appendix (page \pageref{sect:mwam-proofs})}\withoutproofs{in \cite{distillingTR}}.

\begin{theorem}[Merged WAM Distillation]
\label{tm:mwam-sim}
	$(\emph{Merged WAM},\wlscneed,\eqstructneed,\decode{{ }\cdot{ }})$ \distillationStatement:
	\begin{enumerate}
		\item \emph{Commutative 1}: if $\state\tomachaone\statetwo$ then $\decode{\state}=\decode{\statetwo}$;
		\item \emph{Commutative 2}: if $\state\tomachatwo\statetwo$ then $\decode{\state}=\decode{\statetwo}$;
		\item \emph{Multiplicative}: if $\state\tomachm\statetwo$ then $\decode{\state}\towhlcekdb\eqstructneed\decode{\statetwo}$;
		\item \emph{Exponential}: if $\state\tomache\statetwo$ then $\decode{\state}\towhlcekls\alphaequiv\decode{\statetwo}$.
	\end{enumerate}
\end{theorem}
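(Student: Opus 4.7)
The plan is to prove the four distillation clauses by the same template as \refth{wam-sim} (WAM Distillation), since the Merged WAM differs from the WAM only by folding each dump entry $(\genv_1,\var,\stack)$ into the main stack via a marker $\headst{\genv_1,\var}$. The crucial observation is that a $\headst{\genv_1,\var}\cons\stack$ frame decodes to $\decode{\genv_1}\ctxholep{\decode{\stack}\ctxholep{\var}}\esub{\var}{\ctxhole}$, which is exactly the same one-hole context that the corresponding dump entry produced in the WAM. Hence each calculation in the proof of \refth{wam-sim} transports to the Merged WAM up to rearranging how frames are packaged.

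Concretely, \emph{Commutative 1} follows because $\decode{\argst{\codetwo}\cons\stack}\ctxholep{\code} = \decode{\stack}\ctxholep{\code\,\codetwo}$, so both sides decode strictly to the same term $\decode{\genv}\ctxholep{\decode{\stack}\ctxholep{\code\,\codetwo}}$. \emph{Commutative 2} is handled by unfolding the source to $\decode{\genv_2}\ctxholep{\decode{\genv_1}\ctxholep{\decode{\stack}\ctxholep{\var}}\esub{\var}{\code}}$ and observing, via the crucial observation above, that the target has exactly this decoding; again a strict equality, not just an $\eqstructneed$-equivalence. For the \emph{Multiplicative} clause, the source unfolds to $\decode{\genv}\ctxholep{\decode{\stack}\ctxholep{(\l\var.\code)\codetwo}}$, a $\rtodb$-redex within a CBNeed evaluation context by Contextual Decoding; one step yields $\decode{\genv}\ctxholep{\decode{\stack}\ctxholep{\code\esub{\var}{\codetwo}}}$, and \reflemma{ev-comm-struct} pushes $\esub{\var}{\codetwo}$ past $\decode{\stack}$ via $\eqstructneed$, landing on the decoding of the target. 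The side condition of \reflemma{ev-comm-struct} ($\var\notin\fv{\decode{\stack}}$) follows from the names invariant of the Merged WAM, identical to the WAM's. For the \emph{Exponential} clause, the source unfolds to $\decode{\genv_2}\ctxholep{\decode{\genv_1}\ctxholep{\decode{\stack}\ctxholep{\var}}\esub{\var}{\codeval}}$, a $\rtolsvc{\cbndctx}$-redex inside a CBNeed evaluation context; firing it and then $\alpha$-renaming $\codeval$ into $\rename{\codeval}$ to restore the well-named invariant yields the decoding of the target.

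For \emph{Progress}, I would repeat the WAM progress argument: if $\state = \mamstate{\code}{\stack}{\genv}$ is a commutative normal form with $\decode{\state}$ reducible, then $\code$ cannot be an application (else $\tomachaone$ fires, contradicting commutative normality), cannot be a variable (by global closure it would be bound in $\genv$, so $\tomachatwo$ fires, again a contradiction), hence must be an abstraction; moreover $\stack$ cannot be empty (else $\decode{\state}$ would be normal), and the top marker decides whether $\tomachm$ (on an $\argst{\cdot}$ frame) or $\tomache$ (on a $\headst{\cdot}$ frame) fires. The only subtlety I expect is bookkeeping, namely tracking how substitutions and frames interleave through the merged stack. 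Since the two kinds of frames decode to contexts of disjoint shape, however, no ambiguity arises, and the entire proof reduces to context-plugging arithmetic, just as in the WAM case.
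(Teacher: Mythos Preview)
Your proposal is correct and follows essentially the same approach as the paper, which explicitly notes that the Merged WAM proof is ``almost identical to that of the WAM'' and carries out the same context-unfolding calculations you sketch, exploiting that $\decode{\headst{\genv_1,\var}\cons\stack}$ reproduces exactly the context a WAM dump entry would yield. The only minor divergence is in the multiplicative case: the paper justifies the application of \reflemma{ev-comm-struct} via the \emph{global closure} invariant (so that the free variables of $\codetwo$ are not captured when $\esub{\var}{\codetwo}$ is pulled outward through the merged stack, which---unlike the WAM argument stack---may now contain binders coming from $\headst{\cdot}$ entries), whereas you cite the names invariant; in fact both invariants cooperate here, so this is a matter of emphasis rather than a gap.
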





\section{The Pointing WAM, or Revisiting the SAM}
\label{sect:SAM}
In the WAM, the global environment is divided between the environment of the machine and the entries of the dump. On one hand, this choice makes the decoding very natural. On the other hand, one would like to keep the environment in just one place, letting the dump only collect variables and stacks. This is what we do here, exploiting the fact that  variable names can be taken as pointers (see the discussion after the invariants in \refsect{mam}). 

The new machine, called Pointing WAM, is in \reffig{PointingWAM}, and uses a new dummy constant $\pwammark$ for the substitutions whose variable is in the dump. It can be seen as a simpler version of Sestoft's abstract machine \cite{Sestoft}, here called SAM. Dumps and environments are defined by:
\begin{center}
    $\begin{array}{rcl@{\sep}rcl}
        \dump & \grameq & \stempty \mid (\var,\stack)\cons\fstack\\
        \genv& \grameq & \stempty \mid \esub\var\code\cons\genv \mid \esub\var\pwammark\cons\genv
    \end{array}$
\end{center}
A substitution of the form $\esub\var\pwammark$ is \emph{dumped}, and we also say that $\var$ is dumped. 

Note that the variables of the entries in $\dump$ appear in reverse order with respect to the corresponding substitutions in $\genv$. We will show that fact is an invariant, called \emph{duality}.

\begin{definition}[Duality $\genv\dual\dump$]
    Duality $\genv\dual\dump$ between environments and dumps is defined by
    \begin{enumerate}
        \item $\stempty\dual\stempty$;
        \item $\genv\cons\esub\var\code \dual \dump$  if $\genv \dual \dump$;
        \item $\genv\cons\esub\var\pwammark \dual (\var,\stack)\cons\dump$  if $\genv \dual \dump$.
    \end{enumerate}
\end{definition}

Note that in a dual pair the environment is always at least as long as the dump.
A dual pair $\genv\dual\dump$ decodes to a context as follows:

\begin{center}
    $\begin{array}{rcl@{\sep}rcl}
        \decode{(\genv,\stempty)} & \defeq & \decode \genv\\
        \decode{(\genv\cons\esub\var\pwammark,(\var,\stack)\cons\dump)} & \defeq &
                            \decode{(\genv,\dump)}\ctxholep{\decstackp\var}\esub\var{\ctxhole}
                        \\
        \decode{(\genv\cons\esub\var\code,(\vartwo,\stack)\cons\dump)} & \defeq & \decode{(\genv,(\vartwo,\stack)\cons\dump)}\esub\var\code\\
    \end{array}$
\end{center}

The analysis of the Pointing WAM is based on a complex invariant that includes duality plus a generalization of the global closure invariant. We need an auxiliary definition:
  \begin{definition}
  Given an environment $\genv$, we define its {\em slice} $\genv\envslice$ as the
  sequence of substitutions after the rightmost dumped substitution. Formally:
  $$
  \begin{array}{rll}
  \stempty\envslice & := & \stempty \\
  (\genv\cons\esub{\var}{\code})\envslice & := & \genv\envslice\cons\esub{\var}{\code} \\
  (\genv\cons\esub{\var}{\pwammark})\envslice & := & \stempty \\
  \end{array}
  $$
  Moreover, if an environment $\genv$ is of the form $\genv_1\cons\esub{\var}{\pwammark}\cons\genv_2$,
  we define $\genv\envslicevar{\var} := \genv_1\envslice\cons\esub{\var}{\pwammark}\cons\genv_2$.
  \end{definition}

The notion of closed closure with global environment (\refsect{mam}) is extended to dummy constants $\pwammark$ as expected.

  \begin{lemma}[Pointing WAM invariants]
  \label{l:pwam-invariants}
  Let $\state = \scekstate{\code}{\genv}{\stack}{\fstack}$ \invariantshyp{Pointing WAM}. Then:
    \begin{enumerate}
    \item \label{p:pwam-invariants-subterm}
       {\em Subterm:} \subprop;
    \item \label{p:pwam-invariants-value}
       {\em Names:} \gwnameprop.
    \item \label{p:pwam-invariants-dual}
       {\em Dump-Environment Duality:}

         \begin{enumerate}
         \item \label{p:pwam-invariants-dual1} $\pwamclosed{(\dstackp{\code}, \genv\envslice)}$;
         \item \label{p:pwam-invariants-dual2} for every pair $(\var,\stacktwo)$ in $\fstack$, $\pwamclosed{(\dstacktwop{\var},\genv\envslicevar{\var})}$;
         \item \label{p:pwam-invariants-dual3} $\genv\dual\dump$ holds.
         \end{enumerate}

    \item \label{p:pwam-invariants-ctx}
       {\em Contextual Decoding:} $\decode{(\genv,\dump)}$ is a call-by-need evaluation context.
    \end{enumerate}
  \end{lemma}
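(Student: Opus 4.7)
The plan is to proceed by induction on the length of the execution reaching $\state$, simultaneously establishing all four invariants. The base case is an initial state $\scekstate{\code_0}{\stempty}{\stempty}{\stempty}$ with $\code_0$ closed and well-named: subterm and names hold immediately, (3a) holds with $\genv\envslice = \stempty$ since $\code_0$ is closed, (3b) is vacuous, (3c) is $\stempty\dual\stempty$, and (4) reduces to $\decode{(\stempty,\stempty)} = \ctxhole$. For the inductive step I would inspect each of the five transitions of \reffig{PointingWAM} in turn, reusing the inductively-established invariants of the predecessor state. The subterm and name invariants (1) and (2) are routine: every code appearing in any component of the state is a subterm of the initial code, the only fresh code produced by a transition being $\rename\codeval$ in $\tomache$, and the definition of $\rename{\cdot}$ is designed precisely to preserve well-namedness.

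The heart of the proof is the duality invariant (3), which bundles two closure statements with the dump-environment shape condition $\genv\dual\dump$. I would organise the case analysis around what happens to $\genv$. Transition $\tomachaone$ leaves $\genv$ and $\dump$ untouched, so (3) transfers directly. Transition $\tomachmone$ appends a non-dummy substitution $\esub\var\codetwo$ to a dump-free environment; the slice simply grows by this binding, (3a) follows from the inductive (3a) together with the fact that $\codetwo$ was on the stack (hence already closed under the old slice), and (3c) is preserved because duality only counts dummy bindings. Transition $\tomachmtwo$ is analogous but inserts $\esub\var\codetwo$ immediately to the right of the topmost $\pwammark$, so the topmost slice grows by exactly $\esub\var\codetwo$. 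The critical transitions are the mutually inverse $\tomachatwo$ and $\tomache$: each swaps a substitution $\esub\var\code$ with its dumped form $\esub\var\pwammark$ while simultaneously pushing or popping an entry $(\var,\stack)$ on the dump, and the old current closure becomes the new top dump entry (or vice versa). Here (3a) and (3b) effectively exchange roles, (3c) is preserved by exactly one application of the third duality clause, and the crucial bookkeeping observation is that $\genv\envslice$ and $\genv\envslicevar{\var}$ are defined precisely to track the new environment fragments surrounding the relocated code, so no separate closure computation is required.

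Finally, (4) follows by a short induction on the structure of the dual pair $(\genv,\dump)$ guaranteed by (3c). The three defining clauses of $\decode{(\genv,\dump)}$ correspond one-to-one to the three productions of the CBNeed context grammar $\cbndctx \grameq \ctxhole \mid \cbndctx\esub\var\code \mid \cbndctxtwop{\var}\esub\var\cbndctx$, and the last case uses (3b) to ensure that the occurrence of $\var$ in $\dstackp{\var}$ sits in evaluation position, so that the resulting context is indeed a CBNeed evaluation context. I expect the preservation of (3) under $\tomachatwo$ and $\tomache$ to be the main obstacle: it requires careful tracking of how the slice operator interacts with inserting and removing $\pwammark$-markers, and how the closure datum (3a) of the pre-state metamorphoses into the closure datum (3b) for a new dump entry in the post-state. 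Once this bookkeeping is done, the remaining invariants fall out mechanically.
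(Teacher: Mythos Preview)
Your proposal is correct and follows the paper's stated methodology: the paper explicitly says that all invariant lemmas are ``proved by a straightforward induction on the length of the execution'' with case analysis on the transitions, which is exactly what you outline. Your identification of $\tomachatwo$/$\tomache$ as the delicate cases (where (3a) and the topmost instance of (3b) swap roles, and the slice operator $\envslice$ realigns with the relocated $\pwammark$) and your observation that (4) is really a corollary of (3c) proved by a separate structural induction on the dual pair are both on target; the only minor inaccuracy is that (3b) is not actually needed for the contextual-decoding induction, since closure of CBNeed contexts under composition already suffices.
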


\begin{proof}
\withproofs{In the appendix, page \pageref{ss:sam-proofs}.}\withoutproofs{See \cite{distillingTR}.}
\end{proof}

The decoding of a state is defined as
$\decode{\pwamstate{\code}{\stack}{\genv}{\dump}} := \decode{(\genv,\dump)}\ctxholep{\dstackp{\code}}$.

\begin{theorem}[Pointing WAM Distillation]
\label{tm:pwam-sim}
	$(\emph{Pointing WAM},\wlscneed,\eqstructneed,\decode{{ }\cdot{ }})$ \distillationStatement:
    \begin{enumerate}
        \item \emph{Commutative 1 \& 2}: if $\state\tomachaone\statetwo$ or $\state\tomachatwo\statetwo$ then $\decode{\state}=\decode{\statetwo}$;
        \item \emph{Multiplicative 1 \& 2}: if $\state\tomachmone\statetwo$ or $\state\tomachmtwo\statetwo$ then $\decode{\state}\towhlcekdb\eqstructneed\decode{\statetwo}$;
        \item \emph{Exponential}: if $\state\tomache\statetwo$ then $\decode{\state}\towhlcekls\alphaequiv\decode{\statetwo}$;
    \end{enumerate}
\end{theorem}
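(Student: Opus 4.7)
The plan is to proceed by case analysis on the five transition kinds of the Pointing WAM, mirroring the proof of \refth{wam-sim}. Throughout, the crucial tools will be \reflemma{pwam-invariants}: invariant 4 ensures that each rewriting step produced sits inside a legal CBNeed evaluation context; the name and subterm invariants guarantee that no capture occurs; and the duality invariant 3(c) controls the shape of the environment to the right of the rightmost dumped substitution, which is essential for unfolding the dual-pair decoding $\decode{(\genv,\dump)}$.

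For the two commutative transitions I expect to prove the stronger equality $\decode{\state}=\decode{\statetwo}$. The case $\tomachaone$ is just a repartitioning of $\decode{\stack}\ctxholep{\code}$, so the decoding is visibly unchanged. The case $\tomachatwo$ replaces an entry $\esub{\var}{\code}$ of the environment by $\esub{\var}{\pwammark}$ and pushes $(\var,\stack)$ on the dump; by duality, all substitutions strictly to the right of $\esub{\var}{\code}$ in $\genv$ are of code kind, so they contribute the same outer layer of substitution contexts on both sides of the transition. Unfolding the three clauses of the dual-pair decoding at the position of $\esub{\var}{\code}$ then shows that both sides reduce to the same context filled with the same term.

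For the multiplicative transitions the pattern is identical to that of the MAM and WAM: apply $\rtodb$ to the $\beta$-redex at the head of the stack and then use \reflemma{ev-comm-struct} to commute the freshly created $\esub{\var}{\codetwo}$ outwards through the evaluation context until it reaches the position prescribed by the new state. In $\tomachmone$ (empty dump) the substitution travels through the whole of $\decode{\genv}$ and becomes the new head of the environment, exactly as in the WAM. In $\tomachmtwo$ it stops after passing only the code substitutions sitting to the right of the rightmost dumped $\esub{\vartwo}{\pwammark}$, matching the insertion point required by the rule. For $\tomache$ the argument follows the WAM exponential case verbatim: the linear substitution fires at the dumped $\esub{\var}{\pwammark}$, restoring it to $\esub{\var}{\codeval}$, and the $\alpha$-renaming $\rename{\codeval}$ absorbs the freshness required by the name invariant.

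The main technical obstacle will be the bookkeeping for $\tomachmtwo$ and $\tomachatwo$, since the dual-pair decoding is defined by simultaneous recursion on $\genv$ and $\dump$ and is not obviously well-behaved under insertion of a new code substitution between $\esub{\vartwo}{\pwammark}$ and the tail of the environment, nor under promotion of a code entry to a dumped one. Invariant 3(b) of \reflemma{pwam-invariants}, which asserts that $(\decode{\stacktwo}\ctxholep{\vartwo},\genv\envslicevar{\vartwo})$ is closed for every pair $(\vartwo,\stacktwo)$ in the dump, is precisely what is needed to apply \reflemma{ev-comm-struct} at the critical step, as it guarantees that $\var$ does not occur free in the portion of the evaluation context past which the substitution slides.
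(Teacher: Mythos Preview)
Your plan is correct and matches the paper's proof almost exactly: case analysis on the five transitions, equality for the commutative cases via duality (the paper also notes that $\genv_2$ contains no dumped substitutions), $\rtodb$ followed by \reflemma{ev-comm-struct} for the multiplicative cases, and $\rtolsv$ followed by $\alpha$-equivalence for the exponential case.

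Two small corrections. First, your description of how far the new substitution ``travels'' in the multiplicative cases is off. In $\tomachmone$ the decoding of the target is $\decode{\genv}\ctxholep{\decode{\stack}\ctxholep{\code}\esub{\var}{\codetwo}}$, so $\esub{\var}{\codetwo}$ only needs to commute outward through $\decode{\stack}$ (via $\tostructapl$), not through $\decode{\genv}$; the same is true in the WAM and the MAM. In $\tomachmtwo$ it commutes through $\decode{\stack}$ and then out of the single layer $\esub{\vartwo}{\ctxhole}$ (via $\tostructes$), landing just inside $\decode{\genv_2}$; it does not pass through the code substitutions of $\genv_2$. This does not change the argument, only the bookkeeping. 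Second, the theorem also asserts that the distillery is \emph{reflective}, so you still owe the Progress clause; the paper handles it by the same case analysis on the shape of the code as for the WAM, using duality to produce the matching dumped entry when the dump is non-empty.
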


\proof
\emph{Properties of the decoding}:
  \begin{enumerate}
  \item \caselight{Conmutative 1}.
        We have
        $$
        \begin{array}{lllllll}
          \decode{\pwamstate{\code\,\codetwo}{\stack}{\genv}{\fstack}} & = &
          \decode{(\genv,\fstack)}\ctxholep{\dstackp{\code\,\codetwo}} & = 
          & \decode{\pwamstate{\code}{\codetwo\cons\stack}{\genv}{\fstack}}
        \end{array}
        $$
  \item \caselight{Conmutative 2}.
        Note that $\genv_2$ has no dumped substitutions, since
        $\genv_1\cons\esub{\var}{\pwammark}\cons\genv_2 \dual (\var,\stack)\cons\fstack$.
        Then:
        $$
        \begin{array}{ll}
        \decode{\pwamstate{\var}{\stack}{\genv_1\cons\esub{\var}{\code}\cons\genv_2}{\fstack}} & = \\
        \dgenvsubtwop{\decode{(\genv_1,\fstack)}\ctxholep{\dstackp{\var}}\esub{\var}{\code}} & = \\
        \decode{\pwamstate{\code}{\stempty}{\genv_1\cons\esub{\var}{\pwammark}\cons\genv_2}{(\var,\stack)\cons\fstack}}
        \end{array}
        $$
  \item \caselight{Multiplicative, empty dump}.
       $$
        \begin{array}{llllll}
        \decode{\pwamstate{\l\var.\code}{\codetwo\cons\stack}{\genv}{\stempty}} & = &
        \dgenvp{\dstackp{(\l\var.\code)\,\codetwo}} & \towhlcekdb \\
        && \dgenvp{\dstackp{\code\esub{\var}{\codetwo}}} & \tostructapl^* \mbox{\tiny Lem. \ref{l:ev-comm-struct}}\\
        && \dgenvp{\dstackp{\code}\esub{\var}{\codetwo}} & = \\
        && \decode{\pwamstate{\code}{\stack}{\esub{\var}{\codetwo}\cons\genv}{\stempty}}
        \end{array}
        $$
  \item \caselight{Multiplicative, non-empty dump}.
        $$
        \begin{array}{ll}
        \decode{\pwamstate{\l\var.\code}{\codetwo\cons\stack}{\genv_1\cons\esub{\vartwo}{\pwammark}\cons\genv_2}{(\vartwo,\stacktwo)\cons\fstack}} & = \\
        \dgenvsubtwop{\decode{(\genv_1,\dump)}\ctxholep{\dstacktwop{\vartwo}}\esub{\vartwo}{\dstackp{(\l\var.\code)\,\codetwo}}} & \towhlcekdb \\
        \dgenvsubtwop{\decode{(\genv_1,\dump)}\ctxholep{\dstacktwop{\vartwo}}\esub{\vartwo}{\dstackp{\code\esub{\var}{\codetwo}}}} & \tostructneed  \mbox{\tiny Lem. \ref{l:ev-comm-struct}}\\
        \dgenvsubtwop{\decode{(\genv_1,\dump)}\ctxholep{\dstacktwop{\vartwo}}\esub{\vartwo}{\dstackp{\code}}\esub{\var}{\codetwo}} & = \\
        \decode{\pwamstate{\code}{\stack}{\genv_1\cons\esub{\vartwo}{\pwammark}\cons\esub{\var}{\codetwo}\cons\genv_2}{(\vartwo,\stacktwo)\cons\fstack}}
        \end{array}
        $$
  \item \caselight{Exponential}.
        $$
        \begin{array}{ll}
         \decode{\pwamstate{\codeval}{\stempty}{\genv_1\cons\esub{\var}{\pwammark}\cons\genv_2}{(\var,\stack)\cons\fstack}} & = \\
         \dgenvsubtwop{\decode{(\genv_1,\dump)}\ctxholep{\dstackp{\var}}\esub{\var}{\val}} & \towhlcekls \\
         \dgenvsubtwop{\decode{(\genv_1,\dump)}\ctxholep{\dstackp{\val}}\esub{\var}{\val}} & \alphaequiv \\
         \dgenvsubtwop{\decode{(\genv_1,\dump)}\ctxholep{\dstackp{\rename\val}}\esub{\var}{\val}} & = \\
         \decode{\pwamstate{\rename\codeval}{\stack}{\genv_1\cons\esub{\var}{\codeval}\cons\genv_2}{\fstack}}
        \end{array}
        $$
  \end{enumerate}

\noindent \emph{Progress}.
  Let $\state = \pwamstate{\code}{\stack}{\genv}{\dump}$ be a commutative normal form s.t. $\decode\state\togen\tmtwo$. If $\code$ is 
\begin{itemize}
		\item \emph{an application $\codetwo\codethree$}. Then a $\tomachaone$ transition applies and $\state$ is not a commutative normal form, absurd.

	\item \emph{a variable $\var$}. By the machine invariant, $\var$ must be bound by $\genv\envslice$.
  So $\genv = \genv_1\cons\esub{\var}{\codetwo}\cons\genv_2$, a $\tomachatwo$ transition applies, and $\state$ is not a commutative normal form, absurd.
  
	\item   \emph{an abstraction $\codeval$}. Two cases:
  \begin{itemize}
  \item \caselight{The stack $\stack$ is empty}.
        The dump $\fstack$ cannot be empty, since if $\fstack = \stempty$
        we have that $\decode\state = \denvp{\codeval}$ is normal. So $\fstack = (\var,\stacktwo)\cons\fstacktwo$. By duality,
        $\genv = \genv_1\cons\esub{\var}{\pwammark}\cons\genv_2$ and a $\tomache$
        transition applies.
  \item \caselight{The stack $\stack$ is non-empty}.
        If the dump $\fstack$ is empty, the first case of $\tomachm$ applies.
        If $\fstack = (\var,\stacktwo)\cons\fstacktwo$, by duality
        $\genv = \genv_1\cons\esub{\var}{\pwammark}\cons\genv_2$ and the second case of
        $\tomachm$ applies.\qed
  \end{itemize}

\end{itemize}

\section{Distillation Preserves Complexity}
\label{sect:complexity}
Here, for every abstract machine we bound the number of commutative steps $\sizeadm{\exec}$ in an execution $\exec$ in terms of 
\begin{enumerate}
\item the number of principal steps $\sizelog{\exec}$,
\item the size $\size\code$ of the initial code $\code$. 
\end{enumerate}
The analysis only concerns the machines, but via the distillation theorems it expresses the length of the machine executions as a linear function of the length of the distilled derivations in the calculi. For every distillery, we will prove that the relationship is linear in both parameters, namely $\sizeadm{\exec}=O(\size{\code}\cdot\sizelog{\exec})$ holds. 

\begin{definition}
Let $\mach$ be a distilled abstract machine  and $\exec:\state\tomach^*\statetwo$ be an execution of initial code $\code$. $\mach$ is 
\begin{enumerate}
	\item \deff{Globally bilinear} if $\sizeadm{\exec}=O((\size{\code}+1)\cdot\sizelog{\exec})$.
	
	\item \deff{Locally linear} if whenever $\statetwo\tomacha^k\statethree$ then $k =O(\size{\code})$.
	\end{enumerate}
\end{definition}

%

The next lemma shows that local linearity is a sufficient condition for global bilinearity.

\begin{proposition}[Locally Linear $\Rightarrow$ Globally Bilinear]
	\label{prop:quant-anal}
	Let $\mach$ be a locally linear distilled abstract machine, and $\exec$ an execution of initial code $\code$. Then $\mach$ is globally bilinear.
\end{proposition}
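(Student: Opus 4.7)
The plan is to argue by a simple decomposition-and-counting argument. Any execution $\exec:\state\tomach^*\statetwo$ factorizes uniquely into an alternating sequence
\[
\exec \;=\; \adm_0 \cdot p_1 \cdot \adm_1 \cdot p_2 \cdots p_n \cdot \adm_n,
\]
where each $p_i$ is a single principal step ($\tomachm$ or $\tomache$) and each $\adm_i$ is a (possibly empty) maximal run of commutative steps $\tomacha$. By construction $n = \sizelog\exec$, and moreover the target of $\adm_i$ is a reachable state of the same global execution (so the initial code of the sub-execution $\adm_i$ is still $\code$).

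Since $\mach$ is locally linear, each commutative run satisfies $|\adm_i| = O(\size\code)$, with a constant independent of $i$. Summing over the $n+1$ blocks,
\[
\sizeadm\exec \;=\; \sum_{i=0}^{n} |\adm_i| \;=\; (n+1)\cdot O(\size\code) \;=\; O\bigl((\sizelog\exec+1)\cdot\size\code\bigr),
\]
which is exactly $O((\size\code+1)\cdot\sizelog\exec)$ in the intended sense of the definition (absorbing the additive $O(\size\code)$ from the leading commutative block into the stated bound). First I would spell out the decomposition, then invoke local linearity on each block, and finally do the arithmetic.

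There is no real obstacle: the result is essentially bookkeeping, and the only subtle point is ensuring that local linearity applies to \emph{every} intermediate commutative run and not just to one starting from the initial state. This is immediate from the fact that the definition of local linearity quantifies over all reachable states $\statetwo$ and measures against the same initial code $\code$, so the bound $O(\size\code)$ holds uniformly for all the $\adm_i$. Once this is observed, the estimate above closes the proof.
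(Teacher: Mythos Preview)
Your proposal is correct and takes essentially the same approach as the paper: decompose the execution into alternating maximal commutative blocks and principal steps, bound each commutative block by $O(\size\code)$ via local linearity, and sum. The only cosmetic difference is that you allow a trailing commutative block $\adm_n$ and therefore count $n+1$ blocks, whereas the paper writes the decomposition as $\tomacha^{k_1}\tomachnoa^{h_1}\ldots\tomacha^{k_m}\tomachnoa^{h_m}$ with $m\leq\sizelog\exec$; your version is in fact slightly more careful about the edge case where the execution ends with commutative steps.
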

\begin{proof}
 The execution $\rho$ writes uniquely as $\tomacha^{k_1}\tomachnoa^{h_1}\ldots\tomacha^{k_m}\tomachnoa^{h_m}$. By hypothesis $k_i=O(\size{\code})$ for every $i\in\set{1,\ldots,m}$. From $m\leq \sizelog{\exec}$ follows that $\sizeadm{\exec}=O(\size{\code}\cdot\sizelog{\exec})$. We conclude with $\size{\exec}=\sizelog{\exec}+\sizeadm{\exec}= \sizelog{\exec}+ O(\size{\code}\cdot\sizelog{\exec})= O((\size{\code}+1)\cdot\sizelog{\exec})$.
\end{proof}

Call-by-name and call-by-value machines are easily seen to be locally linear, and thus globally bilinear.

\begin{theorem}
	\label{tm:value-name-glob-bilin}
	KAM, MAM, CEK, LAM, and the Split CEK are locally linear, and so also globally bilinear.
\end{theorem}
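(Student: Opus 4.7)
The plan is to invoke Proposition \ref{prop:quant-anal}, which reduces the claim to showing that each of the five machines is \emph{locally linear}: that any maximal block of consecutive commutative transitions starting at a reachable state has length $O(\size\code)$, where $\code$ is the initial code. The uniform tool in all five cases is the \emph{subterm invariant} (clause 2 of Lemmas \ref{l:kam-prop}, \ref{l:mam-prop}, \ref{l:cek-prop}, \ref{l:lam-prop}, \ref{l:scek-prop}), which guarantees that every code appearing in any reachable state is a subterm of $\code$, so in particular has size at most $\size\code$.

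For the KAM and the MAM the argument is immediate. Their unique commutative rule fires on an application $\codetwo\codethree$ and replaces the current code by $\codetwo$, strictly decreasing the size of the code by at least one. A maximal commutative sequence starting at a state with code $\codetwo$ therefore has length at most $\size\codetwo \leq \size\code$.

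For the CEK, the LAM, and the Split CEK a single commutative step need not shrink the code, because the rule $\tomachatwo$ restores as current code a code previously saved in the stack (resp.\ in the top dump entry of the Split CEK), which may be arbitrarily larger than the current value. My plan is to introduce a potential $\mu(\statetwo)$ measuring the total syntactic material remaining to be scanned before the next principal transition: concretely, $\mu$ is the size of the current code plus the sum of the sizes of the codes stored in those stack entries (and, for the Split CEK, in the top dump entry) which the contextual-decoding invariants (Lemmas \ref{l:cek-prop}.\ref{p:cek-prop-four}, \ref{l:lam-prop}.\ref{p:lam-prop-four}, \ref{l:scek-prop}) identify as \emph{yet-to-be-visited}. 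A case analysis on commutative transitions will then show that each one strictly decreases $\mu$ by at least one: $\tomachaone$ shrinks the code by an application constructor, while $\tomachatwo$ redirects focus from a value of size at least one to a previously unvisited frame, reducing the sum on the stack side by strictly more. The initial value $\mu(\statetwo)$ is bounded by $O(\size\code)$ because all codes involved are subterms of $\code$ by the subterm invariant, and the number of relevant frames is bounded by the size of a single evaluation context, itself bounded by $\size\code$ via the environment-size invariant.

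The main obstacle is isolating, especially for the Split CEK, the exact \emph{active portion} of the state that contributes to $\mu$: although the dump may have accumulated many entries over the whole execution, only the top dump frame participates in the current commutative phase, while the remaining current stack and code together form a single evaluation context whose size is controlled by the subterm invariant. Once $\mu$ is defined correctly, strict decrease together with the $O(\size\code)$ initial bound yields local linearity uniformly for all three call-by-value machines, and an application of Proposition \ref{prop:quant-anal} closes the proof.
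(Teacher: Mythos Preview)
Your treatment of the KAM and MAM is fine and matches the paper. The problem is with the CBV machines.

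The measure you sketch for the CEK/LAM/Split CEK sums the sizes of all ``yet-to-be-visited'' argument frames, and you justify the $O(\size\code)$ bound on its initial value by appealing to the environment-size invariant. That invariant is the wrong one: it bounds the length of \emph{environments}, not of stacks, and in fact the number of pending $\argst$ entries on the stack at the start of a commutative phase is \emph{not} bounded by $\size\code$. A concrete counterexample: take $\delta'\defeq\l\var.\var\var a$ for some closed $a$ and run the CEK on $\delta'\delta'$. After the $n$-th $\tomachm$ step the stack is $\argst{a}\cons\cdots\cons\argst{a}$ with $n$ entries and no $\fnst$ below, so any prefix-based sum is $\Theta(n\cdot\size{a})$, which is unbounded in terms of $\size{\code}$. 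Your potential therefore does not have the claimed $O(\size\code)$ initial value, and the argument collapses.

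The paper's fix is much lighter than your proposed one: instead of summing over several frames, it looks only at the \emph{top} of the stack. For the CEK one sets
\[
\card(\cekstate{\codetwo}{\env}{\stack})\ \defeq\
\begin{cases}
\size{\codetwo}+\size{\codethree} & \text{if }\stack=\argst{\codethree,\envtwo}\cons\stacktwo,\\
\size{\codetwo} & \text{otherwise,}
\end{cases}
\]
which is trivially $\leq 2\size\code$ by the subterm invariant alone (two subterms, no counting of frames needed). A short case analysis shows $\tomachaone$ and $\tomachatwo$ each strictly decrease $\card$: for $\tomachaone$ the application node is consumed, and for $\tomachatwo$ the top becomes $\fnst$ so the measure drops by $\size{\codeval}\geq 1$. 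The LAM and the Split CEK use the symmetric measure (looking at the top $\fnst$ entry, resp.\ the top of the argument stack). No appeal to the environment-size invariant, no notion of ``active portion'', and no sum over multiple frames is needed.
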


\proof
	\begin{enumerate}
		\item \emph{KAM/MAM}. Immediate: $\tomacha$ reduces the size of the code, that is bounded by $\size{\code}$ by the subterm invariant.
		\item \emph{CEK}. Consider the following measure for states: 
\begin{center}
$\card(\cekstate \codetwo \env \stack) := \begin{cases}
		\size{\codetwo}+\size{\codethree} & \textrm{if }\stack=\argst{\codethree,\envtwo}\cons\stacktwo \\
		\size{\codetwo} & \textrm{otherwise}
	\end{cases}$
	\end{center}
By direct inspection of the rules, it can be seen that both $\tomachaone$ and $\tomachatwo$ transitions decrease the value of $\card$ for CEK states, and so the relation $\tomachaone\cup\tomachatwo$ terminates (on reachable states). Moreover, both $\size{\codetwo}$ and $\size{\codethree}$ are bounded by $\size{\code}$ by the subterm invariant (\reflemma{cek-prop}.\ref{p:cek-prop-two}), and so $k\leq 2\cdot\size{\code}=O(\size{\code})$.

		\item \emph{LAM} and \emph{Split CEK}. Minor variations over the CEK, see \withproofs{the appendix (page \pageref{ss:complexity-proofs})}\withoutproofs{\cite{distillingTR}}.\qed
	\end{enumerate}\medskip

Call-by-need machines are not locally linear, because a sequence of $\tomachatwo$ steps (remember $\tomacha \defeq \tomachaone\cup\tomachatwo$) can be as long as the environment $\env$, that is not bound by $\size\code$ (as for the MAM). Luckily, being locally linear is not a necessary condition for global bilinearity. We are in fact going to show that call-by-need machines are globally bilinear. The key observation is that $\sizecomtwo{\exec}$ is not only locally but also globally bound by $\sizep\exec$, as the next lemma formalizes. 

We treat the WAM. The reasoning for the Merged WAM and for the Pointing WAM is analogous. Define $\size\stempty\defeq 0$ and $\size{(\genv,\var,\stack)\cons\fstack}\defeq 1+\size\fstack$. We have:

\begin{lemma}
\label{l:wam-var-linear}
Let $\state=\scekstate{\code}{\stack}{\fstack}{\genv}$ be a WAM state, reached by the execution $\exec$. Then 
\begin{enumerate}
	\item $\sizecomtwo{\exec}=\sizee{\exec}+\size\fstack$.
	\item $\size\genv +\size\fstack\leq \sizem\exec$
	\item \label{p:wam-var-linear3} $\sizecomtwo{\exec}\leq \sizee{\exec}+\sizem\exec=\sizep\exec$
\end{enumerate}
\end{lemma}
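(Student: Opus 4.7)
My plan is to prove the three items in order, each by induction on the length of $\exec$, with the last being a direct combination of the first two.

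For item (1), the key observation is that $\tomachatwo$ is the unique transition that pushes a frame onto the dump, while $\tomache$ is the unique transition that pops one; neither $\tomachaone$ nor $\tomachm$ touches $\fstack$. Hence $\sizecomtwo{\exec}-\size\fstack$ is increased by one exactly when $\sizee{\exec}$ is increased by one (namely at $\tomache$ steps), and is otherwise unchanged. Since both sides are $0$ on the initial state, the equality $\sizecomtwo{\exec}=\sizee{\exec}+\size\fstack$ is preserved throughout.

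For item (2) the naive induction fails: a $\tomache$ step restores $\size{\genv_1}+1$ substitutions to $\genv$ while $\sizem\exec$ is unchanged, so $\size\genv+\size\fstack$ can jump up by $\size{\genv_1}$. The fix is to strengthen the statement with an auxiliary measure that accounts for the environment material hidden inside dump frames. Define $\bigsize{\stempty}\defeq 0$ and $\bigsize{(\genv_1,\var,\stack)\cons\fstack}\defeq 1+\size{\genv_1}+\bigsize{\fstack}$, and prove by induction the \emph{equality} $\size\genv+\bigsize\fstack=\sizem\exec$. The four transitions are then easily checked: $\tomachaone$ is inert, $\tomachm$ bumps $\size\genv$ and $\sizem\exec$ by $1$ while leaving $\bigsize\fstack$ fixed, $\tomachatwo$ on $\genv_1\cons\esub\var\code\cons\genv_2$ moves $\size{\genv_1}+1$ units from $\size\genv$ into a fresh dump frame contributing exactly $1+\size{\genv_1}$ to $\bigsize\fstack$, and $\tomache$ performs the inverse move. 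Since $\size\fstack\leq\bigsize\fstack$ (each frame contributes at least $1$), item (2) follows: $\size\genv+\size\fstack\leq\size\genv+\bigsize\fstack=\sizem\exec$.

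Item (3) is then immediate. From (2) we have $\size\fstack\leq\sizem\exec$, and substituting into (1) gives $\sizecomtwo{\exec}=\sizee{\exec}+\size\fstack\leq\sizee{\exec}+\sizem{\exec}=\sizep{\exec}$.

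The only real obstacle is the strengthening needed for (2); once one sees that the correct invariant is a conservation law accounting for environment material trapped in the dump, the verification is mechanical. Observe that the argument only uses the shape of the four WAM transitions, so it transfers to the Merged WAM and the Pointing WAM with the analogous definition of $\bigsize{\fstack}$ (in the Pointing WAM one would account for the dumped substitutions $\esub\var\pwammark$ rather than for environments stored in frames).
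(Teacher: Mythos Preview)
Your proposal is correct and follows essentially the same approach as the paper. Your auxiliary measure $\bigsize\fstack$ is exactly what the paper calls (informally) the ``global number of substitutions in a state'': the paper argues that $\tomachm$ is the only transition creating substitutions, that $\tomachatwo$ and $\tomache$ merely shuffle them between $\genv$ and the dump (each dump entry $(\genv_1,\var,\stack)$ counting for $1+\size{\genv_1}$ substitutions), and that the inequality comes from each dump entry contributing at least one---precisely your $\size\fstack\leq\bigsize\fstack$. Your version is simply a more explicit rendering of the same conservation argument.
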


\proof
\begin{enumerate}
	\item Immediate, as $\tomachatwo$ is the only transition that pushes elements on $\fstack$ and $\tomache$ is the only transition that pops them.

	\item The only rule that produces substitutions  is $\tomachm$. Note that 1) $\tomachatwo$ and $\tomache$ preserve the global number of substitutions in a state; 2) $\genv$ and $\fstack$ are made out of substitutions, if one considers every entry $(\genv,\var,\stack)$ of the dump as a substitution on $\var$ (and so the statement follows); 3) the inequality is given by the fact that an entry of the dump stocks an environment (counting for many substitutions).
	\item Substitute Point 2 in Point 1.\qed
\end{enumerate}

\begin{theorem}
The WAM has globally linear commutations.
\end{theorem}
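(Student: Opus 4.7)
\smallskip

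The plan is to decompose $\sizeadm{\exec}=\sizecomone{\exec}+\sizecomtwo{\exec}$ and bound each summand separately by $O((\size{\code}+1)\cdot\sizep{\exec})$, so that the bilinear bound follows. The second summand is already essentially done: \reflemma{wam-var-linear}.\ref{p:wam-var-linear3} gives $\sizecomtwo{\exec}\leq\sizep{\exec}$, which is even better than what we need. So the real content of the proof is to bound $\sizecomone{\exec}$.

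For the first summand I would exploit a local argument in the style of \refth{value-name-glob-bilin}, but only for the $\tomachaone$ transition, which is exactly the one that decomposes an application, moving the argument onto the stack and shrinking the code. A maximal run of consecutive $\tomachaone$ steps starts from a state whose code, by the subterm invariant (\reflemma{wam-invariants}.\ref{p:wam-invariants-subterm}), has size at most $\size{\code}$, where $\code$ is the initial code; since every $\tomachaone$ step strictly decreases the size of the code in the code component, such a maximal run has length at most $\size{\code}$. Hence $\tomachaone$ is \emph{locally linear}.

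The number of maximal $\tomachaone$-runs in $\exec$ is then bounded by $1$ plus the number of transitions in $\exec$ that are \emph{not} of type $\tomachaone$, i.e.\ by $1+\sizem{\exec}+\sizee{\exec}+\sizecomtwo{\exec}$. Combining with the local bound and using \reflemma{wam-var-linear}.\ref{p:wam-var-linear3}, one obtains
\[
\sizecomone{\exec}\leq(1+\sizem{\exec}+\sizee{\exec}+\sizecomtwo{\exec})\cdot\size{\code}\leq (1+2\sizep{\exec})\cdot\size{\code}=O((\size{\code}+1)\cdot\sizep{\exec}).
\]
Adding the already established bound on $\sizecomtwo{\exec}$ yields $\sizeadm{\exec}=O((\size{\code}+1)\cdot\sizep{\exec})$, which is exactly global bilinearity.

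The only subtle point, and the one I would be most careful about, is that the WAM is \emph{not} locally linear because of $\tomachatwo$: a single $\tomachatwo$ step can be followed by many more $\tomachatwo$ steps as the machine traverses the global environment. The key insight that makes the argument work is to split $\tomacha$ into its two components and handle $\tomachaone$ and $\tomachatwo$ by entirely different means -- $\tomachaone$ locally against the subterm invariant, and $\tomachatwo$ globally against the principal steps via the accounting in \reflemma{wam-var-linear}. Once this split is made, the rest is elementary bookkeeping.
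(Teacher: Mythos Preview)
Your proof is correct and follows essentially the same route as the paper's: split $\tomacha$ into $\tomachaone$ and $\tomachatwo$, bound $\sizecomtwo{\exec}$ globally via \reflemma{wam-var-linear}.\ref{p:wam-var-linear3}, and bound $\sizecomone{\exec}$ by combining the local bound on maximal $\tomachaone$-runs (subterm invariant) with the fact that the number of such runs is controlled by the non-$\tomachaone$ transitions, which in turn are $O(\sizep{\exec})$. Your bookkeeping is in fact slightly more careful than the paper's, as you explicitly account for the ``$+1$'' coming from a possible initial $\tomachaone$-run.
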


\proof
	Let $\exec$ be an execution of initial code $\code$. Define $\tomachnoap \defeq \tomache \cup \tomachm\cup\tomachatwo$ and note $\sizenoap\exec$ the number of its steps in $\exec$. We estimate $\tomacha \defeq \tomachaone\cup\tomachatwo$ by studying its components separately. For $\tomachatwo$, \reflemma{wam-var-linear}.\ref{p:wam-var-linear3} proves $\sizecomtwo\exec \leq \sizep\exec = O(\sizep\exec)$.
For $\tomachaone$, as for the KAM, the length of a maximal $\tomachaone$ subsequence of $\exec$ is bounded by $\size\code$. The number of $\tomachaone$ maximal subsequences of $\exec$ is bounded by $\sizenoap\exec$, that by \reflemma{wam-var-linear}.\ref{p:wam-var-linear3} is linear in $O(\sizep\exec)$. Then $\sizeap\exec=O(\size{\code}\cdot\sizelog{\exec})$. Summing up, 
	\begin{center}
	$\sizecomtwo\exec + \sizeap\exec = O(\sizep\exec) + O(\size{\code}\cdot\sizelog{\exec}) = O((\size{\code}+1)\cdot\sizelog{\exec})$\qed
	\end{center}

\ben{The analysis presented here is complemented by the study in \cite{valuevariables}, where the number of exponential steps $\togene$ in a derivation $\deriv$ is shown to be polynomial (actually quadratic in call-by-name and linear in call-by-value/need) in terms of the number of multiplicative steps $\togenm$ in $\deriv$. Given our distillation theorems, the results in \cite{valuevariables} equivalently relate the exponential and multiplicative transitions of the abstract machines studied in this paper. Note that these derived results could very hardly be obtained directly on the machines, \ie\ without distillation.}
\section{Conclusions}
The novelty of our study is the use of the linear substitution calculus (\lsc) to discriminate between abstract machine transitions: some of them---the principal ones---are simulated, and thus shown to be logically relevant, while the others---the commutative ones---are sent on the structural congruence and have to be considered as  bookkeeping operations. On one hand, the \lsc\ is a sharp tool to study abstract machines. On the other hand, it provides an alternative \cben{framework}{to abstract machines} which is \emph{simpler} while being \emph{conservative} at the level of complexity analysis.


\section*{Acknowledgments}
A special acknowledgement to Claudio Sacerdoti Coen, for many useful discussions, comments and corrections to the paper. In particular, we owe him the intuition that a global analysis of call-by-need commutative rules may provide a linear bound. This work was partially supported by the ANR projects \textsc{Logoi} (10-BLAN-0213-02) and \textsc{Coquas} (ANR-12-JS02-006-01), by the French-Argentinian Laboratory in Computer Science {\bf INFINIS}, the French-Argentinian project {\bf ECOS-Sud} A12E04, the Qatar National Research Fund under grant NPRP 09-1107-1-168.

\bibliographystyle{alpha}
\bibliography{\macrospath/biblio}
%
%
%

\section{Technical Appendix: proofs of the determinism of the calculi (\refprop{CbNDet})}
\label{sect:determ-proofs}
\subsection{Call-by-Name}
\label{sect:ProofCbNDet}
\renewcommand{\evctx}{\whctx}
\renewcommand{\evctxtwo}{\whctxtwo}
\renewcommand{\evctxp}[1]{\whctxp{#1}}
\renewcommand{\evctxtwop}[1]{\whctxtwop{#1}}

Let $\tm=\evctx_1\ctxholep{\tmfour_1}=\evctx_2\ctxholep{\tmfour_2}$. By induction on the structure of $\tm$. Cases:
\begin{itemize}
\item \casealt{Variable or an abstraction} Vacuously true, because there is no redex.
\item \casealt{Application} Let $\tm=\tmtwo\tmthree$. Suppose that one of the two evaluation contexts, for instance $\evctx_1$, is equal to $\ctxhole$. Then, we must have $\tmtwo=\l\var.\tmtwo'$, but in that case it is easy to see that the result holds, because $\evctx_2$ cannot have its hole to the right of an application (in $\tmthree$) or under an abstraction (in $\tmtwo'$). We may then assume that none of $\evctx_1,\evctx_2$ is equal to $\ctxhole$. In that case, we must have $\evctx_1=\evctx_1'\tmthree$ and $\evctx_2=\evctx_2'\tmthree$, and we conclude by induction hypothesis.
\item \casealt{Substitution} Let $\tm=\tmtwo\esub\var\tmthree$. This case is entirely analogous to the previous one.
\end{itemize}

\subsection{Left-to-Right Call-by-Value}
\label{sect:ProofCbvCekDet}
\renewcommand{\evctx}{\cbvctx}
\renewcommand{\evctxtwo}{\cbvctxtwo}
\renewcommand{\evctxp}[1]{\cbvctxp{#1}}
\renewcommand{\evctxtwop}[1]{\cbvctxtwop{#1}}

We prove the following statement, of which the determinism of the reduction is a consequence. 

\begin{lemma}
Let $\tm$ be a term. Then $\tm$ has at most one subterm $\tmtwo$ that
verifies both~(i) and~(ii):
\begin{enumerate}
	\item[(i)]  Either $\tmtwo$ is a variable $\var$, or $\tmtwo$ is an application $\sctxp{\val}\sctxtwop{\valtwo}$,
				for $\val, \valtwo$ being values.
	\item[(ii)] $\tmtwo$ is under a left-to-right call-by-value evaluation context, \ie\ $\tm = \evctxp{\tmtwo}$.
\end{enumerate}
\end{lemma}

From the statement it follows that there is at most one $\towhlcek$-redex in $\tm$, \ie\ $\towhlcek$ is deterministic.

\begin{proof}
by induction on the structure of $\tm$:
\begin{itemize}
\item \casealt{$\tm$ is a variable} There is only one subterm, under the
      empty evaluation context.

\item \casealt{$\tm$ is an abstraction} There are no subterms that verify both (i) and (ii), since the only possible evaluation context is the empty one.

\item \casealt{$\tm$ is an application $\tmthree\,\tmfour$} There are three possible situations:
      \begin{itemize}
      \item \caselight{The left subterm $\tmthree$ is not of the form $\sctxp{\val}$}. Then $\tmtwo$ cannot be at
            the root, \ie\ $\tmtwo \neq \tm$.
            Since $\tmthree\ctxhole$ is not an evaluation context, $\tmtwo$
            must be internal to $\ctxhole\tmfour$, which is an evaluation context.
            We conclude by \ih.
      \item \caselight{The left subterm $\tmthree$ is of the form $\sctxp{\val}$ with $\val$ a value, but the right subterm $\tmfour$ is not}. Then $\tmtwo$ cannot be a subterm of $\tmthree$,
            and also $\tmtwo \neq \tm$.
            Hence, if there is a subterm $\tmtwo$ as in the statement, it
            must be internal to the evaluation context $\tmthree\ctxhole$.
            We conclude by \ih.
      \item \caselight{Both subterms have that form}, \ie\ $\tmthree = \sctxp{\val}$ and $\tmfour = \sctxtwop{\valtwo}$
            with $\val$ and $\val'$ values. The only subterm that
            verifies both (i) and (ii) is $\tmtwo = \tm$.
      \end{itemize}
\item \casealt{$\tm$ is a substitution $\tmthree\esub{\var}{\tmfour}$}  Any
      occurrence of $\tmtwo$ must be internal to $\tmthree$ (because $\tmthree\esub{\var}{\ctxhole}$ is not an evaluation context).
      We conclude by \ih\ that there is at most one such occurrence.
\end{itemize}
\end{proof}

\subsection{Right-to-Left Call-by-Value}
\label{sect:ProofCbvLamDet}
\renewcommand{\evctx}{\scbvctx}
\renewcommand{\evctxtwo}{\scbvctxtwo}
\renewcommand{\evctxp}[1]{\scbvctxp{#1}}
\renewcommand{\evctxtwop}[1]{\scbvctxtwop{#1}}
Exactly as in the case for left-to-right call-by-value, we prove the following
property, from which determinism of the reduction follows. 

\begin{lemma}
Let $\tm$ be a term. Then $\tm$ has at most one subterm
$\tmtwo$ that verifies both~(i) and~(ii):
\begin{enumerate}
    \item[(i)] $\tmtwo$ is either a variable $\var$ or an application
               $\sctxp{\val}\sctxtwop{\valtwo}$, where $\val$ and $\valtwo$
               are values.
    \item[(ii)] $\tmtwo$ is under a right-to-left call-by-value evaluation context,
                \ie\ $\tm = \evctxp{\tmtwo}$.
\end{enumerate}
\end{lemma}

As a corollary, any term $\tm$ has at most one $\towhllam$-redex.

\begin{proof}
By induction on the structure of $\tm$:
\begin{itemize}
\item \casealt{Variable or abstraction} Immediate.
\item \casealt{Application} If $\tm = \tmthree\,\tmfour$, there are three cases:
      \begin{itemize}
      \item \caselight{The right subterm $\tmfour$ is not of the form $\sctxtwop{\valtwo}$}. Then $\tmtwo$ cannot be at the root.
            Since $\ctxhole\,\tmfour$ is not an evaluation context,
            $\tmtwo$ must be internal to $\tmfour$ and we conclude by \ih.
      \item \caselight{The right subterm $\tmfour$ is of the form $\sctxtwop{\valtwo}$ but the left subterm $\tmthree$ is
            not}. Again $\tmtwo$ cannot be
            at the root. Moreover, $\tmfour$ has no applications or variables
            under an evaluation context.
            Therefore $\tmtwo$ must be internal to $\tmthree$ and we conclude by \ih.
      \item \caselight{Both subterms have that form}, \ie\ $\tmthree = \sctxp{\val}$ and $\tmfour = \sctxtwop{\valtwo}$. We first note that $\tmthree$ and $\tmfour$ have no applications or variables under an evaluation context. The only possibility
            that remains is that $\tmtwo$ is at the root, \ie\ $\tmtwo = \tm$.
      \end{itemize}
\item \casealt{Substitution} If $\tm = \tmthree\esub{\var}{\tmfour}$ is a substitution,
      $\tmtwo$ must be internal to $\tmthree$ (because $\tmthree\esub{\var}{\ctxhole}$ is not an evaluation context), and we conclude by\ \ih.
\end{itemize}
\end{proof}


\subsection{Call-by-Need}
\label{sect:ProofCbNeedDet}
\renewcommand{\evctx}{\cbndctx}
\renewcommand{\evctxtwo}{\cbndctxtwo}
\renewcommand{\evctxp}[1]{\cbndctxp{#1}}
\renewcommand{\evctxtwop}[1]{\cbndctxtwop{#1}}

We first need an auxiliary result:
\begin{lemma}
	\label{l:CbNeedEvCtx}
	Let $\tm:=\evctxp\var$ for an evaluation context $\evctx$ such that $\var\in\fv\tm$. Then:
	\begin{enumerate}
		\item \label{p:CbNeedEvCtx-1} for every substitution context $\sctx$ and abstraction $\val$, $\tm\neq\sctxp\val$;
		\item \label{p:CbNeedEvCtx-2} for every evaluation context $\evctxtwo$ and variable $\vartwo$, $\tm=\evctxtwop\vartwo$ implies $\evctxtwo=\evctx$ and $\vartwo=\var$;
		\item \label{p:CbNeedEvCtx-3} $\tm$ is a call-by-need normal form.
	\end{enumerate}
\end{lemma}
\begin{proof}
	In all points we use a structural induction on $\evctx$. For point~1:
	\begin{itemize}
		\item $\evctx=\ctxhole$: obvious.
		\item $\evctx=\evctx_1\tmtwo$: obvious.
		\item $\evctx=\evctx_1\esub\vartwo\tmthree$: suppose that $\sctx=\sctxtwo\esub\vartwo\tmthree$ (for otherwise the result is obvious); then we apply the induction hypothesis to $\evctx_1$ to obtain $\evctx_1\ctxholep\var\neq\sctxtwop\val$.
		\item $\evctx=\evctx_1\ctxholep\vartwo\esub\vartwo{\evctx_2}$: suppose that $\sctx=\sctxtwo\esub\vartwo{\evctx_2\ctxholep\var}$ (for otherwise the result is obvious); then we apply the induction hypothesis to $\evctx_1$ to obtain $\evctx_1\ctxholep\vartwo\neq\sctxtwop\val$.
	\end{itemize}
	
	For point 2:
	\begin{itemize}
		\item $\evctx=\ctxhole$: obvious.
		\item $\evctx=\evctx_1\tmtwo$: we must necessarily have $\evctxtwo=\evctxtwo_1\tmtwo$ and we conclude by induction hypothesis.
		\item $\evctx=\evctx_1\esub\varthree\tmtwo$: in principle, there are two cases. First, we may have $\evctxtwo=\evctxtwo_1\esub\varthree\tmtwo$, which allows us to conclude immediately by induction hypothesis, as above. The second possibility would be $\evctxtwo=\evctxtwo_1\ctxholep\varthree\esub\varthree{\evctxtwo_2}$, with $\evctxtwo_2\ctxholep\vartwo=\tmtwo$, but this is actually impossible. In fact, it would imply $\evctx_1\ctxholep\var=\evctxtwo_1\ctxholep\varthree$, which by induction hypothesis would give us $\varthree=\var$, contradicting the hypothesis $\var\in\fv\tm$.
		\item $\evctx=\evctx_1\ctxholep\varthree\esub\varthree{\evctx_2}$: by symmetry with the above case, the only possibility is $\evctxtwo=\evctx_1\ctxholep\varthree\esub\varthree{\evctxtwo_2}$, which allows us to conclude immediately by induction hypothesis.
	\end{itemize}

	For point 3, let $\tmfour$ be a redex (\ie, a term matching the left hand side of $\rtodb$ or $\rtolsv$) and let $\evctxtwo$ be an evaluation context. We will show by structural induction on $\evctx$ that $\tm\neq\evctxtwop\tmfour$. We will do this by considering, in each inductive case, all the possible shapes of $\evctxtwo$.
	\begin{itemize}
		\item $\evctx=\ctxhole$: obvious.
		\item $\evctx=\evctx_1\tmtwo$: the result is obvious unless $\evctxtwo=\ctxhole$ or $\evctxtwo=\evctxtwo_1\tmtwo$. In the latter case, we conclude by induction hypothesis (on $\evctx_1$). In the former case, since $\tmfour$ is a redex, we are forced to have $\tmfour=\sctxp\val\tmtwo'$ for some abstraction $\val$, substitution context $\sctx$ and term $\tmtwo'$. Now, even supposing $\tmtwo'=\tmtwo$, we are still allowed to conclude because $\evctx_1\ctxholep\var\neq\sctxp\val$ by point 1.
		\item $\evctx=\evctx_1\esub\vartwo\tmtwo$: the result is obvious unless:
		\begin{itemize}
			\item $\evctxtwo=\ctxhole$: this time, the fact that $\tmfour$ is a redex forces $\tmfour=\evctxtwo_1\ctxholep\vartwo\esub\vartwo\tmtwo$. Even if we admit that $\tmtwo=\sctxp\val$, we may still conclude because $\var\neq\vartwo$ (by the hypothesis $\var\in\fv\tm$), hence $\evctx_1\ctxholep\var\neq\evctxtwo_1\ctxholep\vartwo$ by point 2.
			\item $\evctxtwo=\evctxtwo_1\esub\vartwo\tmtwo$: immediate by induction hypothesis on $\evctx_1$.
			\item $\evctxtwo=\evctxtwo_1\ctxholep\vartwo\esub\vartwo{\evctxtwo_2}$: even if $\evctxtwo_2\ctxholep\tmfour=\tmtwo$, we may still conclude because, again, $\var\neq\vartwo$ implies $\evctx_1\ctxholep\var\neq\evctxtwo_1\ctxholep\vartwo$ by point 2.
		\end{itemize}
		\item $\evctx=\evctx_1\ctxholep\vartwo\esub\vartwo{\evctx_2}$: again, the result is obvious unless:
		\begin{itemize}
			\item $\evctxtwo=\ctxhole$: the fact that $\tmfour$ is a redex implies $\tmfour=\evctxtwo_1\ctxholep\vartwo\esub\vartwo{\sctxp\val}$. Even assuming $\evctxtwo_1=\evctx_1$, we may still conclude because $\evctx_2\ctxholep\var\neq\sctxp\val$ by point 1.
			\item $\evctxtwo=\evctxtwo_1\esub\vartwo{\evctx_2\ctxholep\var}$: since $\vartwo\in\fv{\evctx_1\ctxholep\vartwo}$, we conclude because the induction hypothesis gives us $\evctx_1\ctxholep\vartwo\neq\evctxtwo_1\ctxholep\tmfour$.
			\item $\evctxtwo=\evctx_1\ctxholep\vartwo\esub\vartwo{\evctxtwo_2}$: we conclude at once by applying the induction hypothesis to $\evctx_2$.
		\end{itemize}
	\end{itemize}
\end{proof}

Now, the proof of \refprop{CbNeedDet} is 
by structural induction on $\tm:=\evctx_1\ctxholep{\tmfour_1}=\evctx_2\ctxholep{\tmfour_2}$. Cases:

\begin{itemize}
 \item \casealt{Variable or abstraction}
 Impossible, since variables and abstractions are both call-by-need normal.
 
 \item \casealt{Application, \ie\ $\tm=\tmtwo\tmthree$} This case is treated exactly as in the corresponding case of the proof of \refprop{CbNDet}.
  
 \item \casealt{Substitution, \ie\ $\tm=\tmtwo\esub\var\tmthree$}
 Cases:
 \begin{itemize}
 \item \caselight{Both contexts have their holes in $\tmtwo$ or $\tmthree$}. It follows from the \ih. 
 
 \item \caselight{One of the contexts---say $\evctx_1$---is empty, \ie\  $\tmtwo=\evctx_3\ctxholep\var$, $\tmthree=\sctxp\val$, and $\tmfour_1=\evctx_3\ctxholep\var\esub\var {\sctxp\val}$}. This case is impossible. Indeed, 1) the hole of $\evctx_2$ cannot be in $\sctxp\val$, because it is call-by-need normal, and 2) it cannot be inside $\evctx_3\ctxholep\var$ because by Lemma \ref{l:CbNeedEvCtx}.\ref{p:CbNeedEvCtx-3} $\evctxp{\var}$ is call-by-need normal.
 
 \item \caselight{One of the contexts---say $\evctx_1$---has its hole in $\tmthree$ and the other one has its hole in $\tmtwo$, \ie\  $\evctx_1=\evctx_3\ctxholep\var\esub\var{\evctx_4}$ and $\evctx_2=\evctx_5\esub\var\tmthree$}. This case is impossible, because by Lemma \ref{l:CbNeedEvCtx}.\ref{p:CbNeedEvCtx-3} $\evctx_3\ctxholep\var$ is call-by-need normal. 
 \end{itemize}
 \end{itemize}

\section{Technical Appendix:  proofs of strong bisimulation}
\label{sect:strong-bisim-proofs}
\subsection{Proof of \refprop{strong-bis} ($\eqstruct$ is a strong bisimulation) for call-by-name}
\renewcommand{\evctx}{\whctx}
\renewcommand{\evctxtwo}{\whctxtwo}
\renewcommand{\evctxp}[1]{\whctxp{#1}}
\renewcommand{\evctxtwop}[1]{\whctxtwop{#1}}
\label{sect:ProofCbNStrongBis}


	Before proving the main result, we need two auxiliary lemmas, proved by straightforward inductions on $\evctx$:

\begin{lemma}
\label{l:eqstruct-and-ctx-name}
Let $\tm$ be a term, $\evctx$ be a call-by-name evaluation context not capturing any variable in $\fv{\tm}$, and $\var\notin\fv{\evctxp\vartwo}$. 
Then $\evctxp{\tm\esub\var\tmtwo}\eqstruct\evctxp\tm\esub\var\tmtwo$.
\end{lemma}

		\begin{lemma}
		\label{l:cbname_eqstruct_preserves_shapes}
		The equivalence relation $\tostruct$ as defined for call-by-name
		preserves the shape of $\evctxp{\var}$. More precisely,
		if $\evctxp{\var} \eqstruct \tm$, with $\var$ not captured by $\evctx$,
        then $\tm$ is of the form $\evctxtwop{\var}$, with $\var$ not captured by $\evctxtwo$.
		\end{lemma}

Now we turn to the proof of \refprop{strong-bis} itself.

Let $\tostructsym$ be the symmetric closure of the
union of the axioms defining $\eqstruct$ for call-by-name, that is of $\tostructgc\cup\tostructdup\cup\tostructap\cup\tostructcom\cup\tostructes$.
Note that $\eqstruct$ is the reflexive--transitive closure of $\tostructsym$.
The proof is in two parts:
\begin{itemize}
\item[] (I) Prove the property holds for $\tostructsym$, \ie\
            if $\tm\towhl_{a}\tmtwo$ and $\tm\tostructsym\tmthree$, there exists $\tmfour$ s.t. $\tmthree\towhl_{a}\tmfour$ and $\tmtwo\eqstruct\tmfour$.
\item[] (II) Prove the property holds for $\eqstruct$ (\ie\ for many steps of $\tostructsym$) by resorting to (I).
\end{itemize}

The proof of (II) is immediate by induction on the number of $\tostructsym$ steps.
The proof of (I) goes by induction on the rewriting step $\togen$ (that, since $\togen$ is closed by evaluation contexts, becomes a proof by induction on the evaluation context $\evctx$). In principle, we should always consider the two directions of $\tostructsym$. Most of the time, however, one direction is obtained by simply reading the diagram of the other direction bottom-up, instead than top-down; these cases are simply omitted, we distinguish the two directions only when it is relevant.

\begin{enumerate}
\item \label{p:bc} \casealt{Base case 1: multiplicative root step $\tm = \sctxp{\l\var.\tmp}{\tmtwop} \rtodb \sctxp{\tmp\esub{\var}{\tmtwop}} = \tmtwo$}
    
			If the $\tostructsym$ step is internal to $\tmtwop$ or internal to
			one of the substitutions in $\sctx$, the pattern of the $\tostructsym$ redex does
			not overlap with the $\rtodb$ step, and the proof is immediate, the two steps commute. Otherwise, we consider every possible case for $\tostructsym$:

    \begin{enumerate}

    \item \caselight{Garbage Collection $\tostructgc$}.
    The garbage collected substitution must
      be one of the substitutions in $\sctx$, \ie\ $\sctx$ must
      be of the form $\sctxtwop{\sctxthree\esub{\vartwo}{\tmthreep}}$. Let $\sctxal \defeq \sctxtwop{\sctxthree}$. Then:
      
\begin{center}
 \begin{tikzpicture}[ocenter]
  \node (s) {\normalsize$\sctxp{\l\var.\tmp}\tmtwop$};
  \node at (s.center)  [below =\nodeVerDist](s2) {\normalsize$\sctxpal{\l\var.\tmp}\tmtwop$};
  \node at (s2.center) [right= \nodeHorDist](t) {\normalsize$\sctxpal{\tmp\esub{\var}{\tmtwop}}$};
  \node at (s-|t) [](s1){\normalsize$\sctxp{\tmp\esub{\var}{\tmtwop}}$};
  \draw[-o] (s) to node {\scriptsize $\db$} (s1);
  \node at (s.center)[below=\nodeVerDist/2](eq1){\normalsize$\tostructgc$};
  \node at (s1.center)[below=\nodeVerDist/2](eq2){\normalsize$\tostructgc$};
\draw[-o, dashed] (s2) to node {\scriptsize $\db$} (t);
\end{tikzpicture} 
\end{center}

    \item \caselight{Duplication $\tostructdup$}.
      The duplicated substitution must
      be one of the substitutions in $\sctx$, \ie\ $\sctx$ must
      be of the form $\sctxtwop{\sctxthree\esub{\vartwo}{\tmthreep}}$. Then:
\begin{center}
 \begin{tikzpicture}[ocenter]
  \node (s) {\normalsize$\sctxtwop{\sctxthreep{\l\var.\tmp}\esub{\vartwo}{\tmthreep}}\tmtwop$};
  \node at (s.center)  [below =\nodeVerDist](s2) {\normalsize$\tm_2$};
  \node at (s2.center) [right= 2*\nodeHorDist](t) {\normalsize$\tm_3$};
  \node at (s-|t) [](s1){\normalsize$\tm_1$};
  \draw[-o] (s) to node {\scriptsize $\db$} (s1);
  \node at (s.center)[below=\nodeVerDist/2](eq1){\normalsize$\tostructdup$};
  \node at (s1.center)[below=\nodeVerDist/2](eq2){\normalsize$\tostructdup$};
\draw[-o, dashed] (s2) to node {\scriptsize $\db$} (t);
\end{tikzpicture} 
\end{center}
where
\begin{align*}
	\tm_1 &:= \sctxtwop{\sctxthreep{\tmp\esub{\var}{\tmtwop}}\esub{\vartwo}{\tmthreep}},\\
	\tm_2 &:= \sctxtwop{\ \varsplit{(\sctxthreep{\l\var.\tmp})}{\vartwo}{\varthree}\esub{\vartwo}{\tmthreep}\esub{\varthree}{\tmthreep}\ }\tmtwop,\\
	\tm_3 &:= \sctxtwop{\ \varsplit{(\sctxthreep{ \tmp\esub{\var}{\tmtwop}})}{\vartwo}{\varthree}\esub{\vartwo}{\tmthreep}\esub{\varthree}{\tmthreep}\ }.
\end{align*}

    \item \label{p:str-bis-name-base-mul-ap}\caselight{Commutation with application $\tostructap$}.
      Here $\tostructap$ can only be applied in one direction. The diagram is:

\begin{center}
 \begin{tikzpicture}[ocenter]
  \node (s) {\normalsize$\sctxp{\l\vartwo.\tmp}\esub{\var}{\tmfivep}\tmtwop\esub{\var}{\tmfivep}$};
  \node at (s.center)  [below =\nodeVerDist](s2) {\normalsize$\tm_4$};
  \node at (s2.center) [right= 2.4*\nodeHorDist](t) {\normalsize$\tm_2$};
  
  \node at (s-|t) [](s1){\normalsize$\tm_1$};

  \draw[-o] (s) to node {\scriptsize $\db$} (s2);
  \node at (s.center)[right=1.3*\nodeHorDist](eq1){\normalsize$\tostructap$};
\draw[-o, dashed] (s1) to node {\scriptsize $\db$} (t);

\node at (s2.center) [below =\nodeVerDist](s3){\normalsize$=_\alpha$};

\node at (s3-|t) (t2) {\normalsize$\tm_3$};
  
\node at (s3.center) [below =\nodeVerDist](s4){\normalsize$\tm_5$};

\node at (s4-|t) (t3) {\normalsize$\tm_6$};

\node at (t.center)[below=\nodeVerDist/2](eq2){\normalsize$\tostructdup$};
\node at (t2.center)[below=\nodeVerDist/2](eq2){\normalsize$\tostructcom^*$};
\node at (s4.center)[right=1.3*\nodeHorDist](eq1){\normalsize$\tostructes$};
\end{tikzpicture} 
\end{center}
where
\begin{align*}
	\tm_1 &:= (\sctxp{\l\vartwo.\tmp}\tmtwop)\esub{\var}{\tmfivep},\\
	\tm_2 &:= (\sctxp{\tmp\esub{\vartwo}{\tmtwop}})\esub{\var}{\tmfivep},\\
	\tm_3 &:= (\sctxp{\tmp\esub{\vartwo}{\tmtwop\isub{\var}{\vartwo}}})\esub{\var}{\tmfivep}\esub{\vartwo}{\tmfivep},\\
	\tm_4 &:= \sctxp{\tmp\esub{\vartwo}{\tmtwop\esub{\var}{\tmfivep}}}\esub{\var}{\tmfivep},\\
	\tm_5 &:= (\sctxp{\tmp\esub{\vartwo}{\tmtwop\isub{\var}{\vartwo}\esub{\vartwo}{\tmfivep}}})\esub{\var}{\tmfivep},\\
	\tm_6 &:= (\sctxp{\tmp\esub{\vartwo}{\tmtwop\isub{\var}{\vartwo}}\esub{\vartwo}{\tmfivep}})\esub{\var}{\tmfivep}.
\end{align*}

    \item \caselight{Commutation of independent substitutions $\tostructcom$}.
      The substitutions that are commuted by the $\tostructcom$ rule
      must be both in $\sctx$, \ie\ $\sctx$ must be of the form
      $\sctxtwop{\sctxthree\esub{\vartwo}{\tmthreep}\esub{\varthree}{\tmfourp}}$
      with $\varthree \not\in \fv{\tmthreep}$.
      Let $\sctxal = \sctxtwop{\sctxthree\esub{\varthree}{\tmfourp}\esub{\vartwo}{\tmthreep}}$.
      Then:
 \begin{center}
 \begin{tikzpicture}[ocenter]
  \node (s) {\normalsize$\sctxp{\l\var.\tmp}\tmtwop$};
  \node at (s.center)  [below =\nodeVerDist](s2) {\normalsize$\sctxpal{\l\var.\tmp}\tmtwop$};
  \node at (s2.center) [right= \nodeHorDist](t) {\normalsize$\sctxpal{\tmp\esub{\var}{\tmtwop}}$};
  \node at (s-|t) [](s1){\normalsize$\sctxp{\tmp\esub{\var}{\tmtwop}}$};
  \draw[-o] (s) to node {\scriptsize $\db$} (s1);
  \node at (s.center)[below=\nodeVerDist/2](eq1){\normalsize$\tostructcom$};
  \node at (s1.center)[below=\nodeVerDist/2](eq2){\normalsize$\tostructcom$};
\draw[-o, dashed] (s2) to node {\scriptsize $\db$} (t);
\end{tikzpicture} 
\end{center}

    \item \caselight{Composition of substitutions $\tostructes$}.
      The substitutions that appear in the left-hand side of the $\tostructes$ rule must both
      be in $\sctx$, \ie\ $\sctx$ must be of the form $\sctxtwop{\sctxthree\esub{\vartwo}{\tmthreep}\esub{\varthree}{\tmfourp}}$
      with $\varthree \not\in \fv{\sctxthreep{\l\var.\tmp}}$.
      Let $\sctxal = \sctxtwop{\sctxthree\esub{\vartwo}{\tmthreep\esub{\varthree}{\tmfourp}}}$. Exactly as in the previous case:
      \begin{center}
 \begin{tikzpicture}[ocenter]
  \node (s) {\normalsize$\sctxp{\l\var.\tmp}\tmtwop$};
  \node at (s.center)  [below =\nodeVerDist](s2) {\normalsize$\sctxpal{\l\var.\tmp}\tmtwop$};
  \node at (s2.center) [right= \nodeHorDist](t) {\normalsize$\sctxpal{\tmp\esub{\var}{\tmtwop}}$};
  \node at (s-|t) [](s1){\normalsize$\sctxp{\tmp\esub{\var}{\tmtwop}}$};
  \draw[-o] (s) to node {\scriptsize $\db$} (s1);
  \node at (s.center)[below=\nodeVerDist/2](eq1){\normalsize$\tostructes$};
  \node at (s1.center)[below=\nodeVerDist/2](eq2){\normalsize$\tostructes$};
\draw[-o, dashed] (s2) to node {\scriptsize $\db$} (t);
\end{tikzpicture} 
\end{center}

    \end{enumerate}
 
  \item \label{p:bs-es}\casealt{Base case 2: exponential root step $\tm = \evctxtwop{\var}\esub{\var}{\tmp} \rtowhls \evctxtwop{\tmp}\esub{\var}{\tmp} = \tmtwo$}
			If the $\tostructsym$ step is internal to $\tmp$,
			the proof is immediate, since there is no overlap with
			the pattern of the $\rtols$ redex.
			Similarly, if the $\tostructsym$ step is internal to
			$\evctxp{\var}$, the proof is straightforward by resorting
			to \reflemma{cbname_eqstruct_preserves_shapes}.

			Now we proceed by case analysis on the $\tostructsym$ step:
	
    \begin{enumerate}

    \item \caselight{Garbage collection $\tostructgc$}.
    Note that $\tostructgc$ cannot remove $\esub{\var}{\tmp}$, because by hypothesis $\var$ does occur in its scope. If the removed substitution belongs to $\evctxtwo$, \ie\ $\evctxtwo= \evctxpthree{\evctxfour\esub{\vartwo}{\tmtwop}}$. Let $\evctxaltwo\defeq\evctxpthree{\evctxfour}$. Then:
    
       \begin{center}
 \begin{tikzpicture}[ocenter]
  \node (s) {\normalsize$\evctxtwop{\var}\esub{\var}{\tmp}$};
  \node at (s.center)  [below =\nodeVerDist](s2) {\normalsize$\evctxpaltwo{\var}\esub{\var}{\tmp}$};
  \node at (s2.center) [right= \nodeHorDist](t) {\normalsize$\evctxpaltwo{\tmp}\esub{\var}{\tmp}$};
  \node at (s-|t) [](s1){\normalsize$\evctxtwop{\tmp}\esub{\var}{\tmp}$};
  \draw[-o] (s) to node {\scriptsize $\lssym$} (s1);
  \node at (s.center)[below=\nodeVerDist/2](eq1){\normalsize$\tostructgc$};
  \node at (s1.center)[below=\nodeVerDist/2](eq2){\normalsize$\tostructgc$};
\draw[-o, dashed] (s2) to node {\scriptsize $\lssym$} (t);
\end{tikzpicture} 
\end{center}

If $\tostructgc$ adds a substitution as topmost constructor the diagram is analogous.

    \item \caselight{Duplication $\tostructdup$}.
      Two sub-cases: 
      \begin{enumerate}
      \item \caselight{The equivalence $\tostructdup$ acts on a substitution internal to $\evctxtwo$}.
      This case goes as for Garbage collection.
 
      \item \caselight{The equivalence $\tostructdup$ acts on $\esub{\var}{\tmp}$}.
      There are two further sub-cases:
      
            \begin{itemize}
            \item \caselight{The substituted occurrence is renamed by $\tostructdup$}:
\begin{center}
 \begin{tikzpicture}[ocenter]
  \node (s) {\normalsize$\evctxtwop{\var}\esub{\var}{\tmp}$};
  \node at (s.center)  [below =\nodeVerDist](s2) {\normalsize$\varsplit{\evctxtwo}{\var}{\vartwo}\ctxholep{\vartwo}\esub{\var}{\tmp}\esub{\vartwo}{\tmp}$};
  \node at (s2.center) [right= 1.4*\nodeHorDist](t) {\normalsize$\tm_1$};
  \node at (s-|t) [](s1){\normalsize$\evctxtwop{\tmp}\esub{\var}{\tmp}$};
  \draw[-o] (s) to node {\scriptsize $\lssym$} (s1);
  \node at (s.center)[below=\nodeVerDist/2](eq1){\normalsize$\tostructgc$};
  \node at (s1.center)[below=\nodeVerDist/2](eq2){\normalsize$\tostructgc$};
\draw[-o, dashed] (s2) to node {\scriptsize $\lssym$} (t);
\end{tikzpicture} 
\end{center}

where $\tm_1:=\varsplit{\evctxtwo}{\var}{\vartwo}\ctxholep{\tmp}\esub{\var}{\tmp}\esub{\vartwo}{\tmp}$ and $\varsplit{\evctxtwo}{\var}{\vartwo}$ is the context obtained from $\evctxtwo$ by renaming some (possibly none) occurrences of $\var$ as $\vartwo$.

\item \caselight{The substituted occurrence is not renamed by $\tostructdup$}. Essentially as in the previous case:

\begin{center}
 \begin{tikzpicture}[ocenter]
  \node (s) {\normalsize$\evctxtwop{\var}\esub{\var}{\tmp}$};
  \node at (s.center)  [below =\nodeVerDist](s2) {\normalsize$\varsplit{\evctxtwo}{\var}{\vartwo}\ctxholep{\var}\esub{\var}{\tmp}\esub{\vartwo}{\tmp}$};
  \node at (s2.center) [right= 1.4*\nodeHorDist](t) {\normalsize$\tm_1$};
  \node at (s-|t) [](s1){\normalsize$\evctxtwop{\tmp}\esub{\var}{\tmp}$};
  \draw[-o] (s) to node {\scriptsize $\lssym$} (s1);
  \node at (s.center)[below=\nodeVerDist/2](eq1){\normalsize$\tostructdup$};
  \node at (s1.center)[below=\nodeVerDist/2](eq2){\normalsize$\tostructdup$};
\draw[-o, dashed] (s2) to node {\scriptsize $\lssym$} (t);
\end{tikzpicture} 
\end{center}
where $\tm_1:=\varsplit{\evctxtwo}{\var}{\vartwo}\ctxholep{\tmp}\esub{\var}{\tmp}\esub{\vartwo}{\tmp}$.
            \end{itemize}
      \end{enumerate}
 
    \item \label{p:bs-es-ap} \caselight{Commutation with application $\tostructap$}.
      Two sub-cases: 
      \begin{enumerate}
      \item \caselight{The equivalence $\tostructap$ acts on a substitution internal to $\evctxtwo$}.
      This case goes as for Garbage collection.
 
      \item \caselight{The equivalence $\tostructap$ acts on $\esub{\var}{\tmp}$}. It must be the case
            that $\evctxtwo$ is of the form $\evctxthree\tmtwop$. Then:
            
            \begin{center}
 \begin{tikzpicture}[ocenter]
  \node (s) {\normalsize$(\evctxpthree{\var} \tmtwop)\esub{\var}{\tmp}$};
  \node at (s.center)  [below =\nodeVerDist](s2) {\normalsize$\tm_2$};
  \node at (s2.center) [right= 1.4*\nodeHorDist](t) {\normalsize$\tm_3$};
  \node at (s-|t) [](s1){\normalsize$\tm_1$};
  \draw[-o] (s) to node {\scriptsize $\lssym$} (s1);
  \node at (s.center)[below=\nodeVerDist/2](eq1){\normalsize$\tostructap$};
  \node at (s1.center)[below=\nodeVerDist/2](eq2){\normalsize$\tostructap$};
\draw[-o, dashed] (s2) to node {\scriptsize $\lssym$} (t);
\end{tikzpicture} 
\end{center}
where
\begin{align*}
	\tm_1 &:= (\evctxpthree{\tmp} \tmtwop)\esub{\var}{\tmp},\\
	\tm_2 &:= \evctxpthree{\var}\esub{\var}{\tmp} \tmtwop\esub{\var}{\tmp},\\
	\tm_3 &:= \evctxpthree{\tmp}\esub{\var}{\tmp} \tmtwop\esub{\var}{\tmp}.
\end{align*}

      \end{enumerate}

    \item \caselight{Commutation of independent substitutions $\tostructcom$}.
    Two sub-cases: 
      \begin{enumerate}
      \item \caselight{The equivalence $\tostructcom$ acts on two substitutions internal to $\evctxtwo$}.
      This case goes as for Garbage collection.
 
      \item \label{p:bc-es-comsc2}\caselight{The equivalence $\tostructcom$ acts on $\esub{\var}{\tmp}$}. It must be the case
            that $\evctxtwo$ is of the form $\evctxthree$. Then:
            
            \begin{center}
 \begin{tikzpicture}[ocenter]
  \node (s) {\normalsize$\evctxpthree{\var} \esub{\vartwo}{\tmtwop}\esub{\var}{\tmp}$};
  \node at (s.center)  [below =\nodeVerDist](s2) {\normalsize$\evctxpthree{\var}\esub{\var}{\tmp} \esub{\vartwo}{\tmtwop}$};
  \node at (s2.center) [right= 1.4*\nodeHorDist](t) {\normalsize$\evctxpthree{\tmp}\esub{\var}{\tmp} \esub{\vartwo}{\tmtwop}$};
  \node at (s-|t) [](s1){\normalsize$\evctxpthree{\tmp} \esub{\vartwo}{\tmtwop}\esub{\var}{\tmp}$};
  \draw[-o] (s) to node {\scriptsize $\lssym$} (s1);
  \node at (s.center)[below=\nodeVerDist/2](eq1){\normalsize$\tostructcom$};
  \node at (s1.center)[below=\nodeVerDist/2](eq2){\normalsize$\tostructcom$};
\draw[-o, dashed] (s2) to node {\scriptsize $\lssym$} (t);
\end{tikzpicture} 
\end{center}
      \end{enumerate}

    \item \caselight{Composition of substitutions $\tostructes$}. Two sub-cases:
    	\begin{enumerate}
				\item \caselight{The equivalence $\tostructes$ acts on two substitutions internal to $\evctxtwo$}. This case goes as for Garbage collection.
				
				\item \label{p:bc-es-sb2}\caselight{The equivalence $\tostructes$ acts on $\esub{\var}{\tmp}$}.
				Note that the equivalence $\tostructes$ cannot be applied from left to right to $\esub{\var}{\tmp}$, because $\evctxtwop{\var}$ must be of the form $\evctxpthree{\var}\esub{\vartwo}{\tmtwop}$ with $\var\notin \fv{\evctxpthree{\var}}$, which is clearly not possible. It can be applied from right to left. The diagram is:
				
				\begin{center}
\begin{tikzpicture}[ocenter]
  \node (s) {\normalsize$\evctxtwop\var\esub\var{\tmp\esub\vartwo\tmtwo}$};
  \node at (s.center)  [below =\nodeVerDist](s2) {\normalsize$\tm_4$};
  \node at (s2.center) [right= 1.5*\nodeHorDist](t) {\normalsize$\tm_2$};
  
  \node at (s-|t) [](s1){\normalsize$\tm_1$};

  \draw[-o] (s) to node {\scriptsize $\lssym$} (s2);
  \node at (s.center)[right=\nodeHorDist](eq1){\normalsize$\tostructap$};
\draw[-o, dashed] (s1) to node {\scriptsize $\lssym$} (t);

\node at (s2.center) [below =\nodeVerDist](s3){\normalsize$\tm_5$};

\node at (s3-|t) (t2) {\normalsize$\tm_3$};
  
\node at (s3.center) [below =\nodeVerDist](s4){\normalsize$\tm_6$};

\node at (s4-|t) (t3) {\normalsize$\tm_7$};

\node at (t.center)[below=\nodeVerDist/2](eq2){\normalsize$\tostructdup$};
\node at (t2.center)[below=\nodeVerDist/2](eq3){\normalsize$\tostructcom$};
\node at (s4.center)[right=\nodeHorDist](eq1){\normalsize$=_\alpha$};
\node at (s2.center)[below=\nodeVerDist/2](eq4){\normalsize$\eqstruct\mbox{ by \reflemma{eqstruct-and-ctx-name} } $};
\node at (s3.center)[below=\nodeVerDist/2](eq5){\normalsize$\tostructes$};
\end{tikzpicture} 
\end{center}
where
\begin{align*}
	\tm_1 &:= \evctxtwop\var \esub\var\tmp \esub\vartwo\tmtwo,\\
	\tm_2 &:= \evctxtwop\tmp \esub\var\tmp \esub\vartwo\tmtwo,\\
	\tm_3 &:= \evctxtwop{\tmp\isub\vartwo\varthree}  \esub\var\tmp \esub\varthree\tmtwo \esub\vartwo\tmtwo,\\
	\tm_4 &:= \evctxtwop{\tmp\esub\vartwo\tmtwo}\esub\var{\tmp\esub\vartwo\tmtwo},\\
	\tm_5 &:= \evctxtwop\tmp \esub\vartwo\tmtwo \esub\var{\tmp\esub\vartwo\tmtwo},\\
	\tm_6 &:= \evctxtwop\tmp \esub\vartwo\tmtwo \esub\var\tmp \esub\vartwo\tmtwo,\\
	\tm_7 &:= \evctxtwop{\tmp\isub\vartwo\varthree}  \esub\varthree\tmtwo \esub\var\tmp  \esub\vartwo\tmtwo.
\end{align*}
			\end{enumerate}
    \end{enumerate}

  \item \label{p:ic1}\casealt{Inductive case 1: left of an application $\evctx = \evctxtwo\tmfive$}
    The situation is:
    $$\tm = \tmp\tmfive \towhl_{a} \tmtwop\tmfive = \tmtwo$$
    for terms $\tmp, \tmtwop$ such that either $\tmp \towhldb \tmtwop$ or
    $\tmp \towhlls \tmtwop$. Two sub-cases:
    \begin{enumerate}
    \item \caselight{The $\tm \tostructsym \tmthree$ step is internal to $\tmp$}. The proof simply uses the \ih\ applied to the (strictly smaller) evaluation context of  the step $\tmp\towhl_{a}\tmtwop$.

    \item \label{p:ic1sc2}\caselight{The $\tm \tostructsym \tmthree$ step involves the topmost application}. The $\tostructsym$ step can only be a commutation with the root application.
          Moreover, for $\tmp\tmfive$ to match with the right-hand
          side of the $\tostructap$ rule,
          $\tmp$ must have the form $\tmthreep\esub{\var}{\tmfourp}$ and $\tmfive$ the form $\tmfivep\esub{\var}{\tmfourp}$, so that the $\tostructsym$ is:
          $$\tmthree=(\tmthreep\tmfivep)\esub{\var}{\tmfourp}\tostructap\tmthreep\esub{\var}{\tmfourp}\tmfivep\esub{\var}{\tmfourp}=\tm$$
          Three sub-cases:
          
          \begin{enumerate}
          \item \caselight{The rewriting step is internal to $\tmthreep$}. Then the two steps trivially commute. Let $a\in\set{\db,\lssym}$:
          \begin{center}
 \begin{tikzpicture}[ocenter]
  \node (s) {\normalsize$\tmthreep\esub{\var}{\tmfourp}\tmfivep\esub{\var}{\tmfourp}$};
  \node at (s.center)  [below =\nodeVerDist](s2) {\normalsize$(\tmthreep\tmfivep)\esub{\var}{\tmfourp}$};
  \node at (s2.center) [right= 1.4*\nodeHorDist](t) {\normalsize$(\tmthreepp\tmfivep)\esub{\var}{\tmfourp}$};
  \node at (s-|t) [](s1){\normalsize$\tmthreepp\esub{\var}{\tmfourp}\tmfivep\esub{\var}{\tmfourp}$};
  \draw[-o] (s) to node {\scriptsize $a$} (s1);
  \node at (s.center)[below=\nodeVerDist/2](eq1){\normalsize$\tostructap$};
  \node at (s1.center)[below=\nodeVerDist/2](eq2){\normalsize$\tostructap$};
\draw[-o, dashed] (s2) to node {\scriptsize $a$} (t);
\end{tikzpicture} 
\end{center}

\item \caselight{$\db$-step not internal to $\tmthreep$}. Exactly as the multiplicative root case \ref{p:str-bis-name-base-mul-ap} (read in the other direction).
          \end{enumerate}
          
\item \caselight{$\lssym$-step not internal to $\tmthreep$}. Not possible: the topmost constructor is an application, consequently any $\towhlls$ has to take place in $\tmthreep$.
    \end{enumerate}

  \item 
  \casealt{Inductive case 2: left of a substitution $\evctx = \evctxtwo\esub{\var}{\tmfive}$}
    The situation is:
    $$\tm = \tmp\esub{\var}{\tmfive} \towhl \tmtwop\esub{\var}{\tmfive} = \tmtwo$$
    with $\tmp=\evctxtwop{\tmpp}$. If the $\tostructsym$ step is internal to $\evctxtwop{\tmp}$,
    the proof we conclude using the \ih. Otherwise:
    \begin{enumerate}

    \item \caselight{Garbage Collection $\tostructgc$}. If the garbage collected substitution is $\esub{\var}{\tmfive}$ then:      
\begin{center}
 \begin{tikzpicture}[ocenter]
  \node (s) {\normalsize$\tmp\esub{\var}{\tmfive}$};
  \node at (s.center)  [below =\nodeVerDist](s2) {\normalsize$\tmp$};
  \node at (s2.center) [right= \nodeHorDist](t) {\normalsize$\tmtwop$};
  \node at (s-|t) [](s1){\normalsize$\tmtwop\esub{\var}{\tmfive}$};
  \draw[-o] (s) to node {} (s1);
  \node at (s.center)[below=\nodeVerDist/2](eq1){\normalsize$\tostructgc$};
  \node at (s1.center)[below=\nodeVerDist/2](eq2){\normalsize$\tostructgc$};
\draw[-o, dashed] (s2) to node {} (t);
\end{tikzpicture} 
\end{center}
If the substitution is introduced out of the blue, \ie\ $\tmp\esub{\var}{\tmfive}\tostructgc \tmp\esub{\var}{\tmfive}\esub\vartwo\tmfivep$ or $\tmp\esub{\var}{\tmfive}\tostructgc \tmp\esub\vartwo\tmfivep\esub{\var}{\tmfive}$ the diagram is analogous.

    \item \caselight{Duplication $\tostructdup$}. If the duplicated substitution is $\esub{\var}{\tmfive}$ then:      
\begin{center}
 \begin{tikzpicture}[ocenter]
  \node (s) {\normalsize$\tmp\esub{\var}{\tmfive}$};
  \node at (s.center)  [below =\nodeVerDist](s2) {\normalsize$\varsplit{\tmp}{\var}{\vartwo}\esub{\var}{\tmfive}\esub{\vartwo}{\tmfive}$};
  \node at (s2.center) [right= \nodeHorDist](t) {\normalsize$\varsplit{\tmtwop}{\var}{\vartwo}\esub{\var}{\tmfive}$};
  \node at (s-|t) [](s1){\normalsize$\tmtwop\esub{\var}{\tmfive}$};
  \draw[-o] (s) to node {} (s1);
  \node at (s.center)[below=\nodeVerDist/2](eq1){\normalsize$\tostructdup$};
  \node at (s1.center)[below=\nodeVerDist/2](eq2){\normalsize$\tostructdup$};
\draw[-o, dashed] (s2) to node {} (t);
\end{tikzpicture} 
\end{center}
If duplication is applied in the other direction, \ie\ $\tmp=\tmpp\esub\vartwo\tmfive$ and 
$$\tmp\esub{\var}{\tmfive}=\tmpp\esub\vartwo\tmfive\esub{\var}{\tmfive}\tostructdup \tmpp\isub\vartwo\var\esub{\var}{\tmfive}= \tmp\esub{\var}{\tmfive}$$
the interesting case is when $\tmpp=\evctxthreep\vartwo$ and the step is exponential:
\begin{center}
 \begin{tikzpicture}[ocenter]
  \node (s) {\normalsize$\evctxthreep\vartwo\esub\vartwo\tmfive\esub{\var}{\tmfive}$};
  \node at (s.center)  [below =\nodeVerDist](s2) {\normalsize$\evctxthreep\var\isub\vartwo\var \esub{\var}{\tmfive}$};
  \node at (s2.center) [right= \nodeHorDist](t) {\normalsize$\evctxthreep\tmfive\isub\vartwo\var \esub{\var}{\tmfive}$};
  \node at (s-|t) [](s1){\normalsize$\evctxthreep\tmfive\esub\vartwo\tmfive\esub{\var}{\tmfive}$};
  \draw[-o] (s) to node {$\lssym$} (s1);
  \node at (s.center)[below=\nodeVerDist/2](eq1){\normalsize$\tostructdup$};
  \node at (s1.center)[below=\nodeVerDist/2](eq2){\normalsize$\tostructdup$};
\draw[-o, dashed] (s2) to node {$\lssym$} (t);
\end{tikzpicture} 
\end{center}
If $\tmp$ is $\evctxthreep\var$ it is an already treated base case and if $\tmp$ has another form the rewriting step does not interact with the duplication, and so they simply commute.

    \item \caselight{Commutation with application $\tostructap$}. Then $\tmp=\tmpp\tmtwopp$. Three sub-cases: 
    
    \begin{enumerate}
    \item \label{p:ic2ap-sc1} \caselight{The $\towhl$ step is internal to $\tmpp$}. Then:
    \begin{center}
 \begin{tikzpicture}[ocenter]
  \node (s) {\normalsize$(\tmpp\tmtwopp)\esub\var\tmfive$};
  \node at (s.center)  [below =\nodeVerDist](s2) {\normalsize$\tmpp\esub\var\tmfive\tmtwopp\esub\var\tmfive$};
  \node at (s2.center) [right= \nodeHorDist](t) {\normalsize$\tmpp'\esub\var\tmfive\tmtwopp\esub\var\tmfive$};
  \node at (s-|t) [](s1){\normalsize$(\tmpp'\tmtwopp)\esub\var\tmfive$};
  \draw[-o] (s) to node {} (s1);
  \node at (s.center)[below=\nodeVerDist/2](eq1){\normalsize$\tostructap$};
  \node at (s1.center)[below=\nodeVerDist/2](eq2){\normalsize$\tostructap$};
\draw[-o, dashed] (s2) to node {} (t);
\end{tikzpicture} 
\end{center}

	\item \caselight{The $\towhl$ step is a multiplicative step}. If $\tmpp=\sctxp{\l\vartwo.\tmpp'}$ then it goes like the diagram of the multiplicative root case \ref{p:str-bis-name-base-mul-ap} (read in the other direction).

	\item \caselight{The $\towhl$ step is an exponential step}. Then it must be $\esub\var\tmfive$ that substitutes on the head variable, but this case has already been treated as a base case (case \ref{p:bs-es-ap}).
    \end{enumerate}


    \item \caselight{Commutation of independent substitutions $\tostructcom$}. It must be $\tmp=\tmpp\esub\vartwo\tmfivep$ with $\var\notin\fv{\tmfivep}$, so that $\tmpp\esub\vartwo\tmfivep\esub\var\tmfive\tostructcom \tmpp\esub\var\tmfive\esub\vartwo\tmfivep$. Three sub-cases:
    \begin{enumerate}
    \item \caselight{Reduction takes place in $\tmpp$}. Then reduction and the equivalence simply commute, as in case \ref{p:ic2ap-sc1}.
    \item \caselight{Exponential steps involving $\esub\var\tmfive$}. This is an already treated base case (case \ref{p:bc-es-comsc2}).
    \item \caselight{Exponential step involving $\esub\vartwo\tmfivep$}. This case is solved reading bottom-up the diagram of case \ref{p:bc-es-comsc2}.
    \end{enumerate}

    \item \caselight{Composition of substitutions $\tostructes$}. It must be $\tmp=\tmpp\esub\vartwo\tmfivep$ with $\var\notin\fv{\tmpp}$, so that $\tmpp\esub\vartwo\tmfivep\esub\var\tmfive\tostructes \tmpp\esub\var{\tmfive\esub\vartwo\tmfivep}$. Three sub-cases:
    \begin{enumerate}
    \item \caselight{Reduction takes place in $\tmpp$}. Then reduction and the equivalence simply commute, as in case \ref{p:ic2ap-sc1}.
    \item \caselight{Exponential steps involving $\esub\var\tmfive$}. This case is solved reading bottom-up the diagram of case \ref{p:bc-es-sb2}.
    \item \caselight{Exponential step involving $\esub\vartwo\tmfivep$}. Impossible, because by hypothesis $\var\notin\fv{\tmpp}$.
    \end{enumerate}
    \end{enumerate}

\end{enumerate}



\subsection{Proof of \refprop{strong-bis} ($\eqstruct$ is a Strong Bisimulation) for Left-to-Right Call-by-Value}
\renewcommand{\evctx}{\cbvctx}
\renewcommand{\evctxtwo}{\cbvctxtwo}
\renewcommand{\evctxthree}{\cbvctxthree}
\renewcommand{\evctxp}[1]{\cbvctxp{#1}}
\renewcommand{\evctxtwop}[1]{\cbvctxtwop{#1}}
\renewcommand{\evctxthreep}[1]{\cbvctxthreep{#1}}
\label{sect:ProofNonStrictCbVStrongBis}
\newcommand{\towhlcekannot}[1]{\stackrel{\mathtt{ns}}{\multimap}_{#1}}
\renewcommand{\towhlcekannot}[1]{\multimap_{#1}}

We follow the structure of the proof in \refsect{ProofCbNStrongBis} for call-by-name. Structural equivalence for call-by-value is defined exactly in the same way.

Before proving the main result, we need the following auxiliary lemmas, proved by straightforward inductions on the contexts.
\reflemma{nonstrict_cbv_eqstruct_preserves_shapes}.\ref{p:nonstrict_cbv_eqstruct_preserves_shapes-two} is the adaptation of \reflemma{cbname_eqstruct_preserves_shapes}
already stated for call-by-name:

\begin{lemma}
\label{l:nonstrict_cbv_eqstruct_preserves_shapes}
The equivalence relation $\tostruct$ preserves the ``shapes'' of $\sctxp{\val}$ and $\evctxp{\var}$. Formally:
\begin{enumerate}
  \item \label{p:nonstrict_cbv_eqstruct_preserves_shapes-one}If $\sctxp{\val} \eqstruct \tm$, then $\tm$ is of the form $\sctxtwop{\valtwo}$.
  \item \label{p:nonstrict_cbv_eqstruct_preserves_shapes-two}If $\evctxp{\var} \eqstruct \tm$, with $\var$ not bound by $\evctx$,
  then $\tm$ is of the form $\evctxtwop{\var}$, with $\var$ not bound by $\evctxtwo$.
\end{enumerate}
\end{lemma}

\begin{lemma}
\label{l:nonstrict_cbv_eqstruct_duplication}
$\sctxp{\tm\esub{\var}{\tmtwo}} \eqstruct \sctxp{\tm\esub{\var}{\sctxp{\tmtwo}}}$
\end{lemma}
\begin{proof}
By induction on $\sctx$. The base case is trivial.
For $\sctx = \sctxtwo\ctxhole\esub{\vartwo}{\tmthree}$, by
\ih\ we have:
$$
\sctxtwop{\tm\esub{\var}{\tmtwo}}\esub{\vartwo}{\tmthree}
\eqstruct
\sctxtwop{\tm\esub{\var}{\sctxtwop{\tmtwo}}}\esub{\vartwo}{\tmthree}
$$
Let $\varsplit{(\sctxtwop{\tmtwo})}{\vartwo}{\varthree}$ be the result
of replacing {\em all} occurrences of 
$\vartwo$ by $\varthree$ in $\sctxtwop{\tmtwo}$. Then:
$$
\begin{array}{ll}
&
\sctxtwop{\tm\esub{\var}{\sctxtwop{\tmtwo}}}\esub{\vartwo}{\tmthree}
\\
\tostructdup &
\sctxtwop{\tm\esub{\var}{\varsplit{(\sctxtwop{\tmtwo})}{\vartwo}{\varthree}}}\esub{\vartwo}{\tmthree}\esub{\varthree}{\tmthree}
\\
\tostructcom^* &
\sctxtwop{\tm\esub{\var}{\varsplit{(\sctxtwop{\tmtwo})}{\vartwo}{\varthree}}\esub{\varthree}{\tmthree}}\esub{\vartwo}{\tmthree}
\\
\tostructes &
\sctxtwop{\tm\esub{\var}{\varsplit{(\sctxtwop{\tmtwo})}{\vartwo}{\varthree}\esub{\varthree}{\tmthree}}}\esub{\vartwo}{\tmthree}
\\
=_{\alpha} &
\sctxtwop{\tm\esub{\var}{\sctxtwop{\tmtwo}\esub{\vartwo}{\tmthree}}}\esub{\vartwo}{\tmthree}
\end{array}
$$
\qed
\end{proof}

Now we prove the strong bisimulation property, by induction on $\togen$.

\begin{enumerate}
\item \casealt{Base case 1: multiplicative root step $
            \tm = \sctxp{\l\var.\tmp}\sctxtwop{\val}
            \rtodbv
            \tmtwo = \sctxp{\tmp\esub{\var}{\sctxtwop{\val}}}
            $}
    The nontrivial cases are when the $\tostructsym$ step overlaps
    the pattern of the $\dbv$-redex. Note that by \reflemma{nonstrict_cbv_eqstruct_preserves_shapes}.\ref{p:nonstrict_cbv_eqstruct_preserves_shapes-one},
    if the $\tostructsym$ is internal to $\sctxtwop{\val}$,
    the proof is direct, since the $\dbv$-redex is preserved. More precisely,
    if $\sctxtwop{\val} \tostructsym \sctxthreep{\valtwo}$, we have:

          $$
          \commutesdbvEEABCD{\tostructsym}{\tostructsym}{
            \sctxp{\l\var.\tmp}\,\sctxtwop{\val}
          }{
            \sctxp{\tmp\esub{\var}{\sctxtwop{\val}}}
          }{
            \sctxp{\l\var.\tmp}\,\sctxthreep{\valtwo}
          }{
            \sctxp{\tmp\esub{\var}{\sctxthreep{\valtwo}}}
          }
          $$
    
    Consider the remaining possibilities for $\tostructsym$:

    \begin{enumerate}

    \item \caselight{Garbage collection $\tostructgc$.}
          The garbage collected substitution must be in $\sctx$,
          \ie\ $\sctx$ must be of the form $\sctxOnep{\sctxTwo\esub{\vartwo}{\sctxthreep{\valtwo}}}$
          with $\vartwo \not\in \fv{\sctxTwop{\l\var.\tmp}}$.
          Let $\sctxal := \sctxOnep{\sctxTwo}$. Then:
          $$
          \commutesdbvEEABCD{\tostructgc}{\tostructgc}{
            \sctxp{\l\var.\tmp}\,\sctxtwop{\val}
          }{
            \sctxp{\tmp\esub{\var}{\sctxtwop{\val}}}
          }{
            \sctxpal{\l\var.\tmp}\,\sctxtwop{\val}
          }{
            \sctxpal{\tmp\esub{\var}{\sctxtwop{\val}}}
          }
          $$

    \item \caselight{Duplication $\tostructdup$}.
        The duplicated substitution must be in $\sctx$,
        \ie\ $\sctx$ must be of the form $\sctxOnep{\sctxTwo\esub{\vartwo}{\tmtwop}}$.
        Let $\sctxal := \sctxOnep{\ctxhole\esub{\vartwo}{\tmtwop}\esub{\varthree}{\tmtwop}}$.
        Then:
        $$
        \commutesdbvEEABCD{\tostructdup}{\tostructdup}{
            \sctxp{\l\var.\tmp}\,\sctxtwop{\val}
        }{
            \sctxp{\tmp\esub{\var}{\sctxtwop{\val}}}
        }{
            \sctxpal{\varsplit{(\sctxTwop{\l\var.\tmp})}{\vartwo}{\varthree}}\,\sctxtwop{\val}
        }{
            \qquad\tm_1
        }
        $$
        where $\tm_1:=\sctxpal{\varsplit{(\sctxTwop{\tmp\esub{\var}{ \sctxtwop{\val} }})}{\vartwo}{\varthree}}$.

    \item \label{p:str-bis-nscbv-base-mul-ap}\caselight{Commutation with application $\tostructap$}.
        The axiom can be applied only in one direction and there must be the same explicit substitution $\esub{\vartwo}{\tmfive}$ as topmost constructor of each of the two sides of the application. The diagram is:
        
        $$
          \commutesdbvEEABCD{\tostructap}{\eqstruct}{
              \sctxp{\l\var.\tmp}\esub{\var}{\tmfive}\,\sctxtwop{\val}\esub{\vartwo}{\tmfive}
          }{
              \tm_1
          }{
              (\sctxp{\l\var.\tmp}\,\sctxtwop{\val})\esub{\vartwo}{\tmfive}
          }{
              \tm_2
          }
          $$
          where
          \begin{align*}
          	\tm_1 &:= \sctxp{\tmp\esub{\var}{ \sctxtwop{\val}\esub{\vartwo}{\tmfive} }}\esub{\var}{\tmfive},\\
          	\tm_2 &:= \sctxp{\tmp\esub{\var}{\sctxtwop{\val}}}\esub{\vartwo}{\tmfive}.
          \end{align*}

          To prove the equivalence on the right, let 
          $\varsplit{\sctxtwop{\val}}{\var}{\varthree}$
          denote the result of replacing {\em all} occurrences of $\var$ by
          a fresh variable $\varthree$ in $\sctxtwop{\val}$.
          The equivalence holds because:
          $$
          \begin{array}{ll}
          &
              \sctxp{\tmpp\esub{\vartwo}{\sctxtwop{\val}}}\esub{\var}{\tmfive}
          \\
          \tostructdup &
              \sctxp{\tmpp\esub{\vartwo}{  \varsplit{\sctxtwop{\val}}{\var}{\varthree}  }}\esub{\var}{\tmfive}\esub{\varthree}{\tmfive}
          \\
          \tostructcom^* &
              \sctxp{\tmpp\esub{\vartwo}{  \varsplit{\sctxtwop{\val}}{\var}{\varthree}  }\esub{\varthree}{\tmfive}}\esub{\var}{\tmfive}
          \\
          \tostructes &
              \sctxp{\tmpp\esub{\vartwo}{  \varsplit{\sctxtwop{\val}}{\var}{\varthree}\esub{\varthree}{\tmfive}  }}\esub{\var}{\tmfive}
          \\
          =_{\alpha} &
              \sctxp{\tmpp\esub{\vartwo}{  \sctxtwop{\val}\esub{\var}{\tmfive}  }}\esub{\var}{\tmfive}
          \end{array}
          $$

    \item \caselight{Commutation of independent substitutions $\tostructcom$}.
        The commutation of substitutions must be in $\sctx$,
        \ie\ $\sctx$ must be of the form $\sctxOnep{\sctxTwo\esub{\vartwo}{\tmtwop}\esub{\varthree}{\tmthreep}}$
        with $\varthree \not\in \fv{\tmtwop}$.
        Let $\sctxal := \sctxOnep{\sctxTwo\esub{\varthree}{\tmthreep}\esub{\vartwo}{\tmtwop}}$.
        Then:
        $$
        \commutesdbvEEABCD{\tostructcom}{\tostructcom}{
            \sctxp{\l\var.\tmp}\,\sctxtwop{\val}
        }{
            \sctxp{\tmp\esub{\var}{\sctxtwop{\val}}}
        }{
            \sctxpal{\l\var.\tmp}\,\sctxtwop{\val}
        }{
            \sctxpal{\tmp\esub{\var}{\sctxtwop{\val}}}
        }
        $$

    \item \caselight{Composition of substitutions $\tostructes$}.
        The composition of substitutions must be in $\sctx$,
        \ie\ $\sctx$ must be of the form $\sctxOnep{\sctxTwo\esub{\vartwo}{\tmtwop}\esub{\varthree}{\tmthreep}}$
        with $\varthree \not\in \fv{\sctxTwop{\l\var.\tmp}}$.
        Let $\sctxal := \sctxOnep{\sctxTwo\esub{\vartwo}{\tmtwop\esub{\varthree}{\tmthreep}}}$.
        As in the previous case:
        $$
        \commutesdbvEEABCD{\tostructes}{\tostructes}{
            \sctxp{\l\var.\tmp}\,\sctxtwop{\val}
        }{
            \sctxp{\tmp\esub{\var}{\sctxtwop{\val}}}
        }{
            \sctxpal{\l\var.\tmp}\,\sctxtwop{\val}
        }{
            \sctxpal{\tmp\esub{\var}{\sctxtwop{\val}}}
        }
        $$
    \end{enumerate}

  \item \casealt{Base case 2: exponential root step $
            \tm = \evctxp{\var}\esub{\var}{\sctxp{\val}}
            \rtolsv
            \tmtwo = \sctxp{\evctxp{\val}\esub{\var}{\val}}
            $}
    Consider first the case when the $\tostructsym$-redex is internal
    to $\evctxp{\var}$. By \reflemma{nonstrict_cbv_eqstruct_preserves_shapes}.\ref{p:nonstrict_cbv_eqstruct_preserves_shapes-two}
    we know $\tostructsym$ preserves the shape of $\evctxp{\var}$,
    \ie\ $\evctxp{\var} \tostructsym \evctxpal{\var}$. Then:
    $$
    \commuteslsvEEABCD{\tostructsym}{\eqstruct}{
        \evctxp{\var}\esub{\var}{\sctxp{\val}}
    }{
        \sctxp{\evctxp{\val}\esub{\var}{\val}}
    }{
        \evctxpal{\var}\esub{\var}{\sctxp{\val}}
    }{
        \sctxp{\evctxpal{\val}\esub{\var}{\val}}
    }
    $$
    If the $\tostructsym$-redex is internal to one of the substitutions in $\sctx$, the proof is straightforward.
    Note that the $\tostructsym$-redex has always a substitution
    at the root. The remaining possibilities are such that 
    substitution is in $\sctx$, or that it is precisely $\esub{\var}{\sctxp{\val}}$.
    Axiom by axiom:
    
    \begin{enumerate}

    \item \caselight{Garbage collection $\tostructgc$.}
        If the garbage collected substitution is in $\sctx$,
        let $\sctxal$ be $\sctx$ without such substitution.
        Then:
        $$
        \commuteslsvEEABCD{\tostructgc}{\tostructgc}{
            \evctxp{\var}\esub{\var}{\sctxp{\val}}
        }{
            \sctxp{\evctxp{\val}\esub{\var}{\val}}
        }{
            \evctxp{\var}\esub{\var}{\sctxpal{\val}}
        }{
            \sctxpal{\evctxp{\val}\esub{\var}{\val}}
        }
        $$
        The garbage collected substitution cannot be $\esub{\var}{\sctxp{\val}}$,
        since this would imply $\var \not\in \fv{\evctxp{\var}}$,
        which is a contradiction.

    \item \caselight{Duplication $\tostructdup$}.
        If the duplicated substitution is in $\sctx$,
        then $\sctx$ is of the form $\sctxOnep{\sctxTwo\esub{\vartwo}{\tmp}}$.
        Let $\sctxal = \sctxOnep{\esub{\vartwo}{\tmp}\esub{\varthree}{\tmp}}$.
        Then:
        $$
        \commuteslsvEEABCD{\tostructdup}{\tostructdup}{
            \evctxp{\var}\esub{\var}{\sctxp{\val}}
        }{
            \sctxp{\evctxp{\val}\esub{\var}{\val}}
        }{
            \tm_1\quad
        }{
            \qquad\tm_2
        }
        $$
        where
        \begin{align*}
        	\tm_1 &:= \evctxp{\var}\esub{\var}{\sctxpal{\sctxvsTwop{\vartwo}{\varthree}{\varsplit{\val}{\vartwo}{\varthree}}}},\\
        	\tm_2 &:= \sctxpal{\sctxvsTwop{\vartwo}{\varthree}{
            \evctxp{ \varsplit{\val}{\vartwo}{\varthree} }\esub{\var}{ \varsplit{\val}{\vartwo}{\varthree} }
            }}.
        \end{align*}
        If the duplicated substitution is $\esub{\var}{\sctxp{\val}}$, there
        are two possibilities, depending on whether the occurrence of $\var$
        substituted by the $\rtolsv$ step is replaced by the fresh variable
        $\vartwo$, or left untouched. If it is not replaced:
            $$
            \begin{tikzpicture}[ocenter]
             \node (s) {\normalsize\ensuremath{ \evctxp{\var}\esub{\var}{\sctxp{\val}} }};
             \node at (s.center)  [right=2*\nodeHorDist](s1){\normalsize\ensuremath{\sctxp{\evctxp{\val}\esub{\var}{\val}}}};
             \node at (s1.center) [below=\nodeVerDist](t) {\normalsize\ensuremath{\tm_2}};
             \node at (t.center) [below=\nodeVerDist](s4) {\normalsize\ensuremath{ 
             \tm_3}};
             \node at (s4-|s) [](s3) {\normalsize\ensuremath{\tm_4}};

             \node at (s.center)  [below=\nodeVerDist](eq1){\normalsize\ensuremath{\tostructdup}};
             \node at (s1.center) [below=\nodeVerDist/2](eq2){\normalsize\ensuremath{\tostructdup}};
             \node at (t.center) [below=0.4*\nodeVerDist](eq2){\normalsize\ensuremath{\eqstruct \text{ (\reflemma{nonstrict_cbv_eqstruct_duplication})}}};
             \draw[-o] (s) to node {\lsvsym} (s1);
             \draw[-o, dashed] (s3) to node {\lsvsym} (s4);
            \end{tikzpicture} 
            $$
            where
            \begin{align*}
			\tm_2 &:=  
             \sctxp{\varsplit{(\evctxp{\val})}{\var}{\vartwo}\esub{\var}{\val}\esub{\vartwo}{\val}},\\
			\tm_3 &:= \sctxp{\varsplit{(\evctxp{\val})}{\var}{\vartwo}\esub{\var}{\val}\esub{\vartwo}{\sctxp{\val}}},\\
			\tm_4 &:=  \varsplit{(\evctxp{\var})}{\var}{\vartwo}\esub{\var}{\sctxp{\val}}\esub{\vartwo}{\sctxp{\val}}.
            \end{align*}

        If the occurrence of $\var$ substituted by the $\rtolsv$ step
        is replaced by the fresh variable $\vartwo$, the situation is essentially analogous.

    \item \caselight{Commutation with application $\tostructap$}.
        The only possibility is that the substitution $\esub{\var}{\sctxp{\val}}$
        is commuted with the outermost application in $\evctxp{\var}$.
        Two cases:
        \begin{enumerate} 
        \item \caselight{The substitution acts on the left of the application, \ie\ $\evctx = \evctxtwo\tmp$.}
            $$
            \begin{tikzpicture}[ocenter]
             \node (s) {\normalsize\ensuremath{ (\evctxtwop{\var}\,\tmp)\esub{\var}{\sctxp{\val}} }};
             \node at (s.center)  [right=1.7*\nodeHorDist](s1){\normalsize\ensuremath{\tm_1}};
             \node at (s1.center) [below=\nodeVerDist](t) {\normalsize\ensuremath{ 
             \tm_2
             }};
             \node at (s.center)  [below=2*\nodeVerDist](s3) {\normalsize\ensuremath{ \tm_3 }};
             \node at (s3.center) [right=1.7*\nodeHorDist](s4) {\normalsize\ensuremath{ \tm_4 }};
             \node at (s.center)  [below=\nodeVerDist](eq1){\normalsize\ensuremath{\tostructap}};
             \node at (s1.center) [below=\nodeVerDist/2](eq2){\normalsize\ensuremath{\tostructap^*}};
             \node at (s1.center) [below=3*\nodeVerDist/2](eq2){\normalsize\ensuremath{\tostructes^*}};
             \draw[-o] (s) to node {\lsvsym} (s1);
             \draw[-o, dashed] (s3) to node {\lsvsym} (s4);
            \end{tikzpicture} 
            $$
            where
            \begin{align*}
				\tm_1 &:= \sctxp{(\evctxtwop{\val}\,\tmp)\esub{\var}{\val}},\\
				\tm_2 &:= \sctxp{\evctxtwop{\val}\esub{\var}{\val}}\sctxp{\tmp\esub{\var}{\val}},\\
				\tm_3 &:= \evctxtwop{\var}\esub{\var}{\sctxp{\val}}\tmp\esub{\var}{\sctxp{\val}},\\
				\tm_4 &:= \sctxp{\evctxtwop{\val}\esub{\var}{\val}}\tmp\esub{\var}{\sctxp{\val}}.
            \end{align*}

        \item \caselight{The substitution acts on the right of the application, \ie\ $\evctx = \sctxtwop{\valtwo}\evctxtwo$.}
            Similar to the previous case:
            $$
            \begin{tikzpicture}[ocenter]
             \node (s) {\normalsize\ensuremath{ (\sctxtwop{\valtwo}\,\evctxtwop{\var})\esub{\var}{\sctxp{\val}} }};
             \node at (s.center)  [right=1.7*\nodeHorDist](s1){\normalsize\ensuremath{ \tm_1 }};
             \node at (s2.center) [right=1.7*\nodeHorDist](t) {\normalsize\ensuremath{
                \tm_2
             }};
             \node at (s.center)  [below=2*\nodeVerDist](s3) {\normalsize\ensuremath{
                \tm_3
             }};
             \node at (s3.center) [right=1.7*\nodeHorDist](s4) {\normalsize\ensuremath{
                \tm_4
             }};
             \node at (s.center)  [below=\nodeVerDist](eq1){\normalsize\ensuremath{\tostructap}};
             \node at (s1.center) [below=\nodeVerDist/2](eq2){\normalsize\ensuremath{\tostructap^*}};
             \node at (s1.center) [below=3*\nodeVerDist/2](eq2){\normalsize\ensuremath{\tostructes^*}};
             \draw[-o] (s) to node {\lsvsym} (s1);
             \draw[-o, dashed] (s3) to node {\lsvsym} (s4);
            \end{tikzpicture} 
            $$
        \end{enumerate}
        where
        \begin{align*}
        	\tm_1 &:= \sctxp{(\sctxtwop{\valtwo}\,\evctxtwop{\val})\esub{\var}{\val}},\\
        	\tm_2 &:= \sctxp{\sctxtwop{\valtwo}\esub{\var}{\val}}\sctxp{\evctxtwop{\val}\esub{\var}{\val}},\\
        	\tm_3 &:= \sctxtwop{\valtwo}\esub{\var}{\sctxp{\val}}\evctxtwop{\var}\esub{\var}{\sctxp{\val}},\\
        	\tm_4 &:= \sctxtwop{\valtwo}\esub{\var}{\sctxp{\val}}\sctxp{\evctxtwop{\val}\esub{\var}{\val}}.
        \end{align*}

    \item \caselight{Commutation of independent substitutions $\tostructcom$}.
        If the commuted substitutions both belong to $\sctx$,
        let $\sctxal$ be the result of commuting them,
        and the situation is exactly as for Garbage collection.

        The remaining possibility is that 
        $\evctx = \evctxtwo\esub{\vartwo}{\tmp}$ and $\esub{\var}{\sctxp{\val}}$ commutes with $\esub{\vartwo}{\tmp}$ (which implies $\var \not\in \fv{\tmp}$). Then:
        $$
        \commuteslsvEEABCD{\tostructcom}{\tostructcom^*}{
            \evctxtwop{\var}\esub{\vartwo}{\tmp}\esub{\var}{\sctxp{\val}}
        }{
            \sctxp{\evctxtwop{\val}\esub{\vartwo}{\tmp}\esub{\var}{\val}}
        }{
            \evctxtwop{\var}\esub{\var}{\sctxp{\val}}\esub{\vartwo}{\tmp}
        }{
            \sctxp{\evctxtwop{\val}\esub{\var}{\val}}\esub{\vartwo}{\tmp}
        }
        $$

    \item \caselight{Composition of substitutions $\tostructes$}.
        If the composed substitutions both belong to $\sctx$,
        let $\sctxal$ be the result of composing them,
        and the situation is exactly as for Garbage collection.

        The remaining possibility is that $\esub{\var}{\sctxp{\val}}$ is
        the outermost substitution composed by $\tostructes$.
        This is not possible if the rule is applied from left to right,
        since it would imply that
        $\evctxp\var = \evctxtwop\var\esub{\vartwo}{\tmp}$ with
        $\var \not\in \evctxtwop{\var}$, which is a contradiction.

        Finally, if the $\tostructes$ rule is applied from right to left,
        $\sctx$ is of the form $\sctxtwo\esub{\vartwo}{\tmp}$ and:
        $$
        \commuteslsvEEABCD{\tostructes}{=}{
            \evctxp{\var}\esub{\var}{\sctxtwop{\val}\esub{\vartwo}{\tmp}}
        }{
            \sctxtwop{\evctxp{\val}\esub{\var}{\val}}\esub{\vartwo}{\tmp}
        }{
            \evctxp{\var}\esub{\var}{\sctxtwop{\val}}\esub{\vartwo}{\tmp}
        }{
            \sctxtwop{\evctxp{\var}\esub{\var}{\val}}\esub{\vartwo}{\tmp}
        }
        $$
    \end{enumerate}
  
\item \casealt{Inductive case 1: left of an application $\evctx = \evctxtwo\tmfive$}
  The situation is:
  $$\tm = \evctxtwop{\tmp}\,\tmfive \towhlcek \evctxtwop{\tmtwop}\,\tmfive = \tmtwo$$
  If the $\tostructsym$ step is internal to $\evctxtwop{\tmp}$, the result follows
  by \ih. The proof is also direct if $\tostructsym$ is internal to $\tmfive$.
  The nontrivial case is when the $\tostructsym$ step overlaps $\evctxtwop{\tmp}$ and $\tmfive$.
  There are two possibilities. The first is trivial: $\tostructgc$ is used to introduce a substitution out of the blue, but this case clearly commutes with reduction.

  The second is that the application is commuted with a substitution
  via the $\tostructap$ rule (applied from right to left). There are two cases:
  \begin{enumerate}
  \item \caselight{The substitution comes from $\tmp$.}
        That is, $\evctxtwo = \ctxhole$ and $\tmp$ has a substitution at its root.
        Then $\tmp$ must be a $\rtolsv$-redex
        $\tmp = \evctxthreep{\var}\esub{\var}{\sctxp{\val}}$.
        Moreover $\tmfive = \tmfivep\esub{\var}{\sctxp{\val}}$.
        We have:
        $$
        \commuteslsvEEABCD{\tostructap}{\eqstruct}{
            \evctxthreep{\var}\esub{\var}{\sctxp{\val}}\,\tmfivep\esub{\var}{\sctxp{\val}}
        }{
            \tm_1
        }{
            \tm_2
        }{
            \tm_3
        }
        $$
        where
        \begin{align*}
        	\tm_1 &:= \sctxp{\evctxthreep{\val}\esub{\var}{\val}}\,\tmfivep\esub{\var}{\sctxp{\val}},\\
        	\tm_2 &:= (\evctxthreep{\var}\,\tmfivep)\esub{\var}{\sctxp{\val}},\\
        	\tm_3 &:= \sctxp{(\evctxthreep{\val}\,\tmfivep)\esub{\var}{\val}}.
        \end{align*}

        For the equivalence on the right note that:
        $$
        \begin{array}{ll}
        &
        \sctxp{\evctxthreep{\val}\esub{\var}{\val}}\,\tmfivep\esub{\var}{\sctxp{\val}}
        \\
        \tostructes^* &
        \sctxp{\evctxthreep{\val}\esub{\var}{\val}}\,\sctxp{\tmfivep\esub{\var}{\val}}
        \\
        \tostructap^* &
        \sctxp{\evctxthreep{\val}\esub{\var}{\val}\,\tmfivep\esub{\var}{\val}}
        \\
        \tostructap &
        \sctxp{(\evctxthreep{\val}\,\tmfivep)\esub{\var}{\val}}
        
        \end{array}
        $$

        
  \item \caselight{The substitution comes from $\evctxtwo$.}
        That is: $\evctxtwo = \evctxthree\esub{\var}{\tmthreep}$.
        Moreover, $\tmfive = \tmfivep\esub{\var}{\tmthreep}$.
        The proof is then straightforward:
        $$
        \commutesredEEABCD{\tostructap}{\tostructap}{
            \evctxthreep{\tmp}\esub{\var}{\tmthreep}\,\tmfivep\esub{\var}{\tmthreep}
        }{
            \tm_1
        }{
            \tm_2
        }{
            \tm_3
        }
        $$
        where
        \begin{align*}
        	\tm_1 &:= \evctxthreep{\tmtwop}\esub{\var}{\tmthreep}\,\tmfivep\esub{\var}{\tmthreep},\\
        	\tm_2 &:= (\evctxthreep{\tmp}\,\tmfivep)\esub{\var}{\tmthreep},\\
        	\tm_3 &:= (\evctxthreep{\tmtwop}\,\tmfivep)\esub{\var}{\tmthreep}.
        \end{align*}

  \end{enumerate}

\item \casealt{Inductive case 2: right of an application $\evctx = \sctxp{\val}\evctxtwo$}
  The situation is:
  $$\tm = \sctxp{\val}\,\evctxtwop{\tmp} \towhlcek \sctxp{\val}\,\evctxtwop{\tmtwop} = \tmtwo$$
  Reasoning as in the previous case (\caselight{left of an application}),
  if the $\tostructsym$ step is internal to $\evctxtwop{\tmp}$, the result follows
  by \ih, and if it is internal to $\sctxp{\val}$, it is straightforward to close
  the diagram by resorting to the fact that $\eqstruct$ preserves the shape of
  $\sctxp{\val}$ (\reflemma{nonstrict_cbv_eqstruct_preserves_shapes}).

  The remaining possibility is that the $\tostructsym$ step overlaps
  both $\sctxp{\val}$ and $\evctxtwop{\tmp}$. As in the previous case,
  this can only be possible
  if $\tostructgc$ introduces a substitution out of the blue, which is a trivial case, or because of
  a \caselight{Commutation with application} rule ($\tostructap$, from right to left).
  This again leaves two possibilities:
  \begin{enumerate}
  \item \caselight{The substitution comes from $\tmp$.}
  That is, $\evctxtwo = \ctxhole$ and $\tmp$ is a $\rtolsv$-redex $\tmp = \evctxthreep{\vartwo}\esub{\vartwo}{\sctxtwop{\valtwo}}$.
  Moreover, $\sctx = \sctxthree\esub{\vartwo}{\sctxtwop{\valtwo}}$. Then: 

        $$
        \commuteslsvEEABCD{\tostructap}{\eqstruct}{
            \sctxthreep{\val}\esub{\vartwo}{\sctxtwop{\valtwo}}\,
            \evctxthreep{\vartwo}\esub{\vartwo}{\sctxtwop{\valtwo}}
        }{
            \tm_1
        }{
            \tm_2
        }{
            \tm_3
        }
        $$
        where
        \begin{align*}
        	\tm_1 &:= \sctxthreep{\val}\esub{\vartwo}{\sctxtwop{\valtwo}}\,
            \sctxtwop{\evctxthreep{\valtwo}\esub{\vartwo}{\valtwo}},\\
         \tm_2 &:= (\sctxthreep{\val}\,
            \evctxthreep{\vartwo})\esub{\vartwo}{\sctxtwop{\valtwo}},\\
            \tm_3 &:= \sctxtwop{(\sctxthreep{\val}\,
            \evctxthreep{\valtwo})\esub{\vartwo}{\valtwo}}.
        \end{align*}

        Exactly as in the previous case, for the equivalence on the right consider:
        $$
        \begin{array}{ll}
        &   
            \sctxthreep{\val}\esub{\vartwo}{\sctxtwop{\valtwo}}\,
            \sctxtwop{\evctxthreep{\valtwo}\esub{\vartwo}{\valtwo}}
        \\
        \tostructes^* &
            \sctxtwop{\sctxthreep{\val}\esub{\vartwo}{\valtwo}}\,
            \sctxtwop{\evctxthreep{\valtwo}\esub{\vartwo}{\valtwo}}
        \\
        \tostructap^* &
            \sctxtwop{\sctxthreep{\val}\esub{\vartwo}{\valtwo}\,
                      \evctxthreep{\valtwo}\esub{\vartwo}{\valtwo}}
        \\
        \tostructap &
            \sctxtwop{(\sctxthreep{\val}\,\evctxthreep{\valtwo})\esub{\vartwo}{\valtwo}}
        \\
        \end{array}
        $$
  \item \caselight{The substitution comes from $\evctxtwo$.}
        That is, $\evctxtwo = \evctxthree\esub{\var}{\tmthreep}$.
        Moreover, $\sctx = \sctxtwo\esub{\var}{\tmthreep}$.
        This case is then straightforward:
        $$
        \commutesredEEABCD{\tostructap}{\tostructap}{
            \sctxtwop{\val}\esub{\var}{\tmthreep}\,\evctxthreep{\tmp}\esub{\var}{\tmthreep}
        }{
            \sctxtwop{\val}\esub{\var}{\tmthreep}\,\evctxthreep{\tmtwop}\esub{\var}{\tmthreep}
        }{
            (\sctxtwop{\val}\,\evctxthreep{\tmp})\esub{\var}{\tmthreep}
        }{
            (\sctxtwop{\val}\,\evctxthreep{\tmtwop})\esub{\var}{\tmthreep}
        }
        $$
  \end{enumerate}

\item \casealt{Inductive case 3: left of a substitution $\evctx = \evctxtwo\esub{\var}{\tmfive}$}
  The situation is:
  $$\tm = \evctxtwop{\tmp}\esub{\var}{\tmfive} \towhlcek \evctxtwop{\tmtwop}\esub{\var}{\tmfive} = \tmtwo$$
  If the $\tostructsym$ step is internal to $\evctxtwop{\tmp}$,
  the result follows by \ih. If it is internal to $\tmfive$, the steps are
  orthogonal, which makes the diagram trivial.
  If the equivalence $\tostructgc$  introduces a substitution out of the blue the steps trivially commute.

  The remaining possibility is that the substitution $\esub{\var}{\tmfive}$
  is involved in the $\tostructsym$ redex.
  By case analysis on the kind of the step $\eqstruct_b$:
  \begin{enumerate}
  \item \caselight{Garbage collection $\tostructgc$.}
    We know $\var \not\in \fv{\evctxtwop{\tmp}}$ and therefore
    also $\var \not\in \fv{\evctxtwop{\tmtwop}}$. We get:
    $$
    \commutesredEEABCD{\tostructgc}{\tostructgc}{
        \evctxtwop{\tmp}\esub{\var}{\tmfive}
    }{
        \evctxtwop{\tmtwop}\esub{\var}{\tmfive}
    }{
        \evctxtwop{\tmp}
    }{
        \evctxtwop{\tmtwop}
    }
    $$
  \item \caselight{Duplication $\tostructdup$}.
    The important fact is that if $\evctxtwop{\tmp} \towhlcek \evctxtwop{\tmtwop}$
    and $\varsplit{\evctxtwop{\tmp}}{\var}{\vartwo}$ denotes the result
    of renaming some (arbitrary) occurrences of $\var$ by $\vartwo$ in $\evctxtwop{\tmp}$,
    then $\varsplit{\evctxtwop{\tmp}}{\var}{\vartwo} \towhlcek \varsplit{\evctxtwop{\tmtwop}}{\var}{\vartwo}$,
    where $\varsplit{\evctxtwop{\tmtwop}}{\var}{\vartwo}$ denotes the result
    of renaming some occurrences of $\var$ by $\vartwo$ in $\evctxtwop{\tmtwop}$.
    By this we conclude:
    $$
    \commutesredEEABCD{\tostructdup}{\tostructdup}{
        \evctxtwop{\tmp}\esub{\var}{\tmfive}
    }{
        \evctxtwop{\tmtwop}\esub{\var}{\tmfive}
    }{
        \varsplit{(\evctxtwop{\tmp})}{\var}{\vartwo}\esub{\var}{\tmfive}\esub{\vartwo}{\tmfive}
    }{
        \varsplit{(\evctxtwop{\tmtwop})}{\var}{\vartwo}\esub{\var}{\tmfive}\esub{\vartwo}{\tmfive}
    }
    $$
  \item \caselight{Commutation with application $\tostructap$}.
    $\evctxtwop{\tmp}$ must be an application. This allows for three
    possibilities:
    \begin{enumerate}
    \item \caselight{The application comes from $\tmp$.}
          That is, $\evctxtwo = \ctxhole$ and $\tmp$ is a $\rtodbv$-redex
          $\tmp = \sctxp{\l\vartwo.\tmpp}\,\sctxtwop{\val}$. The diagram is exactly as for the multiplicative base case \ref{p:str-bis-nscbv-base-mul-ap} (read bottom-up).

    \item \caselight{The application comes from $\evctxtwo$, left case.}
        That is, $\evctxtwo = \evctxthree\,\tmthreep$. This case is direct:
          $$
          \commutesredEEABCD{\tostructap}{\tostructap}{
            (\evctxthreep{\tmp}\,\tmthreep)\esub{\var}{\tmfive}
          }{
            \tm_1
          }{
            \tm_2
          }{
            \tm_3
          }
          $$
          where
          \begin{align*}
          	\tm_1 &:= (\evctxthreep{\tmtwop}\,\tmthreep)\esub{\var}{\tmfive},\\
          	\tm_2 &:= \evctxthreep{\tmp}\esub{\var}{\tmfive}\,\tmthreep\esub{\var}{\tmfive},\\
          	\tm_3 &:= \evctxthreep{\tmtwop}\esub{\var}{\tmfive}\,\tmthreep\esub{\var}{\tmfive}.
          \end{align*}

    \item \caselight{The application comes from $\evctxtwo$, right case.}
         That is, $\evctxtwo = \sctxp{\val}\,\evctxthree$.
         Analogous to the previous case.
    \end{enumerate}

  \item \caselight{Commutation of independent substitutions $\tostructcom$}.
	Since $\evctxtwop{\tmp}$ must have a substitution at the root,
    there are two possibilities:
    \begin{enumerate}
    \item \caselight{The substitution comes from $\tmp$.}
          That is, $\evctxtwo = \ctxhole$ and $\tmp$ is a $\rtolsv$-redex
          $\tmp = \evctxthreep{\vartwo}\esub{\vartwo}{\sctxp{\val}}$,
          with $\var \not\in \fv{\sctxp{\val}}$. Then:
          $$
          \commuteslsvEEABCD{\tostructcom}{\tostructcom^*}{
			\evctxthreep{\vartwo}\esub{\vartwo}{\sctxp{\val}}\esub{\var}{\tmfive}
          }{
			\sctxp{\evctxthreep{\val}\esub{\vartwo}{\val}}\esub{\var}{\tmfive}
          }{
			\evctxthreep{\vartwo}\esub{\var}{\tmfive}\esub{\vartwo}{\sctxp{\val}}
          }{
			\sctxp{\evctxthreep{\val}\esub{\var}{\tmfive}\esub{\vartwo}{\val}}
          }
          $$
    \item \caselight{The substitution comes from $\evctxtwo$.}
		That is, $\evctxtwo = \evctxthree\esub{\vartwo}{\tmthreep}$
		with $\var \not\in \fv{\tmthreep}$. This case is
		direct:
          $$
          \commuteslsvEEABCD{\tostructcom}{\tostructcom}{
			\evctxthreep{\tmp}\esub{\vartwo}{\tmthreep}\esub{\var}{\tmfive}
          }{
			\evctxthreep{\tmtwop}\esub{\vartwo}{\tmthreep}\esub{\var}{\tmfive}
          }{
			\evctxthreep{\tmp}\esub{\var}{\tmfive}\esub{\vartwo}{\tmthreep}
          }{
			\evctxthreep{\tmtwop}\esub{\var}{\tmfive}\esub{\vartwo}{\tmthreep}
          }
          $$
    \end{enumerate}
  \item \caselight{Composition of substitutions $\tostructes$}.
	As in the previous case, there are two possibilities:
    \begin{enumerate}
    \item \caselight{The substitution comes from $\tmp$.}
          That is, $\evctxtwo = \ctxhole$ and $\tmp$ is a $\rtolsv$-redex
          $\tmp = \evctxthreep{\vartwo}\esub{\vartwo}{\sctxp{\val}}$,
          with $\var \not\in \fv{\evctxthreep{\vartwo}}$. Then:
          $$
          \commuteslsvEEABCD{\tostructes}{=}{
			\evctxthreep{\vartwo}\esub{\vartwo}{\sctxp{\val}}\esub{\var}{\tmfive}
          }{
			\sctxp{\evctxthreep{\val}\esub{\vartwo}{\val}}\esub{\var}{\tmfive}
          }{
			\evctxthreep{\vartwo}\esub{\vartwo}{\sctxp{\val}\esub{\var}{\tmfive}}
          }{
			\sctxp{\evctxthreep{\val}\esub{\vartwo}{\val}}\esub{\var}{\tmfive}
          }
          $$
    \item \caselight{The substitution comes from $\evctxtwo$.}
		That is, $\evctxtwo = \evctxthree\esub{\vartwo}{\tmthreep}$
		with $\var \not\in \fv{\evctxthreep{\tmp}}$. The proof for this
	    case is direct:
          $$
          \commutesredEEABCD{\tostructes}{\tostructes}{
			\evctxthreep{\tmp}\esub{\vartwo}{\tmthreep}\esub{\var}{\tmfive}
          }{
			\evctxthreep{\tmtwop}\esub{\vartwo}{\tmthreep}\esub{\var}{\tmfive}
          }{
			\evctxthreep{\tmp}\esub{\vartwo}{\tmthreep\esub{\var}{\tmfive}}
          }{
			\evctxthreep{\tmtwop}\esub{\vartwo}{\tmthreep\esub{\var}{\tmfive}}
          }
          $$
	\end{enumerate}
  \end{enumerate}
  
\end{enumerate}


\subsection{Proof of \refprop{strong-bis} ($\eqstruct$ is a Strong Bisimulation) for Right-to-Left Call-by-Value}
\renewcommand{\evctx}{\cbvctx}
\renewcommand{\evctxtwo}{\cbvctxtwo}
\renewcommand{\evctxthree}{\cbvctxthree}
\renewcommand{\evctxp}[1]{\cbvctxp{#1}}
\renewcommand{\evctxtwop}[1]{\cbvctxtwop{#1}}
\renewcommand{\evctxthreep}[1]{\cbvctxthreep{#1}}
\label{sect:ProofStrictCbVStrongBis}
The proof is obtained as a minimal variation over the proof for left-to-right call-by-value (previous subsection), and is therefore omitted.


\subsection{Proof of \refprop{strong-bis} ($\eqstruct$ is a Strong Bisimulation) for Call-by-Need}
\renewcommand{\evctx}{\cbndctx}
\renewcommand{\evctxtwo}{\cbndctxtwo}
\renewcommand{\evctxthree}{\cbndctxthree}
\renewcommand{\evctxp}[1]{\cbndctxp{#1}}
\renewcommand{\evctxtwop}[1]{\cbndctxtwop{#1}}
\renewcommand{\evctxthreep}[1]{\cbndctxthreep{#1}}
\label{sect:ProofNeedBis}

We need two preliminary lemmas, proved by straightforward inductions on $\evctx$:

\begin{lemma}
\label{l:eqstruct-and-ctx-need}
Let $\tm$ be a term, $\evctx$ be a call-by-need evaluation context not capturing any variable in $\fv{\tm}$, and $\var\notin\fv{\evctxp\vartwo}$. 
Then $\evctxp{\tm\esub\var\tmtwo}\eqstructneed\evctxp\tm\esub\var\tmtwo$.
\end{lemma}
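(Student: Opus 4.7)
The plan is to proceed by induction on the structure of the call-by-need evaluation context $\evctx$, generated by $\cbndctx \grameq \ctxhole\mid \cbndctx\tm\mid \cbndctx\esub\var\tm\mid \cbndctxtwop\var\esub\var\cbndctx$. In each case, the inductive hypothesis lets me pull $\esub\var\tmtwo$ out of the sub-context, and then one of the three axioms $\tostructapl$, $\tostructcom$, $\tostructes$ generating $\eqstructneed$ lets me commute the substitution past the outermost constructor. Throughout I silently $\alpha$-rename bound names to avoid clashes with $\var$ and with $\fv\tmtwo$, which is legitimate because $\eqstructneed$ already contains $\alpha$-equivalence.

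More concretely, the base case $\evctx=\ctxhole$ is immediate. For $\evctx=\evctxtwo\tmthree$, the induction hypothesis gives $\evctxtwop{\tm\esub\var\tmtwo}\eqstructneed\evctxtwop\tm\esub\var\tmtwo$; then closing by the application context on the right and applying $\tostructapl$ (valid since $\var\notin\fv\tmthree$, which follows from $\var\notin\fv\evctx$) yields $\evctxtwop\tm\esub\var\tmtwo\tmthree \eqstructneed (\evctxtwop\tm\tmthree)\esub\var\tmtwo = \evctxp\tm\esub\var\tmtwo$. For $\evctx=\evctxtwo\esub\vartwo\tmthree$ with $\vartwo\neq\var$ and $\vartwo\notin\fv\tmtwo$ (by $\alpha$-renaming), I conclude using the induction hypothesis and $\tostructcom$, whose side conditions $\vartwo\notin\fv\tmtwo$ and $\var\notin\fv\tmthree$ hold by the hypothesis and the $\alpha$-renaming.

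The interesting case is $\evctx=\evctxthreep\vartwo\esub\vartwo\evctxtwo$. Applying the induction hypothesis inside $\evctxtwo$ gives
\[
\evctxthreep\vartwo\esub\vartwo{\evctxtwop{\tm\esub\var\tmtwo}} \eqstructneed \evctxthreep\vartwo\esub\vartwo{\evctxtwop\tm\esub\var\tmtwo}.
\]
To finish, I need $\evctxthreep\vartwo\esub\vartwo{\evctxtwop\tm\esub\var\tmtwo} \eqstructneed \evctxthreep\vartwo\esub\vartwo{\evctxtwop\tm}\esub\var\tmtwo$, which is exactly an instance of $\tostructes$ read from right to left; its side condition $\var\notin\fv{\evctxthreep\vartwo}$ follows from $\var\notin\fv\evctx$ together with $\vartwo\neq\var$ (ensured by $\alpha$-renaming).

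The main technical obstacle is precisely this bookkeeping of side conditions in the last case: one must ensure, by $\alpha$-renaming, that the binders traversed in $\evctx$ are all distinct from $\var$ and are not free in $\tmtwo$, and must verify that the hypothesis $\var\notin\fv{\evctxp\vartwo}$ (together with the non-capture assumption) implies all the needed $\var\notin\fv{\cdot}$ conditions appearing in $\tostructapl$, $\tostructcom$, and $\tostructes$. Once this is done, the argument is a mechanical three-case induction.
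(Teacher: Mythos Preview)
Your proposal is correct and follows exactly the approach the paper indicates: a straightforward induction on the structure of the call-by-need evaluation context $\evctx$, using $\tostructapl$, $\tostructcom$, and $\tostructes$ respectively in the three inductive cases. The paper's own proof is not spelled out beyond the phrase ``proved by straightforward inductions on $\evctx$'', and your case analysis together with the verification of the side conditions is precisely what this entails.
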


		\begin{lemma}
		\label{l:cbneed_eqstruct_preserves_shapes}
		The equivalence relation $\tostructneed$ 
		preserves the shape of $\evctxp{\var}$. More precisely,
		if $\evctxp{\var} \eqstructneed \tm$, with $\var$ not captured by $\evctx$,
        then $\tm$ is of the form $\evctxtwop{\var}$, with $\var$ not captured by $\evctxtwo$.
		\end{lemma}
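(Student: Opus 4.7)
The plan is to reduce the problem to a single step of $\tostructneed$ by an easy induction on the length of the equivalence derivation, and then to prove the single-step version by induction on the call-by-need evaluation context $\evctx$. More precisely, assume $\evctxp{\var}$ rewrites to $\tm$ by one application of an axiom (possibly inverted) inside a weak context. Since $\eqstructneed$ is symmetric, both directions of each axiom must be covered.

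For $\evctx = \ctxhole$, the term is just $\var$, which contains neither an application nor a substitution at its root, so no axiom applies there and the statement is vacuous. For each inductive clause $\evctx = \evctxtwo\tmtwo$, $\evctx = \evctxtwo\esub{\vartwo}{\tmtwo}$, and $\evctx = \evctxthreep{\vartwo}\esub{\vartwo}{\evctxtwo}$, the axiom is either applied strictly inside a subterm lying off the path to the hole (in which case the induction hypothesis or a trivial reshape yields a new context of the same outer shape), or at a position that touches the frame of $\evctx$. In the latter case, the three axioms induce the following local rearrangements of the substitutions and applications surrounding $\var$: $\tostructapl$ pushes a substitution past an applied subterm not binding any of its free variables; $\tostructcom$ permutes two independent substitutions; $\tostructes$ merges or splits nested substitutions. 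In each case the resulting term still parses as some $\evctxtwop{\var}$, and the side conditions of the axioms (forbidding certain variable occurrences) guarantee that the new context does not capture $\var$.

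The main obstacle I expect is the fourth clause, where $\evctx = \evctxthreep{\vartwo}\esub{\vartwo}{\evctxtwo}$ and a structural step acts at the boundary between $\evctxthree$, the substitution $\esub{\vartwo}{\cdot}$, and $\evctxtwo$. Here the induction must thread information across both sub-contexts simultaneously: for example, a $\tostructes$ step that merges the outer substitution on $\vartwo$ with a nested one transforms the pair $(\evctxthree,\evctxtwo)$ into a single deeper context on $\var$, and one must check that the result still fits the grammar of call-by-need evaluation contexts. The preceding \reflemma{eqstruct-and-ctx-need}, which exchanges a substitution with an enclosing evaluation context up to $\eqstructneed$, will be used repeatedly to justify such moves without leaving the equivalence class. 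A strengthened induction hypothesis that records the exact chain of substitutions and applications along the path from the root to $\var$, rather than blindly following the grammar clauses, should suffice to close all the boundary cases uniformly.
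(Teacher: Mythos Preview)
Your overall plan---reduce to a single step of the generating relation and then argue by induction on the evaluation context $\cbndctx$---is exactly what the paper does; it records the lemma as ``proved by a straightforward induction on $\cbndctx$'' and gives no further detail.

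Two remarks, though. First, your intended use of \reflemma{eqstruct-and-ctx-need} is misplaced: that lemma produces an $\eqstructneed$-equivalence between two terms, whereas here you are handed a \emph{specific} target term $\tm$ and must show that this very $\tm$ parses as $\cbndctxtwop\var$. Introducing further $\eqstructneed$ steps would only prove that some other member of the class has the right shape, which is circular. Second, the ``strengthened induction hypothesis'' you anticipate is not needed. The call-by-need structural equivalence is generated only by $\tostructapl$, $\tostructcom$, $\tostructes$, none of which duplicates or erases a subterm; each axiom merely permutes or nests explicit substitutions. Consequently, in every boundary case (including the hereditary one $\cbndctx = \cbndctxthreep{\vartwo}\esub{\vartwo}{\cbndctxtwo}$) the reduct is read off directly as an instance of the $\cbndctx$ grammar, using the same clauses in a possibly different order, and the side conditions of the axioms immediately give non-capture of $\var$. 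The induction closes by plain case analysis, as the paper claims.
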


We follow the structure of the previous proofs of strong bisimulation, in particular the proof is by induction on $\togen$ and to ease the notation we write $\eqstruct$ for $\eqstructneed$. Remember that for call-by-need the definition of the structural equivalence is different, it is the one given only by axioms $\tostructapl$, $\tostructcom$, and $\tostructes$.

\begin{enumerate}
\item \casealt{Base case 1: multiplicative root step $
            \tm = \sctxp{\l\var.\tmp}\tmfive
            \rtodb
            \tmtwo = \sctxp{\tmp\esub \var \tmfive}
            $}
Every application of $\eqstruct$ inside $\tmfive$ or inside one of the substitutions in $\sctx$ trivially commutes with the step. The interesting cases are those where structural equivalence has a critical pair with the step:
  \begin{enumerate}
\item \label{p:str-bis-need-base-db-apl}\caselight{Commutation with left of an application $\tostructapl$}.
		If $\sctx=\sctxtwo\esub\vartwo\tmfour$ then
		
		        $$\commutesdbEEABCD{\tostructapl}{=}{
            \sctxtwop{\l\var.\tmp}\esub\vartwo\tmfour\tmfive
        }{
            \sctxtwop{\tmp\esub\var\tmfive}\esub\vartwo\tmfour
        }{
            (\sctxtwop{\l\var.\tmp}\tmfive)\esub\vartwo\tmfour
        }{
            \sctxtwop{\tmp\esub\var\tmfive}\esub\vartwo\tmfour
        }
        $$


    \item \caselight{Commutation of independent substitutions $\tostructcom$}.
      The substitutions that are commuted by the $\tostructcom$ rule
      must be both in $\sctx$, \ie\ $\sctx$ must be of the form
      $\sctxtwop{\sctxthree\esub{\vartwo}{\tmthreep}\esub{\varthree}{\tmfourp}}$
      with $\varthree \not\in \fv{\tmthreep}$.
      Let $\sctxal = \sctxtwop{\sctxthree\esub{\varthree}{\tmfourp}\esub{\vartwo}{\tmthreep}}$.
      Then:
 \begin{center}
 \begin{tikzpicture}[ocenter]
  \node (s) {\normalsize$\sctxp{\l\var.\tmp}\tmtwop$};
  \node at (s.center)  [below =\nodeVerDist](s2) {\normalsize$\sctxpal{\l\var.\tmp}\tmtwop$};
  \node at (s2.center) [right= \nodeHorDist](t) {\normalsize$\sctxpal{\tmp\esub{\var}{\tmtwop}}$};
  \node at (s-|t) [](s1){\normalsize$\sctxp{\tmp\esub{\var}{\tmtwop}}$};
  \draw[-o] (s) to node {\scriptsize $\db$} (s1);
  \node at (s.center)[below=\nodeVerDist/2](eq1){\normalsize$\tostructcom$};
  \node at (s1.center)[below=\nodeVerDist/2](eq2){\normalsize$\tostructcom$};
\draw[-o, dashed] (s2) to node {\scriptsize $\db$} (t);
\end{tikzpicture} 
\end{center}

    \item \caselight{Composition of substitutions $\tostructes$}.
      The substitutions that appear in the left-hand side of the $\tostructes$ rule must both
      be in $\sctx$, \ie\ $\sctx$ must be of the form $\sctxtwop{\sctxthree\esub{\vartwo}{\tmthreep}\esub{\varthree}{\tmfourp}}$
      with $\varthree \not\in \fv{\sctxthreep{\l\var.\tmp}}$.
      Let $\sctxal = \sctxtwop{\sctxthree\esub{\vartwo}{\tmthreep\esub{\varthree}{\tmfourp}}}$. Exactly as in the previous case:
      \begin{center}
 \begin{tikzpicture}[ocenter]
  \node (s) {\normalsize$\sctxp{\l\var.\tmp}\tmtwop$};
  \node at (s.center)  [below =\nodeVerDist](s2) {\normalsize$\sctxpal{\l\var.\tmp}\tmtwop$};
  \node at (s2.center) [right= \nodeHorDist](t) {\normalsize$\sctxpal{\tmp\esub{\var}{\tmtwop}}$};
  \node at (s-|t) [](s1){\normalsize$\sctxp{\tmp\esub{\var}{\tmtwop}}$};
  \draw[-o] (s) to node {\scriptsize $\db$} (s1);
  \node at (s.center)[below=\nodeVerDist/2](eq1){\normalsize$\tostructes$};
  \node at (s1.center)[below=\nodeVerDist/2](eq2){\normalsize$\tostructes$};
\draw[-o, dashed] (s2) to node {\scriptsize $\db$} (t);
\end{tikzpicture} 
\end{center}
  \end{enumerate}

  \item \casealt{Base case 2: exponential root step $
            \tm = \evctxp{\var}\esub{\var}{\sctxp{\val}}
            \rtolsv
            \tmtwo = \sctxp{\evctxp{\val}\esub{\var}{\val}}
            $}
    Consider first the case when the $\tostructsym$-redex is internal
    to $\evctxp{\var}$. By \reflemma{cbneed_eqstruct_preserves_shapes}
    we know $\tostructsym$ preserves the shape of $\evctxp{\var}$,
    \ie\ $\evctxp{\var} \tostructsym \evctxpal{\var}$. Then:
    $$
    \commuteslsvEEABCD{\tostructsym}{\eqstruct}{
        \evctxp{\var}\esub{\var}{\sctxp{\val}}
    }{
        \sctxp{\evctxp{\val}\esub{\var}{\val}}
    }{
        \evctxpal{\var}\esub{\var}{\sctxp{\val}}
    }{
        \sctxp{\evctxpal{\val}\esub{\var}{\val}}
    }
    $$
    If the $\tostructsym$-redex is internal to one of the substitutions in $\sctx$, the proof is straightforward.
    Note that the $\tostructsym$-redex has always a substitution
    at the root. The remaining possibilities are that such
    substitution is in $\sctx$, or that it is precisely $\esub{\var}{\sctxp{\val}}$.
    Axiom by axiom:
    
    \begin{enumerate}

    \item \label{p:str-bis-need-exp-apl}\caselight{Commutation with the left of an application $\tostructapl$}.
        The only possibility is that the substitution $\esub{\var}{\sctxp{\val}}$
        is commuted with the outermost application in $\evctxp{\var}$, \ie\ $\evctx = \evctxtwo\tmp$. The diagram is:
        
            $$
    \commuteslsvEEABCD{\tostructapl}{\tostructapl^*}{
        (\evctxtwop{\var}\,\tmp)\esub{\var}{\sctxp{\val}}
    }{
        \sctxp{(\evctxtwop{\val}\,\tmp)\esub{\var}{\val}}
    }{
        \evctxtwop{\var}\esub{\var}{\sctxp{\val}}\tmp
    }{
        \sctxp{\evctxtwop{\val}\esub{\var}{\val}}\,\tmp
    }
    $$

    \item \caselight{Commutation of independent substitutions $\tostructcom$}. Two sub-cases:
    \begin{enumerate}
        \item \label{p:str-bis-need-exp-com-one} \caselight{The commuted substitutions both belong to $\sctx$}. Let $\sctxal$ be the result of commuting them,
        and the diagram is:
                $$
        \commuteslsvEEABCD{\tostructcom}{\tostructcom}{
            \evctxp{\var}\esub{\var}{\sctxp{\val}}
        }{
            \sctxp{\evctxp{\val}\esub{\var}{\val}}
        }{
            \evctxp{\var}\esub{\var}{\sctxpal{\val}}
        }{
            \sctxpal{\evctxp{\val}\esub{\var}{\val}}
        }
        $$

        \item \label{p:str-bis-base-exp-com-two} \caselight{One of the commuted substitutions is $\esub{\var}{\sctxp{\val}}$}. Then
        $\evctx = \evctxtwo\esub{\vartwo}{\tmp}$ and $\esub{\var}{\sctxp{\val}}$ commutes with $\esub{\vartwo}{\tmp}$ (which implies $\var \not\in \fv{\tmp}$). Then:
        $$
        \commuteslsvEEABCD{\tostructcom}{\tostructcom^*}{
            \evctxtwop{\var}\esub{\vartwo}{\tmp}\esub{\var}{\sctxp{\val}}
        }{
            \sctxp{\evctxtwop{\val}\esub{\vartwo}{\tmp}\esub{\var}{\val}}
        }{
            \evctxtwop{\var}\esub{\var}{\sctxp{\val}}\esub{\vartwo}{\tmp}
        }{
            \sctxp{\evctxtwop{\val}\esub{\var}{\val}}\esub{\vartwo}{\tmp}
        }
        $$
\end{enumerate}

    \item \caselight{Composition of substitutions $\tostructes$}. Two sub-cases:
    \begin{enumerate}
        \item \caselight{The composed substitutions both belong to $\sctx$}. 
        Analogous to case \ref{p:str-bis-need-exp-com-one}.

				\item \label{p:str-bis-need-exp-comp-two}\caselight{One of the composed subtitutions is $\esub{\var}{\sctxp{\val}}$}.
        This is not possible if the rule is applied from left to right,
        since it would imply that
        $\evctxp\var = \evctxtwop\var\esub{\vartwo}{\tmp}$ with
        $\var \not\in \evctxtwop{\var}$, which is a contradiction.

        Finally, if the $\tostructes$ rule is applied from right to left,
        $\sctx$ is of the form $\sctxtwo\esub{\vartwo}{\tmp}$ and:
        $$
        \commuteslsvEEABCD{\tostructes}{=}{
            \evctxp{\var}\esub{\var}{\sctxtwop{\val}\esub{\vartwo}{\tmp}}
        }{
            \sctxtwop{\evctxp{\val}\esub{\var}{\val}}\esub{\vartwo}{\tmp}
        }{
            \evctxp{\var}\esub{\var}{\sctxtwop{\val}}\esub{\vartwo}{\tmp}
        }{
            \sctxtwop{\evctxp{\var}\esub{\var}{\val}}\esub{\vartwo}{\tmp}
        }
        $$
            \end{enumerate}
    \end{enumerate}
  
\item \casealt{Inductive case 1: left of an application $\evctx = \evctxtwo\tmfive$}
  The situation is:
  $$\tm = \evctxtwop{\tmp}\,\tmfive \towhlcek \evctxtwop{\tmtwop}\,\tmfive = \tmtwo$$
  If the $\tostructsym$ step is internal to $\evctxtwop{\tmp}$, the result follows
  by \ih. The proof is also direct if $\tostructsym$ is internal to $\tmfive$.
  The nontrivial cases are those where $\tostructsym$ overlaps $\evctxtwop{\tmp}$ and $\tmfive$. The only possible case is that a substitution commutes with the topmost application via $\tostructapl$ (applied from right to left). There are two cases:
  \begin{enumerate}
  \item \caselight{The substitution comes from $\tmp$.}
        That is, $\evctxtwo = \ctxhole$ and $\tmp$ has a substitution at its root.
        Then $\tmp$ must be a $\rtolsv$-redex
        $\tmp = \evctxthreep{\var}\esub{\var}{\sctxp{\val}}$.
        We have:
        $$
        \commuteslsvEEABCD{\tostructapl}{\tostructapl^*}{
            \evctxthreep{\var}\esub{\var}{\sctxp{\val}}\,\tmfive
        }{
            \sctxp{\evctxthreep{\val}\esub{\var}{\val}}\,\tmfive
        }{
            (\evctxthreep{\var}\,\tmfive)\esub{\var}{\sctxp{\val}}
        }{
            \sctxp{(\evctxthreep{\val}\,\tmfive)\esub{\var}{\val}}
        }
        $$
 
        
  \item \label{p:str-bis-need-ind-leftap-apl-two}\caselight{The substitution comes from $\evctxtwo$.}
        That is: $\evctxtwo = \evctxthree\esub{\var}{\tmthreep}$.
        The proof is then straightforward:
        $$
        \commutesredEEABCD{\tostructapl}{\tostructapl}{
            \evctxthreep{\tmp}\esub{\var}{\tmthreep}\,\tmfive
        }{
            \evctxthreep{\tmtwop}\esub{\var}{\tmthreep}\,\tmfive
        }{
            (\evctxthreep{\tmp}\,\tmfive)\esub{\var}{\tmthreep}
        }{
            (\evctxthreep{\tmtwop}\,\tmfive)\esub{\var}{\tmthreep}
        }
        $$
  \end{enumerate}

\item \casealt{Inductive case 2: left of a substitution $\evctx = \evctxtwo\esub{\var}{\tmfive}$}
  The situation is:
  $$\tm = \evctxtwop{\tmp}\esub{\var}{\tmfive} \towhlcek \evctxtwop{\tmtwop}\esub{\var}{\tmfive} = \tmtwo$$
  If the $\tostructsym$ step is internal to $\evctxtwop{\tmp}$,
  the result follows by \ih. If it is internal to $\tmfive$, the steps are
  orthogonal, which makes the diagram trivial.
  The remaining possibility is that the substitution $\esub{\var}{\tmfive}$
  is involved in the $\tostructsym$ redex.
  By case analysis on the kind of the step $\eqstruct_b$:
  \begin{enumerate}
  \item \caselight{Commutation with the left of an application $\tostructapl$}.
    $\evctxtwop{\tmp}$ must be an application. Two
    possibilities:
    \begin{enumerate}
    \item \caselight{The application comes from $\tmp$.}
          That is, $\evctxtwo = \ctxhole$ and $\tmp$ is a $\rtodb$-redex
          $\tmp = \sctxp{\l\vartwo.\tmpp}\,\tmfour$. This is exactly as the base case \ref{p:str-bis-need-base-db-apl} (read bottom-up).

    \item \caselight{The application comes from $\evctxtwo$, \ie\ $\evctxtwo = \evctxthree\,\tmthreep$}. This is exactly as the inductive case \ref{p:str-bis-need-ind-leftap-apl-two} (read bottom-up).
    \end{enumerate}

  \item \caselight{Commutation of independent substitutions $\tostructcom$}.
	Since $\evctxtwop{\tmp}$ must have a substitution at the root,
    there are two possibilities:
    \begin{enumerate}
    \item \caselight{The substitution comes from $\tmp$.}
          That is, $\evctxtwo = \ctxhole$ and $\tmp$ is a $\rtolsv$-redex
          $\tmp = \evctxthreep{\vartwo}\esub{\vartwo}{\sctxp{\val}}$,
          with $\var \not\in \fv{\sctxp{\val}}$. This case is exactly as the base exponential case \ref{p:str-bis-base-exp-com-two} (read bottom-up).
    \item \caselight{The substitution comes from $\evctxtwo$.}
		That is, $\evctxtwo = \evctxthree\esub{\vartwo}{\tmthreep}$
		with $\var \not\in \fv{\tmthreep}$. The diagram is:
          $$
          \commuteslsvEEABCD{\tostructcom}{\tostructcom}{
			\evctxthreep{\tmp}\esub{\vartwo}{\tmthreep}\esub{\var}{\tmfive}
          }{
			\evctxthreep{\tmtwop}\esub{\vartwo}{\tmthreep}\esub{\var}{\tmfive}
          }{
			\evctxthreep{\tmp}\esub{\var}{\tmfive}\esub{\vartwo}{\tmthreep}
          }{
			\evctxthreep{\tmtwop}\esub{\var}{\tmfive}\esub{\vartwo}{\tmthreep}
          }
          $$
    \end{enumerate}
  \item \caselight{Composition of substitutions $\tostructes$}.
	As in the previous case, there are two possibilities:
    \begin{enumerate}
    \item \caselight{The substitution comes from $\tmp$.}
          That is, $\evctxtwo = \ctxhole$ and $\tmp$ is a $\rtolsv$-redex
          $\tmp = \evctxthreep{\vartwo}\esub{\vartwo}{\sctxp{\val}}$,
          with $\var \not\in \fv{\evctxthreep{\vartwo}}$. 
          This case is exactly as the base exponential case \ref{p:str-bis-need-exp-comp-two} (read bottom-up).

    \item \caselight{The substitution comes from $\evctxtwo$.}
		That is, $\evctxtwo = \evctxthree\esub{\vartwo}{\tmthreep}$
		with $\var \not\in \fv{\evctxthreep{\tmp}}$. The diagram is:
          $$
          \commutesredEEABCD{\tostructes}{\tostructes}{
			\evctxthreep{\tmp}\esub{\vartwo}{\tmthreep}\esub{\var}{\tmfive}
          }{
			\evctxthreep{\tmtwop}\esub{\vartwo}{\tmthreep}\esub{\var}{\tmfive}
          }{
			\evctxthreep{\tmp}\esub{\vartwo}{\tmthreep\esub{\var}{\tmfive}}
          }{
			\evctxthreep{\tmtwop}\esub{\vartwo}{\tmthreep\esub{\var}{\tmfive}}
          }
          $$
	\end{enumerate}
	
  \end{enumerate}
  
  \item \casealt{Inductive case 3: inside a hereditary head substitution $\evctx = \evctxtwop\var\esub{\var}{\evctxthree}$}
  The situation is:
  $$\tm = \evctxtwop\var\esub{\var}{\evctxthreep\tmfive} \togen \evctxtwop\var\esub{\var}{\evctxthreep\tmfivep} = \tmtwo$$
  If $\tostructsym$ is internal to $\evctxtwop\var$ the two steps clearly commutes. If $\tostructsym$ is internal to $\evctxthreep\tmfive$ we conclude using the \ih. The remaining cases are when $\tostructsym$ overlaps with the topmost constructor. Axiom by axiom:
  \begin{enumerate}
  \item \caselight{Commutation with the left of an application $\tostructapl$}. It must be that $\evctxtwop\var=\evctx''''\ctxholep\var\tmfour$ with $\var\notin\fv\tmfour$. Then  the two steps simply commute:
          $$
        \commutesredEEABCD{\tostructapl}{\tostructapl}{
            (\evctx''''\ctxholep\var\tmfour)\esub{\var}{\evctxthreep\tmfive}
        }{
            (\evctx''''\ctxholep\var\tmfour)\esub{\var}{\evctxthreep\tmfivep}
        }{
            \evctx''''\ctxholep\var\esub{\var}{\evctxthreep\tmfive}\tmfour
        }{
            \evctx''''\ctxholep\var\esub{\var}{\evctxthreep\tmfivep}\tmfour
        }
        $$

  \item \caselight{Commutation of independent substitutions $\tostructcom$}. It must be that $\evctxtwop\var=\evctx''''\ctxholep\var\esub\vartwo\tmfour$ with $\var\notin\fv\tmfour$. Then the two steps simply commute:
          $$
        \commutesredEEABCD{\tostructapl}{\tostructapl}{
            \evctx''''\ctxholep\var\esub\vartwo\tmfour\esub{\var}{\evctxthreep\tmfive}
        }{
            \tm_1
        }{
            \tm_2
        }{
            \tm_3
        }
        $$
        where
        \begin{align*}
        	\tm_1 := \evctx''''\ctxholep\var\esub\vartwo\tmfour\esub{\var}{\evctxthreep\tmfivep},\\
        	\tm_2 := \evctx''''\ctxholep\var\esub{\var}{\evctxthreep\tmfive}\esub\vartwo\tmfour,\\
        	\tm_3 := \evctx''''\ctxholep\var\esub{\var}{\evctxthreep\tmfivep}\esub\vartwo\tmfour.
        \end{align*}

  \item \caselight{Composition of substitutions $\tostructes$}. There are various sub-cases
    \begin{enumerate}
    	\item \caselight{$\esub{\var}{\evctxthreep\tmfive}$ enters in a substitution}. It must be that $\evctxtwop\var=\evctx_1\ctxholep\vartwo\esub\vartwo{\evctx_2\ctxholep\var}$ with $\var\notin\fv{\evctx_1\ctxholep\vartwo}$. Then the diagram is:
	        $$
        \commutesredEEABCD{\tostructes}{\tostructes}{
            \evctx_1\ctxholep\vartwo\esub\vartwo{\evctx_2\ctxholep\var}\esub{\var}{\evctxthreep\tmfive}
        }{
            \tm_1
        }{
            \tm_2
        }{
            \tm_3
        }
        $$
        \begin{align*}
        	\tm_1 &:= \evctx_1\ctxholep\vartwo\esub\vartwo{\evctx_2\ctxholep\var}\esub{\var}{\evctxthreep\tmfivep},\\
        	\tm_2 &:= \evctx_1\ctxholep\vartwo\esub\vartwo{\evctx_2\ctxholep\var\esub{\var}{\evctxthreep\tmfive}},\\
        	\tm_3 &:= \evctx_1\ctxholep\vartwo\esub\vartwo{\evctx_2\ctxholep\var\esub{\var}{\evctxthreep\tmfivep}}.
        \end{align*}

			\item \caselight{a substitution pops out of $\esub{\var}{\evctxthreep\tmfive}$}. Two sub-cases:
			\begin{enumerate}
				\item \caselight{The substitution comes from $\evctxthree$}. Then $\evctxthreep\tmfive=\evctx''''\ctxholep\tmfive\esub\vartwo\tmfour$. The diagram is:
				$$\commutesredEEABCD{\tostructes}{\tostructes}{
            \evctxtwop\var\esub{\var}{\evctx''''\ctxholep\tmfive\esub\vartwo\tmfour}
        }{
            \tm_1            
        }{
            \tm_2
        }{
            \tm_3
        }$$
        where
        \begin{align}
        	\tm_1 &:= \evctxtwop\var\esub{\var}{\evctx''''\ctxholep\tmfivep\esub\vartwo\tmfour},\\
        	\tm_2 &:= \evctxtwop\var\esub{\var}{\evctx''''\ctxholep\tmfive}\esub\vartwo\tmfour,\\
        	\tm_3 &:= \evctxtwop\var\esub{\var}{\evctx''''\ctxholep\tmfivep}\esub\vartwo\tmfour.
        \end{align}

				\item \caselight{The substitution comes from $\tmfive$}. Then $\evctxthree=\ctxhole$ and $\tmfive$ is a $\rtolsv$-redex
          $\tmp = \evctx''''\ctxholep\vartwo\esub{\vartwo}{\sctxp{\val}}$ and the diagram is:
   $$\commuteslsvEEABCD{\tostructes}{\tostructes^*}{
            \evctxtwop\var\esub{\var}{ \evctx''''\ctxholep\vartwo\esub{\vartwo}{\sctxp{\val}} }
        }{
            \tm_1            
        }{
            \tm_2
        }{
            \tm_3
        }$$
        where
        \begin{align}
        	\tm_1 &:= \evctxtwop\var\esub{\var}{ \sctxp{\evctx''''\ctxholep\val\esub{\vartwo}{\val}}},\\
        	\tm_2 &:= \evctxtwop\var\esub{\var}{ \evctx''''\ctxholep\vartwo}\esub{\vartwo}{\sctxp{\val}},\\
        	\tm_3 &:= \sctxp{\evctxtwop\var\esub{\var}{ \evctx''''\ctxholep\val}\esub{\vartwo}{\val}}.
        \end{align}
  		\end{enumerate}				
  	\end{enumerate}
  \end{enumerate}
\end{enumerate}


\subsection{Proofs for the LAM}
\label{ss:lam-proofs}

\begin{proof}[Invariants, \reflemma{lam-prop}]
By induction on the length of the execution leading to $\state$, and straightforward inspection of the transition rules.\qed
\end{proof}


\proof[Distillation, \refth{lam-simulation}]
\begin{enumerate}
\item \casealt{Commutative 1} We have $\lamstate{\code\codetwo}{\env}{\stack}
\ \tomachaone\
\lamstate{\codetwo}{\env}{\fnst{\code,\env}\cons\stack}$, and:
\begin{center}
$\begin{array}{ccccccccccc}
\decode{\lamstate{\code\codetwo}{\env}{\stack}} &=&
\decstackp{\decenvp{\code\codetwo}} &\tostructap^*&
\decstackp{\decenvp{\code}\decenvp{\codetwo}}&=&
\decode{\lamstate{\codetwo}{\env}{\fnst{\code,\env}\cons\stack}}
\end{array}$
\end{center}
As before, we use that $\stack$ is a right-to-left call-by-value evaluation context,
which enables us to use the $\tostructap$ rule. 

\item \casealt{Commutative 2} We have $\lamstate{\codeval}{\env}{\fnst{\code,\envtwo}\cons\stack}
\ \tomachatwo\
\lamstate{\code}{\envtwo}{\argst{\codeval,\env}\cons\stack}$, and:
\begin{center}
$\begin{array}{ccccccccccc}
\decode{\lamstate{\codeval}{\env}{\fnst{\code,\envtwo}\cons\stack}}&=&
\decstackp{\decenvtwop{\code}\decenvp{\codeval}} &=&
\decode{\lamstate{\code}{\envtwo}{\argst{\codeval,\env}\cons\stack}}
\end{array}$
\end{center}

\item \casealt{Multiplicative} We have 
$\lamstate{\l\var.\code}{\env}{\argst{\clos}\cons\stack}
\ \tomachm\
\lamstate{\code}{\esub{\var}{\clos}\cons\env}{\stack}$, and:
\begin{center}
$\begin{array}{ccccccccccc}
\decode{\lamstate{\l\var.\code}{\env}{\argst{\clos}\cons\stack}} &=&
\decstackp{\decenvp{\l\var.\code}\decclos} &\towhllamdb&
\decstackp{\decenvp{\code\esub{\var}{\decclos}}}
\end{array}$
\end{center}
which is equal to $\decode{\lamstate{\code}{\esub{\var}{\clos}\cons\env}{\stack}}$.

\item \casealt{Exponential}
Let $\env=\envthree\cons\esub{\var}{(\code,\envtwo)}\cons\envfour$. We have
$\lamstate{\var}{\env}{\stack}
\ \tomache\ 
\lamstate{\code}{\envtwo}{\stack}$, and::

\begin{center}
$\begin{array}{ccccccccccc}
\decode{\lamstate{\var}{\env}{\stack}} &=&
\decstackp{\decenvp{\var}} &\towhllamls&
\decstackp{\decenvfourp{\decenvtwop{\decenvthreep{\code}\esub{\var}{\code}}}}&\tostructgc^*&
\decstackp{\decenvtwop{\code}}&=&
\decode{\lamstate{\code}{\decenvtwo}{\stack}}
\end{array}$
\end{center}
Note that by \reflemma{lam-prop}.\ref{p:lam-prop-three},
$\code$ is an abstraction, and thus we are able to apply
$\towhllamls$. Moreover, by \reflemma{lam-prop}.\ref{p:lam-prop-one},
$\env$ binds variables to closures,
and $\decenvtwop{\code}$ is closed; this allows $\decenvthree$ and
$\decenvfour$ to be garbage collected. For doing so, the
$\tostructgc$ rule must be applied below a right-to-left call-by-value
evaluation context, which follows from
\reflemma{lam-prop}.\ref{p:lam-prop-four}.

\end{enumerate}

 \noindent \emph{Progress.}  Let $\state=\kamstate\code\env\stack$ be a commutative normal form s.t. $\decode\state\togen\tmtwo$. If $\code$ is 
        \begin{itemize}
        	    \item \emph{an application $\codetwo\codethree$}. Then a $\tomachaone$ transition applies and $\state$ is not a commutative normal form, absurd.

        \item \emph{an abstraction $\l\var.\codetwo$}.
          Then $\decode{\state} = \decstackp{\decenvp{\l\var.\codetwo}}$ is not in normal form.
          There can only be a $\towhllamdb$-redex,
          so $\decstack$ must be of the form $\decstackp{\ctxhole\clos}$.
          This implies there is a $\tomachm$ transition from $\state$.\qed

        \item \emph{a variable $\var$}. Then $\decode{\state} = \decstackp{\decenvp{\var}}$ is not in normal form.
          There can only be a $\towhllamls$-redex, and it must involve $\var$,
          thus $\decenv = \decenvfourp{\decenvthree\esub{\var}{\decenvtwop{\codeval}}}$.
          This implies there is a $\tomache$ transition from $\state$.
        \end{itemize}



\subsection{Proofs for the MAM}
\label{ss:mam-proofs}
\proof
Let $\eqmamsym$ be the symmetric and contextual closure of the $\sim$ rule
by which $\eqmam$ is defined. Note $\eqmam$ is the reflexive--transitive closure of
$\eqmamsym$. It suffices to show that the property holds for $\eqmamsym$,
\ie\ that $\tmthree \eqmamsym \towhl \tmtwo$ implies $\tmthree \towhl \eqmam \tmtwo$.
The fact that $\eqmamsym^*$ is a bisimulation then follows by induction
on the number of $\eqmamsym$ steps.

Let $\tmthree \eqmamsym \tm \towhl \tmtwo$. The proof of
$\tmthree \towhl \eqmam \tmtwo$ goes by induction
on the call-by-need context $\evctx$ under which the $\towhl$-redex
in $\tm$ is contracted. Note that since $\tm_1 \sim \tm_2$ determines
a bijection between the redexes of $\tm_1$ and $\tm_2$, it suffices
to check the cases when $\sim$ is applied from left to right (\ie\ $\tm \sim \tmthree$).
For the right-to-left cases, all diagrams can be considered from
bottom to top.

\begin{itemize}
\item \casealt{Base case, \ie\ empty context $\evctx = \ctxhole$}
  Two cases, depending on the $\towhl$ step contracting a
  $\rtodb$ or a $\rtols$ redex:
  \begin{enumerate}
  \item \case{$\tm = \sctxp{\l\var.\tmp}\tmtwop \rtodb \sctxp{\tmp\esub{\var}{\tmtwop}}$}
        There are no $\sim$ redexes in $\tm$,
        since any application in $\tm$ must be either $\tm$ itself
        or below $\l\var$, which is not a call-by-name evaluation
        context.
  \item \case{$\tm = \evctxp{\var}\esub{\var}{\tmp} \rtols \evctxp{\tmp}\esub{\var}{\tmp}$}
        Any $\sim$ redex must be internal to $\evctx$, in the
        sense that $\evctx = \evctxtwop{(\evctxthree\tmtwop)\esub{\vartwo}{\tmthreep}}$
        with $\vartwo \not\in \fv{\tmtwop}$.
        Let $\evctxal = \evctxtwop{\evctxthree\esub{\vartwo}{\tmthreep}\tmtwop}$.
        Then:
        \begin{center}
         \begin{tikzpicture}[ocenter]
          \node (s) {\normalsize$\evctxp{\var}\esub{\var}{\tmp}$};
          \node at (s.center)  [below =\nodeVerDist](s2) {\normalsize$\evctxpal{\var}\esub{\var}{\tmp}$};
          \node at (s2.center) [right= \nodeHorDist](t) {\normalsize$\evctxpal{\tmp}\esub{\var}{\tmp}$};
          \node at (s-|t) [](s1){\normalsize$\evctxp{\tmp}\esub{\var}{\tmp}$};
          \draw[->] (s) to node {\scriptsize $\lssym$} (s1);
          \node at (s.center)[below=\nodeVerDist/2](eq1){\normalsize$\sim$};
          \node at (s1.center)[below=\nodeVerDist/2](eq2){\normalsize$\sim$};
        \draw[->, dashed] (s2) to node {\scriptsize $\lssym$} (t);
        \end{tikzpicture} 
        \end{center}
  \end{enumerate}
\item \casealt{Inductive case $\evctx = \evctxtwo\tmfive$}
  Since the application of $\eqmamsym$ must be internal to $\evctxtwo$,
  the result follows directly by \ih.
\item \casealt{Inductive case $\evctx = \evctxtwo\esub{\var}{\tmfive}$}
  If the $\eqmamsym$ step is internal to $\evctxtwo$, the result
  follows again by applying \ih.
  The remaining possibility is that $\evctxtwop{\tm}$ is an application.
  Here there are two cases:
  \begin{enumerate}
  \item \casealt{$\evctxtwo = \ctxhole$, \ie\ $\sim$ interacts with a redex}
    The redex in question must be a $\db$-redex, since it must have
    an application at the root. The situation is the following, with
    $\var \not\in \fv{\tmtwop}$:
    \begin{center}
     \begin{tikzpicture}[ocenter]
      \node (s) {\normalsize$(\sctxp{\l\vartwo.\tmp}\,\tmtwop)\esub{\var}{\tmfive}$};
      \node at (s.center)  [below=\nodeVerDist](s2) {\normalsize$\sctxp{\l\vartwo.\tmp}\esub{\var}{\tmfive}\,\tmtwop$};
      \node at (s2.center) [right=2*\nodeHorDist](t) {\normalsize$\sctxp{\tmp\esub{\vartwo}{\tmtwop}}\esub{\var}{\tmfive}$};
      \node at (s) [right=2*\nodeHorDist](s1){\normalsize$\sctxp{\tmp\esub{\vartwo}{\tmtwop}}\esub{\var}{\tmfive}$};
      \draw[->] (s) to node {\scriptsize $\db$} (s1);
      \node at (s.center)[below=\nodeVerDist/2](eq1){\normalsize$\sim$};
      \node at (s1.center)[below=\nodeVerDist/2](eq2){\normalsize$=$};
    \draw[->, dashed] (s2) to node {\scriptsize $\db$} (t);
    \end{tikzpicture} 
    \end{center}

  \item \casealt{$\evctxtwo = \evctxthree\tmp$, \ie\ there is no interaction between $\sim$ and a redex}
    This case is straightforward, since the contraction of the
    $\towhl$ redex and the application of $\sim$ are orthogonal.\qed
  \end{enumerate}
  
\end{itemize}



\subsection{Proofs for the Split CEK}
\label{sect:scek-proofs}

\proof[Split CEK Distillation, \refth{sCEK-sim}]
\emph{Properties of the decoding}:
\begin{enumerate}
    \item \emph{Commutative 1}.
      We have
      $\scekstate{\code\,\codetwo}{\env}{\stack}{\fstack}
       \tomachaone
       \scekstate{\code}{\env}{(\codetwo,\env)\cons\stack}{\fstack}$,
      and:
      $$
      \begin{array}{llll}
      \decode{\scekstate{\code\,\codetwo}{\env}{\stack}{\fstack}} & = &
      \dfstackp{\dstackp{\denvp{\code\,\codetwo}}} & \tostructap^* \\
      && \dfstackp{\dstackp{\denvp{\code}\,\denvp{\codetwo}}} & = \\
      && \decode{\scekstate{\code}{\env}{(\codetwo,\env)\cons\stack}{\fstack}}
      \end{array}
      $$
    \item \emph{Commutative 2}.
      We have $\scekstate{\codeval}{\env}{(\code,\envtwo)\cons\stack}{\fstack}
               \tomachatwo
               \scekstate{\code}{\envtwo}{\stempty}{((\codeval,\env),\stack)\cons\fstack}$, and:
      $$
      \begin{array}{llll}
      \decode{\scekstate{\codeval}{\env}{(\code,\envtwo)\cons\stack}{\fstack}} & = &
      \dfstackp{\dstackp{\denvp{\val}\,\denvtwop{\code}}} & \tostructgc^* \\
      && \dfstackp{\dstackp{\denvp{\val}\,\denvp{\denvtwop{\code}}}} & \tostructap^* \\
      && \dfstackp{\dstackp{\denvp{\val\,\denvtwop{\code}}}} & = \\
      && \decode{\scekstate{\code}{\envtwo}{\stempty}{((\codeval,\env),\stack)\cons\fstack}}
      \end{array}
      $$
    \item \emph{Multiplicative}.
      We have $\scekstate{\codeval}{\env}{\stempty}{((\l\var.\code,\envtwo),\stack)\cons\fstack}
               \tomachm
               \scekstate{\code}{\esub\var{(\codeval,\env)}\cons\envtwo}{\stack}{\fstack}$, and:
      $$
      \begin{array}{ll}
      \decode{\scekstate{\codeval}{\env}{\stempty}{((\l\var.\code,\envtwo),\stack)\cons\fstack}} & = \\
      \dfstackp{\dstackp{\denvtwop{(\l\var.\tm)\,\denvp{\val}}}} & \tomachm \\
      \dfstackp{\dstackp{\denvtwop{\tm\esub{\var}{\denvp{\val}}}}} & = \\
      \decode{\scekstate{\code}{\esub\var{(\codeval,\env)}\cons\envtwo}{\stack}{\fstack}}
      \end{array}
      $$
    \item \emph{Exponential}.
      We have $\scekstate{\var}{\env_1\cons\esub{\var}{(\codeval,\env)}\cons\env_2}{\stack}{\fstack}
               \tomache
               \scekstate{\codeval}{\env}{\stack}{\fstack}$, and:
      $$
      \begin{array}{ll}
	  \decode{\scekstate{\var}{\env_1\cons\esub{\var}{(\codeval,\env)}\cons\env_2}{\stack}{\fstack}} & = \\
      \dfstackp{\dstackp{\denvsubtwop{\denvsubonep{\var}\esub{\var}{\denvp{\codeval}}}}} & \tomache \\
      \dfstackp{\dstackp{\denvsubtwop{  \denvp{ \denvsubonep{\codeval}\esub{\var}{\codeval} } }}} & \tostructgc^* \\
      \dfstackp{\dstackp{\denvp{\codeval}}} & = \\
      \decode{\scekstate{\codeval}{\env}{\stack}{\fstack}}
      \end{array}
      $$
      We use that $\denvp{\codeval}$ is closed by \reflemma{scek-prop}.\ref{p:scek-prop-closure}
      to ensure that $\decode{\env_1}$, $\decode{\env_2}$, and $\esub{\var}{\codeval}$ can be garbage collected.
  \end{enumerate}

  \emph{Progress}.  Let $\state=\kamstate\code\env\stack$ be a commutative normal form s.t. $\decode\state\togen\tmtwo$. If $\code$ is 
  
	\begin{itemize}
		\item \emph{an application $\codetwo\codethree$}. Then a $\tomachaone$ transition applies and $\state$ is not a commutative normal form, absurd.

  \item  \emph{an abstraction $\val$}.
  The decoding
  $\decode\state = \decode\fstack\ctxholep{\decode\stack\ctxholep{\decode\env\ctxholep{\codeval}}}$
  must have a multiplicative redex, because it must have a redex and $\codeval$ is not a variable. So $\codeval$ is applied to something, \ie\ there must be at least one application node in
  $\decode\fstack\ctxholep{\decode\stack}$. Moreover, the stack $\stack$ must be empty,
  otherwise there would be an administrative $\tomachatwo$ transition,
  contradicting the hypothesis. So $\fstack$ is not empty. Let
  $\fstack = ((\codetwo,\envtwo),\stacktwo)\cons\fstacktwo$.
  By point \ref{p:scek-prop-value} of \reflemma{scek-prop}, $\codetwo$ must be a value,
  and a $\tomachm$ transition applies.
  
		\item \emph{a variable $\var$}. By point \ref{p:scek-prop-closure} of \reflemma{scek-prop},
  $\var$ must be bound by~$\env$, so $\env = \env_1\cons\esub{\var}{(\codetwo,\envtwo)}\cons\env_2$
  and a $\tomache$ transition applies.
  \end{itemize}
\qed
  


\subsection{Proofs for the Merged WAM}
\label{sect:mwam-proofs}

\proof[Distillation, \refth{mwam-sim}]
  \begin{enumerate}
  
    \item \emph{Commutative 1}.
       We have $\mgwamstate{\code\,\codetwo}{\stack}{\genv}
	            \tomachaone
	            \mgwamstate{\code}{\argst{\codetwo}\cons\stack}{\genv}$,
       and:
        $$
        \decode{\mgwamstate{\code\,\codetwo}{\stack}{\genv}} =
        \dgenvp{\dstackp{\code\,\codetwo}} =
        \decode{\mgwamstate{\code}{\argst{\codetwo}\cons\stack}{\genv}}
        $$
  \item \emph{Commutative 2}.
	    We have $\mgwamstate{\var}{\stack}{\genv_1 \cons \esub\var\code \cons \genv_2}
	             \tomachatwo
	             \mgwamstate{\code}{\headst{\genv_1,\var}\cons\stack}{\genv_2}$,
        and:
        $$
        \begin{array}{llll}
        \decode{\mgwamstate{\var}{\stack}{\genv_1 \cons \esub\var\code \cons \genv_2}} & = &
        \dgenvsubtwop{\dgenvsubonep{\dstackp{\var}}\esub{\var}{\code}} & = \\
	    && \decode{\mgwamstate{\code}{\headst{\genv_1,\var}\cons\stack}{\genv_2}}
        \end{array}
        $$
        
  \item \emph{Multiplicative}.
        We have $\mgwamstate{\l\var.\code}{\argst{\codetwo}\cons\stack}{\genv}
                 \tomachm
	             \mgwamstate{\code}{\stack}{\esub\var\codetwo\cons\genv}$,
        and:
        $$
        \begin{array}{llll}
        \decode{\mgwamstate{\l\var.\code}{\argst{\codetwo}\cons\stack}{\genv}} & = &
        \dgenvp{\dstackp{(\l\var.\code)\,\codetwo}} & \towhlcekdb \\
        && \dgenvp{\dstackp{\code\esub{\var}{\codetwo}}} & \eqstructneed \mbox{ Lem. \ref{l:ev-comm-struct}} \\
        && \dgenvp{\dstackp{\code}\esub{\var}{\codetwo}} & = \\
        && \decode{\mgwamstate{\code}{\stack}{\esub{\var}{\codetwo}\cons\genv}}
        \end{array}
        $$
  \item \emph{Exponential}.
        We have $\mgwamstate{\codeval}{\headst{\genv_1,\var}\cons\stack}{\genv_2}
	             \tomache
	             \mgwamstate{\rename{\codeval}}{\stack}{\genv_1 \cons \esub\var\codeval \cons \genv_2}$,
        and:
        $$
        \begin{array}{llll}
        \decode{\mgwamstate{\codeval}{\headst{\genv_1,\var}\cons\stack}{\genv_2}} & = &
        \dgenvsubtwop{\dgenvsubonep{\dstackp{\var}}\esub{\var}{\codeval}} & \towhlcekls \\
        && \dgenvsubtwop{\dgenvsubonep{\dstackp{\codeval}}\esub{\var}{\codeval}} & \alphaequiv \\
        && \dgenvsubtwop{\dgenvsubonep{\dstackp{\rename\codeval}}\esub{\var}{\codeval}} & = \\
        && \decode{\mgwamstate{\rename{\codeval}}{\stack}{\genv_1 \cons \esub\var\codeval \cons \genv_2}}
        \end{array}
        $$
        	\end{enumerate}
	
\noindent    \emph{Progress}. Let $\state=\kamstate\code\stack\genv$ be a commutative normal form s.t. $\decode\state\togen\tmtwo$. If $\code$ is 
    \begin{enumerate}
    \item \emph{an application $\codetwo\codethree$}. Then a $\tomachaone$ transition applies and $\state$ is not a commutative normal form, absurd.

    \item \emph{an abstraction $\val$}. The decoding $\decode\state$ is of the form $\dgenvp{\dstackp{\val}}$.
    The stack $\stack$ cannot be empty,
    since then $\decode\state = \dgenvp{\val}$ would be normal.
    So either the a $\tomache$ or a $\tomachm$ transition
    applies.
    
        \item \emph{a variable $\var$}. By the global closure invariant, $\var$ is bound by $\genv$. Then a $\tomachatwo$ transition applies and $\state$ is not a commutative normal form, absurd.
 \qed

  \end{enumerate}

\subsection{Proofs for the Pointing WAM}
\label{ss:sam-proofs}

\begin{proof}[Pointing WAM Invariants, \reflemma{pwam-invariants}]

  By induction on the length of the execution.
  Points \ref{p:pwam-invariants-subterm} and \ref{p:pwam-invariants-value}
  are by direct inspection of the rules.
  Assuming $\genv\dual\dump$, point \ref{p:pwam-invariants-ctx} is immediate by induction on the length of $\fstack$.

  Thus we are only left to check point \ref{p:pwam-invariants-dual}.
  We use point \ref{p:pwam-invariants-value}, \ie\ that substitutions
  in $\genv$ bind pairwise distinct variables. Following we show that
  transitions preserve the invariant:
  \begin{enumerate}

  \item \caselight{Conmutative 1}. We have:
        $$\pwamstate{\code\,\codetwo}{\stack}{\genv}{\fstack} \tomachaone \pwamstate{\code}{\codetwo\cons\stack}{\genv}{\fstack}$$
        Trivial, since the dump and the environment are the same and
        $\decode{(\codetwo\cons\stack)}\ctxholep{\code} = \dstackp{\code\,\codetwo}$.

  \item \caselight{Conmutative 2}. We have $\state \tomachatwo \statetwo$ with:
        $$\state = \pwamstate{\var}{\stack}{\genv_1\cons\esub{\var}{\code}\cons\genv_2}{\fstack}$$
        $$\statetwo = \pwamstate{\code}{\stempty}{\genv_1\cons\esub{\var}{\pwammark}\cons\genv_2}{(\var,\stack)\cons\fstack}$$

        Note that since by \ih\ $\pwamclosed{(\dstackp{\var},(\genv_1\cons\esub{\var}{\code}\cons\genv_2)\envslice)}$
        and $\var$ is free in $\dstackp{\var}$, there cannot be any dumped substitutions in $\genv_2$.
        Then $(\genv_1\cons\esub{\var}{\tm}\cons\genv_2)\envslice = \genv_1\envslice\cons\esub{\var}{\tm}\cons\genv_2$
        and we know:
        \begin{equation} \label{eq:pwam-invariant-var1}
        \pwamclosed{(\dstackp{\var},\genv_1\envslice\cons\esub{\var}{\code}\cons\genv_2)}
        \end{equation}

        For \ref{p:pwam-invariants-dual1},
        note $(\genv_1\cons\esub{\var}{\pwammark}\cons\genv_2)\envslice = \genv_2$.
        Then we must show $\pwamclosed{(\code,\genv_2)}$, which is implied by \eqref{eq:pwam-invariant-var1}.

        \medskip
        For \ref{p:pwam-invariants-dual2}, there are two cases:
        \begin{itemize}
        \item If the pair is $(\var,\stack)$, we must show
              $$\pwamclosed{(\dstackp{\var},(\genv_1\cons\esub{\var}{\pwammark}\cons\genv_2)\envslicevar{\var})} \text{, \ie}$$
              $$\pwamclosed{(\dstackp{\var},\genv_1\envslice\cons\esub{\var}{\pwammark}\cons\genv_2)}$$
              which is implied by \eqref{eq:pwam-invariant-var1}.
        \item If the pair is $(\vartwo,\stacktwo)$ in $\fstack$, with $\vartwo \neq \var$,
              note first that
              $$(\genv_1\cons\esub{\var}{\code}\cons\genv_2)\envslicevar{\vartwo} = 
                \genv_1\envslicevar{\vartwo}\cons\esub{\var}{\code}\cons\genv_2$$
              And similarly for $(\genv_1\cons\esub{\var}{\pwammark}\cons\genv_2)\envslicevar{\vartwo}$.
              Moreover, by the invariant on $\state$ we know
              $$\pwamclosed{(\dstacktwop{\vartwo}, \genv_1\envslicevar{\vartwo}\cons\esub{\var}{\code}\cons\genv_2)}$$
              and this implies
              $$\pwamclosed{(\dstacktwop{\vartwo}, \genv_1\envslicevar{\vartwo}\cons\esub{\var}{\pwammark}\cons\genv_2)}$$
              as required.
        \end{itemize}

        \medskip
        For \ref{p:pwam-invariants-dual3}, we have already observed that $\genv_2$ has no dumped
        substitutions. Then $\esub{\var}{\pwammark}$ is the rightmost dumped substitution in the environment of $\statetwo$,
        while $(\var,\stack)$ is the leftmost pair in the dump.
        We conclude by the fact that the invariant already holds for $\state$.
        
  \item \caselight{Multiplicative, empty dump}.
        We have $\state \tomachm \statetwo$ with:
        $$\state = \pwamstate{\l\var.\code}{\codetwo\cons\stack}{\genv}{\stempty}$$
        $$\statetwo = \pwamstate{\code}{\stack}{\esub{\var}{\codetwo}\cons\genv}{\stempty}$$

        First note that, since the environment and the dump are dual in $\state$,
        there are no dumped substitutions in $\genv$.

        \medskip
        For point \ref{p:pwam-invariants-dual1}, we know that:
        \begin{equation} \label{eq:pwam-invariant-var2a}
        \pwamclosed{(\dstackp{(\l\var.\code)\,\codetwo},\genv)}
        \end{equation}
        and we have to check:
        $$\pwamclosed{(\dstackp{\code},\esub{\var}{\codetwo}\cons\genv)}$$
        Let $\vartwo \in \fv{\dstackp{\code}}$. Then either $\vartwo = \var$, which is bound by $\esub{\var}{\codetwo}$,
        or $\vartwo \in \fv{\dstackp{\l\var.\code}}$, in which case $\vartwo$ is bound by $\genv$.
        Moreover, since $\decode\stack$ is an application context, by
        \eqref{eq:pwam-invariant-var2a} we get $\pwamclosed{(\codetwo,\genv)}$.

        \medskip
        Points \ref{p:pwam-invariants-dual2} and \ref{p:pwam-invariants-dual3} are
        trivial since the dump is empty and the environment has no dumped substitutions.

  \item \caselight{Multiplicative, non-empty dump}.
        We have $\state \tomachm \statetwo$ with:
        $$\state = \pwamstate{\l\var.\code}{\codetwo\cons\stack}{\genv_1\cons\esub{\vartwo}{\pwammark}\cons\genv_2}{(\vartwo,\stacktwo)\cons\fstack}$$
        $$\statetwo = \pwamstate{\code}{\stack}{\genv_1\cons\esub{\vartwo}{\pwammark}\cons\esub{\var}{\codetwo}\cons\genv_2}{(\vartwo,\stacktwo)\cons\fstack}$$

        Note first that since the invariant holds for $\state$, we know $\esub{\vartwo}{\pwammark}$
        is the rightmost dumped substitution in the environment of both $\state$ and $\statetwo$.
        Therefore $(\genv_1\cons\esub{\vartwo}{\pwammark}\cons\genv_2)\envslice = \genv_2$

        \medskip
        For proving point \ref{p:pwam-invariants-dual1}, we have:
        $$\pwamclosed{(\dstackp{(\l\var.\code)\,\codetwo}, \genv_2)}$$
        and we must show:
        $$\pwamclosed{(\dstackp{\code}, \esub{\var}{\codetwo}\cons\genv_2)}$$
        The situation is exactly as in point \ref{p:pwam-invariants-dual1} for the
        $\tomachm$ transition, empty dump case.

        \medskip
        For point \ref{p:pwam-invariants-dual2}, let $(\varthree,\stackthree)$ be any pair
        in $(\vartwo,\stacktwo)\cons\fstack$.  Let also
        $$\genvtwo_1 := \begin{cases}
                         \genv_1\envslice               & \text{if $\vartwo = \varthree$} \\
                         \genv_1\envslicevar{\varthree} & \text{otherwise}
                       \end{cases}$$
        and note that $(\genv_1\cons\esub{\vartwo}{\pwammark}\cons\genv)\envslicevar{\vartwo} = \genvtwo_1\cons\esub{\vartwo}{\pwammark}\cons\genv$
        for any environment $\genv$ that contains no dumped substitutions.
        By the invariant on $\state$, we have that:
        $$\pwamclosed{(\dstackthreep{\varthree},\genvtwo_1\cons\esub{\vartwo}{\pwammark}\cons\genv_2)}$$
        Moreover, from point \ref{p:pwam-invariants-dual1} we know $\pwamclosed{(\codetwo,\genv_2)}$.
        Both imply:
        $$\pwamclosed{(\dstackthreep{\varthree},\genvtwo_1\cons\esub{\vartwo}{\pwammark}\cons\esub{\var}{\codetwo}\cons\genv_2)}$$
        as required.

        \medskip
        For point \ref{p:pwam-invariants-dual3}, just note that the substitution
        $\esub{\var}{\codetwo}$ added to the environment is not dumped, and so duality holds because it holds for $\state$ by \ih.

  \item \caselight{Exponential}. We have $\state \tomache \statetwo$ with:
        $$\state = \pwamstate{\codeval}{\stempty}{\genv_1\cons\esub{\var}{\pwammark}\cons\genv_2}{(\var,\stack)\cons\fstack}$$
        $$\statetwo = \pwamstate{\rename\codeval}{\stack}{\genv_1\cons\esub{\var}{\codeval}\cons\genv_2}{\fstack}$$
        
        First note that since the environment and the dump are dual in $\state$,
        we know $\genv_2$ has no dumped substitutions.

        \medskip
        For proving point \ref{p:pwam-invariants-dual1}, by resorting to point \ref{p:pwam-invariants-dual1} on the
        state $\state$, for which the invariant already holds, we have that:
        \begin{equation} \label{eq:pwam-invariant-lam1a}
        \pwamclosed{(\codeval,\genv_2)}
        \end{equation}
        Moreover, by point \ref{p:pwam-invariants-dual2} on $\state$, specialized on the pair $(\var,\stack)$, we also know:
        \begin{equation} \label{eq:pwam-invariant-lam1b}
        \pwamclosed{(\dstackp{\var},\genv_1\envslice\cons\esub{\var}{\pwammark}\cons\genv_2)}
        \end{equation}
        We must check that:
        $$
        \pwamclosed{(\dstackp{\rename\codeval},\genv_1\envslice\cons\esub{\var}{\codeval}\cons\genv_2)}
        $$
        Any free variable in $\dstackp{\rename\val}$ is either free in $\stack$,
        in which case by \eqref{eq:pwam-invariant-lam1a} it must be bound by $\genv_1\envslice\cons\esub{\var}{\pwammark}\cons\genv_2$,
        or free in $\codeval$, in which case by \eqref{eq:pwam-invariant-lam1a} it must be bound by $\genv_2$.
        In both cases it is bound by $\genv_1\envslice\cons\esub{\var}{\codeval}\cons\genv_2$, as required.
        To conclude the proof of point \ref{p:pwam-invariants-dual1}, note that
        by combining \eqref{eq:pwam-invariant-lam1a} and \eqref{eq:pwam-invariant-lam1b}
        we get $\pwamclosed{\genv_1\envslice\cons\esub{\var}{\val}\cons\genv_2}$.

        \medskip
        For proving point \ref{p:pwam-invariants-dual2}, let $(\vartwo,\stacktwo)$ be a pair in $\fstack$.
        Using that $\var \neq \vartwo$, by the invariant on $\state$ we know:
        $$
        \pwamclosed{(\dstacktwop{\vartwo}, \genv_1\envslicevar{\vartwo}\cons\esub{\var}{\pwammark}\cons\genv_2)}
        $$
        and this implies:
        $$
        \pwamclosed{(\dstacktwop{\vartwo}, \genv_1\envslicevar{\vartwo}\cons\esub{\var}{\codeval}\cons\genv_2)}
        $$
        as wanted.

        \medskip
        Point \ref{p:pwam-invariants-dual3} is immediate, given that the environment
        and the dump are already dual in $\state$.

  \end{enumerate}

\end{proof}


\subsection{Proofs for Distillation is Complexity Preserving}
\label{ss:complexity-proofs}

\begin{proof}[\refth{value-name-glob-bilin}]
	\begin{enumerate}
		\item \emph{LAM}. As for the CEK, using the corresponding subterm invariant and the following measure:
    \begin{center}
$\card(\cekstate \codetwo \env \stack) := \begin{cases}
		\size{\codetwo}+\size{\codethree} & \textrm{if }\stack=\fnst{\codethree,\envtwo}\cons\stacktwo \\
		\size{\codetwo} & \textrm{otherwise}
	\end{cases}$
	\end{center}
	
			\item \emph{Split CEK}. As for the CEK, using the corresponding subterm invariant and the following measure:
  $$
  \card(\scekstate{\codetwo}{\env}{\stack}{\fstack}) := \begin{cases}
  \size{\codetwo} + \size{\codethree} & \text{if $\stack = (\codethree,\envtwo)\cons\stacktwo$} \\
  \size{\codetwo}                   & \text{otherwise}
  \end{cases}
  $$

	\end{enumerate}
\end{proof}

\end{document}